\documentclass[10pt,letterpaper,reqno]{amsart}
\usepackage[letterpaper,margin=1in]{geometry}
\usepackage{amsmath,amssymb,mathtools}
\usepackage{mathrsfs}
\usepackage{enumitem}
\usepackage{xcolor}
\usepackage{microtype}
\usepackage{hyperref}
\hypersetup{
  colorlinks=true,
  allcolors=blue,
  linktoc=all,
  bookmarksnumbered=true,
  bookmarksopen=true,
  pdfstartview=FitH,
  pdfauthor={Phan Th\`anh Nam, Rongchan Zhu, and Xiangchan Zhu},
  pdftitle={Dynamical classical-field limit of bosonic Gibbs states: Renormalized Hartree NLS correlations in 2D and 3D},
  pdfsubject={Time-dependent correlations for the renormalized Hartree NLS equation from quantum many-body Gibbs states},
  pdfkeywords={quantum Gibbs states, classical Gibbs measures, Hartree NLS equation, Wick quantization, KMS estimates, Liouville equation}
}
\numberwithin{equation}{section}
\allowdisplaybreaks[2]

\newtheorem{theorem}{Theorem}[section]
\newtheorem{proposition}[theorem]{Proposition}
\newtheorem{lemma}[theorem]{Lemma}
\newtheorem{corollary}[theorem]{Corollary}
\newtheorem{definition}[theorem]{Definition}
\theoremstyle{remark}
\newtheorem{remark}[theorem]{Remark}

\newcommand{\1}{\mathbf 1}
\newcommand{\ii}{\mathrm i}
\newcommand{\dd}{\mathrm d}
\newcommand{\ee}{\mathrm e}
\newcommand{\Tr}{\operatorname{Tr}}
\newcommand{\Dom}{\operatorname{Dom}}
\newcommand{\Op}{\operatorname{Op}}
\newcommand{\supp}{\operatorname{supp}}
\newcommand{\Fock}{\mathcal F}
\newcommand{\gH}{\mathfrak H}
\newcommand{\dG}{\mathrm d\Gamma}
\newcommand{\Acal}{\mathcal A}
\newcommand{\Lcal}{\mathcal L}
\newcommand{\Ucal}{\mathcal U}
\newcommand{\Lbb}{\mathbb L}
\newcommand{\Ubb}{\mathbb U}
\newcommand{\Vcal}{\mathcal V}
\newcommand{\Xcal}{\mathcal X}
\newcommand{\Hcal}{\mathcal H}
\newcommand{\norm}[1]{\left\lVert #1\right\rVert}
\newcommand{\abs}[1]{\left\lvert #1\right\rvert}
\newcommand{\ip}[2]{\left\langle #1,#2\right\rangle}
\newcommand{\Ree}{\operatorname{Re}}
\newcommand{\Imm}{\operatorname{Im}}
\newcommand{\adop}[1]{{\operatorname{ad}_{#1}}}

\title[Renormalized Hartree NLS Correlations in 2D and 3D]{Dynamical classical-field limit of bosonic Gibbs states: Renormalized Hartree NLS correlations in 2D and 3D}

\author[P. T. Nam]{Phan Th\`anh Nam}
\address{(P. T. Nam) LMU Munich, Department of Mathematics, Theresienstrasse 39,
80333 Munich, Germany}
\email{nam@math.lmu.de}

\author[R. Zhu]{Rongchan Zhu}
\address{(R. Zhu) Department of Mathematics, Beijing Institute of Technology,
Beijing 100081, China}
\email{zhurongchan@126.com}

\author[X. Zhu]{Xiangchan Zhu}
\address{(X. Zhu) Academy of Mathematics and Systems Science, Chinese Academy of
Sciences, Beijing 100190, China}
\email{zhuxiangchan@126.com}

\begin{document}

\begin{abstract}
In this paper we derive the time-dependent correlation functions of the renormalized Hartree NLS
equation on the torus $\mathbb T^d$, $d=2,3$, from the corresponding bosonic many-body Gibbs
dynamics. In contrast with the 1D problem studied earlier by Fr\"ohlich, Knowles,
Schlein and Sohinger, in higher dimensions the scaled quantum particle number is not uniformly
bounded and the relevant classical fields require Wick renormalization. Our proof combines
convergence of the quantum generators in weighted Hilbert spaces with uniqueness for positive
solutions of the limiting Liouville equation that are dominated by the Gibbs measure.
\end{abstract}

\maketitle
\tableofcontents

\section{Introduction}
\label{sec:introduction}

The classical-field limit of bosonic quantum Gibbs states corresponds to a regime
in which the low-energy quantum modes are effectively described by a classical random field. In
the present setting, the quantum and classical equilibrium objects are
\[
 \Gamma_\lambda
 =\mathcal Z_\lambda^{-1}\ee^{-\lambda\mathbb H_\lambda},
 \qquad
 \dd\nu(u)=\mathfrak Z^{-1}\ee^{-\Vcal(u)}\,\dd\mu_0(u).
\]
Here $\mathbb H_\lambda$ is the renormalized many-body Hamiltonian on
the bosonic Fock space $\Fock$, while $\mu_0$ is the centered
Gaussian free-field measure and $\Vcal$ is the renormalized classical
interaction energy. The constants $\mathcal Z_\lambda$ and
$\mathfrak Z$ are partition functions that normalize the quantum state and the classical measure,
respectively. Nonlinear Gibbs measures such as $\nu$ arise naturally in constructive quantum field
theory and in the probabilistic study of nonlinear Schr\"odinger equations. 

The derivation of nonlinear Gibbs measures from many-body quantum Gibbs
states was initiated in \cite{LNR15} for the 1D Bose gas in a regime just above the critical point of the Bose--Einstein phase transition, where the
occupation number of each low-energy quantum mode is proportional to $\lambda^{-1}$ in the
classical-field limit $\lambda\downarrow0$. Extensions to 2D and 3D systems, with regular interactions and Wick ordering, were obtained
independently in \cite{LNR} using the variational method and in
\cite{FKSS22} using the functional-integral approach. For delta-type interactions, the
Euclidean $\Phi^4_2$ theory was obtained in \cite{FKSS25}, and the $\Phi^4_3$ theory was derived
in \cite{NZZ25}. The latter work goes beyond Wick renormalization by combining the variational approach introduced in \cite{LNR} with the theory of singular SPDEs. 
Further recent developments
include an alternative derivation of $\Phi^4_2$ theory \cite{JN26}, the inhomogeneous 2D Bose gas with a trapping potential
\cite{CKRTG26}, the homogeneous 2D Bose gas with Bessel interactions \cite{NZZ26a}, higher-order models involving Hartree measures with three-body interactions \cite{L26} and general $P(\Phi)_2$ measures \cite{NYZ26}, and the focusing $\Phi^6_1$ measure
\cite{RS25,LNZ26}.


While the equilibrium classical-field limit is by now
well understood, the purpose of the present paper is to study the corresponding
dynamics. In this direction, a microscopic derivation of the time-dependent correlation functions of
the one-dimensional cubic nonlinear Schr\"odinger equation was already obtained by
Fr\"ohlich, Knowles, Schlein and Sohinger \cite{FKSS}. However, the problem
in 2D and 3D remains challenging since the
classical fields require Wick renormalization and the expected total number of
quantum particles diverges faster than $\lambda^{-1}$ as $\lambda\downarrow0$.
We aim to answer this question for a general class of homogeneous systems on the torus $\mathbb T^d$, $d=2,3$, with regular,
positive-type interactions.

To be precise, since the quantum Gibbs state $\Gamma_\lambda$ is invariant under the many-body Schr\"odinger
evolution, we are interested in the time-dependent correlations  
\begin{equation}\label{eq:intro-Heisenberg-evolution}
 \Tr\!\left(
  \Ubb_\lambda(t_1)A_1\cdots
  \Ubb_\lambda(t_J)A_J\Gamma_\lambda
 \right), \qquad  \Ubb_\lambda(t)A
 :=\ee^{\ii t\mathbb H_\lambda}A
   \ee^{-\ii t\mathbb H_\lambda}
\end{equation}
for bounded observables $A_1,\ldots,A_J$. 
These correlations contain information about both the equilibrium state
and the many-body Heisenberg evolution. Observables at different times need not
commute, so the order must be preserved.

On the classical side, the limiting Gibbs measure $\nu$ is invariant under the
renormalized Hartree NLS flow: 
\begin{equation}\label{eq:intro-renormalized-Hartree}
 \ii\partial_tu_t
 =(-\Delta+m)u_t+
  \bigl(v*{:}|u_t|^2{:}\bigr)u_t,
 \qquad m>0,
\end{equation}
where \({:}|u|^2{:}\) denotes the Wick product. For a regular, positive-type interaction $v$, the probabilistic
Cauchy theory and invariance of Gibbs measures for the renormalized Hartree NLS equation \eqref{eq:intro-renormalized-Hartree} were established in \cite{BourgainGP}. The Gibbs measure $\nu$ is invariant under the Hartree NLS flow $S_t$, which is defined by $u_t=S_tu_0$ for $\nu$-almost every initial datum $u_0$.

The use of nonlinear Gibbs measures in the study of Schr\"odinger
equations goes back to Lebowitz--Rose--Speer
\cite{LebowitzRoseSpeer88}.  Bourgain developed the probabilistic
Cauchy theory and invariance of Gibbs measures for periodic nonlinear
Schr\"odinger and Hartree NLS (Gross--Pitaevskii) equations in
\cite{Bourgain94,Bourgain2D,BourgainGP}.  This point of view is
dictated by the support of the Gibbs measure: a typical field lies below
the regularity required by the deterministic Cauchy theory, and the
nonlinearity must be defined together with the random initial datum,
usually after Wick renormalization.  Recent progress on Schr\"odinger
equations with rough random initial data was made by Deng--Nahmod--Yue \cite{DNY2D,DNYRandomTensors,DNYHartree}, who
introduced random averaging operators and subsequently developed
the related theory of random tensors for the propagation of randomness
under nonlinear dispersive equations.  In particular, in 3D we may use the Gibbs-invariant
Hartree NLS flow constructed in \cite{DNYHartree}, whose proof relies on the
random-averaging method.  In 2D the Hartree NLS equation
considered here follows from a simpler form of the arguments in
\cite{Bourgain2D,BourgainGP,DNY2D,OhThomannNLS}.

This leads to the following question: does the many-body Heisenberg dynamics converge, in the classical-field limit, to the nonlinear Gibbs dynamics generated by $S_t$? 
Our main result states that if
$F_1,\ldots,F_J$ are bounded smooth
finite-coordinate cylinder functions (with respect to the standard spectral decomposition of $L^2(\mathbb T^d)$) and
\(\Op_{\lambda,\gH}^{\rm A}F_j\) denotes their bounded anti-Wick
quantization, then 
\begin{equation*}
 \Tr\!\left(
  \prod_{j=1}^J
  \Ubb_\lambda(t_j)(\Op_{\lambda,\gH}^{\rm A}F_j)
  \Gamma_\lambda
 \right)
 \longrightarrow
 \int\prod_{j=1}^JF_j(S_{t_j}u)\,\dd\nu(u)
 \qquad(\lambda\downarrow0).
\end{equation*}
We also derive a similar result for Wick quantization for every $J\ge 2$ in 2D and for $J=2$ in 3D. The precise statements are given in Section~\ref{sec:models}; in particular, the Wick and anti-Wick quantizations are defined in \eqref{eq:ordinary-Wick-definition} and \eqref{eq:ordinary-Anti-Wick-definition}, respectively.

Let us briefly explain the main difficulties and our proof strategy. For the non-interacting system, the free one-particle density matrix $
 \gamma_{0,\lambda}=(\ee^{\lambda h}-1)^{-1}$, with the kinetic operator $h=-\Delta+m$, satisfies
  \[
\Tr \gamma_{0,\lambda} \asymp
\lambda^{-1} \times \begin{cases} 1,&d=1,\\
  1+\abs{\log\lambda},&d=2,\\
  \lambda^{-1/2},&d=3.
 \end{cases}
\]
Thus, unlike the 1D case studied in \cite{FKSS}, in 2D and 3D 
even the first moment of the scaled particle-number operator
\(\lambda\mathcal N\), where \(\mathcal N\) is the number operator, is
not uniformly bounded as \(\lambda\downarrow0\). The same lack of uniform
boundedness holds for the interacting system.  The global
number-sector cutoff used in \cite{FKSS} is therefore unavailable.

In an attempt to modify the approach in \cite{FKSS}, it is natural to try to
replace $\lambda\mathcal N$ by a positive operator controlling the centered
density or the ultraviolet energy. Such an operator would need uniform Gibbs
moments and would also have to commute sufficiently well with the Hamiltonian
and the observables occurring in the ordered product. However, localized
density and energy operators do not have these properties: their commutators
with the Hamiltonian or with a cylinder observable contain terms of the same
order as the quantity that is to be controlled. Moving such a cutoff past the
factors in an ordered correlation therefore produces additional commutator
terms for which no uniform estimate is available. Moreover, the commutator of
the interaction with a bounded cylinder observable is generally unbounded due
to the nonlocality of the potential. Thus neither equilibrium convergence for
bounded observables nor a direct use of the cutoff argument from \cite{FKSS}
is sufficient. We therefore need to develop a completely different approach.

Our proof combines a compactness argument in the weighted Hilbert spaces associated with
the quantum Gibbs states and uniqueness for positive
solutions of the limiting Liouville equation that are dominated by the Gibbs measure. More precisely, we proceed in four steps.

\emph{Step 1: Gibbs-weighted Hilbert spaces.}
We first formulate the quantum and classical dynamics in Hilbert spaces
weighted by the corresponding Gibbs states. On the quantum side, we set
\[
 \|A\|_{2,\lambda}^2
 :=\Tr(A^*A\Gamma_\lambda),
\]
and denote by $L^2(\Gamma_\lambda)$ the completion of the bounded
operators on $\Fock$ in this norm. This is precisely the GNS Hilbert space associated with the quantum Gibbs
state $\Gamma_\lambda$, with the identity operator as the cyclic vector; see, for example, \cite{BratteliRobinson87}. The Heisenberg evolution
$\Ubb_\lambda(t)$ defined in \eqref{eq:intro-Heisenberg-evolution} acts unitarily on this space, with closed
skew-adjoint generator $\Lbb_\lambda$. On the classical side, the
Hartree NLS flow induces the Koopman group \cite{Koopman31}
\[
 \Ucal(t)F=F\circ S_t
\]
on $L^2(\nu)$. We denote its closed generator by $\Lcal$, while
$\Lcal_0$ denotes the explicit Hamiltonian derivation on bounded
smooth cylinder functions. The equilibrium classical-field limit gives
a natural identification between suitable elements of
$L^2(\Gamma_\lambda)$ and $L^2(\nu)$, which allows us to compare the
two dynamics in varying Hilbert spaces.

\emph{Step 2: Convergence of the generators.}
For every bounded smooth cylinder function $F$, we prove
\[
 \Lbb_\lambda \Op_{\lambda,\gH}^{\rm A}F
 \to_{\mathrm{id}}\Lcal_0F
 \qquad(\lambda\downarrow0),
\]
where $\to_{\mathrm{id}}$ denotes the identified convergence of
Definition~\ref{def:identified}. This gives the infinitesimal relation
between the quantum many-body evolution and the renormalized Hartree NLS
flow without imposing a global particle-number cutoff.  

The main point is that, although the total scaled particle number is not
uniformly bounded, the commutator of a cylinder observable with a
density mode is localized in a finite Fourier space. This allows us to
separate the relevant finite-dimensional part from the high-frequency
density tail. The former has uniform Gibbs moments, while the latter has
uniform higher moments and a second moment that vanishes as the
Fourier cutoff is removed. Summing over the interaction modes then
yields convergence of the quantum generators in both 2D and 3D. Conceptually, this localization argument replaces the cutoff technique used in \cite{FKSS}.


\emph{Step 3: Positivity and compactness.}
Generator convergence alone does not determine the limit of the
time-dependent correlations. A further difficulty comes from
noncommutativity: the factors in an ordered correlation cannot be freely
rearranged, since in general
\[
 \Tr(AB\Gamma_\lambda)\neq\Tr(BA\Gamma_\lambda).
\]
To recover positivity despite this noncommutativity, we use the approximation
$$
 \Tr(AB\Gamma_\lambda) \approx  \Tr(A \Gamma_\lambda^{1/2}B\Gamma_\lambda^{1/2})
$$
where the error is controlled by a double commutator estimate from \cite[Theorem~7.2]{LNR} (which is of the same type as the Falk-Bruch inequality \cite{FB69}). 
Moreover, we observe that if $B\ge 0$, then 
\[
0\le  T_B:=\Gamma_\lambda^{1/2}B\Gamma_\lambda^{1/2} \le \|B\| \Gamma_\lambda.
\]
Thus the functional $A\mapsto\Tr(A T_B)$ is positive and dominated
by the Gibbs state. This positivity and Gibbs domination provide the required compactness:
along every sequence $\lambda_n\downarrow0$, the corresponding
time-dependent positive quantum functionals admit a subsequence converging to
a positive weakly continuous measure curve $(\sigma_t)$ dominated by the
classical Gibbs measure:
\[
 0\le\sigma_t\le C_0\nu.
\]
Moreover, the limit is a positive
Gibbs-dominated solution of the classical Liouville equation:
\[
 \int F\,\dd\sigma_t-\int F\,\dd\sigma_0
 =
 \int_0^t\int\Lcal_0F\,\dd\sigma_s\,\dd s.
\]

\emph{Step 4: Uniqueness of the limiting Liouville equation.} 
It remains to prove that every
positive weakly continuous solution of the above Liouville equation
which is dominated by $\nu$ is uniquely determined by its initial
measure $\sigma_0$ and is given by transport under the renormalized Hartree NLS flow:
\[
 \sigma_t=(S_t)_\#\sigma_0.
\]

The available construction of the Gibbs-invariant Hartree NLS flow does not directly
give uniqueness in the rough class of trajectories arising here. To overcome
this difficulty, we use the superposition principle of Ammari--Farhat--Sohinger
\cite{AFS}, which represents a dominated Liouville curve by a measure on
integral paths. 
Marginal domination implies that almost every path satisfies the
Hartree Duhamel equation with a real potential in
\(L^1_{\mathrm{loc}}(\mathbb R;\mathcal W^1)\). Here \(\mathcal W^1\) denotes the weighted Wiener space defined at the
beginning of Section~\ref{sec:DLU}.  To compare two paths
with the same initial value, we freeze their potentials and compare the
corresponding linear evolutions on a single Gaussian full-measure set
independent of the potentials.  The resulting estimate for the nonzero
density modes, together with the finite second Fourier moment of
\(\widehat v\), identifies the mean-zero potentials by Gronwall's
lemma.  A common Fourier cutoff then recovers the zero mode and the
remaining scalar phase.  This gives pathwise uniqueness and, by
coupling the superposition measures over their common initial marginal,
uniqueness of positive Liouville curves dominated by the Gibbs measure $\nu$.

In summary, our uniqueness result identifies the two-point limits and the dynamics on
\(L^2(\Gamma_\lambda)\) at each fixed time.  Stability under bounded left
multiplication then gives arbitrary ordered multi-time correlations of
bounded cylinder observables in dimensions two and three.  For ordinary
Wick polynomials on fixed finite-dimensional Fourier spaces,  all two-point functions converge in both
dimensions.  In 2D, comparison with an equally
spaced Gibbs-weighted trace gives arbitrary ordered multi-time correlations.

The use of an infinitesimal generator together with uniqueness for the
limiting martingale or Liouville problem also appears in singular
stochastic partial differential equations.  Gubinelli--Perkowski use
this principle in the analysis of Burgers/Kardar--Parisi--Zhang (KPZ)
dynamics
\cite{GP18,GP20}, and related generator or resolvent estimates occur in
recent scaling-limit results of Cannizzaro, Toninelli and collaborators
\cite{CET23,CGT24,CMT25,CKM26}.  

\medskip
\noindent\textbf{Organization of the paper.} 
Section~\ref{sec:models} introduces the models, states the main results,
and outlines the proof.  Section~\ref{sec:generators} proves the
equilibrium weighted Hilbert-space identification, and
Section~\ref{sec:higher-density-generator} proves convergence of the
generators on \(L^2(\Gamma_\lambda)\).  Section~\ref{sec:liouville-correlations}
constructs the
positive Liouville limits and reduces the bounded correlation results to
their uniqueness.  Section~\ref{sec:DLU} proves this uniqueness for the
renormalized Hartree NLS equation.
Section~\ref{sec:Wick-observables} treats finite-mode ordinary Wick
observables.  Appendix~\ref{app:Toeplitz} recalls the finite-dimensional
Fock and anti-Wick calculus.  Appendix~\ref{app:transported-density-calculus}
develops the commutator calculus for transported quadratic densities,
constructs the prescribed propagators, and proves the comparison
estimate for the density modes.  

\medskip
\noindent\textbf{Acknowledgments.}
We thank Yuzhao Wang and Jacky Chong for helpful discussions.  We also
acknowledge the use of AI tools (ChatGPT 5.6 Pro) for assistance in collecting
related literature and for inspiring discussions, especially in suggesting
the representation of the relevant operators in terms of coefficient measures
used in the proof of Proposition~\ref{prop:transported-density-comparison},
whose proof is given in Appendix~\ref{app:transported-density-calculus}. All mathematical arguments
were verified and written by the authors.
P. T. Nam was supported by the European Research Council through the
ERC Consolidator Grant RAMBAS (Project No.~10104424).  X. Zhu was supported by the NSFC
(No.~125952811).  R. Zhu and
X. Zhu were supported by the National Key R\&D Program of China
(No.~2022YFA1006300) and the National Natural Science Foundation of
China (NSFC) (No.~12426205).  R. Zhu was also supported by the NSFC
(No.~12271030).  X. Zhu was also supported by the NSFC
(No.~12288201), the Key Laboratory of Random Complex
Structures and Data Science, Chinese Academy of Sciences.

\section{Setting and main results}
\label{sec:models}

Throughout the paper, \(d\in\{2,3\}\), \(m>0\), and \(\lambda>0\).
Products in correlation functions are ordered from left to right.

\subsection{Classical Gibbs measure and renormalized Hartree NLS flow}
\label{subsec:classical-phase-space}

Let \(\mathbb T^d=(\mathbb R/2\pi\mathbb Z)^d\), and set
\[
 \gH=L^2(\mathbb T^d,\dd x),\qquad h=-\Delta+m,\qquad m>0.
\]
All Hilbert-space inner products are conjugate linear in the first
variable and linear in the second.  For Hilbert spaces \(X,Y\), we write
\(\mathcal B(X,Y)\) for the bounded operators from \(X\) to \(Y\),
\(\mathcal B(X):=\mathcal B(X,X)\), and \(\|\cdot\|_{\mathrm{op}}\)
for the operator norm.  We write \(\1_X\) for the identity on \(X\),
omitting the subscript only when the ambient space is clear, and use the
Japanese bracket \(\langle \xi\rangle=(1+|\xi|^2)^{1/2}\).
For \(k\in\mathbb Z^d\), let
\[
 \mathrm e_k(x)=(2\pi)^{-d/2}\ee^{\ii k\cdot x},
 \qquad
 M_kf(x)=\ee^{\ii k\cdot x}f(x).
\]
For a periodic function or distribution \(f\), we use the Fourier
coefficients
\[
 \widehat f(k)
 :=(2\pi)^{-d}\int_{\mathbb T^d}
 f(x)\ee^{-\ii k\cdot x}\,\dd x,
 \qquad
 f(x)=\sum_{k\in\mathbb Z^d}\widehat f(k)\ee^{\ii k\cdot x}.
\]
  We also write 
\[
 f_k:=\langle \mathrm e_k,f\rangle
 =(2\pi)^{d/2}\widehat f(k).
\]

Let
\begin{equation*}
 H^{s-}:=\bigcap_{n\ge1}H^{s-1/n}(\mathbb T^d),
 \qquad
 \Xcal:=
 \begin{cases}
  H^{0-}(\mathbb T^2),&d=2,\\
  H^{-1/2-}(\mathbb T^3),&d=3.
 \end{cases}
\end{equation*}
For a Hilbert space \(K\) and \(1\le p<\infty\), let
\(\mathfrak S^p(K)\) denote the Schatten \(p\)-class on \(K\), with
norm \(\|\cdot\|_{\mathfrak S^p(K)}\); the ambient space is omitted only
when it is clear.
Let \(\mu_0\) be the centered complex Gaussian field with covariance
\(h^{-1}\).  Then \(\mu_0(\Xcal)=1\).

For \(R\ge1\), let
\[
 \begin{aligned}
 \Lambda_R&=\{p\in\mathbb Z^d:|p|\le R\},
 & \Pi_R&=\sum_{p\in\Lambda_R}|\mathrm e_p\rangle\langle \mathrm e_p|
       =\1_{h\le R^2+m}.
 \end{aligned}
\]
For every finite-dimensional subspace \(E\subset\gH\), \(\Pi_E\) denotes the
orthogonal projection onto \(E\).  For any orthogonal projection \(\Pi\),
we write \(\Pi^\perp:=\1_{\gH}-\Pi\).  
For \(k\in\mathbb Z^d\), define the truncated and limiting renormalized terms by
\begin{equation}\label{eq:classical-density-mode}
 \rho_{R,k}(u)
 :=\ip{\Pi_Ru}{M_k\Pi_Ru}-\Tr(h^{-1}\Pi_RM_k\Pi_R),
 \qquad
 \rho_k:=L^q(\mu_0)\!\!-\!\lim_{R\to\infty}\rho_{R,k},
 \quad 1\le q<\infty.
\end{equation}
The limit is well defined in every finite \(L^q(\mu_0)\), is independent
of the cutoff, and satisfies \(\rho_{-k}=\overline{\rho_k}\); see
\cite[Section~3]{LNR}.

Let \(v\) be a periodic distribution, with Fourier coefficients
normalized by \(v(x)=\sum_k\widehat v(k)\ee^{\ii k\cdot x}\).  We
assume throughout that
\begin{equation}\label{eq:finite-potential-regularity}
 \widehat v(k)\ge0,\qquad v\geq 0,\qquad
 \widehat v(-k)=\widehat v(k),\qquad
 \sum_{k\in\mathbb Z^d}\langle k\rangle^{2}|\widehat v(k)|<\infty.
\end{equation}
We use the weighted norm
\[
 \|\widehat v\|_{\ell^1_2}
 :=\sum_{k\in\mathbb Z^d}\langle k\rangle^2|\widehat v(k)|.
\]
Thus \(v\) is real, even, of positive type, and belongs to
\(C^{2}(\mathbb T^d)\).

Define the interaction and Gibbs measure by
\begin{equation*}
 \Vcal(u)
 =\frac12\sum_{k\in\mathbb Z^d}\widehat v(k)\abs{\rho_k(u)}^2,
 \qquad
 \mathfrak Z:=\int_{\Xcal}\ee^{-\Vcal(u)}\,\dd\mu_0(u),
 \qquad
 \dd\nu(u)
 =\mathfrak Z^{-1}\ee^{-\Vcal(u)}\,\dd\mu_0(u).
\end{equation*}
This is the positive-type renormalized Gibbs measure constructed in
\cite{LNR}.

For a field \(u\), the Wick square is written as
\[
 {:}|u|^2{:}(x)
 =(2\pi)^{-d}\sum_{k\in\mathbb Z^d}
 \rho_{-k}(u)\ee^{\ii k\cdot x}.
\]
The renormalized Hartree NLS equation is formally
\[
 \ii\partial_tu_t
 =h u_t+\bigl(v*{:}|u_t|^2{:}\bigr)u_t
 =(-\Delta+m)u_t+\bigl(v*{:}|u_t|^2{:}\bigr)u_t.
\]

The renormalized Hartree NLS equation has a Gibbs-invariant Hartree NLS
flow \(S_t\).  More details are given in
Theorem~\ref{thm:canonical-flow}.  The corresponding Koopman group is
given by
\[
 \Ucal(t)F=F\circ S_t,
\]
and its closed generator is denoted by \(\Lcal\) (see
Proposition~\ref{prop:classical-Koopman-generator}).  We also use \(\Lcal_0\) to denote the explicit infinitesimal derivation associated
with the renormalized Hartree NLS equation on bounded smooth cylinder
functions.  To this end, we first introduce
the algebra of smooth cylinder functions
\begin{equation*}
	\Acal_{\rm cyl}
	=\operatorname{span}_{\mathbb C}\!\left(
	\{1\}\cup
	\{f\circ\Pi:\Pi\text{ is a finite Fourier projection},
	f\in C_c^\infty(\Pi\gH)\}
	\right).
\end{equation*}
\phantomsection\label{par:positive-cylinder-density}%
The algebra \(\Acal_{\rm cyl}\) is dense in
\(L^p(\nu)\) for every \(1\leq p<\infty\), and
\[
\overline{\{F\in\Acal_{\rm cyl}:F\geq0\}}^{\,L^2(\nu)}
=\{f\in L^2(\nu):f\geq0\}.
\] More precisely, if \(F=f\circ\Pi\), then
\(\Lcal_0F\) is obtained by differentiating \(f(\Pi u)\) in the direction
of the projected Hartree vector field.  Thus \(\Lcal_0F\) is the derivative of \(F\) along the Hartree
vector field. Proposition
\ref{prop:classical-Koopman-generator} shows that
\[
\Acal_{\rm cyl}\subset\operatorname{Dom}(\Lcal),
\qquad
\Lcal F=\Lcal_0F
\quad\text{for }F\in\Acal_{\rm cyl}.
\]

\subsection{Quantum Gibbs states}
\label{subsec:quantum-setting}

Let
\[
 \Fock=\Fock(\gH)
 =\bigoplus_{n=0}^\infty\gH^{\otimes_s n},
 \qquad \gH^{\otimes_s0}=\mathbb C.
\]
For \(b\in\mathcal B(\gH)\), its second quantization is
\[
 \dG(b)\big|_{\gH^{\otimes_s n}}=\sum_{j=1}^n b_j
 \quad(n\ge1),
 \qquad \dG(b)\big|_{\mathbb C}=0.
\]
For the self-adjoint operator \(h\), \(\dG(h)\) denotes its
usual self-adjoint second quantization.  In particular,
\(\mathcal N=\dG(\1_{\gH})\).

The free quantum Gibbs state and its one-particle density matrix are
\[
 \mathcal Z_{0,\lambda}:=\Tr\ee^{-\lambda\dG(h)},
 \qquad
 \Gamma_{0,\lambda}
 =\mathcal Z_{0,\lambda}^{-1}\ee^{-\lambda\dG(h)},
 \qquad
 \gamma_{0,\lambda}:=\Gamma_{0,\lambda}^{(1)}
 =(\ee^{\lambda h}-1)^{-1}.
\]
This is the free-density convention of \cite{LNR,NZZ25}; the scaled
one-particle covariance is \(\lambda\gamma_{0,\lambda}\).
For a bounded one-particle operator \(b\), put
\begin{equation*}
 B_\lambda(b)
 =\lambda\dG(b)-\lambda\Tr(b\gamma_{0,\lambda})\1,
 \qquad
 B_{\lambda,k}=B_\lambda(M_k).
\end{equation*}

Define \(\mathbb H_\lambda\) by the closed nonnegative quadratic form
\begin{equation}\label{eq:scaled-Gibbs-Hamiltonian}
 \lambda\mathbb H_\lambda
 =\lambda\dG(h)
 +\frac12\sum_{k\in\mathbb Z^d}\widehat v(k)
 B_{\lambda,k}B_{\lambda,-k}.
\end{equation}

The construction in \cite[Section~3.1 and Theorem~3.1]{LNR}, applied
with \(h=-\Delta+m\) and the present scaling, shows that the nonnegative
form in \eqref{eq:scaled-Gibbs-Hamiltonian} defines a self-adjoint
operator \(\mathbb H_\lambda\).  With
\(\mathcal Z_\lambda:=\Tr\ee^{-\lambda\mathbb H_\lambda}
\in(0,\infty)\), one has
\begin{equation*}
 \Gamma_\lambda
 =\mathcal Z_\lambda^{-1}\ee^{-\lambda\mathbb H_\lambda}.
\end{equation*}
For \(t\in\mathbb R\) and \(A\in\mathcal B(\Fock)\), set
\begin{equation*}
 \Ubb_\lambda(t)A
 :=\ee^{\ii t\mathbb H_\lambda}A\ee^{-\ii t\mathbb H_\lambda},
\end{equation*}
as in \eqref{eq:intro-Heisenberg-evolution}.

As in \cite{NZZ25}, if
\(\Tr\!\left(\binom{\mathcal N}{r}\Gamma\right)<\infty\), the
\(r\)-particle reduced density matrix of a number-preserving state
\(\Gamma=\bigoplus_{n\ge0}\Gamma_n\) is
\begin{equation}\label{eq:RDM-definition}
 \Gamma^{(r)}
 :=\sum_{n\ge r}\binom nr
 \Tr_{r+1,\ldots,n}\Gamma_n.
\end{equation}
Thus
\(\Tr(\Gamma^{(r)})=\Tr\!\left(\binom{\mathcal N}{r}\Gamma\right)\).

\subsection{Quantization on finite Fourier spaces}
\label{subsec:quantizations}

For \(f\in\gH\) and
\(\psi^{(n)}\in\gH^{\otimes_s n}
=L^2_{\mathrm{sym}}((\mathbb T^d)^n)\),
the creation and annihilation operators are defined sectorwise by
\begin{align*}
 (a(f)\psi^{(n)})(x_1,\ldots,x_{n-1})
 &=\sqrt n\int_{\mathbb T^d}\overline{f(x)}
   \psi^{(n)}(x,x_1,\ldots,x_{n-1})\,\dd x,\\
 (a^*(f)\psi^{(n)})(x_1,\ldots,x_{n+1})
 &=\frac1{\sqrt{n+1}}\sum_{j=1}^{n+1}f(x_j)
   \psi^{(n)}(x_1,\ldots,\widehat{x_j},\ldots,x_{n+1}),
\end{align*}
and satisfy
\[
 [a(f),a(g)]=[a^*(f),a^*(g)]=0,
 \qquad
 [a(f),a^*(g)]=\ip{f}{g}.
\]
We follow the standard bosonic Fock-space conventions of
\cite[Section~6.1]{LNR} and use the scaled fields
\begin{equation*}
 a_\lambda(f)=\sqrt\lambda\,a(f),
 \qquad
 a_\lambda^*(f)=\sqrt\lambda\,a^*(f).
\end{equation*}

Let \(E\subset\gH\) be a fixed finite Fourier space, with orthogonal
projection \(\Pi_E\) and complex dimension \(r_E\).  Fix an orthonormal
basis \((\psi_j)_{j=1}^{r_E}\) and identify \(E\) with
\(\mathbb C^{r_E}\).  We write \(\dd u\) for Lebesgue measure on the underlying real
vector space and regard functions on
\(E\simeq\mathbb C^{r_E}\) as functions of both \(z\) and
\(\bar z\), without any holomorphicity assumption.   In
polynomial coordinates, \(z_j=\langle \psi_j,u\rangle\) corresponds to
\(a_\lambda(\psi_j)\), whereas \(\bar z_j=\langle u,\psi_j\rangle\)
corresponds to \(a_\lambda^*(\psi_j)\).  The normalized scale-\(\lambda\)
coherent vector and its resolution of the identity are
\begin{equation}\label{eq:coherent-resolution}
 \begin{aligned}
 \xi_{\lambda,E}(u)
 &=\ee^{-\norm{u}^2/(2\lambda)}
  \bigoplus_{n=0}^\infty
 \frac{u^{\otimes n}}{\sqrt{n!}\,\lambda^{n/2}},
 \qquad u\in E,\qquad
 (\pi\lambda)^{-r_E}\int_E
 |\xi_{\lambda,E}(u)\rangle\langle\xi_{\lambda,E}(u)|\,\dd u
 =\1_{\Fock(E)}.
 \end{aligned}
\end{equation}
Moreover,
\begin{equation}\label{eq:coherent-eigenvector}
 a_\lambda(f)\xi_{\lambda,E}(u)
 =\langle f,u\rangle\xi_{\lambda,E}(u),\qquad
 \langle\xi_{\lambda,E}(u), a_\lambda^*(f)\psi\rangle
 =\langle u,f\rangle
   \langle\xi_{\lambda,E}(u),\psi\rangle,
 \qquad f,u\in E,
\end{equation}
(see e.g. \cite[Section~6.1]{LNR15}).  Thus
\((\pi\lambda)^{-r_E}|\xi_{\lambda,E}(u)\rangle
\langle\xi_{\lambda,E}(u)|\,\dd u\) is a normalized positive
operator-valued measure (POVM) on \(E\).
For a positive trace-class operator \(T\) on \(\Fock(E)\), its
scale-\(\lambda\) lower-symbol measure is
\begin{equation*}
 \dd\mu_{E,T}^{\lambda}(u)
 =(\pi\lambda)^{-r_E}
 \langle\xi_{\lambda,E}(u),T\xi_{\lambda,E}(u)\rangle\,\dd u.
\end{equation*}

\medskip
\noindent\textbf{Anti-Wick quantization.}
  For a
bounded Borel function \(f\) on a finite Fourier space \(E\), define
\begin{equation}\label{eq:ordinary-Anti-Wick-definition}
	\Op_{\lambda,E}^{\rm A}(f)
	:=(\pi\lambda)^{-r_E}
	\int_E f(u)
	|\xi_{\lambda,E}(u)\rangle
	\langle\xi_{\lambda,E}(u)|\,\dd u.
\end{equation}
Then
\[
\|\Op_{\lambda,E}^{\rm A}(f)\|_{\mathrm{op}}\le\|f\|_\infty,
\qquad
\Op_{\lambda,E}^{\rm A}(\overline f)
=\Op_{\lambda,E}^{\rm A}(f)^*.
\]
Whenever an operator quantized on a finite Fourier space \(E\) is used
on \(\Fock(\gH)\), it is understood to be extended by
\(\1_{\Fock(E^\perp)}\).  The notation
\(\Op_{\lambda,\gH}^{\rm A}\) emphasizes this cylinder extension. 
More precisely, for the cylinder function \(f\circ\Pi\), set
\[
\Op_{\lambda,\gH}^{\rm A}(f\circ\Pi)
=\Op_{\lambda,\Pi\gH}^{\rm A}(f)
\otimes\1_{\Fock(\Pi^\perp\gH)}
\]
under the factorization
\(\Fock(\gH)\simeq\Fock(\Pi\gH)\otimes\Fock(\Pi^\perp\gH)\),
and extend linearly to \(\Acal_{\rm cyl}\).  If \(\Pi_1\ge\Pi_0\), the
coherent POVM factorization gives
\[
\Op_{\lambda,\Pi_1\gH}^{\rm A}(f\circ\Pi_0)
=\Op_{\lambda,\Pi_0\gH}^{\rm A}(f)
\otimes\1_{\Fock((\Pi_1-\Pi_0)\gH)}.
\]
Thus the definition is independent of the representation.  Moreover,
\[
F\ge0\ \Longrightarrow\ \Op_{\lambda,\gH}^{\rm A}F\ge0,
\qquad
\norm{\Op_{\lambda,\gH}^{\rm A}F}_{\mathrm{op}}\le\norm{F}_\infty,
\qquad
\Op_{\lambda,\gH}^{\rm A}\overline F
=(\Op_{\lambda,\gH}^{\rm A}F)^*.
\]
Set
\begin{equation}\label{eq:equilibrium-lower-symbol-notation}
 \begin{aligned}
 E_L&:=\Pi_L\gH,
 \qquad\Gamma_{\lambda,L}&:=\Tr_{\Fock(E_L^\perp)}\Gamma_\lambda,\qquad
 \mu_{\lambda,L}:=\mu_{E_L,\Gamma_{\lambda,L}}^\lambda.
 \end{aligned}
\end{equation}
If \(F=F\circ\Pi_L\), then the definition of the lower-symbol measure
and the coherent-state resolution give the exact identity
\begin{equation}\label{eq:lower-symbol-pairing}
 \Tr(\Op_{\lambda,\gH}^{\rm A}F\,\Gamma_\lambda)
 =\int_{E_L}F(u)\,\dd\mu_{\lambda,L}(u).
\end{equation}

We shall also use polynomial observables on a fixed Fourier space
\(E\).  Set
\[
 z_E(u)=\bigl(\langle \psi_1,u\rangle,\ldots,
                  \langle \psi_{r_E},u\rangle\bigr)\in\mathbb C^{r_E}.
\]
If
\[
 P(z,\bar z)=\sum_{\alpha,\beta}c_{\alpha\beta}
 \bar z^\alpha z^\beta,
\]
where \(\alpha,\beta\in\mathbb N_0^{r_E}\) and only finitely many
coefficients are nonzero, let
\begin{equation}\label{eq:classical-Wick-symbol}
 (P)_E(u)=P\bigl(z_E(u),\overline{z_E(u)}\bigr).
\end{equation}
Its anti-Wick quantization is the quadratic form
\begin{equation*}
 \Op_{\lambda,E}^{\rm A}(P)
 =(\pi\lambda)^{-r_E}\int_E(P)_E(u)
 |\xi_{\lambda,E}(u)\rangle
 \langle\xi_{\lambda,E}(u)|\,\dd u.
\end{equation*}
The coherent-state identities
\eqref{eq:coherent-resolution}--\eqref{eq:coherent-eigenvector} give the
anti-normal formula
\begin{equation*}
 \Op_{\lambda,E}^{\rm A}(P)
 =\sum_{\alpha,\beta}c_{\alpha\beta}
 \prod_{j=1}^{r_E}a_\lambda(\psi_j)^{\beta_j}
 \prod_{j=1}^{r_E}a_\lambda^*(\psi_j)^{\alpha_j}.
\end{equation*}

\medskip
\noindent\textbf{Wick quantization.}
The Wick quantization of \(P\) is
\begin{equation}\label{eq:ordinary-Wick-definition}
 \Op_{\lambda,E}^{\rm W}(P)
 =\sum_{\alpha,\beta}c_{\alpha\beta}
 \prod_{j=1}^{r_E}a_\lambda^*(\psi_j)^{\alpha_j}
 \prod_{j=1}^{r_E}a_\lambda(\psi_j)^{\beta_j},
\end{equation}
with respect to this basis.  The finite-dimensional Wick--anti-Wick
identities of Berezin \cite{Berezin66,Berezin71} give
\begin{equation*}
 \begin{aligned}
 \Op_{\lambda,E}^{\rm A}(P)
 &=\Op_{\lambda,E}^{\rm W}
   \bigl(\ee^{\lambda\Delta_E^{\mathbb C}}P\bigr),\\
 \Op_{\lambda,E}^{\rm W}(P)
 &=\Op_{\lambda,E}^{\rm A}
   \bigl(\ee^{-\lambda\Delta_E^{\mathbb C}}P\bigr),
 \end{aligned}
 \qquad
 \Delta_E^{\mathbb C}
 =\sum_{j=1}^{r_E}\partial_{z_j}\partial_{\bar z_j}.
\end{equation*}

\subsection{Main results}
\label{subsec:main-results}

We state three main results. The first concerns arbitrary ordered
multi-time correlations of bounded smooth cylinder functions via anti-Wick
quantization. The second concerns ordinary Wick polynomials on fixed
finite-dimensional Fourier spaces. The third, which is of independent interest and plays an essential role in our
approach, gives uniqueness for positive Liouville curves dominated by the
Gibbs measure.


\begin{theorem}
\label{thm:main-bounded}
Let \(d\in\{2,3\}\), let \(m>0\), and assume
\eqref{eq:finite-potential-regularity}.  Let \(S_t\) be the
Hartree NLS flow of Theorem~\ref{thm:canonical-flow}.  For every \(J\ge1\),
every \(F_1,\ldots,F_J\in\Acal_{\rm cyl}\), and arbitrary fixed
\(t_1,\ldots,t_J\in\mathbb R\),
\begin{equation*}
 \lim_{\lambda\downarrow0}
 \Tr\!\left(
  \prod_{j=1}^J
  \Ubb_\lambda(t_j)(\Op_{\lambda,\gH}^{\rm A}F_j)
  \Gamma_\lambda
 \right)
 =
 \int\prod_{j=1}^J
 F_j(S_{t_j}u)\,\dd\nu(u).
\end{equation*}
\end{theorem}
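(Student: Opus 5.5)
The plan is to follow the four-step strategy of the introduction and reduce the multi-time statement to a two-point statement, which is then handled by positivity, compactness and uniqueness. Since $\Gamma_\lambda$ commutes with $\ee^{\ii t\mathbb H_\lambda}$, we may first shift all times by $-t_J$. This uses invariance of $\nu$ under $S_t$ and the group property on the classical side, and it lets us assume $t_J=0$. The correlation then reads $\Tr(X_\lambda\,\Op^{\rm A}_{\lambda,\gH}F_J\,\Gamma_\lambda)$, where $X_\lambda=\prod_{j<J}\Ubb_\lambda(t_j)(\Op^{\rm A}_{\lambda,\gH}F_j)$ is uniformly bounded. The whole theorem will follow from two claims, proved by induction on $J$:
\begin{enumerate}
\item[(a)] For each $F\in\Acal_{\rm cyl}$ and each fixed $t$, we have $\Ubb_\lambda(t)\Op^{\rm A}_{\lambda,\gH}F\to_{\mathrm{id}}F\circ S_t$ in the identified $L^2(\Gamma_\lambda)\to L^2(\nu)$ sense of Definition~\ref{def:identified}.
\item[(b)] Identified convergence is stable under left multiplication by uniformly bounded operators that themselves converge in the identified sense.
\end{enumerate}
Given (a) and (b), the $J$-fold product converges to $\prod_j F_j\circ S_{t_j}$ in $L^2$-identified sense, and pairing with the identity (the cyclic vector) gives the stated integral.

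\emph{Proof of (a).} Write $G_\lambda(t)=\Ubb_\lambda(t)\Op^{\rm A}_{\lambda,\gH}F$. Unitarity of $\Ubb_\lambda(t)$ on $L^2(\Gamma_\lambda)$ gives $\|G_\lambda(t)\|_{2,\lambda}=\|\Op^{\rm A}F\|_{2,\lambda}\to\|F\|_{L^2(\nu)}$, using the equilibrium identification \eqref{eq:lower-symbol-pairing} together with convergence of the lower symbols $\mu_{\lambda,L}$ to the projection of $\nu$. It therefore suffices to identify weak limits, i.e.\ to show
\[
\Tr\bigl((\Op^{\rm A}B)^*\,G_\lambda(t)\,\Gamma_\lambda\bigr)\to\int \overline{B}\,F\circ S_t\,\dd\nu
\qquad\text{for } B\in\Acal_{\rm cyl}.
\]
By linearity and the density of positive cylinder functions, we may take $B\ge0$ and $F\ge0$. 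Using the double commutator (Falk--Bruch type) estimate of \cite[Theorem~7.2]{LNR}, we replace $\Tr(\Op^{\rm A}B\,G_\lambda(t)\Gamma_\lambda)$ by $\Tr(G_\lambda(t)\,T_{B,\lambda})$ with
\[
T_{B,\lambda}=\Gamma_\lambda^{1/2}\Op^{\rm A}B\,\Gamma_\lambda^{1/2},\qquad 0\le T_{B,\lambda}\le\|B\|_\infty\Gamma_\lambda .
\]
The error should vanish because commutators of $\Op^{\rm A}B$ with $\lambda\mathbb H_\lambda$ are controlled by the localized generator bounds. Transporting $T_{B,\lambda}$ by $\ee^{-\ii t\mathbb H_\lambda}$ gives positive states that are still dominated by $\|B\|\Gamma_\lambda$. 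Their finite-dimensional lower symbols are therefore tight and dominated by $C\mu_{\lambda,L}$. Along a subsequence they converge for all rational $t$ to measures $\sigma_t\le C_0\nu$. The generator convergence $\Lbb_\lambda\Op^{\rm A}F\to_{\mathrm{id}}\Lcal_0F$ yields a uniform equicontinuity in $t$ and, after passing to the limit in the Duhamel identity
\[
\Tr(\Op^{\rm A}F\,T_t)-\Tr(\Op^{\rm A}F\,T_0)=\int_0^t\Tr\bigl(\Lbb_\lambda\Op^{\rm A}F\,T_s\bigr)\,\dd s,
\]
the Liouville equation for $(\sigma_t)$ with initial datum $\sigma_0=B\,\nu$. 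The uniqueness theorem for positive Gibbs-dominated Liouville curves (the third main result) then forces $\sigma_t=(S_t)_\#(B\nu)$. Since the limit is independent of the subsequence, (a) follows.

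\emph{Proof of (b) and the main obstacle.} For (b), with $\Op^{\rm A}F_j$ fixed and converging, left multiplication by $\Ubb_\lambda(t_j)\Op^{\rm A}F_j=\ee^{\ii t_j\mathbb H}\Op^{\rm A}F_j\ee^{-\ii t_j\mathbb H}$ acts on $L^2(\Gamma_\lambda)$ with norm at most $\|F_j\|_\infty$. Approximating $F\circ S_t$ by cylinder functions and using (a) together with strong-type identified convergence (norm convergence plus weak convergence), the product converges in the identified sense. The main obstacle is the passage to the limit in the generator term $\Tr(\Lbb_\lambda\Op^{\rm A}F\,T_s)$. Here $\Lbb_\lambda\Op^{\rm A}F$ is unbounded, and only its $L^2(\Gamma_\lambda)$-identified convergence is available. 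The first attempt will be Cauchy--Schwarz in the GNS space: write $T_s=\Gamma^{1/2}(\cdot)\Gamma^{1/2}$ and use the domination $T_s\le C\Gamma_\lambda$, which gives a bound by $\|\Lbb_\lambda\Op^{\rm A}F-\Op^{\rm A}\Lcal_0F\|_{2,\lambda}$. The high-frequency density tail is then controlled through Step~2.
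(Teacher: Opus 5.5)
Your part (a) follows the paper's two-point argument closely. It uses the positive functional $\Gamma_\lambda^{1/2}(\Op_{\lambda,\gH}^{\rm A}B)\Gamma_\lambda^{1/2}\le\norm{B}_\infty\Gamma_\lambda$ (Corollary~\ref{cor:positive-decomp}), compactness and the Liouville limit (Theorem~\ref{thm:positive-compactness}), uniqueness, and the upgrade from weak to strong identified convergence by unitarity (Corollary~\ref{cor:strong-identified-dynamics}). Building $\sigma_t$ from finite-mode lower symbols and a projective limit, instead of the Riesz representation in $L^2(\nu)$, is a harmless variant. Three details need repair:
\begin{itemize}
\item The limit $\norm{\Op_{\lambda,\gH}^{\rm A}F}_{2,\lambda}\to\norm{F}_{L^2(\nu)}$ also needs Lemma~\ref{lem:AW-product}, because $(\Op_{\lambda,\gH}^{\rm A}F)^*\Op_{\lambda,\gH}^{\rm A}F\neq\Op_{\lambda,\gH}^{\rm A}(|F|^2)$.
\item The operator $\Op_{\lambda,\gH}^{\rm A}(\Lcal_0F)$ is undefined, since $\Lcal_0F$ is not a bounded cylinder function. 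The generator term must instead be compared with the cylinder approximants furnished by $\Lbb_\lambda\Op_{\lambda,\gH}^{\rm A}F\to_{\mathrm{id}}\Lcal_0F$, as in Step~3 of the proof of Theorem~\ref{thm:positive-compactness}.
\item The identification $\sigma_0=B\nu$ requires a second square-root comparison together with the equilibrium product limit, as in the proof of Theorem~\ref{thm:bounded-two-point}.
\end{itemize}

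The genuine gap is (b), which is exactly what is needed for $J\ge3$. In $L^2(\Gamma_\lambda)$ the norm only controls the factor standing on the right: $\norm{CX}_{2,\lambda}\le\norm{C}_{\mathrm{op}}\norm{X}_{2,\lambda}$. There is no bound of the form $\norm{(C-D)X}_{2,\lambda}\lesssim\norm{C-D}_{2,\lambda}\norm{X}_{\mathrm{op}}$, because $X$ does not commute with $\Gamma_\lambda$. Part (a) only lets you approximate the multiplier $\Ubb_\lambda(t)(\Op_{\lambda,\gH}^{\rm A}F)$ by some $\Op_{\lambda,\gH}^{\rm A}H$ in $L^2(\Gamma_\lambda)$. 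That gives no control of $\bigl(\Ubb_\lambda(t)(\Op_{\lambda,\gH}^{\rm A}F)-\Op_{\lambda,\gH}^{\rm A}H\bigr)x_\lambda$, and combining norm and weak convergence does not fix this.

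The paper avoids the issue with the intertwining identity \eqref{eq:weighted-unitary-implementation}, namely $[\Ubb_\lambda(t)(\Op_{\lambda,\gH}^{\rm A}F)]x_\lambda=\Ubb_\lambda(t)\bigl(\Op_{\lambda,\gH}^{\rm A}F\,\Ubb_\lambda(-t)x_\lambda\bigr)$. Part (a) is extended to arbitrary identified sequences by Lemma~\ref{lem:identified-calculus}\textup{(iv)}. Left multiplication by the \emph{unevolved} $\Op_{\lambda,\gH}^{\rm A}F$ then preserves identified convergence by Theorem~\ref{thm:static-argument}, which rests on the operator-norm product estimate of Lemma~\ref{lem:AW-product}.

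Alternatively, your general (b) can be saved. Let $D=\Op_{\lambda,\gH}^{\rm A}c$, where $c$ is a cylinder approximant of the limit of $C_\lambda$. Reduce, via Lemma~\ref{lem:identified-calculus}\textup{(iv)}, to $x_\lambda=\Op_{\lambda,\gH}^{\rm A}g$ with $g$ real. Writing $\Op_{\lambda,\gH}^{\rm A}g\,\Gamma_\lambda^{1/2}=\Gamma_\lambda^{1/2}\Op_{\lambda,\gH}^{\rm A}g+[\Op_{\lambda,\gH}^{\rm A}g,\Gamma_\lambda^{1/2}]$ and using the Hilbert--Schmidt commutator bound from the proof of Lemma~\ref{lem:gibbs-square-root-comparison} gives
\[
 \norm{(C_\lambda-D)\Op_{\lambda,\gH}^{\rm A}g}_{2,\lambda}
 \le\norm{C_\lambda-D}_{2,\lambda}\norm{g}_\infty
 +\norm{C_\lambda-D}_{\mathrm{op}}
 \bigl(\lambda\norm{g}_\infty
 \norm{\Lbb_\lambda\Op_{\lambda,\gH}^{\rm A}g}_{2,\lambda}\bigr)^{1/2}.
\]
The last term vanishes by the uniform generator bound. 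This commutator input, or the paper's conjugation trick, is the missing idea in your sketch.
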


\begin{remark}
The proof also gives convergence of the generators on
\(L^2(\Gamma_\lambda)\) and of the identified dynamics.  Since these
statements require the identified Hilbert spaces and the closed generators
on \(L^2(\Gamma_\lambda)\) introduced below, they are
stated in their natural setting in
Theorem~\ref{thm:main-identified-dynamics}.
\end{remark}

We next state the results for ordinary Wick polynomials.

\begin{theorem}
\label{thm:main-Wick}
Under the hypotheses of Theorem~\ref{thm:main-bounded}, the following
statements hold.
\begin{enumerate}[label=\textup{(\roman*)},leftmargin=2.2em]
\item Let \(d\in\{2,3\}\).  If \(P_1,P_2\) are polynomials on fixed finite
Fourier spaces \(E_1,E_2\), respectively, then for all fixed
\(s,t\in\mathbb R\),
\begin{align*}
 &\Tr\!\left(
  \bigl(\Ubb_\lambda(s)\Op_{\lambda,E_1}^{\rm W}(P_1)\bigr)^*
  \Ubb_\lambda(t)\bigl(\Op_{\lambda,E_2}^{\rm W}(P_2)\bigr)
 \Gamma_\lambda\right)\\
 &\qquad\longrightarrow
 \int \overline{(P_1)_{E_1}(S_su)}\,
 (P_2)_{E_2}(S_tu)\,\dd\nu(u)
 \qquad(\lambda\downarrow0).
\end{align*}

\item If \(d=2\), fix \(J\ge1\), finite Fourier spaces \(E_j\),
polynomials \(P_j\) on \(E_j\), and times
\(t_1,\ldots,t_J\in\mathbb R\).  Then
\begin{equation*}
 \Tr\!\left(
  \prod_{j=1}^J\Ubb_\lambda(t_j)
  \bigl(\Op_{\lambda,E_j}^{\rm W}(P_j)\bigr)
  \Gamma_\lambda\right)
 \longrightarrow
 \int\prod_{j=1}^J(P_j)_{E_j}(S_{t_j}u)\,\dd\nu(u)
 \qquad(\lambda\downarrow0).
\end{equation*}
\end{enumerate}
\end{theorem}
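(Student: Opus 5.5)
The plan is to deduce Theorem~\ref{thm:main-Wick} from the bounded-observable machinery behind Theorem~\ref{thm:main-bounded}: the identified convergence in $L^2(\Gamma_\lambda)$ and the identified dynamics of Theorem~\ref{thm:main-identified-dynamics}. The only new ingredient will be control of polynomial growth.

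\textbf{Step 1: reduce Wick to anti-Wick symbols.} By the Berezin identity, $\Op^{\rm W}_{\lambda,E}(P)=\Op^{\rm A}_{\lambda,E}(\ee^{-\lambda\Delta_E^{\mathbb C}}P)$. The symbol $\ee^{-\lambda\Delta_E^{\mathbb C}}P-P$ is a polynomial of lower degree whose coefficients are $O(\lambda)$. On a fixed finite Fourier space $E$, the lower-symbol measures $\mu_{\lambda,L}$ have uniform moments of all orders; this should come from the equilibrium identification in Section~\ref{sec:generators}, because finitely many modes have uniform Gibbs moments. Hence
\[
\|\Op^{\rm W}_{\lambda,E}(P)-\Op^{\rm A}_{\lambda,E}(P)\|_{2,\lambda}\to0 .
\]
Next I will truncate. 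Write $P=P\chi_R+P(1-\chi_R)$ with $\chi_R\in C_c^\infty(E)$ a cutoff equal to $1$ on the ball of radius $R$. Then $P\chi_R\in\Acal_{\rm cyl}$, and
\[
\|\Op^{\rm A}(P(1-\chi_R))\|_{2,\lambda}^2\le \Tr\bigl(\Op^{\rm A}(|P|^2(1-\chi_R)^2)\Gamma_\lambda\bigr)
\]
by operator convexity of the POVM (Kadison/Jensen). By \eqref{eq:lower-symbol-pairing} and the uniform moments, this is small uniformly in $\lambda$ as $R\to\infty$. So $\Op^{\rm W}_{\lambda,E}(P)$ is approximated in $\|\cdot\|_{2,\lambda}$, uniformly in small $\lambda$, by bounded anti-Wick cylinder observables. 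The analogous statement holds for $(P)_E$ in $L^2(\nu)$.

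\textbf{Step 2: part (i).} Since $\Ubb_\lambda(t)$ is unitary on $L^2(\Gamma_\lambda)$, the two-point function is the inner product
\[
\langle \Ubb_\lambda(s)X_1,\Ubb_\lambda(t)X_2\rangle_{L^2(\Gamma_\lambda)},\qquad X_j=\Op^{\rm W}_{\lambda,E_j}(P_j).
\]
By Step~1 and unitarity, replacing $X_j$ by $\Op^{\rm A}(P_j\chi_R)$ costs $o_R(1)$ uniformly in $\lambda$. For the bounded cylinder observables, the identified dynamics (strong convergence $\Ubb_\lambda(t)\Op^{\rm A}F\to_{\rm id}\Ucal(t)F$) together with convergence of inner products under identified convergence give the limit $\int\overline{P_1\chi_R(S_su)}\,P_2\chi_R(S_tu)\,\dd\nu$. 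Letting $R\to\infty$ and using invariance of $\nu$ under $S_t$ closes the argument in $d=2,3$.

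\textbf{Step 3: part (ii), $d=2$.} This is the main obstacle. For $J\ge3$ the middle factors are unbounded, so the product cannot be written as a single Hilbert inner product. I would first express the ordered trace as
\[
\Tr\bigl(X_1\Gamma_\lambda^{1/J}X_2\Gamma_\lambda^{1/J}\cdots X_J\Gamma_\lambda^{1/J}\bigr),
\]
the \emph{equally spaced Gibbs-weighted trace}, with error controlled by commutators $[X_j,\Gamma_\lambda^{\theta}]$. Following the double-commutator estimate of \cite[Theorem~7.2]{LNR}, this error is bounded through $\lambda\,\|[\mathbb H_\lambda,X_j]\|$-type quantities. In 2D these should be controlled because the Wick renormalization only diverges logarithmically, and the generator bounds from Section~\ref{sec:higher-density-generator} give uniform higher moments of $[\lambda\mathbb H_\lambda,X]$ on finite modes. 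The weighted trace is then bounded by a Hölder inequality in Schatten norms,
\[
\prod_j \Tr\bigl(|X_j|^{J}\Gamma_\lambda\bigr)^{1/J}.
\]
This makes truncating each $P_j$ to $P_j\chi_R$ uniform in $\lambda$, using the uniform $L^J$ moments as in Step~1. This reduces the problem to bounded cylinder observables, where Theorem~\ref{thm:main-bounded} gives the limit. The hardest point will be proving that the commutator error vanishes for unbounded Wick observables. In 3D the corresponding estimate likely fails, which explains the restriction to $J=2$.
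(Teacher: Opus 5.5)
Part (i) is correct and is essentially the paper's argument. Once $\Op^{\rm W}_{\lambda,E}(P)\to_{\mathrm{id}}(P)_E$ is known, identified dynamics and convergence of identified inner products give the limit. Your Berezin/Cauchy--Schwarz route to that static convergence replaces Lemma~\ref{lem:finite-mode-mixed-calculus}. Both arguments rest on the uniform finite-mode number moments \eqref{eq:finite-mode-number-moment-bound} together with Lemma~\ref{lem:app-radial-AW}; weak convergence of the lower symbols alone would not give uniform moments.

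Part~(ii) has the right architecture: ordered trace, then equally spaced trace, then truncation by H\"older, then bounded observables. It has a genuine gap, however, exactly at the step you call the hardest, and the mechanism you suggest there fails. Moving Gibbs powers produces error terms
\[
 \lambda\int\Tr\bigl(\Gamma_\lambda^{\alpha_1}A_1\cdots\Gamma_\lambda^{\alpha_r-s}B_r\Gamma_\lambda^{s}A_{r+1}\cdots A_J\bigr)\,\dd s,\qquad B_r=\ii[\mathbb H_\lambda,A_r].
\]
The three problems with your proposed control of these terms are:
\begin{itemize}
\item The Gibbs exponents are unequal and may vanish, so the symmetric H\"older bound does not apply.
\item Section~\ref{sec:higher-density-generator} controls only the $L^2(\Gamma_\lambda)$ norm of the generator of a \emph{bounded} cylinder observable. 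It gives no higher moments of $[\lambda\mathbb H_\lambda,X]$ for Wick $X$.
\item That commutator is not a finite-mode object: it contains $B_{\lambda,k}$ acting on all of $\gH$. In addition, finite-mode number bounds do not survive $\Ubb_\lambda(t_j)$, because $[\mathbb H_\lambda,\mathcal N_E]\neq0$.
\end{itemize}

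The missing idea is to work with the \emph{total} number operator, which commutes with $\mathbb H_\lambda$. Every factor, including $\Ubb_\lambda(t_j)X_{\lambda,j}$ and its generator, changes $\mathcal N$ by a bounded amount and is bounded by $b_j(\1+\lambda\mathcal N)^q$. For the generator, $b_j\lesssim1+|\log\lambda|$, coming from the centering $\lambda\Tr\gamma_{0,\lambda}$. A sector-by-sector H\"older inequality and weighted AM--GM then bound every such word by $C\prod_jb_j\,\Tr((\1+\lambda\mathcal N)^{Jq}\Gamma_\lambda)$, uniformly in $\boldsymbol\alpha\in\Delta_{J-1}$ (Lemma~\ref{lem:charge-sector-KMS}). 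In 2D, $\lambda^k\Tr(\mathcal N^k\Gamma_\lambda)\le C_k(1+|\log\lambda|)^k$, so the error is $O(\lambda^{1-\varepsilon})$. In 3D these moments grow like $\lambda^{-k/2}$ and overwhelm the factor $\lambda$; this is the real source of the dimensional restriction.

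Two further steps are also missing:
\begin{itemize}
\item After truncating inside the symmetric trace, you must return to the \emph{ordered} trace of $\Ubb_\lambda(t_j)\Op^{\rm A}_{\lambda,\gH}F_{j,K}$ before Theorem~\ref{thm:main-bounded} applies. This needs the same estimate again, since $\Lbb_\lambda\Op^{\rm A}_{\lambda,\gH}F$ is unbounded.
\item For $J\ge3$, a uniform truncation needs higher moments of $\Op^{\rm W}_{\lambda,E}(P)-\Op^{\rm A}_{\lambda,\gH}(P\chi_R)$. Your Step~1 bound $\Op^{\rm A}(Q)^*\Op^{\rm A}(Q)\le\Op^{\rm A}(|Q|^2)$ yields only second moments. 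The paper instead cuts off with $\chi_K(\lambda\mathcal N_E)X\chi_K(\lambda\mathcal N_E)$. It bounds the powers of this cutoff via sector estimates, Araki--Lieb--Thirring, and finite-mode number moments, which is legitimate because the symmetric norm is $\Ubb_\lambda(t)$-invariant. It then compares with $\Op^{\rm A}_{\lambda,\gH}F_K$ in operator norm (Lemmas~\ref{lem:finite-mode-radial-sector} and~\ref{lem:domain-safe-radial-AW}).
\end{itemize}
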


\begin{remark}
\label{rem:finite-coordinate-observables}
In \cite{FKSS}, time-dependent correlation limits were proved for the 1D cubic nonlinear Schr\"odinger equation.  In that setting $h^{-1}\in\mathfrak S^1$, whereas here
$h^{-1}\in\mathfrak S^2\setminus\mathfrak S^1$, so the 
density requires Wick renormalization. 
Theorem~\ref{thm:main-bounded} gives arbitrary ordered
multi-time convergence for bounded finite-coordinate observables in both
dimensions.  For finite-mode Wick polynomials,
Theorem~\ref{thm:main-Wick}\textup{(i)} gives two-point convergence in
\(d=2,3\), while part~\textup{(ii)} gives arbitrary ordered multi-time
convergence in \(d=2\).  Under \(\tau=\lambda^{-1}\), the finite-rank 
observables of \cite{FKSS} with finite Fourier support coincide with the finite-mode
ordinary Wick operators used here; the exact identification is given in
Corollary~\ref{cor:FKSS-finite-mode}.
\end{remark}

The final theorem gives uniqueness of positive measure-valued solutions
of the Liouville equation under domination by the Gibbs measure. It is a purely
classical result and provides the uniqueness input for the dynamical quantum limit.

\begin{theorem}
\label{thm:main-DLU}
Assume \eqref{eq:finite-potential-regularity}.  Let \(d\in\{2,3\}\), let
\(1/2<\kappa<1\), and fix \(T>0\) and \(0\le C_0<\infty\).
Let \(t\mapsto\sigma_t\), \(|t|\le T\), be a weakly continuous curve of
finite positive Borel measures on \(H^{-\kappa}(\mathbb T^d)\) such that
\[
 0\le\sigma_t\le C_0\nu,
 \qquad |t|\le T,
\]
and
\[
 \int F\,\dd\sigma_t-\int F\,\dd\sigma_0
 =\int_0^t\int\Lcal_0F\,\dd\sigma_s\,\dd s
 \qquad(F\in\Acal_{\rm cyl},\ |t|\le T),
\]
then
\begin{equation*}
 \sigma_t=(S_t)_\#\sigma_0,
 \qquad |t|\le T.
\end{equation*}
\end{theorem}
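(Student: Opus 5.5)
The plan follows Step~4 of the introduction: represent $\sigma$ by a measure on paths, show that every such path coincides with the Gibbs flow, and conclude.

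\emph{Step 1: superposition.} Apply the superposition principle of Ammari--Farhat--Sohinger \cite{AFS} to the curve $(\sigma_t)_{|t|\le T}$. Its hypotheses should follow from three facts. First, the Liouville equation is tested against $\Acal_{\rm cyl}$. Second, the projected Hartree vector field is integrable: for each finite Fourier projection $\Pi$, the function $u\mapsto\Pi\bigl(hu+(v*{:}|u|^2{:})u\bigr)$ lies in $L^1(\nu)$ by the Wick moment bounds for $\rho_k$, and $\sigma_s\le C_0\nu$ gives uniform integrability against $\sigma_s$. Third, the curve lives on $H^{-\kappa}$. The output is a finite positive measure $\eta$ on $C([-T,T];H^{-\kappa})$ with $(e_t)_\#\eta=\sigma_t$, concentrated on paths $u_\cdot$ that satisfy the mode-wise integral equation
\[
\langle \mathrm e_p,u_t\rangle-\langle\mathrm e_p,u_0\rangle=-\ii\int_0^t\bigl\langle \mathrm e_p,hu_s+(v*{:}|u_s|^2{:})u_s\bigr\rangle\,\dd s .
\]

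\emph{Step 2: regularity of paths from domination.} Because $(e_t)_\#\eta\le C_0\nu$ and $\nu\ll\mu_0$, Fubini shows that for $\eta$-a.e.\ path and a.e.\ $s$ the point $u_s$ lies in any prescribed $\nu$-full set. Take this set to carry the Wick moments $\sum_k\widehat v(k)\langle k\rangle|\rho_k|$, the Gaussian Strichartz/Wiener-type bounds, and so on. Since $\int_{-T}^T\int\sum_k\langle k\rangle\widehat v(k)|\rho_k|\,\dd\sigma_s\dd s\le 2TC_0\|\widehat v\|_{\ell^1_2}\int|\rho_k|\,\dd\nu$, the potential $V_s=v*{:}|u_s|^2{:}$ is real and lies in $L^1_{\rm loc}(\mathcal W^1)$. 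Then $u$ solves the linear Duhamel equation
\[
u_t=\ee^{-\ii th}u_0-\ii\int_0^t\ee^{-\ii(t-s)h}V_su_s\,\dd s
\]
with this potential.

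\emph{Step 3: pathwise uniqueness (main obstacle).} Let $u,w$ be two such paths with the same initial datum $u_0$, and write $V,W$ for their potentials. Freeze the potentials and let $U_V(t,s)$ and $U_W(t,s)$ be the unitary propagators of $\ii\partial_t=h+V_t$, constructed in the appendix. Then $u_t=U_V(t,0)u_0$ and $w_t=U_W(t,0)u_0$. The density modes $\rho_k(u_t)$ are then the renormalized quadratic forms of $U_V(t,0)^*M_kU_V(t,0)$ evaluated at $u_0$, and similarly for $w$.

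The crux is a comparison estimate in the spirit of Proposition~\ref{prop:transported-density-comparison}: for $u_0$ in a single Gaussian full-measure set, chosen independently of $V,W$ because the bound must be deterministic in the potentials, one wants for $k\ne0$
\[
|\rho_k(u_t)-\rho_k(w_t)|\le C(u_0)\langle k\rangle\int_0^t\|V_s-W_s\|_{\mathcal W^0}\,\dd s .
\]
The difference of conjugated multipliers is an integral of commutators $[V_s-W_s,\cdot]$, and its renormalized trace part is controlled through the coefficient-measure representation together with a Hilbert--Schmidt bound against $h^{-1}$. Multiplying by $\widehat v(k)$ and summing, the finite second moment $\sum_k\langle k\rangle^2\widehat v(k)<\infty$ gives
\[
\|(V_t-W_t)_{\ne0}\|_{\mathcal W^1}\le C\int_0^t\|V_s-W_s\|_{\mathcal W^1}\,\dd s .
\]

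The zero mode of $V-W$ is a time-dependent real constant, so it only multiplies the solution by a scalar phase. Handle it by a common Fourier cutoff: $\rho_0(u_t)$ is conserved along the dynamics since the mass is formally conserved, and the cutoff versions, which satisfy $\Pi_R$-level identities, agree up to that phase. Gronwall then gives $(V-W)_{\ne0}=0$. Once the zero mode is identified as well, $V=W$, so $u=w$ by uniqueness for the linear equation, and that same uniqueness identifies $u_t$ with $S_tu_0$. If this fails, the first thing I would try is to prove the comparison directly for finite-mode truncations $\Pi_R$ and pass $R\to\infty$ using $L^q(\mu_0)$ convergence of $\rho_{R,k}$.

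\emph{Step 4: conclusion.} Disintegrate $\eta$ along $e_0$, which has marginal $\sigma_0$. By a coupling argument, any two path measures with the same initial marginal must be supported on pairs of paths that solve the equation with the same datum; by Step~3 these pairs coincide. Hence $\eta$-a.e.\ path is $t\mapsto S_tu_0$, and therefore $\sigma_t=(e_t)_\#\eta=(S_t)_\#\sigma_0$.
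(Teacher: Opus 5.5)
Your proposal follows the paper's proof of Theorem~\ref{thm:DLU} essentially step for step: superposition via \cite{AFS}, transfer of $\nu$-a.s.\ properties to path points by marginal domination, frozen potentials with the nonzero density modes compared on a full-measure set fixed before the potentials (Proposition~\ref{prop:transported-density-comparison} and Lemma~\ref{lem:sharp-density-identification}), Gronwall, recovery of the zero mode, and the conditional-product coupling over $\sigma_0$, with one correction to your sketch of the comparison: its trace correction is controlled by the cancellation $|(1-\ee^{\ii\ell\cdot\xi})G_\delta(\xi)|\le C\langle\ell\rangle$ at the Green-function singularity, not by a Hilbert--Schmidt bound. The step to restate is the order in Step~3: your displayed inequality still has $(V-W)_0$ on the right, so Gronwall must be run for the mean-zero potentials $V^\circ,W^\circ$ (allowed since $U_V(t,0)u=\ee^{-\ii\phi_V(t)}U_{V^\circ}(t,0)u$ and $\rho_{\varepsilon,q}(\ee^{\ii\theta}u)=\rho_{\varepsilon,q}(u)$ for $q\neq0$), and only afterwards is the zero mode recovered from $\rho_{R,0}(\ee^{\ii\theta}w)=\rho_{R,0}(w)$ along a subsequence $R_m$ converging $\dd t\otimes\dd\eta$-a.e.; conservation of the renormalized mass is not available for these rough paths and is not needed.
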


The proof is given by Theorem~\ref{thm:DLU}.

\subsection{Outline of the proof}
\label{subsec:proof-outline}

The main ideas were explained in the Introduction; here we give a precise roadmap with references to the results used below. We first prove Theorem~\ref{thm:main-bounded} in the case \(J=2\).
The two-point limit determines the classical limit of a single
Heisenberg-evolved observable.  The norm and multiplication results
stated below show that this convergence is preserved when another
evolved bounded observable is multiplied from the left.  Applying this
argument repeatedly yields the limit for every finite ordered product. 

The two-point case contains the main difficulty.  Since
\(\Gamma_\lambda\) does not commute with a general observable, the
ordered quantum functional is not positive and cannot be interpreted
directly as a measure.  We therefore compare it with a positive
functional dominated by the Gibbs state.  The proof is then based on
three ingredients: equilibrium convergence, convergence of the quantum
generators, and uniqueness for positive Liouville curves dominated by
\(\nu\).  We denote these ingredients by \textup{(E)}, \textup{(G)},
and \textup{(U)}, respectively.

\begin{description}[leftmargin=3.2em,labelwidth=2.4em,
                    labelsep=0.6em,style=multiline]
\item[\textup{(E)}]
\emph{Equilibrium weighted Hilbert-space identification.}
For \(F,G\in\Acal_{\rm cyl}\),
\begin{align*}
 \Tr(\Op_{\lambda,\gH}^{\rm A}F\Gamma_\lambda)
 &\longrightarrow\int F\,\dd\nu,\\
 \Tr\!\left(
 \bigl(\Op_{\lambda,\gH}^{\rm A}F\bigr)^*
 \Op_{\lambda,\gH}^{\rm A}G\,\Gamma_\lambda
 \right)
 &\longrightarrow \int \overline F G\,\dd\nu,\\
 \|\Op_{\lambda,\gH}^{\rm A}F\Op_{\lambda,\gH}^{\rm A}G
      -\Op_{\lambda,\gH}^{\rm A}(FG)\|_{2,\lambda}
 &\longrightarrow0.
\end{align*}
These facts are proved
in Subsection~\ref{sec:static-convergence}; see 
Theorem~\ref{thm:static-argument}.

\item[\textup{(G)}]
\emph{Convergence of the generators on \(L^2(\Gamma_\lambda)\).}
There is \(\lambda_0>0\) such that, for every
\(F\in\Acal_{\rm cyl}\),
\[
 \Op_{\lambda,\gH}^{\rm A}F\in\Dom(\Lbb_\lambda),\qquad
 \Lbb_\lambda \Op_{\lambda,\gH}^{\rm A}F
 \to_{\mathrm{id}}\Lcal_0F,
 \qquad
 \sup_{0<\lambda\le\lambda_0}
 \|\Lbb_\lambda \Op_{\lambda,\gH}^{\rm A}F\|_{2,\lambda}<\infty.
\]
This is Theorem~\ref{thm:one-generator}, proved in
Section~\ref{sec:higher-density-generator}.

\item[\textup{(U)}]
\emph{Uniqueness of dominated positive Liouville curves.}
Fix \(T>0\), \(1/2<\kappa<1\), and \(C_0<\infty\).  If
\(t\mapsto\sigma_t\), \(|t|\le T\), is a weakly continuous curve of
finite positive Borel measures on \(H^{-\kappa}(\mathbb T^d)\) satisfying
\[
 0\le\sigma_t\le C_0\nu
\]
and
\[
 \int F\,\dd\sigma_t-\int F\,\dd\sigma_s
 =\int_s^t\!\int\Lcal_0F\,\dd\sigma_r\,\dd r,
 \qquad F\in\Acal_{\rm cyl},\quad s,t\in[-T,T],
\]
then
\[
 \sigma_t=(S_t)_\#\sigma_0,
 \qquad |t|\le T.
\]
Section~\ref{sec:DLU} proves this statement for the renormalized Hartree NLS
equation; see Theorem~\ref{thm:DLU}.
\end{description}

We now explain how these ingredients are combined.
For a real-valued observable \(F\), adding
\(\|F\|_\infty\) makes its anti-Wick quantization positive.  The
trace-norm comparison then reduces the ordered two-point functional,
up to a vanishing error and an equilibrium term, to a positive
functional dominated by the Gibbs state.  The positive Liouville limit
criterion, based on~\textup{(E)}, \textup{(G)}, and invariance of
\(\Gamma_\lambda\), gives a subsequential limit which is a positive
Liouville curve dominated by a constant multiple of \(\nu\).
Statement~\textup{(U)} identifies this curve with the push-forward of
its initial measure under the Hartree NLS flow.  At time zero,
statement~\textup{(E)} identifies the initial measure, and invariance
of \(\nu\) cancels the constant added to \(F\).  This proves the
two-point limit for real-valued \(F\); the general case follows by
linearity.

The two-point limit gives identified weak convergence of the
Heisenberg dynamics.  Unitarity and the norm convergence
in~\textup{(E)} upgrade this to identified strong convergence.
The multiplication result in~\textup{(E)} then shows that left
multiplication by an evolved bounded cylinder observable preserves
this convergence.  Applying this fact successively yields the ordered
multi-time limits in Theorem~\ref{thm:main-bounded}.

The positive comparison, the positive Liouville limit criterion, and
the identification using~\textup{(U)} are carried out in
Subsections~\ref{subsec:positive-functionals-compactness} and
\ref{subsec:Liouville-transfer}, respectively.
Section~\ref{sec:DLU} proves~\textup{(U)}.
Finite-mode ordinary Wick polynomials are treated separately in
Section~\ref{sec:Wick-observables}.  Their two-point limits in both
dimensions follow from equilibrium identified convergence and the
convergence of the Heisenberg dynamics.  For products of three or more
factors, this argument cannot be iterated because the Wick operators
are unbounded.  In dimension two we compare the
ordered trace with the equally spaced Gibbs-weighted trace by moving
fractional powers of \(\Gamma_\lambda\) through the factors.  The
resulting commutator terms contain one generator insertion, and the
two-dimensional number-operator estimates show that their total
contribution vanishes as \(\lambda\downarrow0\).  

\section{Generators of the classical and quantum dynamics}
\label{sec:generators}

In this section we define the classical and quantum generators on \(L^2(\nu)\) and
\(L^2(\Gamma_\lambda)\), respectively. For a cylinder function, the
classical generator depends only on the finite-dimensional range of its
projection. On the quantum
side, the Heisenberg dynamics extends to a unitary group on the weighted
Hilbert space \(L^2(\Gamma_\lambda)\).

Subsection~\ref{subsec:classical-generator} gives the finite-dimensional
classical differential calculus.
Subsection~\ref{subsec:quantum-weighted-generator} constructs the closed
quantum generator on \(L^2(\Gamma_\lambda)\) and computes the Hamiltonian
commutator on
finite-particle vectors.  Subsection~\ref{sec:static-convergence} proves
the equilibrium mean, product, and inner-product limits, and introduces
convergence in the varying weighted Hilbert spaces. Subsection~\ref{subsec:equilibrium-density-tails} establishes the moment
estimates
used later.  Its convergence and multiplication results give
statement~\textup{(E)} of
Subsection~\ref{subsec:proof-outline}.

\subsection{The classical generator on cylinder functions}
\label{subsec:classical-generator}

In this subsection we define the explicit infinitesimal derivation
\(\Lcal_0\) on cylinder observables and prove that every
\(F\in\Acal_{\rm cyl}\) belongs to \(\operatorname{Dom}(\Lcal)\), with
\(\Lcal F=\Lcal_0F\), where \(\Lcal\) is the closed generator of the
Koopman group associated with the Hartree NLS flow.  We first state the
basic properties of the measurable Gibbs-invariant Hartree NLS flow.  We
then introduce the finite-dimensional differential calculus and define
\(\Lcal_0\) by differentiating a cylinder function along the projected
Hartree vector field.  The finite-dimensional chain rule along the
Hartree NLS flow yields the asserted domain inclusion and generator
identity.  Finally, in preparation for comparison with the quantum
commutator, we introduce the derivation associated with a one-particle
operator, extend it to the non-self-adjoint Fourier multipliers \(M_k\),
and rewrite \(\Lcal_0\) in the symmetric form used below.

Let \(h_p=|p|^2+m\) and
\begin{equation*}
	\mathscr F_p(u)
	:=\sum_{k\in\mathbb Z^d}\widehat v(k)\rho_k(u)u_{p+k}.
\end{equation*}
The series is the limit of its finite partial sums in every finite
\(L^q(\nu)\).  Recall that the Hartree NLS equation is understood as follows:
\begin{equation}\label{eq:renormalized-Hartree-equation}
	\ii\partial_t u_{t,p}
	=h_pu_{t,p}+\mathscr F_p(u_t),
	\qquad p\in\mathbb Z^d.
\end{equation}

We first state the properties of the Gibbs-invariant Hartree NLS flows and
identify their Koopman generators.

\begin{theorem}
	\label{thm:canonical-flow}
	Assume \eqref{eq:finite-potential-regularity}.  For each
	\(d\in\{2,3\}\), there is a Borel set
	\(\Sigma_d\subset\Xcal\), with \(\nu(\Sigma_d)=1\), and a family of
	measurable solution maps \(S_t:\Sigma_d\to\Sigma_d\), \(t\in\mathbb R\),
	for \eqref{eq:renormalized-Hartree-equation}.  We use the measurable
	extension to \(\Xcal\) defined by
	\[
	S_tu=0\qquad(u\notin\Sigma_d,\ t\in\mathbb R).
	\]
	For every \(T<\infty\) and every \(s\) satisfying \(s<0\) when \(d=2\)
	and \(s<-1/2\) when \(d=3\), the path map
	\(u\mapsto S_{\cdot}u\) is Borel from \(\Xcal\) to
	\(C([-T,T];H^s(\mathbb T^d))\).
	\begin{enumerate}[label=\textup{(\roman*)},leftmargin=2.2em]
		\item On \(\Sigma_d\), the maps form a measure-preserving group pointwise: for $ t,\tau\in\mathbb R$
		\[
		S_0u=u,\qquad
		S_{t+\tau}u=S_t(S_{\tau}u),
		\qquad
		(S_t)_\#\nu=\nu.
		\]
		\item For every \(u\in\Sigma_d\)  and every
		\(p\in\mathbb Z^d\),
		\[
		(S_tu)_p
		=\ee^{-\ii t h_p}u_p
		-\ii\int_0^t\ee^{-\ii(t-r)h_p}
		\mathscr F_p(S_ru)\,\dd r.
		\]
	
		\item The flow is gauge equivariant:
		\[
		S_t(\ee^{\ii\theta}u)
		=\ee^{\ii\theta}S_tu
		\qquad(t,\theta\in\mathbb R)
		\]
		on a common full-measure subset of \(\Sigma_d\).
	\end{enumerate}
\end{theorem}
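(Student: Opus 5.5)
The plan is to take Theorem~\ref{thm:canonical-flow} essentially as a repackaging of the known probabilistic Cauchy theory. In $d=3$ we use the Gibbs-invariant flow of Deng--Nahmod--Yue \cite{DNYHartree}. In $d=2$ we use the simpler Bourgain-type argument \cite{BourgainGP,Bourgain2D,OhThomannNLS}. The only work is to extract a full-measure invariant set on which the listed properties hold simultaneously. We first recall, for the truncated equations
\[
\ii\partial_t u^N_p=h_pu^N_p+\Pi_N\mathscr F^N_p(u^N),
\]
with $\mathscr F^N$ built from $\rho_{N,k}$, that the truncated Gibbs measures $\nu_N$ are invariant by Liouville's theorem and conservation of the truncated energy and mass. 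The cited works then give almost sure convergence of $u^N$ to a limit $u$ in $C([-T,T];H^s)$ for each $s$ as in the statement. Bourgain's argument applies because $\widehat v\ge0$ and $v\ge 0$ make $\mathcal V$ nonnegative, so $\nu_N\to\nu$ in total variation with uniform $L^p$ bounds on the densities. Convergence holds on a set $\Sigma^{T}$ of full $\nu$-measure, and also in probability, uniformly in $N$-tails, via the invariance-plus-local-theory bootstrap.

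Next we construct $\Sigma_d$. We set $\Sigma_d^0=\bigcap_{T\in\mathbb N}\Sigma^T$, which has full measure. Since uniqueness of the limit gives $S_t$ as a pointwise limit, $S_t$ is Borel as a limit of continuous maps. Invariance $(S_t)_\#\nu=\nu$ follows by passing to the limit in $(\Phi^N_t)_\#\nu_N=\nu_N$ against bounded continuous test functions. The group law $S_{t+\tau}=S_tS_\tau$ holds on the set where both sides are defined. This set has full measure because $S_\tau^{-1}(\Sigma_d^0)$ has full measure by invariance. To obtain a genuinely invariant set, we take
\[
\Sigma_d=\bigcap_{q\in\mathbb Q}S_q^{-1}(\Sigma_d^0),
\]
iterated countably many times. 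This is the standard Bourgain construction. Continuity in $t$ then extends the group law from rational to real times. Borel measurability of $u\mapsto S_\cdot u$ into $C([-T,T];H^s)$ follows because it is a pointwise limit of the continuous maps $u\mapsto \Phi^N_\cdot u$ on $\Sigma_d$, and it is extended by $0$ off $\Sigma_d$.

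For (ii), we pass to the limit in the Duhamel formula mode by mode. Fix $p$. The nonlinearity $\mathscr F_p(u)=\sum_k\widehat v(k)\rho_k(u)u_{p+k}$ must be shown to satisfy $\mathscr F^N_p(\Phi^N_ru)\to\mathscr F_p(S_ru)$ in $L^1_r$. For this we use invariance:
\[
\int\!\!\int_0^T|\mathscr F^N_p(\Phi^N_ru)-\mathscr F_p(S_ru)|\,\dd r\,\dd\nu
\]
is controlled by the $L^2(\nu)$ convergence $\rho_{N,k}\to\rho_k$. It is also controlled by the summability $\sum_k\widehat v(k)<\infty$ and by the uniform $L^q$ bounds. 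Along a subsequence this gives almost sure convergence, and we then shrink $\Sigma_d$ again by the same invariant-intersection trick. This is the step I expect to be the main obstacle. In $d=3$ the flow is not a limit of the naive truncations in a strong enough topology for pointwise identification of $\rho_k(S_ru)$. One must use the DNY random-averaging structure to show that $\rho_k\circ S_r$ agrees with the $L^q(\nu)$-defined Wick density for almost every $u$ and every $r$. The route is measurability plus invariance: $\rho_k\circ S_r$ is defined $\nu$-a.e. for each $r$, and Fubini handles a.e. $r$. Continuity of both sides in $t$ then recovers the Duhamel identity for all $t$.

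For (iii), gauge invariance of $\mu_0$, $\mathcal V$, the truncations, and the truncated flows gives $\Phi^N_t(\ee^{\ii\theta}u)=\ee^{\ii\theta}\Phi^N_tu$. This passes to the limit on $\Sigma_d\cap\ee^{-\ii\theta}\Sigma_d$. Intersecting over rational $\theta$ and using continuity in $\theta$ of the limits yields a common full-measure subset. This subset is invariant because $\nu$ is gauge invariant.
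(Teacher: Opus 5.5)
Your overall route is the paper's. Theorem~\ref{thm:canonical-flow} is not proved there; it is taken from \cite{DNYHartree} for $d=3$ and from \cite{BourgainGP,Bourgain2D,DNY2D,OhThomannNLS} for $d=2$, and the paper says these works already supply the invariant full-measure set, the group law, invariance, the mild equation and gauge equivariance. The problem lies in the extraction you carry out yourself from almost sure convergence of the truncated flows $\Phi^N_t$, where two steps fail as written.

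\textbf{The group law in (i).} Soft arguments give $S_{t+\tau}=S_t\circ S_\tau$ only $\nu$-almost everywhere for each fixed pair $(t,\tau)$. One way is via $G\circ\Phi^N_\tau\to G\circ S_\tau$ in $L^2(\nu)$ for bounded Borel $G$, which follows from $\nu_N$-invariance, $\|\nu_N-\nu\|_{\mathrm{TV}}\to0$ and $(S_\tau)_\#\nu=\nu$. Your claim that the law holds at every point where both sides are defined is a pointwise uniqueness statement that you do not justify. More seriously, the passage from rational to real times does not work:
\begin{itemize}
\item $\bigcap_{q\in\mathbb Q}S_q^{-1}(\Sigma_d^0)$ is invariant only under rational times;
\item if $q_n\to\tau$, continuity of $r\mapsto S_ru$ gives $S_{t+q_n}u\to S_{t+\tau}u$, but concluding $S_t(S_{q_n}u)\to S_t(S_\tau u)$ requires continuity of $S_t$ in the initial datum, which a merely measurable flow on $H^s$ does not have.
\end{itemize}
A single set with $S_t\Sigma_d\subset\Sigma_d$ for all real $t$ and the pointwise group law must be imported from the cited works, where it rests on their local theory (uniqueness in their structured solution classes). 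Your ``continuity in $\theta$'' in (iii) has the same defect, but there it is unnecessary. The truncated flows are exactly gauge equivariant, so on the convergence set the identity holds for all real $\theta$ at once; you only need to keep $\Sigma_d$ gauge invariant.

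\textbf{The mild equation (ii).} Bounding $\int\!\int_0^T|\mathscr F^N_p(\Phi^N_ru)-\mathscr F_p(S_ru)|\,\dd r\,\dd\nu$ by $\|\rho_{N,k}-\rho_k\|_{L^2(\nu)}$ does not follow from invariance. Write
$\rho_{N,k}(\Phi^N_ru)-\rho_k(S_ru)=(\rho_{N,k}-\rho_k)(\Phi^N_ru)+\bigl[\rho_k(\Phi^N_ru)-\rho_k(S_ru)\bigr]$.
\begin{itemize}
\item The first term is integrated against $\nu$, which is not $\Phi^N_r$-invariant. Only $\nu_N$ is, and $\dd\nu/\dd\nu_N=(Z_N/\mathfrak Z)\ee^{\mathcal V_N-\mathcal V}$ is unbounded, with $\mathcal V_N,Z_N$ the truncated interaction and partition function.
\item The second term asks for convergence of an unbounded, non-continuous observable along $\Phi^N_ru\to S_ru$ in $H^s$.
\end{itemize}
Your fallback (that $\rho_k\circ S_r$ is defined $\nu$-a.e.\ and Fubini handles a.e.\ $r$) only shows that the right-hand side of the Duhamel formula makes sense. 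It does not identify it with the limit of the truncated nonlinearities, which is the whole content of (ii). Either take the mild equation from the cited constructions, as the paper does, or supply a real argument. One option is the Koopman convergence above applied to bounded truncations of $\rho_k$, together with uniform integrability of $\rho_k\circ\Phi^N_r$ under $\nu$, obtained by comparing $\nu$ with $\nu_N$ on the events $\{\mathcal V_N\le L\}$.
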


In dimension three, the Hartree NLS flow follows from the construction
underlying \cite[Theorem~1.3]{DNYHartree}.   In dimension
two, one may specialize the Hartree argument of \cite{BourgainGP} or use
the invariant Gibbs-flow arguments of
\cite{Bourgain2D,DNY2D,OhThomannNLS}.  These references give the measurable
flow, its group and invariance properties, the mild equation, and gauge
equivariance.  In the following we identify the Koopman
generator of the Hartree NLS flow on cylinder functions.

To this end, we introduce the following notation. If
\(f\in C^1(\Pi\gH;\mathbb C)\) and
\(z,\eta\in \Pi\gH\), define the real directional derivative of \(f\)
at \(z\) in the fixed vector direction \(\eta\) by
\begin{equation*}
 D_\eta f(z)
 :=\left.\frac{\dd}{\dd s}f(z+s\eta)\right|_{s=0},
 \qquad s\in\mathbb R.
\end{equation*}
Equivalently, \(D_\eta f(z)=Df(z)[\eta]\), where \(Df(z)\) is the real
Fr\'echet differential.  Let
\(r_\Pi=\dim_{\mathbb C}(\Pi\gH)=\operatorname{rank}\Pi\).  In complex
coordinates \(z_j=x_j+\ii y_j\), \(1\le j\le r_\Pi\), set
\begin{equation*}
 \partial_{z_j}
 =\frac12(\partial_{x_j}-\ii\partial_{y_j}),
 \qquad
 \partial_{\bar z_j}
 =\frac12(\partial_{x_j}+\ii\partial_{y_j}).
\end{equation*}
Then
\begin{equation*}
 D_\eta f(z)
 =\sum_{j=1}^{r_\Pi}
 \left(\partial_{z_j}f(z)\eta_j
 +\partial_{\bar z_j}f(z)\overline{\eta_j}\right).
\end{equation*}

For the cylinder function \(f\circ\Pi\in \Acal_{\rm cyl}\), define the projected vector field
\begin{equation*}
 X_\Pi(u)
 :=
 -\ii h\Pi u
 -\ii\sum_{k\in\mathbb Z^d}\widehat v(k)\rho_k(u)\Pi M_{-k}u
 \in \Pi\gH.
\end{equation*}
We then define $\Lcal_0$ on $\Acal_{\rm cyl}$ as follows:
\begin{equation}\label{eq:classical-generator-direct}
 (\Lcal_0(f\circ\Pi))(u):=D_{X_\Pi(u)}f(\Pi u).
\end{equation}
Then \eqref{eq:finite-potential-regularity} and Gaussian hypercontractivity imply that, for $F\in\Acal_{\rm cyl}$ and $q<\infty$,
\begin{equation}\label{eq:classical-generator-Lq}
	\Lcal_0F\in L^q(\nu).
\end{equation}

\begin{proposition}
	\label{prop:classical-Koopman-generator}
	The operators \(\Ucal(t)F=F\circ S_t\) form a strongly continuous
	unitary group on \(L^2(\nu)\).  If \(\Lcal\) denotes its closed
	skew-adjoint generator, then, for every \(F\in\Acal_{\rm cyl}\),
	\begin{equation}\label{eq:canonical-cylinder-identity}
		F(S_tu)-F(u)
		=\int_0^t(\Lcal_0F)(S_ru)\,\dd r.
	\end{equation}
	The identity holds in \(L^2(\nu)\) and for \(\nu\)-almost every
	initial datum. Furthermore,
	\begin{equation}\label{eq:app-classical-integrated-generator}
		\Ucal(t)F-F
		=\int_0^t\Ucal(r)\Lcal_0F\,\dd r
		\qquad\text{in }L^2(\nu),
	\end{equation}
	and hence
	\begin{equation}\label{eq:app-classical-generator-identification}
		\Acal_{\rm cyl}\subset\Dom(\Lcal),
		\qquad
		\Lcal F=\Lcal_0F.
	\end{equation}
\end{proposition}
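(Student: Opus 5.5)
The plan is to verify the three assertions of Proposition~\ref{prop:classical-Koopman-generator} in order.

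\emph{Unitary group.} By Theorem~\ref{thm:canonical-flow}(i), each \(S_t\) is measure preserving and satisfies \(S_{t+\tau}=S_t\circ S_\tau\) on \(\Sigma_d\). Hence \(\Ucal(t)\) is an isometric group on \(L^2(\nu)\), with inverse \(\Ucal(-t)\), and therefore unitary. For strong continuity it suffices to check a dense class, and I would use \(\Acal_{\rm cyl}\). If \(F=f\circ\Pi\) with \(f\in C_c^\infty\), then \(t\mapsto \Pi S_tu\) is continuous for \(u\in\Sigma_d\), because the path map is continuous into \(H^s\) and \(\Pi\) is a finite Fourier projection. So \(F(S_tu)\to F(u)\) pointwise, and dominated convergence (using \(|F|\le\|F\|_\infty\)) gives \(L^2\) convergence. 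Unitarity then extends strong continuity from \(\Acal_{\rm cyl}\) to all of \(L^2(\nu)\).

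\emph{Pathwise chain rule \eqref{eq:canonical-cylinder-identity}.} Fix \(u\) in a full-measure subset of \(\Sigma_d\). Theorem~\ref{thm:canonical-flow}(ii) gives the mild equation for each of the finitely many coordinates \(p\) in the range of \(\Pi\). I would show that \(r\mapsto\mathscr F_p(S_ru)\) is locally integrable; granted this, each coordinate \((S_tu)_p\) is absolutely continuous, with \(\frac{\dd}{\dd t}(S_tu)_p=-\ii h_p(S_tu)_p-\ii\mathscr F_p(S_tu)\) for a.e.\ \(t\). The finite-dimensional chain rule for \(f\in C^1\) along an absolutely continuous curve then yields
\(F(S_tu)-F(u)=\int_0^t D_{\dot z_r}f(\Pi S_ru)\,\dd r\).
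The remaining identification is \(\dot z_r=X_\Pi(S_ru)\), using \(\langle \mathrm e_p,M_{-k}u\rangle=u_{p+k}\).

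The main obstacle is the integrability of \(\mathscr F_p(S_ru)\), since \(\rho_k\) is only an \(L^q\)-limit and need not be pointwise well defined. My plan is to work with the \(L^q(\nu)\) version of \(\mathscr F_p\), which lies in \(L^q\) by \eqref{eq:classical-generator-Lq}. Invariance of \(\nu\) and Fubini give \(\int\int_{-T}^T|\mathscr F_p(S_ru)|\,\dd r\,\dd\nu<\infty\), so the time integral is finite for \(\nu\)-a.e.\ \(u\). Invariance also ensures that the \(\nu\)-null ambiguity in defining \(\mathscr F_p\) transfers to a \(\dd r\otimes\nu\)-null set along trajectories. Taking a countable union over \(p\) and \(T\in\mathbb N\) produces the required full-measure set.

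\emph{Integrated form and generator.} The \(L^2\) version of \eqref{eq:canonical-cylinder-identity} is \eqref{eq:app-classical-integrated-generator}. Since \(\Lcal_0F\in L^2(\nu)\), the Bochner integral \(\int_0^t\Ucal(r)\Lcal_0F\,\dd r\) exists by strong continuity, and it agrees a.e.\ with the pointwise integral by Fubini. Dividing \eqref{eq:app-classical-integrated-generator} by \(t\) and letting \(t\to0\), strong continuity gives \(t^{-1}(\Ucal(t)F-F)\to\Lcal_0F\) in \(L^2(\nu)\). Hence \(F\in\Dom(\Lcal)\) and \(\Lcal F=\Lcal_0F\). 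By Stone's theorem, \(\Lcal\) is closed and skew-adjoint.
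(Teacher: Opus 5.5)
Your proposal is correct and follows the paper's proof step by step. The steps match: unitarity from invariance and the group law; strong continuity on the dense class \(\Acal_{\rm cyl}\) via path continuity and bounded convergence; the finite-dimensional chain rule along the mild equation of Theorem~\ref{thm:canonical-flow}\textup{(ii)}; and Fubini plus strong continuity to get \(\Acal_{\rm cyl}\subset\Dom(\Lcal)\) with \(\Lcal F=\Lcal_0F\). The one difference is that you explicitly justify local integrability of \(r\mapsto\mathscr F_p(S_ru)\), and independence from the \(\nu\)-null choice of representative, via invariance of \(\nu\) and Fubini; the paper instead reads \(\Pi S_\cdot u\in W^{1,1}_{\rm loc}\) directly off Theorem~\ref{thm:canonical-flow}\textup{(ii)}.
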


\begin{proof}
	By the group property and invariance in
	Theorem~\ref{thm:canonical-flow}\textup{(i)}, each \(\Ucal(t)\) is an
	isometry with inverse \(\Ucal(-t)\), hence is unitary.  For a bounded
	continuous cylinder function \(F\), path continuity and continuity of
	the finite-dimensional projection give
	\(F(S_tu)\to F(u)\) for \(\nu\)-almost every \(u\).
	Bounded convergence gives convergence in \(L^2(\nu)\).  The density of
	\(\Acal_{\rm cyl}\) stated in Section~\ref{sec:models}, together with
	unitarity, then gives strong continuity on all of \(L^2(\nu)\).
	We then apply Stone's theorem \cite[Section~VIII.4, Theorem~VIII.8]{ReedSimon1}
	to obtain the closed skew-adjoint generator \(\Lcal\).

	Let \(F=f\circ\Pi\in\mathcal A_{\mathrm{cyl}}\). By
	Theorem~\ref{thm:canonical-flow}\textup{(ii)}, for \(\nu\)-almost every
	\(u\),
	\[
	\Pi S_{\cdot}u\in W_{\mathrm{loc}}^{1,1}(\mathbb R;\Pi\gH),
	\qquad
	\partial_t(\Pi S_tu)=X_\Pi(S_tu)
	\quad\text{for a.e. }t.
	\]
	Therefore, by the finite-dimensional chain rule and
	\eqref{eq:classical-generator-direct},
	\[
	\begin{aligned}
		F(S_tu)-F(u)
		&=\int_0^t
		D_{\partial_r(\Pi S_ru)}f(\Pi S_ru)\,\dd r 
		=\int_0^t
		D_{X_\Pi(S_ru)}f(\Pi S_ru)\,\dd r 
		=\int_0^t
		(\Lcal_0F)(S_ru)\,\dd r,
	\end{aligned}
	\]
	which implies \eqref{eq:canonical-cylinder-identity}.
	
	Finally, \(\Lcal_0F\in L^2(\nu)\) by
	\eqref{eq:classical-generator-Lq}, and invariance of \(\nu\) gives
	\(\|\Ucal(r)\Lcal_0F\|_{L^2(\nu)}=\|\Lcal_0F\|_{L^2(\nu)}\).
	Fubini's theorem therefore shows that the identity obtained above for
	\(\nu\)-almost every \(u\) also holds in \(L^2(\nu)\), namely,
	\eqref{eq:app-classical-integrated-generator}.
	Strong continuity now implies
	\[
	\frac1t\int_0^t\Ucal(r)\Lcal_0F\,\dd r
	\longrightarrow\Lcal_0F
	\qquad\text{in }L^2(\nu)
	\]
	as \(t\to0\).  Hence \(F\in\Dom(\Lcal)\) and
	\(\Lcal F=\Lcal_0F\), and
	\eqref{eq:app-classical-generator-identification} follows.
\end{proof}

\begin{remark}
The operators \(\Lcal_0\) and \(\Lcal\) have different
	domains.  The first is the explicit derivation
	\eqref{eq:classical-generator-definition} on
	\(\Acal_{\rm cyl}\), whereas the second is the closed
	skew-adjoint generator of the Koopman group on \(L^2(\nu)\).  Thus
	\[
	\Lcal_0\subset\Lcal.
	\]
	We do not claim that \(\Acal_{\rm cyl}\) is a core for
	\(\Lcal\). This is analogous to the situation in \cite{ZhuZhu18}, where the
	Dirichlet form associated with the dynamical \(\Phi^4_3\) model
	constructed by singular SPDE methods agrees with the classical gradient
	form on cylinder functions, while it remains open whether the closure of
	the classical gradient form coincides with the Dirichlet form of the
	constructed process.
\end{remark}

For later comparison with the quantum commutator, let \(b=b^*\) be a
self-adjoint one-particle operator such that, for the finite Fourier
projection $\Pi$ under consideration, $\Pi b$ 
extends continuously from $\mathcal X$ to $\Pi\mathcal H$.
  It generates the linear
Schr\"odinger flow \(u\mapsto\ee^{-\ii sb}u\).  We define the
corresponding derivation on observables by
\begin{equation*}
 \mathscr D_bF(u)
 :=
 \left.\frac{\dd}{\dd s}\right|_{s=0}
 F(\ee^{-\ii sb}u)
 =D_{-\ii bu}F(u).
\end{equation*}
Thus, for the cylinder function \(f\circ\Pi\),
\begin{equation}\label{eq:classical-Hamiltonian-derivation-cylinder}
 \mathscr D_b(f\circ\Pi)(u)
 =
 D_{-\ii\Pi bu}f(\Pi u).
\end{equation}
For the non-self-adjoint multiplier \(M_k\), set
\begin{equation}\label{def:fkcs}
 f_{k,\mathrm c}(x):=\cos(k\cdot x),\qquad
 f_{k,\mathrm s}(x):=\sin(k\cdot x).
\end{equation}
Identifying these functions with their multiplication operators, we have
\(M_k=f_{k,\mathrm c}+\ii f_{k,\mathrm s}\), and define
\[
 \mathscr D_{M_k}
 :=\mathscr D_{f_{k,\mathrm c}}+\ii\mathscr D_{f_{k,\mathrm s}}.
\]
For an arbitrary one-body operator \(b\), the resulting
complex-linear derivation is
\begin{equation}\label{eq:complex-Hamiltonian-derivation}
 \mathscr D_b(f\circ\Pi)(u)
 :=-\ii\sum_j\partial_{z_j}f(\Pi u)(\Pi bu)_j
   +\ii\sum_j\partial_{\bar z_j}f(\Pi u)\overline{(\Pi b^*u)_j}.
\end{equation}
For \(b=b^*\), formula
\eqref{eq:complex-Hamiltonian-derivation} is exactly the real
directional derivative \(D_{-\ii\Pi bu}f(\Pi u)\) in
\eqref{eq:classical-Hamiltonian-derivation-cylinder}.  For
non-self-adjoint \(b\), it is instead the complex-linear extension of
that Hamiltonian derivation.
With this notation \eqref{eq:classical-generator-direct} is equivalent
to
\begin{equation}\label{eq:classical-generator-definition}
 \Lcal_0F
 =
 \mathscr D_hF
 +\frac12\sum_{k\in\mathbb Z^d}\widehat v(k)
 \left[
  \rho_k\,\mathscr D_{M_{-k}}F
  +\rho_{-k}\,\mathscr D_{M_k}F
 \right].
\end{equation}
Indeed, insert
\eqref{eq:complex-Hamiltonian-derivation} and relabel \(k\mapsto-k\)
in the second half of the sum.  The \(z\)-derivative becomes
\[
 -\ii\sum_k\widehat v(k)\rho_k(u)
 \sum_j\partial_{z_j}f(\Pi u)(\Pi M_{-k}u)_j,
\]
and its \(\bar z\)-derivative is the conjugate tangent component in
\eqref{eq:classical-generator-direct}.

\subsection{The quantum generator and Hamiltonian commutators}
\label{subsec:quantum-weighted-generator}

In this subsection we realize the Heisenberg dynamics as a strongly
continuous unitary group on \(L^2(\Gamma_\lambda)\) and denote its
closed generator by \(\Lcal_\lambda\).  We then compute the Hamiltonian
commutator on finite-particle vectors and state the generator identity
 proved in
Section~\ref{sec:higher-density-generator}.

\label{subsec:quantum-weighted-unitary-group}

Define the normal state on \(\mathcal B(\Fock)\) by
\[
 \omega_\lambda(A):=\Tr(A\Gamma_\lambda).
\]
The associated inner product and norm are
\[
 \ip{A}{B}_\lambda=\omega_\lambda(A^*B)
 =\Tr(A^*B\Gamma_\lambda),
 \qquad
 \norm{A}_{2,\lambda}^2=\Tr(A^*A\Gamma_\lambda).
\]
Since \(\ee^{-\lambda\mathbb H_\lambda}\), and hence \(\Gamma_\lambda\), is
injective, this is an inner product.
We denote by \(\Hcal_\lambda:=L^2(\Gamma_\lambda)\) the completion of
\(\mathcal B(\Fock)\) in \(\|\cdot\|_{2,\lambda}\).  The map
\[
 \iota_\lambda(A):=A\Gamma_\lambda^{1/2}
\]
is isometric because
\[
 \norm{\iota_\lambda(A)}_{\mathfrak S^2(\Fock)}^2
 =\Tr(\Gamma_\lambda^{1/2}A^*A\Gamma_\lambda^{1/2})
 =\norm{A}_{2,\lambda}^2.
\]
Hence \(\Hcal_\lambda\) is realized as the closure of
\(\{A\Gamma_\lambda^{1/2}:A\in\mathcal B(\Fock)\}\) in
\(\mathfrak S^2(\Fock)\).

For \(C,A\in\mathcal B(\Fock)\),
\[
 \norm{CA}_{2,\lambda}
 \le \norm{C}_{\mathrm{op}}\norm{A}_{2,\lambda}.
\]
Therefore left multiplication by \(C\) extends uniquely to a bounded
operator on \(\Hcal_\lambda\), still denoted by \(x\mapsto Cx\).
Since \(\Gamma_\lambda=\mathcal Z_\lambda^{-1}
\ee^{-\lambda\mathbb H_\lambda}\), the Heisenberg evolution preserves
the state:
\[
 \omega_\lambda(\Ubb_\lambda(t)A)=\omega_\lambda(A).
\]
Thus, on the dense subspace represented by bounded operators,
\[
 \norm{\Ubb_\lambda(t)A}_{2,\lambda}^2
 =\omega_\lambda(\Ubb_\lambda(t)(A^*A))
 =\omega_\lambda(A^*A),
\]
so \(\Ubb_\lambda(t)\) extends to a unitary operator on
\(\Hcal_\lambda\), with inverse \(\Ubb_\lambda(-t)\).  For $C\in\mathcal B(\Fock)$ and $x\in\Hcal_\lambda$, we have
\begin{equation}\label{eq:weighted-unitary-implementation}
 \Ubb_\lambda(t)\1=\1,
 \qquad
 \Ubb_\lambda(t)(Cx)
 =\bigl(\Ubb_\lambda(t)C\bigr)\Ubb_\lambda(t)x.
\end{equation}
Under the Hilbert--Schmidt identification, the Heisenberg evolution is
represented by unitary conjugation:
\begin{equation}\label{eq:weighted-HS-conjugation}
	\iota_\lambda\bigl(\Ubb_\lambda(t)x\bigr)
	=\ee^{\ii t\mathbb H_\lambda}\iota_\lambda(x)
	\ee^{-\ii t\mathbb H_\lambda}.
\end{equation}
Since unitary conjugation preserves the Hilbert--Schmidt norm, this
formula also shows directly that \(\Ubb_\lambda(t)\) is unitary on
\(L^2(\Gamma_\lambda)\).  By
\cite[Section~7.2]{LNR}, or directly from
\eqref{eq:weighted-HS-conjugation}, \(\Ubb_\lambda(t)\) is strongly continuous.
Stone's theorem \cite[Section~VIII.4, Theorem~VIII.8]{ReedSimon1}
therefore gives a unique closed densely defined skew-adjoint generator
\(\Lbb_\lambda\), with \(\Ubb_\lambda(t)=\ee^{t\Lbb_\lambda}\) and
\[
 \Dom(\Lbb_\lambda)
 =\left\{A\in\Hcal_\lambda:
 \lim_{t\to0}\frac{\Ubb_\lambda(t)A-A}{t}
 \text{ exists in }\Hcal_\lambda\right\},
 \qquad
 \Lbb_\lambda A
 =\lim_{t\to0}\frac{\Ubb_\lambda(t)A-A}{t}.
\]

Formally, we have 
$$\Lbb_\lambda A=\ii[\mathbb H_\lambda,A].$$
Moreover, 
\begin{equation}\label{eq:exact-quantum-generator-formula}
 \ii[\mathbb H_\lambda,A]=\ii[\dG(h),A]
 +\frac{\ii}{2}\sum_k\widehat v(k)
 \left(B_{\lambda,k}\adop{M_{-k}}A
 +\adop{M_k}A\,B_{\lambda,-k}\right).
\end{equation}
Here we used the notation for a bounded one-particle operator \(b\),
\begin{equation}\label{eq:quantum-density-derivation}
 \adop{b}A
 :=[\dG(b),A]
 =\lambda^{-1}[B_\lambda(b),A].
\end{equation}
  For
\(A=\Op_{\lambda,\gH}^{\rm A}F\),
Section~\ref{sec:higher-density-generator} shows that the right-hand
side converges in \(\Hcal_\lambda\). 
Theorem~\ref{thm:one-generator} below shows that
\begin{equation}\label{eq:closed-generator-identification}
 \Op_{\lambda,\gH}^{\rm A}F\in\Dom(\Lbb_\lambda),
 \qquad
 \Lbb_\lambda \Op_{\lambda,\gH}^{\rm A}F=\ii[\mathbb H_\lambda,\Op_{\lambda,\gH}^{\rm A}F].
\end{equation}

\subsection{Equilibrium limits and identified convergence}
\label{sec:static-convergence}

In this subsection we establish the equilibrium identification between
\(L^2(\Gamma_\lambda)\) and \(L^2(\nu)\).  We first prove the convergence
of equilibrium expectations, products, and inner products.  We then
introduce identified convergence and prove its stability under
multiplication.

\subsubsection{Finite-mode lower symbols and equilibrium limits}
\label{subsec:lower-symbol-equilibrium}

We first recall the equilibrium convergence theorem from
\cite[Theorem~4.2]{LNR}.

\begin{theorem}
\label{thm:equilibrium-limits}
Let \(d\in\{2,3\}\) and assume
\eqref{eq:finite-potential-regularity}.  Then
\[
 \frac{\mathcal Z_\lambda}{\mathcal Z_{0,\lambda}}
 \longrightarrow \mathfrak Z
 \qquad(\lambda\downarrow0).
\]
Moreover, for every fixed \(r\ge1\),
\begin{equation*}
 r!\lambda^r\Gamma_\lambda^{(r)}
 \longrightarrow
 \gamma_{\nu}^{(r)}
 :=\int_{\Xcal}|u^{\otimes r}\rangle\langle u^{\otimes r}|\,
       \dd\nu(u)
 \quad\text{in }\mathfrak S^2(\gH^{\otimes_s r})
 \quad(\lambda\downarrow0).
\end{equation*}
\end{theorem}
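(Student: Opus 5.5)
This result is \cite[Theorem~4.2]{LNR}, specialized to \(h=-\Delta+m\) on \(\mathbb T^d\), \(d=2,3\), and to the positive-type interaction \eqref{eq:finite-potential-regularity}. The plan is therefore to check that the hypotheses of that theorem hold and to recall how its variational argument runs. The hypotheses are the following. First, \(h>0\) with \(\Tr h^{-2}<\infty\) but \(\Tr h^{-1}=\infty\), so that \(h^{-1}\in\mathfrak S^2\). Second, the interaction has nonnegative Fourier coefficients \(\widehat v(k)\ge0\) with \(\sum_k\widehat v(k)<\infty\); this is implied by our stronger \(\ell^1_2\) assumption. Third, the renormalization is subtraction of the free density \(\lambda\Tr(M_k\gamma_{0,\lambda})\), which is exactly the form of \(B_{\lambda,k}\) in \eqref{eq:scaled-Gibbs-Hamiltonian}. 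Under these conditions the classical objects \(\rho_k\), \(\Vcal\) and \(\nu\) coincide with those of \cite{LNR}. What remains is to match the scaling conventions: \(\gamma_{0,\lambda}=(\ee^{\lambda h}-1)^{-1}\) and the factor \(r!\lambda^r\) in front of \(\Gamma^{(r)}_\lambda\) agree with the normalization \eqref{eq:RDM-definition} used there.

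To make the argument self-contained, I would follow the structure of \cite{LNR}.

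\emph{Free part.} Write the relative free energy \(-\log(\mathcal Z_\lambda/\mathcal Z_{0,\lambda})\) via the Gibbs variational principle as an infimum over states of a relative entropy \(\mathcal H(\Gamma,\Gamma_{0,\lambda})\) plus the interaction term \(\frac12\sum_k\widehat v(k)\Tr(B_{\lambda,k}B_{\lambda,-k}\Gamma)\). The classical counterpart is \(-\log\mathfrak Z=\min_\sigma\{\mathcal H_{\rm cl}(\sigma,\mu_0)+\int\Vcal\,\dd\sigma\}\).

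\emph{Upper bound.} Use the trial state \(\Gamma_{0,\lambda}\ee^{-\Vcal}\)-type, constructed from a finite-dimensional cutoff \(\Pi_\Lambda\): take an anti-Wick (upper-symbol) quantization of \(\ee^{-\Vcal_\Lambda}\) on \(\Fock(\Pi_\Lambda\gH)\), tensored with the free state on the complement. Bound its entropy by Berezin--Lieb inequalities, and show that the interaction expectation converges thanks to the Wick subtraction and \(\Tr\bigl((\lambda\gamma_{0,\lambda})^2\bigr)\to\Tr h^{-2}\).

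\emph{Lower bound.} Localize the Gibbs state \(\Gamma_\lambda\) to \(\Fock(\Pi_\Lambda\gH)\) and take de Finetti / lower-symbol measures \(\mu_{\lambda,\Lambda}\) as in \eqref{eq:equilibrium-lower-symbol-notation}. Use the monotonicity of relative entropy under localization and the Berezin--Lieb lower bound for the finite-mode part. The key step is to control the ultraviolet remainder of the interaction, uniformly in \(\lambda\), by the Hilbert–Schmidt estimate
\[
\Tr\bigl|\lambda\Gamma^{(1)}_\lambda-\lambda\gamma_{0,\lambda}\bigr|^2\lesssim1,
\]
which is obtained from the relative entropy bound via the quantitative Pinsker-type inequality of \cite{LNR}. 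Since \(\widehat v\ge0\) and the interaction is a sum of squares, it is weakly lower semicontinuous under the limits \(\lambda\downarrow0\) followed by \(\Lambda\to\infty\). Together with the upper bound this proves the partition function convergence \(\mathcal Z_\lambda/\mathcal Z_{0,\lambda}\to\mathfrak Z\).

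\emph{Density matrices.} To get convergence of the \(r\)-particle density matrices, perturb the Hamiltonian by \(\varepsilon\,\dG^{(r)}(A)\) with \(A\) finite rank. Applying the free-energy convergence to the perturbed problem and using concavity in \(\varepsilon\) (a Feynman–Hellmann argument) yields weak-\(*\) convergence of \(r!\lambda^r\Gamma^{(r)}_\lambda\) to \(\gamma^{(r)}_\nu\). To upgrade this to \(\mathfrak S^2\) convergence, I would use uniform bounds of Hilbert–Schmidt type on \(\lambda^r\Gamma^{(r)}_\lambda\) after subtracting the corresponding free quantities, whose traces are controlled by \(\Tr h^{-2}\), plus convergence of the \(\mathfrak S^2\) norms.

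The main obstacle is this final upgrade to \(\mathfrak S^2\) convergence, and above all the uniform-in-\(\lambda\) ultraviolet control in the lower bound. In \(d=3\), \(\Tr h^{-1}\) diverges, so only the renormalized (density-subtracted) quantities are bounded. Since all of this is carried out in \cite[Sections~5--8]{LNR} under exactly these assumptions, in the paper I would simply cite it.
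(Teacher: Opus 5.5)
Your proposal matches the paper: the statement is quoted directly from \cite[Theorem~4.2]{LNR} with no further argument, and the only implicit step is the identification of $h=-\Delta+m$, the positive-type interaction, the Wick subtraction in $B_{\lambda,k}$ and the normalization \eqref{eq:RDM-definition} with those of \cite{LNR}, which is exactly the hypothesis check you propose. Do not treat your optional sketch as a self-contained proof, for two reasons: the ultraviolet error in the lower bound is governed by $\Tr\bigl((B_\lambda(f_\eta)-B_\lambda(f_{\eta,Q}))^2\Gamma_\lambda\bigr)$, which involves $\Gamma_\lambda^{(2)}$ and cannot be controlled by a one-body Hilbert--Schmidt bound (the needed variance estimate is \cite[Theorem~8.1]{LNR}, quoted here as \eqref{eq:LNR-uniform-label}); and the Feynman--Hellmann step with $\varepsilon\,\dG^{(r)}(A)$ fails for $r\ge3$, since for the unfavourable sign of $\varepsilon$ this term grows like $(\lambda\mathcal N)^r$ on vectors $\phi^{\otimes n}$ while $\lambda\mathbb H_\lambda$ grows only quadratically there, so the perturbed partition function is infinite.
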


The estimates in \cite[Lemmas~11.2--11.3]{LNR} give in addition
\begin{equation}\label{eq:LNR-RDM-uniform}
 \sup_{0<\lambda\le\lambda_0}
 \norm{\lambda^r\Gamma_\lambda^{(r)}}_{\mathfrak S^2}<\infty.
\end{equation}
These statements use \eqref{eq:RDM-definition}.
The finite-dimensional lower-symbol argument
in \cite[Sections~5 and~11]{LNR} implies that, for every fixed \(L\), the measure \(\mu_{\lambda,L}\) defined in \eqref{eq:equilibrium-lower-symbol-notation} satisfies, as $\lambda\downarrow0$,
\begin{equation}\label{conver}
 \mu_{\lambda,L}\Longrightarrow(\Pi_L)_\#\nu.
\end{equation}

\begin{lemma}\label{lem:AW-product}
Let \(E\) be fixed and let \(F,G\in C_b^1(E)\).  Then
\begin{equation}\label{eq:AW-product}
 \norm{
 \Op_{\lambda,E}^{\rm A}(F)
 \Op_{\lambda,E}^{\rm A}(G)
 -
 \Op_{\lambda,E}^{\rm A}(FG)
 }_{\mathrm{op}}
 \le C_E\norm{F}_\infty\norm{\nabla G}_\infty\lambda^{1/2},
 \qquad0<\lambda\le1.
\end{equation}
\end{lemma}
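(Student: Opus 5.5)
Write the product as a double integral over the coherent POVM and compare it with the diagonal. By \eqref{eq:ordinary-Anti-Wick-definition},
\[
 \Op_{\lambda,E}^{\rm A}(F)\Op_{\lambda,E}^{\rm A}(G)
 =(\pi\lambda)^{-2r_E}\iint F(u)G(w)\,
 \langle\xi_{\lambda,E}(u),\xi_{\lambda,E}(w)\rangle\,
 |\xi_{\lambda,E}(u)\rangle\langle\xi_{\lambda,E}(w)|\,\dd u\,\dd w .
\]
Applying the resolution of the identity \eqref{eq:coherent-resolution} in the variable \(w\) to the operator \(\Op^{\rm A}_{\lambda,E}(F)\,\cdot\) gives
\[
 \Op_{\lambda,E}^{\rm A}(FG)=\Op_{\lambda,E}^{\rm A}(F)\Op_{\lambda,E}^{\rm A}(G)
 +(\pi\lambda)^{-2r_E}\iint F(u)\bigl(G(u)-G(w)\bigr)\,
 \langle\xi_{\lambda,E}(u),\xi_{\lambda,E}(w)\rangle\,
 |\xi_{\lambda,E}(u)\rangle\langle\xi_{\lambda,E}(w)|\,\dd u\,\dd w .
\]
Indeed, \(F(u)G(u)|\xi(u)\rangle\langle\xi(u)|=F(u)G(u)|\xi(u)\rangle\langle\xi(u)|\cdot\1\), and we insert \(\1=(\pi\lambda)^{-r_E}\int|\xi(w)\rangle\langle\xi(w)|\dd w\). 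The difference \(D\) is therefore an integral operator over the coherent frame, with kernel
\[
 K(u,w)=F(u)\bigl(G(u)-G(w)\bigr)\langle\xi(u),\xi(w)\rangle .
\]

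Bound \(\norm{D}_{\mathrm{op}}\) by a Schur test on the frame. For \(\phi,\psi\in\Fock(E)\) we have
\[
 |\langle\phi,D\psi\rangle|\le(\pi\lambda)^{-2r_E}\iint|K(u,w)|\,|\langle\phi,\xi(u)\rangle|\,|\langle\xi(w),\psi\rangle|\,\dd u\,\dd w .
\]
The functions \(a(u)=(\pi\lambda)^{-r_E/2}|\langle\xi(u),\phi\rangle|\) and \(b(w)=(\pi\lambda)^{-r_E/2}|\langle\xi(w),\psi\rangle|\) satisfy \(\norm{a}_{L^2}=\norm\phi\) and \(\norm{b}_{L^2}=\norm\psi\) by \eqref{eq:coherent-resolution}. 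It therefore suffices to bound the Schur norm of
\[
 (\pi\lambda)^{-r_E}|K(u,w)|\le(\pi\lambda)^{-r_E}\norm{F}_\infty\norm{\nabla G}_\infty\,|u-w|\,|\langle\xi(u),\xi(w)\rangle| ,
\]
using the mean value theorem for \(G\).

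The coherent overlap is the standard Gaussian \(|\langle\xi_{\lambda,E}(u),\xi_{\lambda,E}(w)\rangle|=\ee^{-|u-w|^2/(2\lambda)}\), read off from the definition of \(\xi_{\lambda,E}\). Hence
\[
 \sup_u(\pi\lambda)^{-r_E}\int|u-w|\,\ee^{-|u-w|^2/(2\lambda)}\,\dd w
 =(\pi\lambda)^{-r_E}\lambda^{r_E+1/2}\int_{\mathbb R^{2r_E}}|y|\ee^{-|y|^2/2}\,\dd y
 =C_E\lambda^{1/2},
\]
after substituting \(y=(u-w)/\sqrt\lambda\), with real dimension \(2r_E\). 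The same bound holds with the roles of \(u\) and \(w\) exchanged. The Schur test then gives \(\norm{D}_{\mathrm{op}}\le C_E\norm F_\infty\norm{\nabla G}_\infty\lambda^{1/2}\), which is \eqref{eq:AW-product}.

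No step is genuinely hard. The only points needing care are, first, justifying the interchange of integrals: for bounded \(F,G\) one can check weakly against \(\phi,\psi\) in the dense span of coherent vectors, where all integrals converge absolutely by the Gaussian overlap. Second, confirming the overlap formula and the normalization constant \((\pi\lambda)^{-r_E}\) under the identification \(E\simeq\mathbb C^{r_E}\simeq\mathbb R^{2r_E}\).
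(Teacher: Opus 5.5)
Your proposal is correct and follows the paper's proof essentially step for step. The paper also writes the difference as a coherent-frame double integral with kernel $F(z)\bigl(G(z')-G(z)\bigr)\langle\xi_z,\xi_{z'}\rangle$. It bounds the operator norm by that of the corresponding integral operator on $L^2(E,\dd m_\lambda)$, which is the same reduction as your weak pairing with $\phi,\psi$. It then applies Schur's test using the Gaussian overlap $\ee^{-|z-z'|^2/(2\lambda)}$ and the rescaling $z'-z=\sqrt\lambda\,y$, as you do.
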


\begin{proof}
Recall that \(r_E=\dim_{\mathbb C}E\), and set
\(\dd m_\lambda(z)=(\pi\lambda)^{-r_E}\dd z\), and
\(\xi_z=\xi_{\lambda,E}(z)\).  Using the standard coherent-state formulas in
\cite[Section~5.2]{LNR}, we obtain
\begin{equation}\label{eq:co:xi}
 |\langle \xi_z,\xi_{z'}\rangle_{\Fock(E)}|
 =\ee^{-|z-z'|^2/(2\lambda)},
\qquad
 \int_E|\xi_z\rangle\langle \xi_z|\,\dd m_\lambda(z)=\1_{\Fock(E)}.
\end{equation}
Let
\(\mathcal E_\lambda(F,G)=\Op_{\lambda,E}^{\rm A}(F)
\Op_{\lambda,E}^{\rm A}(G)-\Op_{\lambda,E}^{\rm A}(FG)\).
Inserting \eqref{eq:co:xi}, we obtain
\[
 \mathcal E_\lambda(F,G)=\int_{E\times E}F(z)\bigl(G(z')-G(z)\bigr)
 \langle \xi_z,\xi_{z'}\rangle_{\Fock(E)}|\xi_z\rangle\langle \xi_{z'}|
 \,\dd m_\lambda(z)\dd m_\lambda(z').
\]
Let \(\mathcal I_\lambda\) be the integral operator with kernel
\[
 k_\lambda(z,z')=F(z)\bigl(G(z')-G(z)\bigr)
 \langle \xi_z,\xi_{z'}\rangle_{\Fock(E)}.
\]
Therefore
\(
 \|\mathcal E_\lambda(F,G)\|_{\mathrm{op}}\le\|\mathcal I_\lambda\|_{\mathrm{op}}
\).

We apply Schur's test; see, for example,
\cite[Theorem~5.2]{HalmosSunder78}.   Let
\[
 s_\lambda=\sup_{z\in E}\int_E|k_\lambda(z,z')|
 \,\dd m_\lambda(z'),\qquad
 t_\lambda=\sup_{z'\in E}\int_E|k_\lambda(z,z')|
 \,\dd m_\lambda(z).
\]
For \(f\in L^2(E,m_\lambda)\), the Cauchy--Schwarz inequality with respect to the
measure \(|k_\lambda(z,z')|\,\dd m_\lambda(z')\) implies
\[
 |\mathcal I_\lambda f(z)|^2
 \le s_\lambda
 \int_E|k_\lambda(z,z')|\,|f(z')|^2\,\dd m_\lambda(z').
\]
After integration in \(z\) and an application of Fubini's theorem,
\[
 \|\mathcal I_\lambda f\|_{L^2(E,m_\lambda)}^2
 \le s_\lambda t_\lambda\|f\|_{L^2(E,m_\lambda)}^2.
\]
Thus
\[
 \|\mathcal I_\lambda\|_{\mathrm{op}}
 \le(s_\lambda t_\lambda)^{1/2}.
\]

 \eqref{eq:co:xi} implies
\[
 |k_\lambda(z,z')|
 \le\|F\|_\infty\|\nabla G\|_\infty
 |z-z'|\ee^{-|z-z'|^2/(2\lambda)}.
\]
For fixed \(z\), put first \(r=z'-z\) and then
\(r=\sqrt\lambda\,y\).  Since \(E\simeq\mathbb R^{2r_E}\),
\begin{align*}
 \int_E|k_\lambda(z,z')|\,\dd m_\lambda(z')
 &\le\|F\|_\infty\|\nabla G\|_\infty
  (\pi\lambda)^{-r_E}
  \int_{\mathbb R^{2r_E}}|r|\ee^{-|r|^2/(2\lambda)}\,\dd r\\
 &=C_E\|F\|_\infty\|\nabla G\|_\infty\lambda^{1/2}.
\end{align*}
The same calculation with \(z'\) fixed gives the identical estimate
for \(t_\lambda\).  Schur's bound therefore implies
\eqref{eq:AW-product}.  
\end{proof}

\begin{proposition}
\label{prop:equilibrium-identification}
For \(F,G\in\Acal_{\rm cyl}\), as \(\lambda\downarrow0\), one has
\begin{align}
 \Tr(\Op_{\lambda,\gH}^{\rm A}F\Gamma_\lambda)
 &\longrightarrow \int F\,\dd\nu,                                      \label{eq:static-mean}\\
 \norm{
 \Op_{\lambda,\gH}^{\rm A}F\Op_{\lambda,\gH}^{\rm A}G
 -\Op_{\lambda,\gH}^{\rm A}(FG)}_{2,\lambda}
 &\longrightarrow0,                                                     \label{eq:static-product}\\
 \ip{\Op_{\lambda,\gH}^{\rm A}F}{\Op_{\lambda,\gH}^{\rm A}G}_\lambda
 &\longrightarrow \int\overline F G\,\dd\nu.                            \label{eq:equilibrium-inner-product}
\end{align}
\end{proposition}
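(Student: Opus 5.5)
The plan is to reduce everything to a single common finite Fourier space and then use the lower-symbol convergence \eqref{conver} together with Lemma~\ref{lem:AW-product}. Since $\Acal_{\rm cyl}$ is spanned by $1$ and functions $f\circ\Pi$, and the anti-Wick quantization is independent of the representation, we may choose $L$ so large that $F=f\circ\Pi_L$ and $G=g\circ\Pi_L$ with $f,g\in C_b^\infty(E_L)$; the constant function poses no problem because $\Op^{\rm A}_{\lambda,\gH}1=\1$. All three statements are bilinear or linear in $F,G$, so it suffices to treat this situation.

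For \eqref{eq:static-mean}, I use the exact identity \eqref{eq:lower-symbol-pairing}, $\Tr(\Op^{\rm A}_{\lambda,\gH}F\,\Gamma_\lambda)=\int_{E_L}f\,\dd\mu_{\lambda,L}$. Because $f$ is bounded and continuous on $E_L$, the weak convergence \eqref{conver} gives the limit $\int f\,\dd(\Pi_L)_\#\nu=\int F\,\dd\nu$.

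For \eqref{eq:static-product}, the factorization $\Fock(\gH)\simeq\Fock(E_L)\otimes\Fock(E_L^\perp)$ shows that the difference equals $\mathcal E_\lambda(f,g)\otimes\1$. The elementary bound $\|A\|_{2,\lambda}\le\|A\|_{\mathrm{op}}$, which holds because $\Gamma_\lambda$ has trace one, combined with Lemma~\ref{lem:AW-product}, gives a bound $C_{E_L}\|f\|_\infty\|\nabla g\|_\infty\lambda^{1/2}\to0$. The only point to check is that the lemma's hypotheses $f,g\in C^1_b$ hold, and they do since $f,g\in C_c^\infty$.

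For \eqref{eq:equilibrium-inner-product}, I write
\[
\ip{\Op^{\rm A}_{\lambda,\gH}F}{\Op^{\rm A}_{\lambda,\gH}G}_\lambda
=\Tr\bigl(\Op^{\rm A}_{\lambda,\gH}\overline F\,\Op^{\rm A}_{\lambda,\gH}G\,\Gamma_\lambda\bigr),
\]
using $(\Op^{\rm A}F)^*=\Op^{\rm A}\overline F$. I then replace the product by $\Op^{\rm A}_{\lambda,\gH}(\overline FG)$. The error is bounded by $|\Tr(X\Gamma_\lambda)|\le\|X\|_{\mathrm{op}}$, or equivalently by Cauchy--Schwarz as $\|1\|_{2,\lambda}\|X\|_{2,\lambda}$, and therefore vanishes by the previous step applied to the pair $(\overline F,G)$. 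Finally, \eqref{eq:static-mean} applied to $\overline FG\in\Acal_{\rm cyl}$ gives the limit $\int\overline FG\,\dd\nu$.

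No real obstacle is expected, because the substantive input, namely the weak convergence of lower symbols \eqref{conver} from \cite{LNR}, is assumed. The only care needed is the bookkeeping that products and conjugates of cylinder functions remain in $\Acal_{\rm cyl}$ on a common $E_L$, including the mixed terms involving the constant $1$, and that the tensor extension by $\1_{\Fock(E_L^\perp)}$ commutes with the products.
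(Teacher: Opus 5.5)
Your proposal is correct and follows the paper's proof essentially step by step. You lift both cylinder functions to a common $E_L$, use \eqref{eq:lower-symbol-pairing} with \eqref{conver} for the mean, bound $\|\cdot\|_{2,\lambda}$ by the operator norm and apply Lemma~\ref{lem:AW-product} for the product, and split the inner product into a product error plus a mean term.

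One small slip: after lifting to $E_L$, the functions $f,g$ are generally not compactly supported, since they are constant in the directions of $E_L\cap(\Pi\gH)^\perp$. They do lie in $C_b^\infty(E_L)$, and Lemma~\ref{lem:AW-product} only needs $C_b^1$, so this is harmless.
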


\begin{proof}
Choose \(L\) large enough that both \(F\) and \(G\) depend only on the
Fourier modes in \(E_L=\Pi_L\gH\).  There are functions
\(f,g\in C_b^1(E)\) such that
$ F(u)=f(\Pi_Lu), G(u)=g(\Pi_Lu).$
The factorization of the coherent POVM
gives
\[
 \Op_{\lambda,\gH}^{\rm A}F
 =\Op_{\lambda,E}^{\rm A}(f)\otimes\1_{\Fock(E^\perp)},
 \qquad
 \Op_{\lambda,\gH}^{\rm A}G
 =\Op_{\lambda,E}^{\rm A}(g)\otimes\1_{\Fock(E^\perp)}.
\]

By \eqref{eq:lower-symbol-pairing} and \eqref{conver}, we obtain
\[
 \Tr(\Op_{\lambda,\gH}^{\rm A}F\Gamma_\lambda)=\int_E f\,\dd\mu_{\lambda,L}
 \longrightarrow\int_E f\,\dd(\Pi_L)_\#\nu=\int F\,\dd\nu.
\]
Thus \eqref{eq:static-mean} follows.

Since \(\Gamma_\lambda\) is a state,
\[
 \norm{
 \Op_{\lambda,\gH}^{\rm A}F\Op_{\lambda,\gH}^{\rm A}G
 -\Op_{\lambda,\gH}^{\rm A}(FG)}_{2,\lambda}
 \le
 \norm{
 \Op_{\lambda,\gH}^{\rm A}F\Op_{\lambda,\gH}^{\rm A}G
 -\Op_{\lambda,\gH}^{\rm A}(FG)}_{\mathrm{op}}.
\]
By Lemma~\ref{lem:AW-product}, we obtain \eqref{eq:static-product}. 

Using
\((\Op_{\lambda,\gH}^{\rm A}F)^*=\Op_{\lambda,\gH}^{\rm A}\overline F\), we write
\[
 \ip{\Op_{\lambda,\gH}^{\rm A}F}{\Op_{\lambda,\gH}^{\rm A}G}_\lambda-\int\overline FG\,\dd\nu
 =\omega_\lambda\!\left(\Op_{\lambda,\gH}^{\rm A}\overline F\Op_{\lambda,\gH}^{\rm A}G
 -\Op_{\lambda,\gH}^{\rm A}(\overline FG)\right)
 +\omega_\lambda\!\left(\Op_{\lambda,\gH}^{\rm A}(\overline FG)\right)-\int\overline FG\,\dd\nu.
\]
The first term tends to zero by \eqref{eq:static-product} and
the Cauchy--Schwarz inequality in the state \(\Gamma_\lambda\); the second tends to
zero by \eqref{eq:static-mean}.  Thus
\eqref{eq:equilibrium-inner-product} follows.
\end{proof}

\subsubsection{Identified convergence under multiplication}
\label{subsec:varying-Hilbert-spaces}

The Hilbert spaces \(L^2(\Gamma_\lambda)\) vary with \(\lambda\), so
vectors in different spaces cannot be compared by the usual notions of
strong or weak convergence.  We compare them through the anti-Wick
quantizations of cylinder functions and define the corresponding
identified convergence following Kuwae--Shioya (see \cite[Section~2.2]{KuwaeShioya03}).  The inner-product convergence
in \eqref{eq:equilibrium-inner-product}, and in particular the resulting
norm convergence, makes this comparison possible.  We then show that
identified convergence is preserved by left multiplication with bounded
cylinder observables and by finite products of such uniformly bounded
operators.

Recall \(\Hcal_\lambda=L^2(\Gamma_\lambda)\), and let
\[
 \Hcal=L^2(\nu),\qquad
 \Acal_{\rm cyl}\subset\Hcal.
\]
In this abstract varying-Hilbert-space setting,
\(\|\cdot\|_{\Hcal_\lambda}=\|\cdot\|_{2,\lambda}\) and
\(\|\cdot\|_{\Hcal}=\|\cdot\|_{L^2(\nu)}\).
The maps \(\Op_{\lambda,\gH}^{\rm A}\) are defined on \(\Acal_{\rm cyl}\).

Taking \(F=G=g\) in \eqref{eq:equilibrium-inner-product}, we find that
 $\bigl(\Hcal_\lambda,\Hcal,\Acal_{\rm cyl},\Op_{\lambda,\gH}^{\rm A}\bigr)$
satisfies the basic hypothesis in the convergence theory for varying
Hilbert spaces of Kuwae--Shioya (see \cite[Section~2.2]{KuwaeShioya03}), namely,
\[
 \norm{\Op_{\lambda,\gH}^{\rm A}g}_{\Hcal_\lambda}
 \longrightarrow \norm{g}_{\Hcal}
 \qquad(g\in\Acal_{\rm cyl},\ \lambda\downarrow0).
\]
We use the Kuwae--Shioya strong convergence in an equivalent
\(\varepsilon\)-form.  Weak convergence is defined by bounded testing on
the dense subset \(\Acal_{\rm cyl}\).  Its equivalence with the
Kuwae--Shioya definition is given in
Lemma~\ref{lem:identified-calculus}.

\begin{definition}
\label{def:identified}
For vectors \(x_\lambda\in\Hcal_\lambda\), we write
\(x_\lambda\to_{\mathrm{id}}x\) as \(\lambda\downarrow0\) if, for
every \(\varepsilon>0\), there exists
\(x^\varepsilon\in\Acal_{\rm cyl}\) such that
\[
 \norm{x^\varepsilon-x}_{\Hcal}<\varepsilon,
 \qquad
 \limsup_{\lambda\downarrow0}
 \norm{x_\lambda-\Op_{\lambda,\gH}^{\rm A}x^\varepsilon}_{\Hcal_\lambda}<\varepsilon.
\]
We write \(x_\lambda\rightharpoonup_{\mathrm{id}}x\) as
\(\lambda\downarrow0\) if
\[
 \limsup_{\lambda\downarrow0}\norm{x_\lambda}_{\Hcal_\lambda}<\infty
\]
and
\[
 \ip{\Op_{\lambda,\gH}^{\rm A}g}{x_\lambda}_\lambda
 \longrightarrow
 \ip{g}{x}_{\Hcal}
 \qquad(g\in\Acal_{\rm cyl}).
\]
\end{definition}

\begin{lemma}
\label{lem:identified-calculus}
Assume \eqref{eq:equilibrium-inner-product}.  Then:
\begin{enumerate}[label=\textup{(\roman*)},leftmargin=2.2em]
\item
\[
 x_\lambda\to_{\mathrm{id}}x
 \quad\Longleftrightarrow\quad
 \begin{cases}
 x_\lambda\rightharpoonup_{\mathrm{id}}x,\\
 \norm{x_\lambda}_{\Hcal_\lambda}\longrightarrow\norm{x}_{\Hcal};
 \end{cases}
\]
\item if \(x_\lambda\to_{\mathrm{id}}x\) and
\(y_\lambda\to_{\mathrm{id}}y\), then
\[
 \ip{x_\lambda}{y_\lambda}_\lambda
 \longrightarrow
 \ip{x}{y}_{\Hcal};
\]
\item if \(x_\lambda\to_{\mathrm{id}}x\) and
\(y_\lambda\to_{\mathrm{id}}x\), then
\[
 \norm{x_\lambda-y_\lambda}_{\Hcal_\lambda}\longrightarrow0.
\]
\item Let \(T_\lambda\in\mathcal B(\Hcal_\lambda)\) and
\(T\in\mathcal B(\Hcal)\), with
\(
 \sup_{0<\lambda\leq\lambda_0}\norm{T_\lambda}_{\mathrm{op}}<\infty.
\)
If
\(
 T_\lambda \Op_{\lambda,\gH}^{\rm A}g\to_{\mathrm{id}}Tg
\)
as \(\lambda\downarrow0\) for every \(g\in\mathcal A_{\rm cyl}\), 
then
\[
 T_\lambda x_\lambda\to_{\mathrm{id}}Tx
 \quad\text{whenever}\quad x_\lambda\to_{\mathrm{id}}x.
\]
\item Let \(x_\lambda,x_{\lambda,Q}\in\Hcal_\lambda\) and
\(x,x_Q\in\Hcal\).  Assume that, for every fixed \(Q\geq1\),
\[
x_{\lambda,Q}\to_{\mathrm{id}}x_Q
\qquad(\lambda\downarrow0),
\]
and that
\(
\norm{x_Q-x}_{\Hcal}\longrightarrow0
\)
as \(Q\to\infty\), as well as
\(
\lim_{Q\to\infty}\limsup_{\lambda\downarrow0}
\norm{x_\lambda-x_{\lambda,Q}}_{\Hcal_\lambda}=0.
\)
Then
\[
x_\lambda\to_{\mathrm{id}}x
\qquad(\lambda\downarrow0).
\]
\end{enumerate}
\end{lemma}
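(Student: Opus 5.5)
The whole lemma follows from the Kuwae--Shioya theory, adapted to Definition~\ref{def:identified}. The one structural input is the inner-product convergence \eqref{eq:equilibrium-inner-product}. The recurring device is to fix $\varepsilon$, take the approximant $x^\varepsilon\in\Acal_{\rm cyl}$ given by the definition, and use that $\Op_{\lambda,\gH}^{\rm A}$ is linear and asymptotically isometric on $\Acal_{\rm cyl}$. This gives, for $g,h\in\Acal_{\rm cyl}$,
\[
\norm{\Op_{\lambda,\gH}^{\rm A}g-\Op_{\lambda,\gH}^{\rm A}h}_{2,\lambda}\to\norm{g-h}_{\Hcal}.
\]

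I would prove (ii) first.
\begin{itemize}
\item Write $X_\lambda^\varepsilon=\Op_{\lambda,\gH}^{\rm A}x^\varepsilon$ and $Y_\lambda^\varepsilon=\Op_{\lambda,\gH}^{\rm A}y^\varepsilon$.
\item Expand $\ip{x_\lambda}{y_\lambda}_\lambda-\ip{X_\lambda^\varepsilon}{Y_\lambda^\varepsilon}_\lambda$ into two cross terms.
\item By Cauchy--Schwarz each cross term is $O(\varepsilon)$ in $\limsup$. This uses that $\norm{x_\lambda}_{2,\lambda}$ is eventually bounded by $\norm{x}+2\varepsilon$, which follows from the triangle inequality and $\norm{X_\lambda^\varepsilon}\to\norm{x^\varepsilon}$.
\item Then $\ip{X^\varepsilon_\lambda}{Y^\varepsilon_\lambda}_\lambda\to\ip{x^\varepsilon}{y^\varepsilon}$ by \eqref{eq:equilibrium-inner-product}, and $\ip{x^\varepsilon}{y^\varepsilon}$ is $O(\varepsilon)$-close to $\ip{x}{y}$.
\end{itemize}
Taking $y_\lambda=\Op_{\lambda,\gH}^{\rm A}g$ (trivially $\to_{\rm id} g$) and $y_\lambda=x_\lambda$ gives the forward direction of (i).

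For the converse in (i), take $x^\varepsilon\in\Acal_{\rm cyl}$ with $\norm{x-x^\varepsilon}<\varepsilon$ (density) and expand
\[
\norm{x_\lambda-X^\varepsilon_\lambda}^2=\norm{x_\lambda}^2-2\Ree\ip{X^\varepsilon_\lambda}{x_\lambda}_\lambda+\norm{X^\varepsilon_\lambda}^2 .
\]
Weak convergence handles the middle term and norm convergence the other two, so the limit is $\norm{x-x^\varepsilon}^2<\varepsilon^2$.

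Item (iii) follows by expanding $\norm{x_\lambda-y_\lambda}^2$ and applying (ii) to each of the three pairings, each of which converges to $\norm{x}^2$.

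For (iv), let $C=\sup\norm{T_\lambda}$. Split
\[
T_\lambda x_\lambda-\Op_{\lambda,\gH}^{\rm A}z=T_\lambda\bigl(x_\lambda-X^\varepsilon_\lambda\bigr)+\bigl(T_\lambda X^\varepsilon_\lambda-\Op_{\lambda,\gH}^{\rm A}z\bigr),
\]
where $z\in\Acal_{\rm cyl}$ is an approximant of $Tx^\varepsilon$ supplied by the hypothesis $T_\lambda X^\varepsilon_\lambda\to_{\rm id}Tx^\varepsilon$. The first piece is at most $C\varepsilon$ in $\limsup$. The second is below $\varepsilon$, and $\norm{z-Tx}\le\varepsilon+\norm{T}\varepsilon$. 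Rescaling $\varepsilon$ gives the claim.

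Item (v) is a diagonal argument. Choose $Q$ so that $\norm{x_Q-x}<\varepsilon/3$ and $\limsup_\lambda\norm{x_\lambda-x_{\lambda,Q}}<\varepsilon/3$. Then take the approximant of $x_Q$ at level $\varepsilon/3$ and apply the triangle inequality.

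The main obstacle is only bookkeeping: making sure the uniform bound on $\norm{x_\lambda}_{2,\lambda}$ holds for small $\lambda$ and not for all $\lambda$, since Definition~\ref{def:identified} controls only $\limsup$s. Everything else is routine.
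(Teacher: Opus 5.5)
Your proof is correct. Parts (iv) and (v) use the same $\varepsilon/3$ argument as the paper. For (i)--(iii) your route differs from the paper's, though only modestly.

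The paper first shows that its testing-type weak convergence in Definition~\ref{def:identified} is equivalent to Kuwae--Shioya weak convergence, which pairs against all strongly convergent sequences. In the converse direction it takes the uniform norm bound from \cite[Lemma~2.3]{KuwaeShioya03}. It then quotes \cite[Lemmas~2.1 and~2.3]{KuwaeShioya03} for (i)--(iii).

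You instead prove (ii) directly from the $\varepsilon$-approximants and \eqref{eq:equilibrium-inner-product}. You then get the forward half of (i), and (iii), as corollaries of (ii). The converse half of (i) you prove by expanding $\norm{x_\lambda-\Op_{\lambda,\gH}^{\rm A}x^\varepsilon}_{2,\lambda}^2$ and using density of $\Acal_{\rm cyl}$ in $L^2(\nu)$. That density step is exactly why testing only against $\Op_{\lambda,\gH}^{\rm A}g$ is enough.

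Your version is self-contained. It shows that the only inputs are linearity of $\Op_{\lambda,\gH}^{\rm A}$, \eqref{eq:equilibrium-inner-product}, and density. The paper's version is shorter on the page but depends on matching its definitions to those of the cited framework. Consequently your opening remark that everything follows from Kuwae--Shioya theory is unnecessary, since your argument never uses it.

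One cosmetic point: in the Cauchy--Schwarz step of (ii), the cross term $\ip{x_\lambda-X_\lambda^\varepsilon}{y_\lambda}_\lambda$ needs the eventual bound on $\norm{y_\lambda}_{2,\lambda}$ rather than on $\norm{x_\lambda}_{2,\lambda}$. The argument is identical by symmetry.
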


\begin{proof}
The strong convergence in Definition~\ref{def:identified} is the
\(\varepsilon\)-form of \cite[Definition~2.4]{KuwaeShioya03}.  It remains to
identify the weak convergence.  We first show that \(x_\lambda\rightharpoonup_{\mathrm{id}}x\) in Definition~\ref{def:identified} is equivalent
to the weak convergence of \cite[Definition~2.5]{KuwaeShioya03}.

Assume first that
\(x_\lambda\rightharpoonup_{\mathrm{id}}x\) as in Definition~\ref{def:identified}.
Let \(y_\lambda\to_{\mathrm{id}}y\).  Given \(\varepsilon>0\), choose
\(g\in\Acal_{\rm cyl}\) such that
\[
 \norm{g-y}_{\Hcal}<\varepsilon,
 \qquad
 \limsup_{\lambda\downarrow0}
 \norm{y_\lambda-\Op_{\lambda,\gH}^{\rm A}g}_{\Hcal_\lambda}<\varepsilon.
\]
Then
\begin{align*}
 &\left|
   \ip{y_\lambda}{x_\lambda}_\lambda-\ip{y}{x}_{\Hcal}
  \right|\\
 &\quad\leq
 \norm{y_\lambda-\Op_{\lambda,\gH}^{\rm A}g}_{\Hcal_\lambda}
 \norm{x_\lambda}_{\Hcal_\lambda}
 +\left|
   \ip{\Op_{\lambda,\gH}^{\rm A}g}{x_\lambda}_\lambda-\ip{g}{x}_{\Hcal}
  \right|
 +\norm{g-y}_{\Hcal}\norm{x}_{\Hcal}.
\end{align*}
Taking the upper limit as \(\lambda\downarrow0\), and then letting
\(\varepsilon\downarrow0\), implies
\[
 \ip{y_\lambda}{x_\lambda}_\lambda
 \longrightarrow
 \ip{y}{x}_{\Hcal}.
\]
Thus \(x_\lambda\) converges weakly in the sense of
\cite[Definition~2.5]{KuwaeShioya03}.

Conversely, suppose that \(x_\lambda\) converges weakly in the
Kuwae--Shioya sense from \cite[Definition~2.5]{KuwaeShioya03}.  For every \(g\in\Acal_{\rm cyl}\), one has
\(\Op_{\lambda,\gH}^{\rm A}g\to_{\mathrm{id}}g\), and hence
\[
 \ip{\Op_{\lambda,\gH}^{\rm A}g}{x_\lambda}_\lambda
 \longrightarrow
 \ip{g}{x}_{\Hcal}.
\]
The uniform norm bound follows from
\cite[Lemma~2.3]{KuwaeShioya03}.  This proves the equivalence of the two
weak formulations.

Part \textup{(i)} is now \cite[Lemma~2.3]{KuwaeShioya03}, part
\textup{(ii)} is \cite[Lemma~2.1\textup{(4)}]{KuwaeShioya03}, and part
\textup{(iii)} is \cite[Lemma~2.1\textup{(6)}]{KuwaeShioya03}.

For part~\textup{(iv)}, let
$C_T=1+\sup_{0<\lambda\leq\lambda_0}\norm{T_\lambda}_{\mathrm{op}}
   +\norm{T}_{\mathrm{op}}.$
Given \(\varepsilon>0\), choose \(g\in\Acal_{\rm cyl}\) such that
\[
 \norm{x-g}_{\Hcal}<\frac{\varepsilon}{3C_T},
 \qquad
 \limsup_{\lambda\downarrow0}
 \norm{x_\lambda-\Op_{\lambda,\gH}^{\rm A}g}_{\Hcal_\lambda}
 <\frac{\varepsilon}{3C_T}.
\]
The assumed convergence on the dense cylinder subspace gives
\(\widetilde g\in\Acal_{\rm cyl}\) such that
\[
 \norm{Tg-\widetilde g}_{\Hcal}<\frac\varepsilon3,
 \qquad
 \limsup_{\lambda\downarrow0}
 \norm{T_\lambda \Op_{\lambda,\gH}^{\rm A}g
       -\Op_{\lambda,\gH}^{\rm A}\widetilde g}_{\Hcal_\lambda}<\frac\varepsilon3.
\]
Consequently,
\[
 \norm{Tx-\widetilde g}_{\Hcal}
 \leq\norm{T}_{\mathrm{op}}\norm{x-g}_{\Hcal}+\norm{Tg-\widetilde g}_{\Hcal}<\varepsilon,
\]
and
\begin{align*}
 &\limsup_{\lambda\downarrow0}
 \norm{T_\lambda x_\lambda-\Op_{\lambda,\gH}^{\rm A}\widetilde g}_{\Hcal_\lambda}\notag\\
 &\leq
 \sup_{0<\lambda\leq\lambda_0}\norm{T_\lambda}_{\mathrm{op}}
 \limsup_{\lambda\downarrow0}
 \norm{x_\lambda-\Op_{\lambda,\gH}^{\rm A}g}_{\Hcal_\lambda}
 +\limsup_{\lambda\downarrow0}
 \norm{T_\lambda \Op_{\lambda,\gH}^{\rm A}g
       -\Op_{\lambda,\gH}^{\rm A}\widetilde g}_{\Hcal_\lambda}<\varepsilon.
\end{align*}
This is the identified convergence in part~\textup{(iv)}.

For part~\textup{(v)}, let \(\varepsilon>0\).  Choose \(Q\) so large
that
\[
\norm{x_Q-x}_{\Hcal}<\frac{\varepsilon}{3},
\qquad
\limsup_{\lambda\downarrow0}
\norm{x_\lambda-x_{\lambda,Q}}_{\Hcal_\lambda}
<\frac{\varepsilon}{3}.
\]
Since \(x_{\lambda,Q}\to_{\mathrm{id}}x_Q\), there exists
\(g\in\Acal_{\rm cyl}\) such that
\[
\norm{g-x_Q}_{\Hcal}<\frac{\varepsilon}{3},
\qquad
\limsup_{\lambda\downarrow0}
\norm{x_{\lambda,Q}
	-\Op_{\lambda,\gH}^{\rm A}g}_{\Hcal_\lambda}
<\frac{\varepsilon}{3}.
\]
Consequently,
\[
\norm{g-x}_{\Hcal}<\varepsilon
\]
and
\[
\limsup_{\lambda\downarrow0}
\norm{x_\lambda-\Op_{\lambda,\gH}^{\rm A}g}_{\Hcal_\lambda}
<\varepsilon.
\]
Definition~\ref{def:identified} therefore implies
\(x_\lambda\to_{\mathrm{id}}x\).
\end{proof}

\begin{theorem}
\label{thm:static-argument}
Let \(F\in\Acal_{\rm cyl}\).  If
\(x_\lambda\to_{\mathrm{id}}x\), then
\begin{equation}\label{eq:static-left-module}
 \Op_{\lambda,\gH}^{\rm A}F\,x_\lambda\to_{\mathrm{id}} Fx.
\end{equation}
Here the product on the left is the extended left action on
\(\Hcal_\lambda\) defined in
Subsection~\ref{subsec:quantum-weighted-generator}, while \(Fx\) is pointwise
multiplication in \(\Hcal\).  Moreover,
\[
 \norm{\Op_{\lambda,\gH}^{\rm A}F\,x_\lambda}_{\Hcal_\lambda}
 \le \norm{F}_\infty\norm{x_\lambda}_{\Hcal_\lambda}.
\]
\end{theorem}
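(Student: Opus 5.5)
The plan is to reduce the claim to Lemma~\ref{lem:identified-calculus}\textup{(iv)}, applied with \(T_\lambda\) equal to left multiplication by \(\Op_{\lambda,\gH}^{\rm A}F\) on \(\Hcal_\lambda\) and \(T\) equal to multiplication by \(F\) on \(\Hcal=L^2(\nu)\).

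We begin with the norm bound, since it supplies the uniform operator bound needed in part~(iv). For a bounded operator \(A\) one has
\[
\norm{\Op_{\lambda,\gH}^{\rm A}F\,A}_{2,\lambda}^2=\Tr\bigl(A^*(\Op_{\lambda,\gH}^{\rm A}F)^*\Op_{\lambda,\gH}^{\rm A}F\,A\,\Gamma_\lambda\bigr)\le\norm{\Op_{\lambda,\gH}^{\rm A}F}_{\mathrm{op}}^2\norm{A}_{2,\lambda}^2 .
\]
This follows from the operator inequality \(C^*C\le\norm{C}_{\mathrm{op}}^2\) after conjugating by \(A\Gamma_\lambda^{1/2}\). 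Combined with \(\norm{\Op_{\lambda,\gH}^{\rm A}F}_{\mathrm{op}}\le\norm F_\infty\), the bound holds on bounded operators, and it extends to all of \(\Hcal_\lambda\) by density. Hence \(\sup_\lambda\norm{T_\lambda}_{\mathrm{op}}\le\norm F_\infty\). On the classical side, \(F\) is bounded, so multiplication by \(F\) is bounded on \(L^2(\nu)\) with norm at most \(\norm F_\infty\).

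It then remains to check the hypothesis of part~(iv) on the dense class, namely that \(\Op_{\lambda,\gH}^{\rm A}F\,\Op_{\lambda,\gH}^{\rm A}g\to_{\mathrm{id}}Fg\) for every \(g\in\Acal_{\rm cyl}\). Since \(\Acal_{\rm cyl}\) is an algebra, \(Fg\in\Acal_{\rm cyl}\). In Definition~\ref{def:identified} we therefore take \(x^\varepsilon=Fg\) for every \(\varepsilon\). The first condition is then trivial. The second condition asks that
\[
\limsup_{\lambda\downarrow0}\norm{\Op_{\lambda,\gH}^{\rm A}F\,\Op_{\lambda,\gH}^{\rm A}g-\Op_{\lambda,\gH}^{\rm A}(Fg)}_{2,\lambda}=0,
\]
and this is exactly \eqref{eq:static-product} of Proposition~\ref{prop:equilibrium-identification}.

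With both hypotheses in place, Lemma~\ref{lem:identified-calculus}\textup{(iv)} yields \eqref{eq:static-left-module} for arbitrary \(x_\lambda\to_{\mathrm{id}}x\).

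No step should pose a serious obstacle. The one point that needs care is the norm estimate for the extended left action, specifically that it passes to the completion. This is handled by density together with the operator inequality above. The genuine analytic input is the product estimate, and it is already available from Lemma~\ref{lem:AW-product}.
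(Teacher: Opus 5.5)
Your proposal is correct and follows the paper's proof essentially step by step. The bound $\|\Op_{\lambda,\gH}^{\rm A}F\|_{\mathrm{op}}\le\|F\|_\infty$ supplies both the stated norm estimate and the uniform operator bound, \eqref{eq:static-product} with the choice $x^\varepsilon=Fg$ gives $\Op_{\lambda,\gH}^{\rm A}F\,\Op_{\lambda,\gH}^{\rm A}g\to_{\mathrm{id}}Fg$ on the dense class, and Lemma~\ref{lem:identified-calculus}\textup{(iv)} concludes.
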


\begin{proof}
 Positivity and unitality of the
anti-Wick quantization, together with the usual bound for multiplication
operators on \(L^2(\nu)\), imply
\[
\bigl\|\operatorname{Op}^{\mathrm A}_{\lambda,\mathcal H}F
\bigr\|_{\mathrm{op}}
\leq \|F\|_\infty,
\qquad
\|F\|_{\mathcal B(L^2(\nu))}
\leq \|F\|_\infty.
\]
For \(g\in\Acal_{\rm cyl}\), \eqref{eq:static-product} yields
\[
 \|\Op_{\lambda,\gH}^{\rm A}F\,\Op_{\lambda,\gH}^{\rm A}g
       -\Op_{\lambda,\gH}^{\rm A}(Fg)\|_{2,\lambda}\longrightarrow0.
\]
Since \(\Op_{\lambda,\gH}^{\rm A}(Fg)\to_{\mathrm{id}}Fg\), the last convergence
and Definition~\ref{def:identified} imply
\[
 \Op_{\lambda,\gH}^{\rm A}F\,\Op_{\lambda,\gH}^{\rm A}g\to_{\mathrm{id}}Fg.
\]
Using Lemma~\ref{lem:identified-calculus}\textup{(iv)}, we obtain
\eqref{eq:static-left-module} for every
\(x_\lambda\to_{\mathrm{id}}x\).  The required norm bound follows from
the first operator-norm estimate.
\end{proof}

\begin{lemma}
\label{lem:identified-products}
Let \(J\ge1\).  For \(1\le j\le J\), let
\(T_{j,\lambda}\in\mathcal B(\Hcal_\lambda)\) and
\(T_j\in\mathcal B(\Hcal)\).  Assume
\[
 \sup_{0<\lambda\le\lambda_0}\norm{T_{j,\lambda}}_{\mathrm{op}}<\infty
\]
and that the operators preserve identified convergence: whenever
\(x_\lambda\to_{\mathrm{id}}x\), one has
\(T_{j,\lambda}x_\lambda\to_{\mathrm{id}}T_jx\).  Then
\[
 T_{1,\lambda}\cdots T_{J,\lambda}x_\lambda
 \to_{\mathrm{id}} T_1\cdots T_Jx
\]
for every \(x_\lambda\to_{\mathrm{id}}x\).  If additionally
\(y_\lambda\to_{\mathrm{id}}y\), then
\[
 \ip{y_\lambda}{T_{1,\lambda}\cdots T_{J,\lambda}x_\lambda}_\lambda
 \longrightarrow
 \ip{y}{T_1\cdots T_Jx}_{\Hcal}.
\]
\end{lemma}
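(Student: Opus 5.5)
We argue by induction on \(J\), using only the closure of identified convergence under the single-operator hypothesis together with Lemma~\ref{lem:identified-calculus}\textup{(ii)}. For \(J=1\) the first assertion is exactly the hypothesis on \(T_{1,\lambda}\). Assume the assertion holds for products of \(J-1\) operators, and let \(x_\lambda\to_{\mathrm{id}}x\). By the induction hypothesis applied to \(T_{2,\lambda},\ldots,T_{J,\lambda}\), the vectors
\[
 z_\lambda:=T_{2,\lambda}\cdots T_{J,\lambda}x_\lambda\in\Hcal_\lambda
\]
satisfy \(z_\lambda\to_{\mathrm{id}}z:=T_2\cdots T_Jx\). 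Applying the hypothesis on \(T_{1,\lambda}\) to this identified convergent family then gives
\[
 T_{1,\lambda}z_\lambda\to_{\mathrm{id}}T_1z=T_1\cdots T_Jx,
\]
which is the first claim. The ordering matters: we peel off the operator acting last on the vector, the innermost one \(T_{J,\lambda}\), first, so that each step uses the preservation property on an already identified convergent family. No norm bounds are needed beyond those implicit in the hypothesis. The uniform bound \(\sup_\lambda\norm{T_{j,\lambda}}_{\mathrm{op}}<\infty\) is only a consistency condition, since the preservation property already encodes what is used. It would be needed if we instead verified preservation via Lemma~\ref{lem:identified-calculus}\textup{(iv)} from convergence on \(\Acal_{\rm cyl}\).

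For the second claim, set \(w_\lambda:=T_{1,\lambda}\cdots T_{J,\lambda}x_\lambda\). By the first part, \(w_\lambda\to_{\mathrm{id}}w:=T_1\cdots T_Jx\). Since also \(y_\lambda\to_{\mathrm{id}}y\), Lemma~\ref{lem:identified-calculus}\textup{(ii)} yields \(\ip{y_\lambda}{w_\lambda}_\lambda\to\ip{y}{w}_{\Hcal}\), as required.

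There is no real obstacle here. The only point needing care is that \(T_{j,\lambda}x_\lambda\) is interpreted via the bounded extension on the completion \(\Hcal_\lambda\), so that each intermediate vector lies in \(\Hcal_\lambda\) and Definition~\ref{def:identified} applies to it. This is immediate from \(T_{j,\lambda}\in\mathcal B(\Hcal_\lambda)\).
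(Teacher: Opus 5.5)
Your proof is correct and is the paper's argument: apply the preservation hypothesis successively to \(T_{J,\lambda},T_{J-1,\lambda},\ldots,T_{1,\lambda}\), then obtain the pairing limit from Lemma~\ref{lem:identified-calculus}\textup{(ii)}. Your remark that the uniform operator-norm bound is not actually used in this step is also accurate.
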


\begin{proof}
Applying the assumed property successively to
\(T_{J,\lambda},T_{J-1,\lambda},\ldots,T_{1,\lambda}\), we obtain
\[
T_{1,\lambda}\cdots T_{J,\lambda}x_\lambda
\to_{\mathrm{id}} T_1\cdots T_Jx.
\]
The pairing convergence follows from
Lemma~\ref{lem:identified-calculus}\textup{(ii)}.
\end{proof}

\subsection{Moment bounds at equilibrium}
\label{subsec:equilibrium-density-tails}
In this subsection we give the moment estimates needed for the
generator convergence in Section~\ref{sec:higher-density-generator}.
We use the self-adjoint real Fourier modes
\(f_{k,\mathrm c}\) and \(f_{k,\mathrm s}\) introduced in \eqref{def:fkcs}.
Choose a set \(\mathbb Z^d_+\) containing one element of
every pair \(\{k,-k\}\), \(k\ne0\), and put
\[
 \mathfrak I
 :=\{0\}\cup\{(k,\mathrm c),(k,\mathrm s):k\in\mathbb Z^d_+\}.
\]
For \(\eta\in\mathfrak I\), define
\[
 (f_\eta,k_\eta,w_\eta)
 =\begin{cases}
   (1,0,\widehat v(0)),&\eta=0,\\
   (f_{k,\mathrm c},k,2\widehat v(k)),&\eta=(k,\mathrm c),\\
   (f_{k,\mathrm s},k,2\widehat v(k)),&\eta=(k,\mathrm s).
  \end{cases}
\]
Thus 
\begin{equation}\label{eq:real-density-square}
 \frac12\sum_{k\in\mathbb Z^d}\widehat v(k)
 B_{\lambda,k}B_{\lambda,-k}
 =
 \frac12\sum_{\eta\in\mathfrak I}w_\eta B_\lambda(f_\eta)^2.
\end{equation}
For \(Q\ge1\), set
\begin{equation}\label{def:fetaQ}
 f_{\eta,Q}:=\Pi_Qf_\eta \Pi_Q.
\end{equation}
The corresponding renormalized term is
\[
 \rho_\eta(u)
 :=\lim_{L\to\infty}
 \left(
  \langle \Pi_Lu,f_\eta \Pi_Lu\rangle
  -\Tr(h^{-1}\Pi_Lf_\eta \Pi_L)
 \right)
 \quad\text{in every finite }L^p(\mu_0).
\]
Thus \(\rho_0\) agrees with the complex zero mode, while
\(\rho_{(k,\mathrm c)}=\Ree\rho_k\) and
\(\rho_{(k,\mathrm s)}=\Imm\rho_k\).
By \cite[(5.57)]{LNR}, for every fixed \(\lambda>0\)
and every integer \(k\ge1\),
\begin{equation}\label{eq:fixed-lambda-number-moments}
 \Tr(\mathcal N^k\Gamma_\lambda)<\infty.
\end{equation}

By \cite[Theorem~8.1]{LNR}, whenever \(1\ll Q\) and
\(Q^2+m\le\lambda^{-2}\),
\begin{equation}\label{eq:LNR-uniform-label}
 \norm{B_\lambda(f_\eta)-B_\lambda(f_{\eta,Q})}_{2,\lambda}^{\,2}
 \le
 C\langle k_\eta\rangle^2
 \left(
  \lambda+\norm{\Pi_Q^\perp h^{-1}}_{\mathfrak S^2}
 \right)^{1/7}.
\end{equation}

For \(d\in\{2,3\}\), let
\begin{equation}\label{def:gd}
 g_d(\lambda)
 :=
 \begin{cases}
  1+|\log\lambda|,&d=2,\\
  \lambda^{-1/2},&d=3,
 \end{cases}
 \qquad 0<\lambda\le\lambda_0.
\end{equation}
Then
\[
 \lambda\Tr(\gamma_{0,\lambda})\le C g_d(\lambda),
 \qquad
 \|\lambda\gamma_{0,\lambda}\|_{\mathfrak S^2}\le C.
\]
We also write
\[
 \mathbb W_\lambda
 =\frac12\sum_{\eta\in\mathfrak I}w_\eta B_\lambda(f_\eta)^2\ge0,
 \qquad
 \lambda\mathbb H_\lambda=\lambda\dG(h)+\mathbb W_\lambda.
\]

\begin{lemma}
\label{lem:common-fourth-density-moment}
Let \(b=b^*\) be a bounded one-particle operator satisfying
\begin{equation}\label{eq:common-b-hypotheses}
 \|b\|_{\mathrm{op}}\le C_0,\qquad
 b\Dom(h)\subset\Dom(h),\qquad
 K(b):=\|[[b,h],b]\|_{\mathrm{op}}<\infty.
\end{equation}
Then, for \(0<\lambda\le\lambda_0\),
\begin{equation}\label{eq:common-fourth-density-bound}
 \Tr\bigl(B_\lambda(b)^4\Gamma_\lambda\bigr)
 \le C\left[1+\lambda^4(1+K(b))^2 g_d(\lambda)^6\right],
\end{equation}
where \(C\) depends only on \(m,C_0\).
\end{lemma}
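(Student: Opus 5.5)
We plan to bound the fourth moment by comparing it with the energy through a commutator (Falk--Bruch/double-commutator type) argument. First, the interaction term controls the density moments directly only when \(b=f_\eta\). For a general \(b\), we therefore use a Gibbs variational/convexity bound instead. We will proceed in two steps. First we reduce to a free-state moment bound. Then we compute the free moment explicitly with Wick's theorem.

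\emph{Step 1: reduction to the free state.} Since \(\mathbb W_\lambda\ge0\), the Peierls--Bogoliubov and Gibbs variational principles give, for \(|\varepsilon|\) small,
\[
 \Tr\bigl(\ee^{\varepsilon B}\bigr)\text{-type moments of }\Gamma_\lambda \lesssim \frac{\mathcal Z_{0,\lambda}}{\mathcal Z_\lambda}\times(\text{free moments}).
\]
More concretely, we will use the operator inequality \(\ee^{-\lambda\mathbb H_\lambda}\) versus \(\ee^{-\lambda\dG(h)}\) in the form supplied by the Golden--Thompson/Araki--Lieb--Thirring inequality applied to \(\Tr(B^4\ee^{-\lambda\mathbb H_\lambda})\). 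Since \(B^4\) does not commute with \(\mathbb H_\lambda\), this comparison is not immediate. We therefore use the double-commutator estimate \cite[Theorem~7.2]{LNR}. It bounds \(\Tr(B^2\Gamma)-\Tr(B\Gamma^{1/2}B\Gamma^{1/2})\) by \(\lambda\) times \(\Tr([[B,\lambda\mathbb H_\lambda],B]\Gamma)^{1/2}\)-type terms. We apply it with \(B\) replaced by \(B_\lambda(b)^2\). The double commutator \([[B_\lambda(b),\dG(h)],B_\lambda(b)]=\lambda^2\dG([[b,h],b])\) is controlled by \(\lambda^2K(b)\mathcal N\). Its moments are bounded by \(\lambda K(b)g_d(\lambda)\), since \(\lambda\Tr\gamma_{0,\lambda}\le Cg_d\). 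The commutators with \(\mathbb W_\lambda\) vanish to leading order, because \(B_\lambda(b)\) and \(B_\lambda(f_\eta)\) commute when \(b\) and \(f_\eta\) are multiplication operators. In general they produce \(\lambda\dG([b,f_\eta])\) terms, which are bounded by \(\lambda\mathcal N\). For the \(\mathcal N\) moments under \(\Gamma_\lambda\) we use \eqref{eq:fixed-lambda-number-moments} quantitatively, via the uniform estimate \(\Tr((\lambda\mathcal N)^j\Gamma_\lambda)\le Cg_d^j\). That estimate follows from the same free comparison, since \(\mathbb W_\lambda\ge0\) and \(\mathcal Z_\lambda/\mathcal Z_{0,\lambda}\to\mathfrak Z>0\) by Theorem~\ref{thm:equilibrium-limits}.

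\emph{Step 2: the free moment.} In \(\Gamma_{0,\lambda}\), the operator \(B_\lambda(b)\) is a centered quadratic observable. By the quasi-free Wick theorem,
\[
\Tr(B_\lambda(b)^4\Gamma_{0,\lambda})\lesssim \bigl(\lambda^2\Tr(b\gamma b(1+\gamma))\bigr)^2+\lambda^4\|\gamma\|_{\mathfrak S^4}^4\ldots,
\]
with \(\gamma=\gamma_{0,\lambda}\). Using \(\|b\|\le C_0\) and \(\|\lambda\gamma\|_{\mathfrak S^2}\le C\), this is \(O(1)\). The contributions involving the \(\lambda\gamma\) term without a square give \(\lambda\Tr\lambda\gamma\lesssim\lambda g_d\le C\).

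\emph{Assembly.} Collecting the terms, the error from Step 1 has the form \(\lambda^2(1+K)g_d^3\), squared once because we work with the fourth rather than the second moment. This yields the \(\lambda^4(1+K(b))^2g_d^6\) term in \eqref{eq:common-fourth-density-bound}.

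The main obstacle is Step 1. Passing from the interacting state to a free bound for a non-commuting observable is the hard part, and we must close a self-consistent inequality. We would write \(X=\Tr(B^4\Gamma_\lambda)\) and obtain \(X\le C+C\lambda^2(1+K)g_d^3X^{1/2}\) via Cauchy--Schwarz. Solving this inequality gives exactly the stated form.
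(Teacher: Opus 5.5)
There is a genuine gap, and it is in Step~1, where all the difficulty lies. Write $B:=B_\lambda(b)$, $H:=\lambda\mathbb H_\lambda$ and $X:=\Tr(B^4\Gamma_\lambda)$. Applying \cite[Theorem~7.2]{LNR} with $A=B^2$ gives only the identity $X=\Tr(B^2\Gamma_\lambda^{1/2}B^2\Gamma_\lambda^{1/2})+\tfrac12\|[B^2,\Gamma_\lambda^{1/2}]\|_{\mathfrak S^2}^2$ together with a bound on the second term. Using $[[B^2,H],B^2]=2BDB+B^2D+DB^2$ with $D=[[B,H],B]$, that term can indeed be bounded by $C\lambda^2(1+K(b))\Tr((g_d+\lambda\mathcal N)^4\Gamma_\lambda)^{1/2}X^{1/2}$.

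The first term $\Tr(B^2\Gamma_\lambda^{1/2}B^2\Gamma_\lambda^{1/2})$, however, is just the symmetrized fourth moment. It is no easier than $X$, and your proposal gives no argument that it is $O(1)$. The tool you invoke for this is ``interacting moments $\lesssim(\mathcal Z_{0,\lambda}/\mathcal Z_\lambda)\times$ free moments'', followed by the Wick computation of Step~2. That comparison holds only for observables commuting with $\mathbb H_\lambda$, which is why it works for functions of $\mathcal N$. For $B$ it is false in general. Golden--Thompson gives $\Tr e^{-H+\varepsilon B}\le\mathcal Z_\lambda\Tr(\ee^{\varepsilon B}\Gamma_\lambda)$, which is the wrong direction. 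Hence the constant $C$ in your self-consistent inequality $X\le C+C\lambda^2(1+K)g_d^3X^{1/2}$ has no source. The power $g_d^3$ is also read off from the target rather than derived.

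The missing idea is to use partition functions only where they carry information, namely for Gibbs states tilted by the observable itself. The paper sets $\mathcal Z_{\lambda,b,\tau}=\Tr\ee^{-H+\tau\delta B}$ for $|\tau|\le2$, with $4\delta C_0<m$, so that the tilted free state has one-body operator $h-\tau\delta b\ge h/2$. Three facts then combine:
\begin{itemize}
\item the centering of $B$ gives $|\partial_\tau\log\mathcal Z_{0,\lambda,b,\tau}|\le C$;
\item the Gibbs variational principle, with $\mathbb W_\lambda\ge0$ and the Wick bound $\Tr(\mathbb W_\lambda\Gamma_{0,\lambda,b,\tau})\le C$, gives $\ee^{-C}\le\mathcal Z_{\lambda,b,\tau}/\mathcal Z_{0,\lambda,b,\tau}\le1$;
\item convexity of $\tau\mapsto\log\mathcal Z_{\lambda,b,\tau}$ turns these into tilted first moments $|\Tr(\delta B\,\Gamma_{\lambda,b,\tau})|\le C$ for $|\tau|\le1$.
\end{itemize}
These tilted first moments are fed into the correlation inequality \cite[Theorem~3]{DNN25}, together with two further inputs. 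One is the double-commutator bound $\pm[[\delta B,H],\delta B]\le C\delta^2\lambda^2(1+K(b))(g_d+\lambda\mathcal N)^2$, whose control operator commutes with $B$ and $\mathbb H_\lambda$. The other is the sixth-moment bound $\sup_{|\tau|\le1}\Tr((g_d+\lambda\mathcal N)^6\Gamma_{\lambda,b,\tau})\le Cg_d^6$, which requires exponential moments of $\sigma\lambda\mathcal N/g_d$ in the tilted states, not only in $\Gamma_\lambda$. It is this inequality that produces the fourth moment with the error $\lambda^4(1+K(b))^2g_d^6$.

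Your double-commutator computation and your number-moment bounds are the right auxiliary inputs. Without the tilt/convexity step and an inequality of this type converting it into a fourth-moment bound, the argument does not close.

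A minor point: $B_\lambda(b)$ and $B_\lambda(f_\eta)$ need not commute. In the applications $b$ is $\Pi_{E_\eta}^\perp f_\eta\Pi_{E_\eta}^\perp$ or $f_\eta-f_{\eta,Q}$, neither of which is a multiplication operator. Your fallback bound through $\lambda\dG([b,f_\eta])$ is the correct one.
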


\begin{proof}
We follow the partition-function perturbation argument in the proof of
\cite[Theorem~9.1]{NZZ25}.  Fix \(\delta>0\) so small that
\(4\delta C_0<m\), and, for \(|\tau|\le2\), let
\[
 h_{b,\tau}:=h-\tau\delta b,
 \qquad
 \gamma_{0,\lambda,b,\tau}:=(\ee^{\lambda h_{b,\tau}}-1)^{-1}.
\]
Then \(h_{b,\tau}\ge h/2\).  By this lower bound, the argument used in
\cite[Section~3]{LNR} and \cite[Theorem~9.1]{NZZ25} implies, uniformly
for \(|\tau|\le2\),
\begin{equation}\label{eq:common-perturbed-covariance}
	\|\lambda\gamma_{0,\lambda,b,\tau}\|_{\mathfrak S^2}
	+g_d(\lambda)^{-1}\Tr(\lambda\gamma_{0,\lambda,b,\tau})
	+\|\lambda\gamma_{0,\lambda,b,\tau}
	-\lambda\gamma_{0,\lambda}\|_{\mathfrak S^1}
	\le C.
\end{equation}

Set
\[
 \mathcal Z_{\lambda,b,\tau}
 :=\Tr\ee^{-\lambda\mathbb H_\lambda+\tau\delta B_\lambda(b)},
 \qquad
 \Gamma_{\lambda,b,\tau}
 :=\mathcal Z_{\lambda,b,\tau}^{-1}
   \ee^{-\lambda\mathbb H_\lambda+\tau\delta B_\lambda(b)},
\]
and define \(\mathcal Z_{0,\lambda,b,\tau}\) and
\(\Gamma_{0,\lambda,b,\tau}\) by replacing \(\lambda\mathbb H_\lambda\)
with \(\lambda\dG(h)\).  The centering in \(B_\lambda(b)\) and
\eqref{eq:common-perturbed-covariance} give
\[
 |\partial_\tau\log\mathcal Z_{0,\lambda,b,\tau}|
 =\delta\left|\Tr\bigl(b(\lambda\gamma_{0,\lambda,b,\tau}
                         -\lambda\gamma_{0,\lambda})\bigr)\right|\le C.
\]
The quasi-free Wick rule, \eqref{eq:common-perturbed-covariance}, and
\(\sum_\eta w_\eta<\infty\) imply
\(\sup_{|\tau|\le2}\Tr(\mathbb W_\lambda\Gamma_{0,\lambda,b,\tau})\le C\).
Since \(\mathbb W_\lambda\ge0\), the Gibbs variational principle, as in
\cite{NZZ25}, yields
\[
 \ee^{-C}\le
 \frac{\mathcal Z_{\lambda,b,\tau}}{\mathcal Z_{0,\lambda,b,\tau}}
 \le1.
\]
Convexity of \(\tau\mapsto\log\mathcal Z_{\lambda,b,\tau}\) therefore implies
\begin{equation}\label{eq:common-perturbed-first-moment}
 \sup_{|\tau|\le1}
 \left|\Tr\bigl(\delta B_\lambda(b)\Gamma_{\lambda,b,\tau}\bigr)\right|
 \le C.
\end{equation}
Since \(\mathcal N\) commutes with \(\mathbb H_\lambda\) and
\(B_\lambda(b)\), the above interacting--free comparison remains
valid after adding
\(\sigma\lambda\mathcal N/g_d(\lambda)\) to the exponent.  On the free
side, this amounts to replacing \(h_{b,\tau}\) by
\(
h_{b,\tau}-\frac{\sigma}{g_d(\lambda)}.
\)
Since \(h_{b,\tau}\ge h/2\) and \(g_d(\lambda)\ge1\), this operator is
bounded below by \(h/4\) when \(\sigma>0\) is fixed sufficiently small.
The same free estimate as above therefore implies
\[
\sup_{|\tau|\le1}
\Tr\!\left[
\ee^{\sigma\lambda\mathcal N/g_d(\lambda)}
\Gamma_{\lambda,b,\tau}\right]
\le C_\sigma.
\]
Moreover,
\[
\pm\delta B_\lambda(b)
\le
\delta\lambda\|b\|_{\mathrm{op}}\mathcal N
+\delta\bigl|\lambda\Tr(b\gamma_{0,\lambda})\bigr|\1
\le C\bigl(g_d(\lambda)\1+\lambda\mathcal N\bigr).
\]
Since
\[
\bigl(g_d(\lambda)+\lambda\mathcal N\bigr)^6
\le C_\sigma g_d(\lambda)^6
\ee^{\sigma\lambda\mathcal N/g_d(\lambda)},
\]
we obtain
\begin{equation}\label{eq:common-perturbed-sixth-moment}
	\sup_{|\tau|\le1}
	\Tr\!\left[
	\bigl(g_d(\lambda)\1+\lambda\mathcal N\bigr)^6
	\Gamma_{\lambda,b,\tau}\right]
	\le C g_d(\lambda)^6.
\end{equation}

It remains to verify the double-commutator assumption in
\cite[Theorem~3]{DNN25}.  We have
\[
[[B_\lambda(b),\lambda\dG(h)],B_\lambda(b)]
=\lambda^3\dG([[b,h],b]).
\]
For every \(\eta\in\mathfrak I\), a direct commutator calculation gives
\[
\begin{aligned}
	&[[B_\lambda(b),B_\lambda(f_\eta)^2],B_\lambda(b)]=\lambda^2\Bigl(
	\lambda\dG([[b,f_\eta],b])B_\lambda(f_\eta)
	+B_\lambda(f_\eta)\lambda\dG([[b,f_\eta],b])
	-2\bigl(\lambda\dG([b,f_\eta])\bigr)^2
	\Bigr).
\end{aligned}
\]
Thus we obtain
\[
\pm[[B_\lambda(b),\mathbb W_\lambda],B_\lambda(b)]
\le C\lambda^2(\1+\lambda\mathcal N)^2.
\]
Together with
\[
\pm\lambda^3\dG([[b,h],b])
\le \lambda^3K(b)\mathcal N,
\]
this yields
\begin{equation}\label{eq:common-DNN-double-commutator}
	\begin{aligned}
		&\pm[[\delta B_\lambda(b),\lambda\mathbb H_\lambda],
		\delta B_\lambda(b)]\le
		C\delta^2\lambda^2(1+K(b))
		\bigl(g_d(\lambda)\1+\lambda\mathcal N\bigr)^2.
	\end{aligned}
\end{equation}
We know that
\(g_d(\lambda)\1+\lambda\mathcal N\) commutes with
\(\lambda\mathbb H_\lambda\) and \(\delta B_\lambda(b)\).

We now apply \cite[Theorem~3]{DNN25}, using
\(C(g_d(\lambda)\1+\lambda\mathcal N)\) as the commuting control
operator.  By \eqref{eq:common-perturbed-first-moment},
\eqref{eq:common-perturbed-sixth-moment}, and
\eqref{eq:common-DNN-double-commutator},
\[
\begin{aligned}
	\delta^4\Tr\bigl(B_\lambda(b)^4\Gamma_\lambda\bigr)
	&\le C\Bigl[
	1+\delta^4\lambda^4(1+K(b))^2
	\sup_{|\tau|\le1}
	\Tr\!\left[
	\bigl(g_d(\lambda)\1+\lambda\mathcal N\bigr)^6
	\Gamma_{\lambda,b,\tau}\right]
	\Bigr]\\
	&\le C\left[
	1+\delta^4\lambda^4(1+K(b))^2g_d(\lambda)^6
	\right],
\end{aligned}
\]
which implies 
\eqref{eq:common-fourth-density-bound}.
\end{proof}

\section{Convergence of the quantum generators}
\label{sec:higher-density-generator}

In this section we prove statement~\textup{(G)} of
Subsection~\ref{subsec:proof-outline}, namely, the convergence of the
quantum generators on bounded cylinder observables in dimensions two
and three. For each real Fourier mode, we compute the commutator
with a cylinder observable and show that it is supported on a finite
Fourier space. After a sufficiently large Fourier cutoff is introduced,
the remaining high-frequency part of the density acts on the
complementary Fock factor and commutes with this commutator. Uniform
moment bounds and the equilibrium moment estimate then show that
the products in both orders are small in \(L^2(\Gamma_\lambda)\).
Finally, the weighted summability of the Fourier coefficients of \(v\)
allows us to sum over all interaction modes and identify the limit with
the classical generator \(\Lcal_0\).

Formally, the limiting generator can already be read from
\eqref{eq:exact-quantum-generator-formula}.  For
\(A=\Op_{\lambda,\gH}^{\rm A}F\), the semiclassical symbol correspondences are
\[
 \ii[\dG(h),A]\ \leadsto\ \mathscr D_hF,
 \qquad
 B_{\lambda,k}\ \leadsto\ \rho_k,
 \qquad
 \adop{M_k}A\ \leadsto\ -\ii\mathscr D_{M_k}F.
\]
Thus the two insertions formally give
\[
 \ii[\mathbb H_\lambda,\Op_{\lambda,\gH}^{\rm A}F]
 \ \leadsto\
 \mathscr D_hF
 +\frac12\sum_{k\in\mathbb Z^d}\widehat v(k)
 \bigl(\rho_k\mathscr D_{M_{-k}}F
       +\rho_{-k}\mathscr D_{M_k}F\bigr)
 =\Lcal_0F,
\]
which is \eqref{eq:classical-generator-definition}. At a fixed Fourier cutoff, the leading term is obtained by replacing
the creation and annihilation operators with the corresponding
coordinate functions in \(z\) and \(\bar z\).  All correction terms
arising when these operators are rearranged or moved through an
anti-Wick factor contain an explicit factor \(\lambda\).  We prove
this convergence in the varying weighted Hilbert spaces and then
remove the cutoff.  

\subsection{Commutators and their Fourier support}
\label{subsec:exact-density-calculus}

Recall the centered second quantization \(B_\lambda(b)\) from
Subsection~\ref{subsec:quantum-setting} and the derivation
\(\adop{b}\) from \eqref{eq:quantum-density-derivation}.  We first
derive an exact formula for the commutator with an anti-Wick cylinder
observable.  We then identify the finite Fourier space on which this
commutator acts and show that the density tail beyond a sufficiently
large cutoff commutes with it.

\begin{lemma}
\label{lem:finite-mode-commutator}
Let \(\Pi\) be a finite Fourier projection, let
\(r_\Pi=\operatorname{rank}\Pi\), let
\((\psi_j)_{j=1}^{r_\Pi}\) be the normalized Fourier basis of \(\Pi\gH\), and let
\(G\in C_c^\infty(\Pi\gH)\).
Let \(b=b^*\in\mathcal B(\gH)\) and assume that \(b\Pi\gH\) is
contained in a finite Fourier space.  
Then, 
\begin{equation}\label{eq:finite-mode-recursion}
 [\dG(b),\Op_{\lambda,\gH}^{\rm A}G]
 =
 \Op_{\lambda,\gH}^{\rm A}(\ell_bG)+\mathcal R_b(G),
\end{equation}
where
\begin{equation}\label{eq:finite-mode-symbol-derivation}
 \ell_bG
 =
 \sum_{\mu,j}(\Pi b\Pi)_{\mu j}\bar z_\mu\partial_{\bar z_j}G
 -
 \sum_{\mu,j}(\Pi b\Pi)_{\mu j}z_j\partial_{z_\mu}G
\end{equation}
and
\begin{equation*}
 \mathcal R_b(G)
 =
 \sum_j
 \left[
  a_\lambda^*(\Pi^\perp b \psi_j)
  \Op_{\lambda,\gH}^{\rm A}(\partial_{\bar z_j}G)
  -\Op_{\lambda,\gH}^{\rm A}(\partial_{z_j}G)
   a_\lambda(\Pi^\perp b \psi_j)
 \right].
\end{equation*}
\end{lemma}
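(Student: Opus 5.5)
The plan is to split $b$ into its blocks relative to $\Pi$ and to use the factorization $\Fock(\gH)\simeq\Fock(\Pi\gH)\otimes\Fock(\Pi^\perp\gH)$, under which $\Op_{\lambda,\gH}^{\rm A}G=\Op_{\lambda,\Pi\gH}^{\rm A}(G)\otimes\1$. Since $b=b^*$, write
\[
 b=\Pi b\Pi+\Pi^\perp b\Pi+\Pi b\Pi^\perp+\Pi^\perp b\Pi^\perp,
\]
so that $\dG(b)$ is the sum of the four corresponding second quantizations. All identities will be verified as quadratic forms, or on the dense set of finite-particle vectors. On that set $\dG(b)$ is defined because $b$ is bounded. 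The coherent-state integrals defining $\Op^{\rm A}$ of the smooth, compactly supported symbols $G$, $\partial_{z_j}G$, $\partial_{\bar z_j}G$ preserve the particle-number grading up to rapidly decaying tails, by the Gaussian factor in $\xi_{\lambda,E}(u)$.

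\emph{The block $\Pi^\perp b\Pi^\perp$.} Here $\dG(\Pi^\perp b\Pi^\perp)=\1\otimes\dG(\Pi^\perp b\Pi^\perp|_{\Pi^\perp\gH})$ acts only on the second tensor factor. It therefore commutes with $\Op_{\lambda,\gH}^{\rm A}G$ and contributes nothing.

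\emph{The off-diagonal blocks.} In the Fourier basis we have
\[
 \dG(\Pi^\perp b\Pi)=\sum_j a^*(\Pi^\perp b\psi_j)a(\psi_j),\qquad
 \dG(\Pi b\Pi^\perp)=\sum_j a^*(\psi_j)a(\Pi^\perp b\psi_j),
\]
where the second formula uses $b=b^*$. Because $b\Pi\gH$ is finite-dimensional, these sums are finite. The operators $a^{\#}(\Pi^\perp b\psi_j)$ act on the second factor and commute with $\Op^{\rm A}_{\lambda,\gH}G$. Hence only the basic commutators
\[
 [a_\lambda(\psi_j),\Op^{\rm A}_{\lambda,\Pi\gH}G]=\lambda\,\Op^{\rm A}_{\lambda,\Pi\gH}(\partial_{\bar z_j}G),\qquad
 [a^*_\lambda(\psi_j),\Op^{\rm A}_{\lambda,\Pi\gH}G]=-\lambda\,\Op^{\rm A}_{\lambda,\Pi\gH}(\partial_{z_j}G)
\]
are needed. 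I would prove the first from \eqref{eq:coherent-eigenvector}, in the form $a_\lambda(\psi_j)\xi_u=z_j\xi_u$, together with the adjoint relation, which expresses $\langle\xi_u|a_\lambda(\psi_j)$ as $(\bar z_j\cdot{}+\lambda\partial_{\bar z_j})$ acting on the unnormalized coherent state. One integration by parts in $u$ then moves $\lambda\partial_{\bar z_j}$ onto $G$; the boundary terms vanish since $G\in C_c^\infty$. The second commutator follows by taking adjoints. The prefactors work out exactly: $a^*(\cdot)=\lambda^{-1/2}a^*_\lambda(\cdot)$ and $a(\psi_j)=\lambda^{-1/2}a_\lambda(\psi_j)$ give a total factor $\lambda^{-1}\cdot\lambda=1$. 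This yields precisely the two sums in $\mathcal R_b(G)$, with $a_\lambda^*$ on the left and $a_\lambda$ on the right, as stated.

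\emph{The finite block $\Pi b\Pi$.} Let $c=\Pi b\Pi$, a Hermitian matrix on $\Pi\gH$. Then $\ee^{\ii s\dG(c)}\xi_{\lambda,\Pi\gH}(u)=\xi_{\lambda,\Pi\gH}(\ee^{\ii sc}u)$. The Lebesgue measure on $\Pi\gH$ is unitarily invariant, so
\[
 \ee^{\ii s\dG(c)}\Op^{\rm A}(G)\ee^{-\ii s\dG(c)}=\Op^{\rm A}\bigl(G(\ee^{-\ii sc}\,\cdot)\bigr).
\]
Differentiating at $s=0$, which is legitimate in the strong sense on finite-particle vectors, gives $[\dG(c),\Op^{\rm A}G]=\Op^{\rm A}(\ell_bG)$. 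Computing $\frac{\dd}{\dd s}G(\ee^{-\ii sc}u)|_{s=0}$ in the coordinates $z,\bar z$ produces $\ell_bG$ in \eqref{eq:finite-mode-symbol-derivation}, with the given signs and index placement. Alternatively, the same result follows from the off-diagonal commutators above, applied to $\sum_{\mu,j}c_{\mu j}a^*(\psi_\mu)a(\psi_j)$ via the Leibniz rule, after recombining into an anti-Wick symbol; I would use this as a consistency check.

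I expect the main obstacle to be the bookkeeping of conventions rather than any analysis: signs, factors of $\ii$, and the index order in $(\Pi b\Pi)_{\mu j}$, given that $z_j$ corresponds to $a_\lambda$ and $\bar z_j$ to $a^*_\lambda$. A secondary point is the rigorous justification of the unbounded commutators. I would handle this by testing the identities against coherent and finite-particle vectors, where every term is an absolutely convergent Gaussian integral.
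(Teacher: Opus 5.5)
Your proof is correct. For the blocks $\Pi^\perp b\Pi^\perp$, $\Pi^\perp b\Pi$ and $\Pi b\Pi^\perp$ it coincides with the paper's argument: the same block decomposition of $\dG(b)$, the same tensor-factor commutation, and the same basic commutators \eqref{eq:finite-mode-basic-field-commutators}.

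The genuine difference is the diagonal block $c=\Pi b\Pi$.
\begin{itemize}
\item The paper writes $\lambda\dG(c)=\sum_{\mu,j}c_{\mu j}a_\lambda^*(\psi_\mu)a_\lambda(\psi_j)$ and applies the Leibniz rule with the basic commutators. It then moves the leftover fields through the anti-Wick factor using \eqref{eq:finite-mode-AW-integration}. The two second-derivative corrections $\lambda\Op_{\lambda,\gH}^{\rm A}(\partial_{z_\mu}\partial_{\bar z_j}G)$ enter with opposite signs and cancel, so this block leaves no $O(\lambda)$ remainder.
\item Your route uses $\ee^{\ii s\dG(c)}\xi_{\lambda,\Pi\gH}(u)=\xi_{\lambda,\Pi\gH}(\ee^{\ii sc}u)$ together with unitary invariance of Lebesgue measure on $\Pi\gH$. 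This explains the exactness structurally: the coherent-state POVM is covariant under the linear flow generated by $c$. It also identifies $\ell_bG$ at once as $-\ii D_{-\ii cu}G$, which is the relation $\ell_{\Pi h\Pi}G=-\ii\mathscr D_hF$ the paper uses later in \eqref{eq:fixed-Q-free-part}.
\end{itemize}
The only extra cost of your route is justifying the $s$-derivative. This is harmless: the difference quotients of $G(\ee^{-\ii sc}\,\cdot\,)$ converge uniformly with uniformly compact support, and $\|\Op^{\rm A}_{\lambda,\Pi\gH}(f)\|_{\mathrm{op}}\le\|f\|_\infty$. The paper's route, by contrast, needs nothing beyond the two algebraic identities it already uses for the off-diagonal terms.

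There is one small slip in your derivation of the basic commutator. Set $\xi_u:=\xi_{\lambda,\Pi\gH}(u)$ and $\tilde\xi_u:=\ee^{\|u\|^2/(2\lambda)}\xi_u$. Then $a_\lambda^*(\psi_j)\tilde\xi_u=\lambda\partial_{z_j}\tilde\xi_u$, and since $\lambda\partial_{\bar z_j}\ee^{-\|u\|^2/\lambda}=-z_j\ee^{-\|u\|^2/\lambda}$,
\[
 |\xi_u\rangle\langle\xi_u|\,a_\lambda(\psi_j)
 =(z_j+\lambda\partial_{\bar z_j})\bigl(|\xi_u\rangle\langle\xi_u|\bigr).
\]
The multiplier is therefore $z_j$, not $\bar z_j$. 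Only with $z_j$ does the multiplication term cancel against $a_\lambda(\psi_j)\Op_{\lambda,\gH}^{\rm A}G=\Op_{\lambda,\gH}^{\rm A}(z_jG)$. The commutators you actually state are nonetheless the correct ones.
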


\begin{proof}
Write \(z=\sum_{j=1}^{r_\Pi}z_j\psi_j\),
\(\xi_z=\xi_{\lambda,\Pi\gH}(z)\), and
\(\dd m_\lambda(z)=(\pi\lambda)^{-r_\Pi}\dd z\) for $\xi_{\lambda,\Pi\gH}$ from \eqref{eq:coherent-resolution}.  Using \cite[Section~5.2]{LNR}, followed by integration
by parts against the Gaussian factor in \(\xi_z\), we obtain
\begin{align}
 a_\lambda(\psi_j)\Op_{\lambda,\gH}^{\rm A}G
 &=\Op_{\lambda,\gH}^{\rm A}(z_jG),&
 \Op_{\lambda,\gH}^{\rm A}G\,a_\lambda^*(\psi_j)
 &=\Op_{\lambda,\gH}^{\rm A}(\bar z_jG),\label{eq:finite-mode-AW-eigen}\\
 a_\lambda^*(\psi_j)\Op_{\lambda,\gH}^{\rm A}G
 &=\Op_{\lambda,\gH}^{\rm A}(\bar z_jG)
   -\lambda \Op_{\lambda,\gH}^{\rm A}(\partial_{z_j}G),&
 \Op_{\lambda,\gH}^{\rm A}G\,a_\lambda(\psi_j)
 &=\Op_{\lambda,\gH}^{\rm A}(z_jG)
   -\lambda \Op_{\lambda,\gH}^{\rm A}(\partial_{\bar z_j}G),
 \label{eq:finite-mode-AW-integration}
\end{align}
which implies
\begin{equation}\label{eq:finite-mode-basic-field-commutators}
 \lambda^{-1}[a_\lambda(\psi_j),\Op_{\lambda,\gH}^{\rm A}G]
 =\Op_{\lambda,\gH}^{\rm A}(\partial_{\bar z_j}G),
 \qquad
 \lambda^{-1}[a_\lambda^*(\psi_j),\Op_{\lambda,\gH}^{\rm A}G]
 =-\Op_{\lambda,\gH}^{\rm A}(\partial_{z_j}G).
\end{equation}

Since \(b=b^*\), decompose the one-particle operator as
\[
 \Pi^\perp b\Pi=\sum_j|\Pi^\perp b \psi_j\rangle\langle \psi_j|,
 \qquad
 \Pi b\Pi^\perp=\sum_j|\psi_j\rangle\langle \Pi^\perp b \psi_j|.
\]
Their scaled second quantizations give
\begin{equation*}
 \lambda\dG(b)
 =
 \lambda\dG(\Pi b\Pi)+\lambda\dG(\Pi^\perp b\Pi^\perp)
 +\sum_j\left[
  a_\lambda^*(\Pi^\perp b \psi_j)a_\lambda(\psi_j)
  +a_\lambda^*(\psi_j)a_\lambda(\Pi^\perp b \psi_j)
 \right].
\end{equation*}
The operator \(\Op_{\lambda,\gH}^{\rm A}G\) acts on \(\Fock(\Pi\gH)\) and
as the identity on \(\Fock(\Pi^\perp\gH)\).  Therefore
\(\lambda\dG(\Pi^\perp b\Pi^\perp)\),
\(a_\lambda(\Pi^\perp b \psi_j)\), and
\(a_\lambda^*(\Pi^\perp b \psi_j)\) commute with \(\Op_{\lambda,\gH}^{\rm A}G\).

Let \(b_{\mu j}=\langle \psi_\mu,\Pi b\Pi\psi_j\rangle\).  Then
\[
 \lambda\dG(\Pi b\Pi)
 =\sum_{\mu,j}b_{\mu j}a_\lambda^*(\psi_\mu)a_\lambda(\psi_j).
\]
Using \eqref{eq:finite-mode-basic-field-commutators} and then
\eqref{eq:finite-mode-AW-integration},
\begin{align}
 \lambda^{-1}
 [a_\lambda^*(\psi_\mu)a_\lambda(\psi_j),\Op_{\lambda,\gH}^{\rm A}G]
 &=a_\lambda^*(\psi_\mu)\Op_{\lambda,\gH}^{\rm A}(\partial_{\bar z_j}G)
 -\Op_{\lambda,\gH}^{\rm A}(\partial_{z_\mu}G)a_\lambda(\psi_j)\notag\\
 &=\Op_{\lambda,\gH}^{\rm A}(\bar z_\mu\partial_{\bar z_j}G
    -z_j\partial_{z_\mu}G).                                  \label{eq:finite-mode-block-commutator}
\end{align}
Here the second-derivative terms cancel because the Wirtinger
derivatives commute.
Summing in \(\mu,j\) gives \(\Op_{\lambda,\gH}^{\rm A}(\ell_bG)\), where
\(\ell_b\) is defined in \eqref{eq:finite-mode-symbol-derivation}.
For the off-diagonal blocks,
\begin{align}
 \lambda^{-1}
 [a_\lambda^*(\Pi^\perp b \psi_j)a_\lambda(\psi_j),\Op_{\lambda,\gH}^{\rm A}G]
 &=a_\lambda^*(\Pi^\perp b \psi_j)\Op_{\lambda,\gH}^{\rm A}(\partial_{\bar z_j}G),\notag\\
 \lambda^{-1}
 [a_\lambda^*(\psi_j)a_\lambda(\Pi^\perp b \psi_j),\Op_{\lambda,\gH}^{\rm A}G]
 &=-\Op_{\lambda,\gH}^{\rm A}(\partial_{z_j}G)a_\lambda(\Pi^\perp b \psi_j).          \label{eq:finite-mode-off-diagonal-commutator}
\end{align}
Combining~\eqref{eq:finite-mode-block-commutator} and
\eqref{eq:finite-mode-off-diagonal-commutator} we derive
\eqref{eq:finite-mode-recursion}.

\end{proof}

We now fix a finite Fourier projection \(\Pi\), a function
\(G\in C_c^\infty(\Pi\gH)\), and, for \(Q\ge1\), set
\begin{align}\label{def:Cl}
 F(u)=G(\Pi u),
 \qquad
 C_{\lambda,\eta}
 :=\adop{f_\eta}\Op_{\lambda,\gH}^{\rm A}F,
 \qquad
 C_{\lambda,Q,\eta}
 :=\adop{f_{\eta,Q}}\Op_{\lambda,\gH}^{\rm A}F.
\end{align}
Throughout the remainder of Sections~\ref{sec:higher-density-generator} and~\ref{sec:liouville-correlations}, \(C_F\) denotes a finite
constant depending only on \(\Pi\), on finitely many seminorms of \(G\),
and on the fixed model parameters; its value may change from line to
line.
Let \(\Lambda_\Pi\subset\mathbb Z^d\) be the finite Fourier support of
\(\Pi\gH\), and define
\begin{equation*}
 E_\eta
 :=\operatorname{span}\{
  \mathrm e_p,\mathrm e_{p+k_\eta},\mathrm e_{p-k_\eta}:
  p\in\Lambda_\Pi
 \}.
\end{equation*}
Let \(\Pi_{E_\eta}\) denote the orthogonal projection onto \(E_\eta\).
For the zero mode, \(k_0=0\), so \(E_0=\Pi\gH\).
For \(Q\ge1\), set
\[
 f_{\eta,Q}^{\mathrm t}:=f_\eta-f_{\eta,Q}.
\]

\begin{lemma}
\label{lem:common-exact-support-separation}
The commutator \(C_{\lambda,\eta}\) is supported on \(E_\eta\):
\begin{equation}\label{eq:common-C-support}
 C_{\lambda,\eta}
 =C_{\lambda,\eta}^{E_\eta}
  \otimes\1_{\Fock(E_\eta^\perp)}.
\end{equation}
If \(Q\) is large enough that
\begin{equation}\label{eq:common-support-Q-condition}
 E_\eta+f_\eta E_\eta\subset \Pi_Q\gH,
\end{equation}
then
\begin{align}
 \adop{f_{\eta,Q}}\Op_{\lambda,\gH}^{\rm A}F
 &=C_{\lambda,\eta},                                      \label{eq:common-compressed-commutator-exact}\\
 [B_\lambda(f_{\eta,Q}^{\mathrm t}),C_{\lambda,\eta}]
 &=[B_\lambda(f_{\eta,Q}^{\mathrm t}),C_{\lambda,\eta}^*]
 =0.                                                       \label{eq:common-tail-strong-commutation}
\end{align}
In fact the operators in
\eqref{eq:common-tail-strong-commutation} act on different factors of
the Fock decomposition
\(\Fock(\gH)\simeq\Fock(E_\eta)\otimes\Fock(E_\eta^\perp)\),
and hence commute.
\end{lemma}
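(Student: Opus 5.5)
The plan is to apply Lemma~\ref{lem:finite-mode-commutator} with \(b=f_\eta\) and read off the Fourier support of every term it produces. Since \(f_\eta\) is a bounded self-adjoint multiplication operator by a trigonometric function with frequencies \(\pm k_\eta\), we have \(f_\eta\mathrm e_p\in\operatorname{span}\{\mathrm e_{p+k_\eta},\mathrm e_{p-k_\eta}\}\). Hence \(f_\eta\Pi\gH\subset E_\eta\), and the hypothesis of that lemma holds.

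\emph{Support of \(C_{\lambda,\eta}\).} Lemma~\ref{lem:finite-mode-commutator} gives \(C_{\lambda,\eta}=\Op^{\rm A}_{\lambda,\gH}(\ell_{f_\eta}G)+\mathcal R_{f_\eta}(G)\). We treat the two parts separately.
\begin{itemize}
\item The first term is the anti-Wick quantization of a polynomially weighted compactly supported function on \(\Pi\gH\subset E_\eta\). It is therefore of the form \((\cdot)\otimes\1_{\Fock((\Pi\gH)^\perp)}\). By the coherent POVM factorization it can be rewritten as an operator on \(\Fock(E_\eta)\) tensored with \(\1_{\Fock(E_\eta^\perp)}\).
\item The remainder \(\mathcal R_{f_\eta}(G)\) involves the fields \(a_\lambda^\#(\Pi^\perp f_\eta\psi_j)\). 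The vectors \(\Pi^\perp f_\eta\psi_j\) lie in \(E_\eta\), so these fields act only on the \(\Fock(E_\eta)\) factor in \(\Fock(\gH)\simeq\Fock(E_\eta)\otimes\Fock(E_\eta^\perp)\). The same holds for the anti-Wick factors \(\Op^{\rm A}(\partial G)\).
\end{itemize}
Together these give \eqref{eq:common-C-support}, with \(C^{E_\eta}_{\lambda,\eta}\) the corresponding operator on \(\Fock(E_\eta)\), unbounded but defined on finite-particle vectors. The adjoint \(C^*_{\lambda,\eta}\) has the same tensor form.

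\emph{Identity \eqref{eq:common-compressed-commutator-exact}.} Apply the same lemma with \(b=f_{\eta,Q}=\Pi_Qf_\eta\Pi_Q\), which is self-adjoint. It suffices to check that \(f_{\eta,Q}\) and \(f_\eta\) give identical \(\Pi f\Pi\) and \(\Pi^\perp f\psi_j\).
\begin{itemize}
\item Under \eqref{eq:common-support-Q-condition}, \(\Pi\gH\subset E_\eta\subset\Pi_Q\gH\), so \(\Pi_Q\psi_j=\psi_j\).
\item Also \(f_\eta\psi_j\in E_\eta\subset\Pi_Q\gH\), so \(\Pi_Qf_\eta\Pi_Q\psi_j=f_\eta\psi_j\).
\end{itemize}
Hence \(\Pi f_{\eta,Q}\Pi=\Pi f_\eta\Pi\) and \(\Pi^\perp f_{\eta,Q}\psi_j=\Pi^\perp f_\eta\psi_j\). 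Consequently \(\ell\) and \(\mathcal R\) coincide, which proves the claim.

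\emph{Tail commutation \eqref{eq:common-tail-strong-commutation}.} Write \(f^{\rm t}=f_\eta-\Pi_Qf_\eta\Pi_Q\) and check that it vanishes on and into \(E_\eta\).
\begin{itemize}
\item For \(x\in E_\eta\), \eqref{eq:common-support-Q-condition} gives \(f_\eta x\in\Pi_Q\gH\) and \(\Pi_Qx=x\), so \(f^{\rm t}x=0\).
\item Since \(f^{\rm t}\) is self-adjoint, \(\langle x,f^{\rm t}y\rangle=\langle f^{\rm t}x,y\rangle=0\) for \(x\in E_\eta\). Thus \(f^{\rm t}\) maps into \(E_\eta^\perp\).
\end{itemize}
So \(f^{\rm t}=\Pi_{E_\eta}^\perp f^{\rm t}\Pi_{E_\eta}^\perp\). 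It follows that \(\dG(f^{\rm t})=\1_{\Fock(E_\eta)}\otimes\dG(f^{\rm t}|_{E_\eta^\perp})\), and \(B_\lambda(f^{\rm t})\) differs from \(\lambda\dG(f^{\rm t})\) by a scalar. It therefore acts on the other tensor factor from \(C_{\lambda,\eta}\) and \(C^*_{\lambda,\eta}\), and the commutators vanish.

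The main obstacle is domain bookkeeping for unbounded operators. I would interpret all identities on the common core of finite-particle vectors that are finite linear combinations of product vectors with \(\Fock(E_\eta)\)-components of finite particle number. On this core both factors act by genuine tensor products, and commutation is immediate.
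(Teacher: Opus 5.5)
Your proposal is correct and follows the paper's proof essentially step by step. The support statement comes from Lemma~\ref{lem:finite-mode-commutator} together with \(\Pi^\perp f_\eta\psi_j\in E_\eta\), and the tail commutation comes from \(f^{\mathrm t}_{\eta,Q}=\Pi_{E_\eta}^\perp f^{\mathrm t}_{\eta,Q}\Pi_{E_\eta}^\perp\). The only minor difference concerns \eqref{eq:common-compressed-commutator-exact}: the paper derives it from that same tail factorization, using \(\adop{f^{\mathrm t}_{\eta,Q}}\Op^{\rm A}_{\lambda,\gH}F=0\) and linearity of \(b\mapsto\adop{b}\). You instead re-apply the lemma to \(f_{\eta,Q}\) and check that \(f_{\eta,Q}\Pi=f_\eta\Pi\), and both routes are equally valid.
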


\begin{proof}
Let \((\mathrm e_{p_j})_{j=1}^{r_\Pi}\) be the Fourier basis of
\(\Pi\gH\), and put
\(v_{\eta,j}:=\Pi^\perp f_\eta\mathrm e_{p_j}\).  By Lemma~\ref{lem:finite-mode-commutator}
we have
\begin{equation}\label{eq:common-finite-mode-first-commutator}
 C_{\lambda,\eta}
 =\Op_{\lambda,\gH}^{\rm A}(\ell_{f_\eta}G)
 +\sum_{j=1}^{r_\Pi}\left[
  a_\lambda^*(v_{\eta,j})
       \Op_{\lambda,\gH}^{\rm A}(\partial_{\bar z_j}G)
  -\Op_{\lambda,\gH}^{\rm A}(\partial_{z_j}G)
       a_\lambda(v_{\eta,j})
 \right].
\end{equation}
Multiplication by a real sine or cosine sends \(\mathrm e_p\) to a fixed
linear combination of \(\mathrm e_{p+k_\eta}\) and
\(\mathrm e_{p-k_\eta}\).  Hence \(v_{\eta,j}\in E_\eta\), which
proves \eqref{eq:common-C-support}.

Assume \eqref{eq:common-support-Q-condition}.  For \(u\in E_\eta\),
both \(u\) and \(f_\eta u\) belong to \(\Pi_Q\gH\), and therefore
\[
 f_{\eta,Q}^{\mathrm t}u
 =(f_\eta-\Pi_Qf_\eta \Pi_Q)u=0.
\]
Thus
$
f_{\eta,Q}^{\mathrm t}\Pi_{E_\eta}=0.
$
Since \(f_{\eta,Q}^{\mathrm t}\) is self-adjoint, also
$
\Pi_{E_\eta}f_{\eta,Q}^{\mathrm t}=0.
$
Consequently,
\[
 f_{\eta,Q}^{\mathrm t}
 =\Pi_{E_\eta}^\perp f_{\eta,Q}^{\mathrm t}
  \Pi_{E_\eta}^\perp.
\]
Its second quantization, and hence also its centered second
quantization, acts only on \(\Fock(E_\eta^\perp)\).  This proves the
commutation in \eqref{eq:common-tail-strong-commutation}.
The same factorization shows that the tail commutes with
\(\Op_{\lambda,\gH}^{\rm A}F\), and implies
\eqref{eq:common-compressed-commutator-exact}.  The zero-mode case is
the same.
\end{proof}

\subsection{Moment estimates for commutators and density tails}
\label{subsec:common-DNN-mixed-tail}

In this subsection we derive the uniform moment bounds needed for the convergence of the generator.

\begin{lemma}
\label{lem:common-third-coefficient-moment}
Let \(C_{\lambda,\eta}\) be the operator defined in \eqref{def:Cl}.  Uniformly in
\(\eta\),
\begin{equation*}
 \sup_{0<\lambda\le\lambda_0}\sup_{\eta\in\mathfrak I}
 \Tr\!\left[
  (C_{\lambda,\eta}^*C_{\lambda,\eta})^3\Gamma_\lambda
 \right]\le C_F.
\end{equation*}
The same estimate holds with \(C_{\lambda,\eta}\) replaced by
\(C_{\lambda,Q,\eta}\), uniformly in \(Q\) and \(\eta\).
\end{lemma}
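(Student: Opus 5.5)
We would start from the explicit representation \eqref{eq:common-finite-mode-first-commutator} of \(C_{\lambda,\eta}\) given by Lemma~\ref{lem:finite-mode-commutator}:
\[
C_{\lambda,\eta}=\Op_{\lambda,\gH}^{\rm A}(\ell_{f_\eta}G)
+\sum_{j=1}^{r_\Pi}\Bigl[a_\lambda^*(v_{\eta,j})\Op_{\lambda,\gH}^{\rm A}(\partial_{\bar z_j}G)-\Op_{\lambda,\gH}^{\rm A}(\partial_{z_j}G)a_\lambda(v_{\eta,j})\Bigr].
\]
All of these operators live on \(\Fock(E_\eta)\), by Lemma~\ref{lem:common-exact-support-separation}.

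The first term is bounded in operator norm. Indeed, \(\ell_{f_\eta}G\) is a linear-in-\(z,\bar z\) coefficient times derivatives of \(G\in C_c^\infty\), so it is a bounded smooth function. Its coefficients \((\Pi f_\eta\Pi)_{\mu j}\) are bounded by \(\|f_\eta\|_{\mathrm{op}}\le1\), uniformly in \(\eta\). So this term is harmless. The key feature of the remaining terms is that each contains exactly one unbounded field operator \(a^\#_\lambda(v_{\eta,j})\), with \(\|v_{\eta,j}\|\le1\) uniformly in \(\eta\), multiplied by a bounded anti-Wick factor. We would therefore aim for an operator inequality of the form
\[
(C_{\lambda,\eta}^*C_{\lambda,\eta})^3\le C_F\bigl(1+\lambda\mathcal N_{E_\eta}+\lambda\bigr)^3\quad\text{(in the form sense on finite-particle vectors)},
\]
where \(\mathcal N_{E_\eta}=\dG(\Pi_{E_\eta})\) counts particles in \(E_\eta\). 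Alternatively, and more robustly, we would bound \(\|C_{\lambda,\eta}\Psi\|\) and iterate, controlling \(\Tr[(C^*C)^3\Gamma_\lambda]=\|(C^*C)^{3/2}\Gamma_\lambda^{1/2}\|_{\mathfrak S^2}^2\) by moving fields through the bounded anti-Wick factors.

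To do this commutation we would use \eqref{eq:finite-mode-basic-field-commutators}. When \(v_{\eta,j}\notin\Pi\gH\), the fields \(a^\#_\lambda(v_{\eta,j})\) commute with the anti-Wick factors, since \(v_{\eta,j}\perp\Pi\gH\) and the factors act only on \(\Fock(\Pi\gH)\). So \(C^*C\) is a finite sum of products of bounded anti-Wick operators (operator norms \(\le C_F\)) with at most two fields \(a^\#_\lambda(v)\), \(a^\#_\lambda(v')\). The sixth power \((C^*C)^3\) is then a finite sum (the number of terms depends only on \(r_\Pi\)) of words containing at most six fields with bounded factors between them. Each word can be estimated by Cauchy--Schwarz in the state \(\Gamma_\lambda\), splitting it in the middle. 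Using \(\|a^\#_\lambda(v)\Psi\|\le\|v\|\,\|(\lambda\mathcal N+\lambda)^{1/2}\Psi\|\), and commuting \(\lambda\mathcal N\) past fields (which only shifts it by \(\pm\lambda\)), each word is bounded by \(C_F\Tr[(1+\lambda\mathcal N_{E_\eta})^3\Gamma_\lambda]\). Since the \(a^\#_\lambda(v_{\eta,j})\) commute with the anti-Wick factors, we may in fact replace \(\mathcal N\) by the number operator of the finite space \(E_\eta\).

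The main obstacle is the last step: we need a uniform-in-\(\eta\) and uniform-in-\(\lambda\) bound on \(\Tr[(\lambda\mathcal N_{E_\eta})^3\Gamma_\lambda]\). The global quantity \(\lambda\mathcal N\) is not uniformly bounded, and \(E_\eta\) moves with \(k_\eta\). Here \(\dim E_\eta\le3r_\Pi\), and \(E_\eta\) is spanned by Fourier modes \(\mathrm e_q\), so
\[
\Bigl(\lambda\sum_{q}a^*(\mathrm e_q)a(\mathrm e_q)\Bigr)^3\le(3r_\Pi)^2\sum_q\bigl(\lambda a^*(\mathrm e_q)a(\mathrm e_q)\bigr)^3,
\]
where the sums run over the at most \(3r_\Pi\) modes spanning \(E_\eta\). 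By translation invariance of \(\Gamma_\lambda\) (which commutes with momentum) and normal ordering, each single-mode third moment is a combination of \(\lambda^r\langle\mathrm e_q^{\otimes r},\Gamma_\lambda^{(r)}\mathrm e_q^{\otimes r}\rangle\) for \(r\le3\). These are bounded by \(\|\lambda^r\Gamma_\lambda^{(r)}\|_{\mathfrak S^2}\), which is uniform by \eqref{eq:LNR-RDM-uniform}, uniformly in \(q\). The lower-order terms from normal ordering carry extra powers of \(\lambda\) and are handled the same way.

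Finally, the same argument applies to \(C_{\lambda,Q,\eta}\). Lemma~\ref{lem:finite-mode-commutator} with \(b=f_{\eta,Q}\) gives the same structure, with vectors \(\Pi^\perp f_{\eta,Q}\psi_j=\Pi^\perp\Pi_Qf_\eta\psi_j\) of norm \(\le1\), Fourier-supported in \(E_\eta\) (or a subspace of it), and with matrix entries bounded by \(1\). So all the constants are independent of \(Q\).
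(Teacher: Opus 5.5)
Your proposal is correct and follows essentially the paper's argument. You write $C_{\lambda,\eta}$ as a bounded anti-Wick term plus at most $2r_\Pi$ terms of the form (bounded anti-Wick factor)$\times$(one field whose vector lies in $(\Pi\gH)^\perp$, has norm $\le1$, and lies in a Fourier space of dimension $\le 3r_\Pi$). You then bound $\|C_{\lambda,\eta}C_{\lambda,\eta}^*C_{\lambda,\eta}\Psi\|$ by $C_F\|(1+\lambda\mathcal N_{E_\eta})^{3/2}\Psi\|$, and control the finite-mode number moments uniformly via the dimension bound and \eqref{eq:LNR-RDM-uniform}.

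The only difference is cosmetic. You split $\mathcal N_{E_\eta}$ into single Fourier modes by convexity, and translation invariance is not actually needed for that step. The paper instead bounds $\Tr[\Pi_{E}^{\otimes j}\lambda^j\Gamma_\lambda^{(j)}]$ directly by $\|\Pi_{E}^{\otimes j}\|_{\mathfrak S^2}\|\lambda^j\Gamma_\lambda^{(j)}\|_{\mathfrak S^2}$.
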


\begin{proof}
Let \(E_\eta'=E_\eta\cap(\Pi\gH)^\perp\), with orthogonal projection
\(\Pi_{E_\eta'}\).
 \eqref{eq:common-finite-mode-first-commutator} is the sum of a
bounded cylinder operator and at most \(2r_\Pi\) terms
\(T_j a_\lambda^\#(v_j)\), where \(T_j\) acts on \(\Fock(\Pi\gH)\), the
field acts on \(\Fock(E_\eta')\), $a_\lambda^\#=a_\lambda$ or $a_\lambda^*$, and
\(\|T_j\|_{\mathrm{op}}\le C_F\), \(\|v_j\|_{\gH}\le1\). Using
Lemma~\ref{lem:app-field-sector-bounds} we obtain
\[
 \|C_{\lambda,\eta}C_{\lambda,\eta}^*C_{\lambda,\eta}\psi\|
 \le C_F\|(\1+\lambda\dG(\Pi_{E_\eta'}))^{3/2}\psi\|.
\]
Write \(C_{\lambda,\eta}=\mathsf V_{\lambda,\eta}|C_{\lambda,\eta}|\).  The identity
\(C_{\lambda,\eta}C_{\lambda,\eta}^*C_{\lambda,\eta}
=\mathsf V_{\lambda,\eta}|C_{\lambda,\eta}|^3\), together with the fact that
\(\mathsf V_{\lambda,\eta}\) is
isometric on the initial space of the polar decomposition, implies
\begin{equation*}
 \|C_{\lambda,\eta}C_{\lambda,\eta}^*C_{\lambda,\eta}\psi\|
 =\||C_{\lambda,\eta}|^3\psi\|.
\end{equation*}
Thus 
\[
 \left\|
 (C_{\lambda,\eta}^*C_{\lambda,\eta})^{3/2}
 (\1+\lambda\dG(\Pi_{E_\eta'}))^{-3/2}
 \right\|_{\mathrm{op}}\le C_F.
\]
Let \(\mathsf N_{\lambda,\eta}:=\1+\lambda\dG(\Pi_{E_\eta'})\).
For every fixed finite Fourier space \(E\), with orthogonal projection
\(\Pi_E\), and every integer \(r\ge1\), we have, with
\(\mathcal N_E=\dG(\Pi_E)\),
\begin{equation}\label{eq:finite-mode-number-moment-bound}
 \Tr\!\left[(\lambda\mathcal N_E)^r\Gamma_\lambda\right]
 \leq\sum_{j=1}^r C_j\lambda^{r-j}
  \Tr\!\left[\Pi_E^{\otimes j}\lambda^j
  \Gamma_\lambda^{(j)}\right]
 \le C_{E,r}.
\end{equation}
Here 
the last estimate follows from \eqref{eq:LNR-RDM-uniform}.  In
particular,
\(\mathsf N_{\lambda,\eta}^{3/2}\Gamma_\lambda^{1/2}
\in\mathfrak S^2(\Fock)\).
Thus we obtain
\begin{align*}
 &\Tr\!\left[(C_{\lambda,\eta}^*C_{\lambda,\eta})^3\Gamma_\lambda\right]
 =\left\|
  (C_{\lambda,\eta}^*C_{\lambda,\eta})^{3/2}\Gamma_\lambda^{1/2}
 \right\|_{\mathfrak S^2(\Fock)}^2=\left\|
  (C_{\lambda,\eta}^*C_{\lambda,\eta})^{3/2}
  \mathsf N_{\lambda,\eta}^{-3/2}
  \mathsf N_{\lambda,\eta}^{3/2}\Gamma_\lambda^{1/2}
 \right\|_{\mathfrak S^2(\Fock)}^2\\
 &\le C_F
 \left\|\mathsf N_{\lambda,\eta}^{3/2}\Gamma_\lambda^{1/2}
 \right\|_{\mathfrak S^2(\Fock)}^2
 \le C_F\left[1+\sum_{\ell=1}^3
  \|\Pi_{E_\eta'}^{\otimes\ell}\|_{\mathfrak S^2}
  \|\lambda^\ell\Gamma_\lambda^{(\ell)}\|_{\mathfrak S^2}\right]
 \le C_F.
\end{align*}
For the commutator $C_{\lambda,Q,\eta}$, replace \(E_\eta\) by the Fourier span
of \(\Pi\gH+f_{\eta,Q}\Pi\gH\).  Its dimension and the
field-vector norms satisfy the same bounds.
\end{proof}

We shall repeatedly use the following mixed-product estimate.  Whenever
\(b=b^*\) and \(B_\lambda(b)\) commutes with \(C_{\lambda,\eta}\) and
its adjoint, H\"older's inequality for the joint spectral measure of
\(C_{\lambda,\eta}^*C_{\lambda,\eta}\) and \(B_\lambda(b)^2\), followed
by the Cauchy--Schwarz inequality, gives
\begin{align}
 \|B_\lambda(b)C_{\lambda,\eta}\|_{2,\lambda}^2
 &=\|C_{\lambda,\eta}B_\lambda(b)\|_{2,\lambda}^2\le
 \omega_\lambda\!\left((C_{\lambda,\eta}^*C_{\lambda,\eta})^3\right)^{1/3}
 \omega_\lambda\!\left(B_\lambda(b)^2\right)^{1/3}
 \omega_\lambda\!\left(B_\lambda(b)^4\right)^{1/3}.
 \label{eq:common-mixed-holder-chain}
\end{align}
The same estimate holds with \(C_{\lambda,\eta}\) replaced by
\(C_{\lambda,Q,\eta}\).

\begin{lemma}
\label{lem:interaction-label-bounds}
Set
\[
 \begin{aligned}
 S_{\lambda,\eta}
 &:=B_\lambda(f_\eta)C_{\lambda,\eta}
    +C_{\lambda,\eta}B_\lambda(f_\eta),&
 S_{\lambda,Q,\eta}
 &:=B_\lambda(f_{\eta,Q})C_{\lambda,Q,\eta}
    +C_{\lambda,Q,\eta}B_\lambda(f_{\eta,Q}).
 \end{aligned}
\]
Then, for every \(\eta\in\mathfrak I\),
\begin{align}
 \sup_{0<\lambda\le\lambda_0}
 \norm{S_{\lambda,\eta}}_{2,\lambda}
 &\le C_F\langle k_\eta\rangle^2,
 \label{eq:full-label-bound}\\
 \sup_{Q\ge1}\limsup_{\lambda\downarrow0}
 \norm{S_{\lambda,Q,\eta}}_{2,\lambda}
 &\le C_F\langle k_\eta\rangle^2.
 \label{eq:compressed-label-bound}
\end{align}
Moreover, for every fixed \(\eta\in\mathfrak I\),
\begin{equation*}
 \lim_{Q\to\infty}\limsup_{\lambda\downarrow0}
 \norm{S_{\lambda,\eta}-S_{\lambda,Q,\eta}}_{2,\lambda}=0.
\end{equation*}
\end{lemma}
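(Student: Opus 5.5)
The plan is to split the full density at a large cutoff \(Q\) satisfying \eqref{eq:common-support-Q-condition}, writing
\[
B_\lambda(f_\eta)=B_\lambda(f_{\eta,Q})+B_\lambda(f^{\mathrm t}_{\eta,Q}).
\]
By \eqref{eq:common-compressed-commutator-exact}, for such \(Q\) we have \(C_{\lambda,Q,\eta}=C_{\lambda,\eta}\). Hence
\[
S_{\lambda,\eta}-S_{\lambda,Q,\eta}=B_\lambda(f^{\mathrm t}_{\eta,Q})C_{\lambda,\eta}+C_{\lambda,\eta}B_\lambda(f^{\mathrm t}_{\eta,Q}).
\]
By \eqref{eq:common-tail-strong-commutation}, the tail commutes with \(C_{\lambda,\eta}\) and \(C_{\lambda,\eta}^*\). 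The two terms are therefore equal, and the mixed H\"older estimate \eqref{eq:common-mixed-holder-chain} applies with \(b=f^{\mathrm t}_{\eta,Q}\).

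The first factor \(\omega_\lambda((C^*C)^3)^{1/3}\) is bounded by \(C_F\) from Lemma~\ref{lem:common-third-coefficient-moment}. The second factor is bounded through \eqref{eq:LNR-uniform-label}, since \(\|B_\lambda(f^{\mathrm t}_{\eta,Q})\|_{2,\lambda}^2\le C\langle k_\eta\rangle^2(\lambda+\|\Pi_Q^\perp h^{-1}\|_{\mathfrak S^2})^{1/7}\). This tends to \(0\) as \(\lambda\downarrow0\) and then \(Q\to\infty\), because \(h^{-1}\in\mathfrak S^2\) in \(d\le3\).

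The fourth moment is handled by Lemma~\ref{lem:common-fourth-density-moment} applied to \(b=f_\eta\) and \(b=f_{\eta,Q}\). The tail fourth moment is then at most a constant times the sum of these two. Here \(\|f_\eta\|_{\mathrm{op}}\le1\). For the multiplication operator \(f_\eta\), \([[f_\eta,h],f_\eta]=-2|\nabla f_\eta|^2\) (up to sign), so \(K\le C\langle k_\eta\rangle^2\). For the compressed \(f_{\eta,Q}\), one must check \(K(f_{\eta,Q})\le C\langle k_\eta\rangle^2\) uniformly in \(Q\); this follows by an explicit Fourier-mode computation, since \(\Pi_Q\) commutes with \(h\) and \(f_\eta\) shifts frequencies by \(\pm k_\eta\). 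We then get
\[
\omega_\lambda(B^4)\le C[1+\lambda^4\langle k_\eta\rangle^4 g_d(\lambda)^6].
\]
This is uniformly bounded since \(\lambda^4 g_d^6\le\lambda\) in \(d=3\). Combining the three factors gives the \(Q\)-limit statement, since the product \((o(1))^{1/3}\cdot O(1)\) vanishes.

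For \eqref{eq:full-label-bound}, the argument is not directly a single H\"older step, because \(B_\lambda(f_\eta)\) itself does not commute with \(C_{\lambda,\eta}\). I will use the same splitting at the minimal \(Q_\eta\) satisfying \eqref{eq:common-support-Q-condition}. The tail piece is handled exactly as above, with \(\omega(B_{\rm tail}^2)\le C\langle k_\eta\rangle^2\) from \eqref{eq:LNR-uniform-label} and the fourth moment bounded by \(C\langle k_\eta\rangle^4\). This gives a contribution of order \(C_F\langle k_\eta\rangle^{2}\).

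For the head \(B_\lambda(f_{\eta,Q_\eta})\), note that after subtracting the tail only the restriction to \(E_\eta\oplus f_\eta E_\eta\) matters. Concretely, \(f_{\eta,Q}\) differs from \(\Pi_{E''}f_\eta\Pi_{E''}\) (with \(E''\) the span of \(E_\eta\cup f_\eta E_\eta\), of dimension \(\le 5r_\Pi\)) by a piece acting on \(E''^\perp\) which commutes with \(C\). The remaining finite-mode piece \(\lambda\dG(\Pi_{E''}f_\eta\Pi_{E''})\) minus its centering constant is bounded by \(C(1+\lambda\mathcal N_{E''})\) plus a constant. Its products with \(C\) are controlled by Lemma~\ref{lem:app-field-sector-bounds} and \eqref{eq:finite-mode-number-moment-bound}, with dimension of \(E''\) uniform in \(\eta\).

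The main obstacle is keeping all constants uniform in \(\eta\) with at most polynomial growth \(\langle k_\eta\rangle^2\). This concerns in particular the centering constant \(\lambda\Tr(b\gamma_{0,\lambda})\) for the head, which must be bounded independently of \(k_\eta\). For \(k_\eta\ne0\) it is bounded because \(\Tr(f_\eta\gamma_{0,\lambda})\) is small; for \(\eta=0\) the same decomposition applies. The compressed bound \eqref{eq:compressed-label-bound} follows identically with \(f_{\eta,Q}\), treating \(Q\) below the support threshold by direct finite-mode estimates and \(Q\) above it by the splitting.
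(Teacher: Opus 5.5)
Your architecture matches the paper's. You split $f_\eta$ into a piece whose centered second quantization commutes with $C_{\lambda,\eta}$, estimated by \eqref{eq:common-mixed-holder-chain}, plus a finite-rank piece estimated by number moments. For the tail's second moment you use \eqref{eq:LNR-uniform-label}. But your fourth-moment input is not justified as stated, for two reasons.

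First, you bound $\omega_\lambda(B_\lambda(f^{\mathrm t}_{\eta,Q})^4)$ by a constant times $\omega_\lambda(B_\lambda(f_\eta)^4)+\omega_\lambda(B_\lambda(f_{\eta,Q})^4)$. This scalar inequality does not transfer to noncommuting operators. Indeed $\omega_\lambda((X-Y)^4)=\|(X-Y)^2\Gamma_\lambda^{1/2}\|_{\mathfrak S^2}^2$ involves $\|XY\Gamma_\lambda^{1/2}\|_{\mathfrak S^2}$ and $\|YX\Gamma_\lambda^{1/2}\|_{\mathfrak S^2}$. These are not controlled by $\omega_\lambda(X^4)$ and $\omega_\lambda(Y^4)$ when $[X,Y]\neq0$, and here $[B_\lambda(f_\eta),B_\lambda(f_{\eta,Q})]=\lambda^2\dG([f_\eta,f_{\eta,Q}])\neq0$.

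Second, your claim $K(f_{\eta,Q})\le C\langle k_\eta\rangle^2$ uniformly in $Q$ is false. For $k_\eta\neq0$ and $|p|\le Q$, the diagonal entry of $[[f_{\eta,Q},h],f_{\eta,Q}]$ at $\mathrm e_p$ is $\frac12\sum(h_{p\pm k_\eta}-h_p)$, summed over those neighbours $p\pm k_\eta$ lying in the ball of radius $Q$. The terms $\pm2p\cdot k_\eta$ cancel only when both neighbours are inside. Near the boundary, with one neighbour outside, the entry is of order $Q|k_\eta|$. So no fourth-moment bound uniform in $(\lambda,Q)$ is available. What works, and what the paper does, is to apply Lemma~\ref{lem:common-fourth-density-moment} directly to $b=f^{\mathrm t}_{\eta,Q}$ with $K(b)\le C(1+|k_\eta|Q+|k_\eta|^2)$. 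Then take $\limsup_{\lambda\downarrow0}$ at fixed $Q$ before $Q\to\infty$, so that $\lambda^4(1+K(b))^2g_d(\lambda)^6\to0$.

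Your argument for \eqref{eq:full-label-bound} has further holes.
\begin{itemize}
\item The part $f_{\eta,Q_\eta}-\Pi_{E''}f_\eta\Pi_{E''}$ of your head does not drop out just because it commutes with $C_{\lambda,\eta}$. It contributes $2B_\lambda(\cdot)C_{\lambda,\eta}$, which needs its own mixed H\"older estimate and fourth-moment bound.
\item It is cleaner to use a single commuting piece $b_\eta=\Pi_{E_\eta}^\perp f_\eta\Pi_{E_\eta}^\perp$. The non-cancelling boundary terms in $[[b_\eta,h],b_\eta]$ then involve only the finitely many modes of $E_\eta$. Hence $K(b_\eta)\le C_F\langle k_\eta\rangle^2$, and Lemma~\ref{lem:common-fourth-density-moment} gives $\omega_\lambda(B_\lambda(b_\eta)^4)\le C_F\langle k_\eta\rangle^4$ for all $\lambda\le\lambda_0$. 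The remainder $f_\eta-b_\eta$ has bounded rank and is handled by Lemma~\ref{lem:finite-rank-mixed-word}.
\item Applying \eqref{eq:LNR-uniform-label} at $Q=Q_\eta$ requires $Q_\eta^2+m\le\lambda^{-2}$. This fails for $\lambda$ near $\lambda_0$ once $|k_\eta|$ is large. Since \eqref{eq:full-label-bound} is a supremum over all $\lambda\le\lambda_0$, take the second moment from the fourth by Cauchy--Schwarz instead.
\end{itemize}

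Finally, for \eqref{eq:compressed-label-bound} with $Q$ below the support threshold, your ``direct finite-mode estimates'' on $\Fock(\Pi_Q\gH)$ give constants depending on $\dim\Pi_Q\gH$. Since $Q$ ranges up to order $|k_\eta|$ there, they do not yield $C_F\langle k_\eta\rangle^2$. Use instead the splitting $f_{\eta,Q}=b_{Q,\eta}+r_{Q,\eta}$, with $b_{Q,\eta}$ commuting with $C_{\lambda,Q,\eta}$ and $\operatorname{rank}r_{Q,\eta}$ bounded independently of $Q$ and $\eta$. The crude bound $K(b_{Q,\eta})\le4(Q^2+m)$ is then harmless after $\limsup_{\lambda\downarrow0}$.
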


\begin{proof}
For \(\eta\in\mathfrak I\), write
\[
 b_\eta=\Pi_{E_\eta}^\perp f_\eta\Pi_{E_\eta}^\perp,
 \qquad r_\eta=f_\eta-b_\eta.
\]
Then \(B_\lambda(b_\eta)\) commutes with
\(C_{\lambda,\eta}\), while \(r_\eta\) has uniformly bounded finite
rank.  The range of \(r_\eta\), as well as
\(E_\eta+f_\eta E_\eta\), is contained in the Fourier span of
\(\{\mathrm e_{p+r k_\eta}:p\in\Lambda_\Pi,\ |r|\le2\}\), whose dimension
is bounded by \(C_F\).  From
\[
 [M_k,h]\mathrm e_n=(h_n-h_{n+k})\mathrm e_{n+k},\qquad
 |h_n-h_{n+k}|\le2|n||k|+|k|^2,
\]
and \([[f_\eta,h],f_\eta]=2|\nabla f_\eta|^2\), we obtain
\[
 K(b_\eta)\le C_F\langle k_\eta\rangle^2.
\]
Lemma~\ref{lem:common-fourth-density-moment} and 
the Cauchy--Schwarz inequality imply
\begin{equation*}
 \sup_{0<\lambda\le\lambda_0}
 \left[
  \Tr(B_\lambda(b_\eta)^4\Gamma_\lambda)
  +\langle k_\eta\rangle^2
   \Tr(B_\lambda(b_\eta)^2\Gamma_\lambda)
 \right]
 \le C_F\langle k_\eta\rangle^4.
\end{equation*}
Applying \eqref{eq:common-mixed-holder-chain} and
Lemma~\ref{lem:common-third-coefficient-moment}, we obtain
\[
 \|B_\lambda(b_\eta)C_{\lambda,\eta}\|_{2,\lambda}^2
 =\|C_{\lambda,\eta}B_\lambda(b_\eta)\|_{2,\lambda}^2
 \le C_F\langle k_\eta\rangle^2.
\]
The operator \(r_\eta\) is supported on the same finite Fourier space,
and its rank, operator norm, and Hilbert--Schmidt norm are uniformly
bounded.  Lemma~\ref{lem:finite-rank-mixed-word} therefore gives
\[
 \|B_\lambda(r_\eta)C_{\lambda,\eta}\|_{2,\lambda}^2
 +\|C_{\lambda,\eta}B_\lambda(r_\eta)\|_{2,\lambda}^2
 \le C_F\left(1+\sum_{j=1}^3
  \|\lambda^j\Gamma_\lambda^{(j)}\|_{\mathfrak S^2(\gH^{\otimes_s j})}\right)
 \le C_F.
\]
Since \(f_\eta=b_\eta+r_\eta\), we obtain
\eqref{eq:full-label-bound}.

For $S_{\lambda,Q,\eta}$, let \(E_{Q,\eta}\) be the Fourier span of
\(\Pi\gH+f_{\eta,Q}\Pi\gH\), with orthogonal projection
\(\Pi_{E_{Q,\eta}}\), and set
\[
 b_{Q,\eta}
 =\Pi_{E_{Q,\eta}}^\perp f_{\eta,Q}\Pi_{E_{Q,\eta}}^\perp,
 \qquad r_{Q,\eta}=f_{\eta,Q}-b_{Q,\eta}.
\]
Then \(B_\lambda(b_{Q,\eta})\) commutes with
\(C_{\lambda,Q,\eta}\). The Fourier span of
\(E_{Q,\eta}+f_{\eta,Q}E_{Q,\eta}\)
 has dimension at most
\(9\operatorname{rank}\Pi\).  It supports \(r_{Q,\eta}\), whose rank,
operator norm, and Hilbert--Schmidt norm are bounded uniformly in
\(Q,\eta\).  Thus Lemmas~\ref{lem:common-third-coefficient-moment} and
\ref{lem:finite-rank-mixed-word} give the same uniform estimates for
\(C_{\lambda,Q,\eta}\) and the terms containing
\(B_\lambda(r_{Q,\eta})\).  Furthermore,
\[
 b_{Q,\eta}=\Pi_Qb_{Q,\eta}\Pi_Q,\qquad
 \|b_{Q,\eta}\|_{\mathrm{op}}\le1.
\]
Since \(\Pi_Q\) commutes with \(h\) and
\(\|h\Pi_Q\|_{\mathrm{op}}\le Q^2+m\),
\[
 \|[b_{Q,\eta},h]\|_{\mathrm{op}}
 \le 2\|b_{Q,\eta}\|_{\mathrm{op}}
       \|h\Pi_Q\|_{\mathrm{op}}
 \le 2(Q^2+m).
\]
Consequently,
\[
 K(b_{Q,\eta})
 \le 2\|[b_{Q,\eta},h]\|_{\mathrm{op}}
       \|b_{Q,\eta}\|_{\mathrm{op}}
 \le 4(Q^2+m).
\]
For each fixed \(Q\),
\[
 (1+K(b_{Q,\eta}))^2\lambda^4 g_d(\lambda)^6
 \longrightarrow0\qquad(\lambda\downarrow0),
\]
for $g_d(\lambda)$ from \eqref{def:gd}.
Taking first \(\limsup_{\lambda\downarrow0}\) for each fixed \(Q\),
Lemma~\ref{lem:common-fourth-density-moment}, the Cauchy--Schwarz inequality,
\eqref{eq:common-mixed-holder-chain}, and the finite-rank estimate imply
\[
 \sup_{Q\ge1}\limsup_{\lambda\downarrow0}
 \norm{S_{\lambda,Q,\eta}}_{2,\lambda}\le C_F.
\]
Thus \eqref{eq:compressed-label-bound} follows.

It remains to prove the approximation result.  For all sufficiently
large \(Q\), Lemma~\ref{lem:common-exact-support-separation} yields
\[
 C_{\lambda,Q,\eta}=C_{\lambda,\eta},\qquad
 [B_\lambda(f_{\eta,Q}^{\mathrm t}),C_{\lambda,\eta}]
 =[B_\lambda(f_{\eta,Q}^{\mathrm t}),C_{\lambda,\eta}^*]=0,
\]
and hence we derive
\[
 S_{\lambda,\eta}-S_{\lambda,Q,\eta}
 =B_\lambda(f_{\eta,Q}^{\mathrm t})C_{\lambda,\eta}
  +C_{\lambda,\eta}B_\lambda(f_{\eta,Q}^{\mathrm t}),
\]
For each fixed \(Q\), we apply Lemma~\ref{lem:common-fourth-density-moment} to the operator \(b=f_{\eta,Q}^{\mathrm t}\), which 
satisfies \eqref{eq:common-b-hypotheses}.
\[
 [f_{\eta,Q},h]=\Pi_Q[f_\eta,h]\Pi_Q.
\]
As above, we obtain
\[
 K(b)\le C\bigl(1+|k_\eta|Q+|k_\eta|^2\bigr).
\]
Since
\(\lambda^4g_2(\lambda)^6\to0\) and
\(\lambda^4g_3(\lambda)^6=\lambda\to0\) as \(\lambda\downarrow0\),
Lemma~\ref{lem:common-fourth-density-moment} implies
\[
 \sup_{Q\ge1}\limsup_{\lambda\downarrow0}
 \omega_\lambda\!\left(B_\lambda(f_{\eta,Q}^{\mathrm t})^4\right)
 \le C.
\]
Moreover, 
\eqref{eq:LNR-uniform-label} yields
\[
 \limsup_{\lambda\downarrow0}
 \omega_\lambda\!\left(B_\lambda(f_{\eta,Q}^{\mathrm t})^2\right)
 \le C\langle k_\eta\rangle^2
 \|\Pi_Q^\perp h^{-1}\|_{\mathfrak S^2}^{1/7}\longrightarrow0
 \qquad(Q\to\infty).
\]
The last two bounds, together with
\eqref{eq:common-mixed-holder-chain} and
Lemma~\ref{lem:common-third-coefficient-moment}, prove the required
approximation. 
\end{proof}

\begin{corollary}
\label{cor:full-compressed-interaction-tail}
For every \(0<\lambda\le\lambda_0\), the series
\(\sum_{\eta\in\mathfrak I}w_\eta S_{\lambda,\eta}\) converges
absolutely in \(\Hcal_\lambda\).  Moreover,
\begin{equation}\label{eq:full-compressed-tail}
 \lim_{Q\to\infty}\limsup_{\lambda\downarrow0}
 \norm{
  \sum_{\eta\in\mathfrak I}w_\eta
  (S_{\lambda,\eta}-S_{\lambda,Q,\eta})
 }_{2,\lambda}=0.
\end{equation}
\end{corollary}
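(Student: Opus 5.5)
The plan is a dominated-convergence argument in \(\Hcal_\lambda\), with the summable weights \(w_\eta\langle k_\eta\rangle^2\) as the dominating sequence. By \eqref{eq:finite-potential-regularity}, \(\sum_{\eta\in\mathfrak I}w_\eta\langle k_\eta\rangle^2\le 2\|\widehat v\|_{\ell^1_2}<\infty\).

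\emph{Absolute convergence.} Fix \(\lambda\in(0,\lambda_0]\). By \eqref{eq:full-label-bound},
\[
 \sum_{\eta\in\mathfrak I}w_\eta\|S_{\lambda,\eta}\|_{2,\lambda}\le C_F\sum_{\eta}w_\eta\langle k_\eta\rangle^2<\infty .
\]
Since \(\Hcal_\lambda\) is complete, the series converges absolutely. Strictly speaking, each \(S_{\lambda,\eta}\) is an unbounded expression, so I interpret it as the element of \(\Hcal_\lambda\) obtained by closure. This is legitimate because the finite moments \eqref{eq:fixed-lambda-number-moments} make \(S_{\lambda,\eta}\Gamma_\lambda^{1/2}\) Hilbert--Schmidt.

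\emph{The tail limit.} Fix a finite set \(\mathfrak I_N\subset\mathfrak I\), for instance all \(\eta\) with \(|k_\eta|\le N\). Then split
\[
 \Big\|\sum_\eta w_\eta(S_{\lambda,\eta}-S_{\lambda,Q,\eta})\Big\|_{2,\lambda}
 \le\sum_{\eta\in\mathfrak I_N}w_\eta\|S_{\lambda,\eta}-S_{\lambda,Q,\eta}\|_{2,\lambda}
 +\sum_{\eta\notin\mathfrak I_N}w_\eta\bigl(\|S_{\lambda,\eta}\|_{2,\lambda}+\|S_{\lambda,Q,\eta}\|_{2,\lambda}\bigr).
\]
The first sum is finite. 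Taking \(\limsup_{\lambda\downarrow0}\) and then \(Q\to\infty\), it tends to zero by the last assertion of Lemma~\ref{lem:interaction-label-bounds}.

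For the second sum I need the \(\limsup_{\lambda}\) of \(\|S_{\lambda,Q,\eta}\|_{2,\lambda}\) bounded uniformly in \(Q\), and summed over the infinite set \(\eta\notin\mathfrak I_N\). A \(\limsup\) does not pass through an infinite sum. I therefore need a bound that holds for all small \(\lambda\), not only in the limit. For this I go back to the proof of \eqref{eq:compressed-label-bound}. There the only \(Q\)-dependent input was \(K(b_{Q,\eta})\le 4(Q^2+m)\), entering Lemma~\ref{lem:common-fourth-density-moment} through \(\lambda^4(1+K)^2g_d(\lambda)^6\).

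I would replace this by the \(\eta\)-uniform bound \(K(b_{Q,\eta})\le C_F\langle k_\eta\rangle^2\). Indeed, \(b_{Q,\eta}\) is a compression of \(f_{\eta,Q}=\Pi_Qf_\eta\Pi_Q\), and \(\Pi_Q\) commutes with \(h\). The double commutator is then \(\Pi_Q[[f_\eta,h],f_\eta]\Pi_Q=2\Pi_Q|\nabla f_\eta|^2\Pi_Q\), plus boundary terms from the finite-rank projection \(\Pi_{E_{Q,\eta}}\), which are controlled as in the proof of \eqref{eq:full-label-bound}.

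With this, for all \(0<\lambda\le\lambda_0\) and all \(Q\),
\[
 \|S_{\lambda,Q,\eta}\|_{2,\lambda}\le C_F\langle k_\eta\rangle^2,
\]
as in \eqref{eq:full-label-bound}. The second sum is then bounded, uniformly in \(\lambda\) and \(Q\), by \(2C_F\sum_{\eta\notin\mathfrak I_N}w_\eta\langle k_\eta\rangle^2\). This tends to zero as \(N\to\infty\). Letting \(N\to\infty\) after the limits in \(\lambda\) and \(Q\) proves \eqref{eq:full-compressed-tail}.

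\emph{Main obstacle.} The delicate step is the uniformity just described: getting a bound on \(S_{\lambda,Q,\eta}\) that is uniform in \(\lambda\) (not merely in the \(\limsup\)) and polynomial in \(\langle k_\eta\rangle\). The boundary terms in \(K(b_{Q,\eta})\) coming from \(\Pi_{E_{Q,\eta}}\) are where the argument could break.

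If that estimate fails, the fallback uses that, for a fixed finite Fourier support \(\Pi\), only finitely many \(\eta\) have \(f_{\eta,Q}\Pi\gH\ne f_\eta\Pi\gH\) once \(Q\) is large. Since \(\|f_\eta\|\le1\), the exact support separation of Lemma~\ref{lem:common-exact-support-separation} and \eqref{eq:common-compressed-commutator-exact} would then reduce \(S_{\lambda,Q,\eta}\) to \(S_{\lambda,\eta}\) plus tail terms. Those tail terms are estimated by \eqref{eq:common-mixed-holder-chain}, with fourth moments from Lemma~\ref{lem:common-fourth-density-moment} using the \(\eta\)-uniform bound \(K\le C(1+|k_\eta|Q+|k_\eta|^2)\). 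For \(d=3\) the \(\lambda^4g_d^6=\lambda\) factor absorbs the \(Q\)-growth for each fixed \(Q\).
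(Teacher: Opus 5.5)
\emph{Verdict: a step in your argument fails.} Your structure matches the paper's proof exactly: absolute convergence from \eqref{eq:full-label-bound}, a finite set \(\mathfrak I_N\) handled by the last assertion of Lemma~\ref{lem:interaction-label-bounds}, and the tail controlled by \(\sum_\eta w_\eta\langle k_\eta\rangle^2<\infty\). You are also right that passing \(\limsup_{\lambda\downarrow0}\) through an infinite \(\eta\)-sum needs justification. However, the estimate you use to supply that justification is false. The identity \([[f_{\eta,Q},h],f_{\eta,Q}]=\Pi_Q[[f_\eta,h],f_\eta]\Pi_Q\) does not hold. Since \(\Pi_Q\) commutes with \(h\), the correct diagonal entries for \(n\in\Lambda_Q\) are
\[
 \langle\mathrm e_n,[[f_{\eta,Q},h],f_{\eta,Q}]\mathrm e_n\rangle
 =\frac12\sum_{\pm:\ |n\pm k_\eta|\le Q}\bigl(h_{n\pm k_\eta}-h_n\bigr).
\]
Without compression both signs contribute and this equals \(|k_\eta|^2\). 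At a boundary mode with \(|n|\le Q<|n+k_\eta|\) and \(|n-k_\eta|\le Q\), the entry is instead \(-n\cdot k_\eta+|k_\eta|^2/2\). For example, with \(k_\eta=(1,0,\dots,0)\) and \(n=(Q,0,\dots,0)\) it equals \(-Q+\tfrac12\).

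These terms come from the sharp cutoff \(\Pi_Q\) itself, not from \(\Pi_{E_{Q,\eta}}\), which only affects a bounded set of modes. So \(K(b_{Q,\eta})\) genuinely grows like \(Q\langle k_\eta\rangle\). This is why the paper only states \(K(b_{Q,\eta})\le4(Q^2+m)\) and \(K(f^{\mathrm t}_{\eta,Q})\le C(1+|k_\eta|Q+|k_\eta|^2)\). With such \(K\), the bound from Lemma~\ref{lem:common-fourth-density-moment} is not uniform in \((\lambda,Q)\). Your claimed bound \(\norm{S_{\lambda,Q,\eta}}_{2,\lambda}\le C_F\langle k_\eta\rangle^2\) for all \(\lambda\le\lambda_0\) and all \(Q\) therefore does not follow.

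\emph{The missing observation.} No uniformity in \(\lambda\) is needed, because for fixed \(Q\) the \(\eta\)-sum is finite. Indeed \(f_{\eta,Q}=\Pi_Qf_\eta\Pi_Q=0\) unless \(|k_\eta|\le2Q\); this is the finiteness of \(\mathfrak I_Q\) in the proof of Proposition~\ref{prop:fixed-compression}. When \(f_{\eta,Q}=0\) we have \(B_\lambda(f_{\eta,Q})=0\) and \(C_{\lambda,Q,\eta}=0\), hence \(S_{\lambda,Q,\eta}=0\). So for each fixed \(Q\) the tail \(\sum_{\eta\notin\mathfrak I_N}w_\eta\norm{S_{\lambda,Q,\eta}}_{2,\lambda}\) has only finitely many nonzero terms, and the \(\limsup_{\lambda\downarrow0}\) passes through it. Then \eqref{eq:compressed-label-bound} bounds the tail by \(C_F\sum_{\eta\notin\mathfrak I_N}w_\eta\langle k_\eta\rangle^2\), uniformly in \(Q\). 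This is the step the paper's proof uses implicitly.

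\emph{The fallback does not repair this.} Its premise has the quantifiers reversed. For every fixed \(Q\), each \(\eta\) with \(|k_\eta|\) larger than \(Q\) plus the radius of \(\Lambda_\Pi\) satisfies \(f_{\eta,Q}\Pi\gH=0\ne f_\eta\Pi\gH\), so infinitely many \(\eta\) differ. Lemma~\ref{lem:common-exact-support-separation} applies only for \(Q\ge Q(\eta)\). Moreover, absorbing the \(Q\)-growth by \(\lambda^4g_d(\lambda)^6\) for each fixed \(Q\) again gives only \(\limsup_\lambda\) bounds, which is \eqref{eq:compressed-label-bound} itself.
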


\begin{proof}
Let \(\varepsilon>0\), and choose a finite set \(\mathfrak I_N\) so
that
\[
 2C_F\sum_{\eta\notin\mathfrak I_N}
 w_\eta\langle k_\eta\rangle^2<\frac{\varepsilon}{2}.
\]
Using Lemma~\ref{lem:interaction-label-bounds}, we obtain
\[
 \begin{aligned}
 &\limsup_{\lambda\downarrow0}
  \sum_{\eta\notin\mathfrak I_N}
  w_\eta\norm{S_{\lambda,\eta}}_{2,\lambda}
 +\sup_{Q\ge1}\limsup_{\lambda\downarrow0}
  \sum_{\eta\notin\mathfrak I_N}
  w_\eta\norm{S_{\lambda,Q,\eta}}_{2,\lambda}
 <\frac{\varepsilon}{2}.
 \end{aligned}
\]
For the fixed finite set \(\mathfrak I_N\), the last assertion of
Lemma~\ref{lem:interaction-label-bounds} implies
\[
 \lim_{Q\to\infty}\limsup_{\lambda\downarrow0}
 \sum_{\eta\in\mathfrak I_N}w_\eta
 \norm{S_{\lambda,\eta}-S_{\lambda,Q,\eta}}_{2,\lambda}=0.
\]
The triangle inequality implies \eqref{eq:full-compressed-tail}.
\end{proof}

We now sum the estimates over the Fourier modes for the fixed cylinder
\(F(u)=G(\Pi u)\). Since \(\dG(\Pi^\perp h\Pi^\perp)\) commutes with
\(\Op_{\lambda,\gH}^{\rm A}F\), we obtain
\[
 [\dG(h),\Op_{\lambda,\gH}^{\rm A}F]
 = [\dG(\Pi h\Pi),\Op_{\lambda,\gH}^{\rm A}F].
\]
By Lemma~\ref{lem:finite-mode-commutator} and
\(\ell_{\Pi h\Pi}G=-\ii\mathscr D_hF\), we get
\begin{equation}\label{eq:fixed-Q-free-part}
 \ii[\dG(h),\Op_{\lambda,\gH}^{\rm A}F]
 =\Op_{\lambda,\gH}^{\rm A}(\mathscr D_hF).
\end{equation}
 Define
\begin{equation}\label{eq:candidate-generator}
	\begin{aligned}
 R_\lambda(F)
 &:=
 \ii[\dG(h),\Op_{\lambda,\gH}^{\rm A}F]
 +\frac{\ii}{2}\sum_{\eta\in\mathfrak I}
   w_\eta S_{\lambda,\eta},                                  \\
 R_{\lambda,Q}(F)
 &:=
 \ii[\dG(h),\Op_{\lambda,\gH}^{\rm A}F]
 +\frac{\ii}{2}\sum_{\eta\in\mathfrak I}
   w_\eta S_{\lambda,Q,\eta}.                                \end{aligned}
\end{equation}
 Lemma~\ref{lem:interaction-label-bounds} and
\(\sum_\eta w_\eta\langle k_\eta\rangle^2<\infty\) show that the 
series in \eqref{eq:candidate-generator} converges absolutely in \(\Hcal_\lambda\).

Using Corollary~\ref{cor:full-compressed-interaction-tail}, we obtain
\begin{equation}\label{eq:generator-tail}
 \lim_{Q\to\infty}\limsup_{\lambda\downarrow0}
 \norm{R_\lambda(F)-R_{\lambda,Q}(F)}_{2,\lambda}=0.
\end{equation}
Moreover, \eqref{eq:fixed-Q-free-part},
\eqref{eq:full-label-bound}, and the weighted summability of
\((w_\eta)\) imply
\begin{equation}\label{eq:candidate-generator-bound}
 \sup_{0<\lambda\le\lambda_0}
 \norm{R_\lambda(F)}_{2,\lambda}<\infty.
\end{equation}

\subsection{Generator convergence on cylinder observables}

In this subsection we show that the Fourier-truncated classical generators
\(\mathcal L_{0,Q}F\) converge to \(\mathcal L_0F\) as \(Q\to\infty\).
For each fixed Fourier cutoff \(Q\), we then prove the semiclassical
convergence of 
$
R_{\lambda,Q}(F)$ to  $\mathcal L_{0,Q}F$. 
\eqref{eq:generator-tail} allows us to remove the Fourier
cutoff.  Finally, a particle-number truncation argument identifies 
\(R_\lambda(F)\) with the closed generator on
\(\Op_{\lambda,\gH}^{\rm A}F\).

\begin{lemma}
\label{lem:classical-compression}
Recall $f_{\eta,Q}=\Pi_Qf_\eta \Pi_Q$ and define the finite-dimensional centered
quadratic chaos
\[
 \rho_{Q,\eta}(u)
 :=\langle u,f_{\eta,Q}u\rangle-\Tr(h^{-1}f_{\eta,Q}).
\]
Set
\[
 \Lcal_{0,Q}F
 :=\mathscr D_hF
  +\sum_{\eta\in\mathfrak I}
    w_\eta\rho_{Q,\eta}\mathscr D_{f_{\eta,Q}}F.
\]
 Then
\begin{equation}\label{eq:classical-compression}
 \Lcal_{0,Q}F\longrightarrow\Lcal_0F
 \qquad\hbox{in }L^2(\nu),\quad Q\to\infty.
\end{equation}
\end{lemma}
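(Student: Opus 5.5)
First rewrite \(\Lcal_0F\) in the real-mode form that matches \(\Lcal_{0,Q}F\). The complex sum in \eqref{eq:classical-generator-definition} pairs \(k\) with \(-k\). Substituting \(\rho_k=\rho_{(k,\mathrm c)}+\ii\rho_{(k,\mathrm s)}\) and \(\mathscr D_{M_{\pm k}}=\mathscr D_{f_{k,\mathrm c}}\pm\ii\mathscr D_{f_{k,\mathrm s}}\), with \(\widehat v(-k)=\widehat v(k)\), should give
\[
 \Lcal_0F=\mathscr D_hF+\sum_{\eta\in\mathfrak I}w_\eta\,\rho_\eta\,\mathscr D_{f_\eta}F .
\]
This is the same bookkeeping as in \eqref{eq:real-density-square}. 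The difference is then
\[
 \Lcal_{0,Q}F-\Lcal_0F
 =\sum_{\eta}w_\eta\bigl(\rho_{Q,\eta}-\rho_\eta\bigr)\mathscr D_{f_{\eta,Q}}F
 +\sum_\eta w_\eta\,\rho_\eta\bigl(\mathscr D_{f_{\eta,Q}}F-\mathscr D_{f_\eta}F\bigr).
\]

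Next, observe that for fixed \(\eta\) and \(Q\) large, the condition \eqref{eq:common-support-Q-condition} holds. Then \(\Pi f_{\eta,Q}=\Pi f_\eta\) on the relevant vectors, so the second difference vanishes identically. More precisely, \(\Pi f_{\eta,Q}u=\Pi f_\eta\Pi_Qu\), and \(\Pi f_\eta\) only sees the modes \(\Lambda_\Pi\pm k_\eta\subset\Lambda_Q\). For \(\mathscr D_{f_{\eta,Q}}F\) itself I would use the pointwise bound
\[
 |\mathscr D_{f_{\eta,Q}}F(u)|\le C_F\|\Pi_{E_\eta}u\|,
\]
uniformly in \(Q\). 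This holds because \(\nabla G\) is bounded and \(\Pi f_{\eta,Q}u\) depends only on the finitely many coordinates of \(u\) in \(E_\eta\). Although \(E_\eta\) depends on \(\eta\), the coordinates \(u_p\) are Gaussian-like under \(\nu\). Since \(\nu\le C\mu_0\) (because \(\Vcal\ge0\) by positivity of \(\widehat v\)), we get \(\|\Pi_{E_\eta}u\|\in L^q(\nu)\) with norms bounded by \(C_F\) uniformly in \(\eta\), as \(\dim E_\eta\le 3r_\Pi\) and \(h^{-1}\le m^{-1}\).

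The termwise estimate then runs as follows. By Cauchy--Schwarz and hypercontractivity under \(\mu_0\),
\[
 \|(\rho_{Q,\eta}-\rho_\eta)\mathscr D_{f_{\eta,Q}}F\|_{L^2(\nu)}
 \le C_F\|\rho_{Q,\eta}-\rho_\eta\|_{L^4(\mu_0)}
 \le C_F\|\rho_{Q,\eta}-\rho_\eta\|_{L^2(\mu_0)} .
\]
The Wick second-moment identity gives \(\|\rho_{Q,\eta}-\rho_\eta\|_{L^2(\mu_0)}^2=\|h^{-1/2}(f_\eta-f_{\eta,Q})h^{-1/2}\|_{\mathfrak S^2}^2\). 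This is at most \(C\|h^{-1}\|_{\mathfrak S^2}^2\le C\) uniformly, and tends to \(0\) as \(Q\to\infty\) for fixed \(\eta\) by dominated convergence on the Fourier sum \(\sum_{p}h_p^{-1}h_{p\pm k}^{-1}\) restricted to pairs outside \(\Lambda_Q\). (One may even get a crude bound \(C\langle k_\eta\rangle^0\).) Similarly \(\|\rho_\eta\mathscr D_{f_\eta}F\|_{L^2(\nu)}\le C_F\) uniformly in \(\eta\).

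Finally, sum using dominated convergence in \(\eta\). Each term is bounded in \(L^2(\nu)\) by \(C_Fw_\eta\) uniformly in \(Q\), and \(\sum w_\eta\le 2\|\widehat v\|_{\ell^1_2}<\infty\). Each term also tends to zero as \(Q\to\infty\). Hence the series converges, and \eqref{eq:classical-compression} follows. The main obstacle is justifying uniformity in \(\eta\) of the moment bounds, namely that \(\mathscr D_{f_{\eta,Q}}F\) is controlled by finitely many Gaussian coordinates with \(\eta\)-independent constants. Passing from \(\mu_0\)- to \(\nu\)-moments via \(\ee^{-\Vcal}\le1\) makes this straightforward.
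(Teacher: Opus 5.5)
Your proposal is correct and follows the paper's proof essentially step by step. You use the same real-mode rewriting of $\Lcal_0F$, the same Wick-isometry and hypercontractivity bound $\|\rho_\eta-\rho_{Q,\eta}\|_{L^4(\mu_0)}\le C\|h^{-1/2}f_{\eta,Q}^{\mathrm t}h^{-1/2}\|_{\mathfrak S^2}$, the same observation that the derivative difference vanishes once $\Pi\gH+f_\eta\Pi\gH\subset\Pi_Q\gH$, uniform $L^4$ bounds transferred to $\nu$ via $\dd\nu/\dd\mu_0\le\mathfrak Z^{-1}$, and dominated convergence over $\eta$ using $\sum_\eta w_\eta<\infty$.
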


\begin{proof}
By \eqref{eq:classical-generator-definition} and
\eqref{eq:real-density-square},
\[
 \Lcal_0F
 =\mathscr D_hF
  +\sum_{\eta\in\mathfrak I}
    w_\eta\rho_\eta\mathscr D_{f_\eta}F.
\]
\begin{equation}
 \Lcal_0F-\Lcal_{0,Q}F
 =\sum_{\eta\in\mathfrak I}w_\eta
 \bigl(\rho_\eta\mathscr D_{f_\eta}F
       -\rho_{Q,\eta}\mathscr D_{f_{\eta,Q}}F\bigr).
 \label{eq:classical-compression-difference}
\end{equation}

The complex Gaussian Wick isometry and hypercontractivity of the second chaos
\cite[Lemma~5.2]{LNR} imply
\begin{equation*}
 \|\rho_\eta-\rho_{Q,\eta}\|_{L^4(\mu_0)}
 \le C\|h^{-1/2}f_{\eta,Q}^{\mathrm t}h^{-1/2}\|_{\mathfrak S^2(\gH)}.
\end{equation*}
\[
 \|\mathscr D_{f_\eta}F-\mathscr D_{f_{\eta,Q}}F\|_{L^4(\mu_0)}
 \le C_F\|h^{-1/2}f_{\eta,Q}^{\mathrm t}\Pi\|_{\mathfrak S^2(\gH)}.
\]
For fixed \(\eta\), the first right-hand side tends to zero by
the Schatten--H\"older inequality and the strong convergence of \(\Pi_Q\) to $\mathbf{1}$.  The second
vanishes once \(\Pi\gH+f_\eta\Pi\gH\subset \Pi_Q\gH\).  The
same estimates without differences give
\begin{equation}\label{eq:classical-compression-uniform-moments}
 \sup_{\eta,Q}
 \left(
  \|\rho_\eta\|_{L^4(\mu_0)}
  +\|\rho_{Q,\eta}\|_{L^4(\mu_0)}
  +\|\mathscr D_{f_\eta}F\|_{L^4(\mu_0)}
  +\|\mathscr D_{f_{\eta,Q}}F\|_{L^4(\mu_0)}
 \right)\le C_F.
\end{equation}
By H\"older's inequality and
\(\dd\nu/\dd\mu_0\le\mathfrak Z^{-1}\), each summand in
\eqref{eq:classical-compression-difference} tends to zero in
\(L^2(\nu)\) as \(Q\to\infty\).  
\eqref{eq:classical-compression-uniform-moments} bounds it by
\(C_Fw_\eta\).  Since \(\sum_\eta w_\eta<\infty\), dominated
convergence implies
\eqref{eq:classical-compression}.
\end{proof}

\begin{proposition}
\label{prop:fixed-compression}
For every fixed \(Q\ge1\),
\begin{equation}\label{eq:fixed-Q-convergence}
 R_{\lambda,Q}(F)\to_{\mathrm{id}}\Lcal_{0,Q}F
 \qquad(\lambda\downarrow0).
\end{equation}
\end{proposition}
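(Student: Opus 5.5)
We fix \(Q\) and reduce \eqref{eq:fixed-Q-convergence} to a finite-dimensional semiclassical statement plus a summable tail. Choose a finite set \(\mathfrak I_N\subset\mathfrak I\). For \(\eta\notin\mathfrak I_N\) the quantum terms are bounded by \eqref{eq:compressed-label-bound}: their contribution to \(\limsup_{\lambda\downarrow0}\|\cdot\|_{2,\lambda}\) is at most \(C_F\sum_{\eta\notin\mathfrak I_N}w_\eta\langle k_\eta\rangle^2\). The classical terms are bounded in \(L^2(\nu)\) by \eqref{eq:classical-compression-uniform-moments}, giving at most \(C_F\sum_{\eta\notin\mathfrak I_N}w_\eta\). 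By Lemma~\ref{lem:identified-calculus}\textup{(v)}, with the role of the approximation index played by \(N\), it suffices to show, for each fixed finite \(\mathfrak I_N\),
\[
 \ii[\dG(h),\Op_{\lambda,\gH}^{\rm A}F]
 +\frac{\ii}{2}\sum_{\eta\in\mathfrak I_N}w_\eta S_{\lambda,Q,\eta}
 \to_{\mathrm{id}}
 \mathscr D_hF+\sum_{\eta\in\mathfrak I_N}w_\eta\rho_{Q,\eta}\mathscr D_{f_{\eta,Q}}F .
\]
The free part is exact: by \eqref{eq:fixed-Q-free-part} it equals \(\Op_{\lambda,\gH}^{\rm A}(\mathscr D_hF)\), and \(\Op_{\lambda,\gH}^{\rm A}g\to_{\mathrm{id}}g\) for cylinder \(g\). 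Since \(\mathscr D_hF\in\Acal_{\rm cyl}\), this term is done. It remains to handle each fixed \(\eta\) separately.

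For a fixed \(\eta\), everything lives on the fixed finite space \(E:=\Pi_Q\gH\), since \(f_{\eta,Q}=\Pi_Qf_{\eta,Q}\Pi_Q\) and \(\Pi\le\Pi_Q\) for \(Q\) large. We may enlarge \(Q\) harmlessly, or else work on \(\Pi_Q\gH+\Pi\gH\).

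First, Lemma~\ref{lem:finite-mode-commutator} with \(\Pi\) replaced by \(\Pi_E\), so that the remainder \(\mathcal R_b\) vanishes, gives exactly
\[
 C_{\lambda,Q,\eta}=\Op_{\lambda,\gH}^{\rm A}(\ell_{f_{\eta,Q}}F),
 \qquad
 \ii\,\ell_{f_{\eta,Q}}F\cdot(-1)=\mathscr D_{f_{\eta,Q}}F .
\]
That is, \(C_{\lambda,Q,\eta}=-\ii\Op^{\rm A}(\mathscr D_{f_{\eta,Q}}F)\), with symbol \(\mathscr D_{f_{\eta,Q}}F\in C_c^\infty(E)\).

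Second, the unbounded factor is \(B_\lambda(f_{\eta,Q})=\sum_{\mu,j}(f_{\eta,Q})_{\mu j}a_\lambda^*(\psi_\mu)a_\lambda(\psi_j)-\lambda\Tr(f_{\eta,Q}\gamma_{0,\lambda})\). We rewrite it by moving the fields past the anti-Wick factor using \eqref{eq:finite-mode-AW-eigen}--\eqref{eq:finite-mode-AW-integration}. This yields
\[
 B_\lambda(f_{\eta,Q})\Op^{\rm A}(g)=\Op^{\rm A}\bigl((\langle u,f_{\eta,Q}u\rangle-c_\lambda)g\bigr)+\lambda\,\Op^{\rm A}(\text{first-order terms in }g),
\]
where \(c_\lambda=\lambda\Tr(f_{\eta,Q}\gamma_{0,\lambda})\), and the analogous formula holds for the product in the other order. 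Because \(g\) has compact support, all these symbols lie in \(C_c^\infty(E)\), so the \(\lambda\)-terms are \(O(\lambda)\) in operator norm. Moreover, \(c_\lambda\to\Tr(h^{-1}f_{\eta,Q})\) since \(\lambda\gamma_{0,\lambda}\to h^{-1}\) on the finite space \(E\). Hence
\[
 \tfrac{\ii}{2}S_{\lambda,Q,\eta}=\Op_{\lambda,\gH}^{\rm A}\bigl(\rho_{Q,\eta}\mathscr D_{f_{\eta,Q}}F\bigr)+o_{\mathrm{op}}(1),
\]
where the classical symbol is a bounded compactly supported cylinder function, because \(\rho_{Q,\eta}\) is a polynomial on \(E\) multiplied by a \(C_c^\infty\) function. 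Since the operator-norm error dominates the \(\|\cdot\|_{2,\lambda}\) error, Definition~\ref{def:identified} together with \(\Op^{\rm A}g\to_{\mathrm{id}}g\) gives the identified convergence for each \(\eta\in\mathfrak I_N\).

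The main obstacle is the bookkeeping in the second step. We must check that the mixed terms in \(a^*a\,\Op^{\rm A}\) and \(\Op^{\rm A}a^*a\) reorganize into symbols with explicit factors of \(\lambda\) that remain compactly supported. This should follow from the two-sided coherent-state identities, with the second-derivative terms either cancelling as in \eqref{eq:finite-mode-block-commutator} or carrying \(\lambda^2\). A secondary check is the constant, namely that \(\lambda\Tr(\Pi_Q f_\eta\Pi_Q\gamma_{0,\lambda})\to\Tr(h^{-1}f_{\eta,Q})\); this holds by the elementary pointwise limit on finitely many modes.
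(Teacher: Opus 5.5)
There is a genuine gap in your second step. You claim that all symbols lie in $C_c^\infty(E)$, with $E=\Pi_Q\gH+\Pi\gH$, so that every error is $O(\lambda)$ in operator norm. This fails as soon as $E\neq\Pi\gH$, which is the generic case: $Q$ is fixed but arbitrary, and you may not enlarge it.

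Viewed on $E$, $F$ is the function $z\mapsto G(\Pi z)$. It is compactly supported only in the $\Pi\gH$-coordinates and is constant in $E_1:=E\cap(\Pi\gH)^\perp$. Since $(\Pi f_{\eta,Q}z)_j$ involves the modes $p_j\pm k_\eta$, which can lie in $E_1$, the growth in the $E_1$-coordinates is as follows: $\mathscr D_{f_{\eta,Q}}F$ grows linearly, $\rho_{Q,\eta}$ quadratically, and $\rho_{Q,\eta}\mathscr D_{f_{\eta,Q}}F$ cubically.

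On the operator side, $C_{\lambda,Q,\eta}$ contains the terms $a_\lambda^*(v_{\eta,Q,j})\Op_{\lambda,\gH}^{\rm A}(\partial_{\bar z_j}G)$ with $v_{\eta,Q,j}\in E_1$. Also $B_\lambda(f_{\eta,Q})$ contains $\lambda\dG(\Pi_{E_1}f_{\eta,Q}\Pi_{E_1})$, which acts on $\Fock(E_1)$. Both are unbounded, and so is $S_{\lambda,Q,\eta}$. Consequently:
\begin{itemize}
\item an error term $o_{\mathrm{op}}(1)$ is meaningless here;
\item the CCR corrections, such as $\lambda a_\lambda^*(f_{\eta,Q}v_{\eta,Q,j})\Op_{\lambda,\gH}^{\rm A}(\partial_{\bar z_j}G)$, are themselves unbounded;
\item since $\Lcal_{0,Q}F\notin\Acal_{\rm cyl}$, you cannot invoke $\Op_{\lambda,\gH}^{\rm A}g\to_{\mathrm{id}}g$, which is only available for $g\in\Acal_{\rm cyl}$.
\end{itemize}
For the same reason, Lemma~\ref{lem:finite-mode-commutator} does not apply with $\Pi_E$ in place of $\Pi$, because $G(\Pi\,\cdot)\notin C_c^\infty(E)$.

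The missing idea is uniform-in-$\lambda$ control of these noncompact directions in $L^2(\Gamma_\lambda)$. The paper keeps the original $\Pi$ and writes $R_{\lambda,Q}(F)$ as the finite sum \eqref{eq:fixed-Q-normal-template} of Wick--anti-Wick--Wick words of total degree at most $3$. It then applies Lemma~\ref{lem:finite-mode-mixed-calculus}, which inserts a cutoff $\vartheta_S(\Pi_{E_1}z)$ into the middle anti-Wick factor. After this cutoff your manipulations become legitimate: the symbols lie in $C_c^\infty(E)$ and the corrections carry an explicit factor $\lambda$. The removed part is bounded in $\|\cdot\|_{2,\lambda}^2$ by $C_MS^{-2M}\Tr[(\1+\lambda\mathcal N_{E_1})^{2M+3}\Gamma_\lambda]$. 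This is bounded uniformly in $\lambda$ by \eqref{eq:finite-mode-number-moment-bound} and the reduced-density-matrix bound \eqref{eq:LNR-RDM-uniform}. Finally one lets $S\to\infty$, using $(\vartheta_S\circ\Pi_{E_1})\Lcal_{0,Q}F\to\Lcal_{0,Q}F$ in $L^2(\nu)$. This Gibbs-moment input is essential; it is not a matter of bookkeeping.

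Two minor remarks. Your reduction to finite sets $\mathfrak I_N$ is harmless but unnecessary, because $f_{\eta,Q}\neq0$ only for the finitely many $\eta$ with $k_\eta\in\Lambda_Q-\Lambda_Q$. Also, the correct relation is $\ell_{f_{\eta,Q}}G=-\ii\mathscr D_{f_{\eta,Q}}F$; this agrees with your final formula but not with the line preceding it.
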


\begin{proof}
Fix \(Q\), and write
\[
 \Lambda_Q:=\{p\in\mathbb Z^d:\mathrm e_p\in\Pi_Q\gH\},
 \qquad
 \mathfrak I_Q:=\{\eta\in\mathfrak I:f_{\eta,Q}\ne0\},
 \qquad
 E_Q:=\Pi\gH+\Pi_Q\gH.
\]
If \(f_{\eta,Q}\ne0\), multiplication by \(f_\eta\) connects two
modes in \(\Lambda_Q\).  Hence
\(k_\eta\in\Lambda_Q-\Lambda_Q\), and \(\mathfrak I_Q\) is finite.
All terms in \(R_{\lambda,Q}(F)\) act on \(\Fock(E_Q)\) and as the
identity on \(\Fock(E_Q^\perp)\).  The free term is
\(\Op_{\lambda,\gH}^{\rm A}(\mathscr D_hF)\) by
\eqref{eq:fixed-Q-free-part}.

Let \((\mathrm e_{p_j})_{j=1}^{r_\Pi}\) be the Fourier basis of
\(\Pi\gH\), and set
\[
 v_{\eta,Q,j}:=\Pi^\perp f_{\eta,Q}\mathrm e_{p_j}.
\]
By Lemma~\ref{lem:finite-mode-commutator},
\[
 C_{\lambda,Q,\eta}
 =\Op_{\lambda,\gH}^{\rm A}(\ell_{f_{\eta,Q}}G)
  +\sum_{j=1}^{r_\Pi}\left[
    a_\lambda^*(v_{\eta,Q,j})
      \Op_{\lambda,\gH}^{\rm A}(\partial_{\bar z_j}G)
    -\Op_{\lambda,\gH}^{\rm A}(\partial_{z_j}G)
      a_\lambda(v_{\eta,Q,j})
  \right].
\]
Moreover, by the definition of Wick quantization,
\[
 B_\lambda(f_{\eta,Q})
 =\Op_{\lambda,E_Q}^{\rm W}(b_{\lambda,Q,\eta}),
 \qquad
 b_{\lambda,Q,\eta}(z)
 =\langle z,f_{\eta,Q}z\rangle
  -\lambda\Tr(f_{\eta,Q}\gamma_{0,\lambda}).
\]
With
\(J_{\lambda,Q,\eta}:=
 \Op_{\lambda,\gH}^{\rm A}(\ell_{f_{\eta,Q}}G)\), we write
\begin{align}
 B_\lambda(f_{\eta,Q})C_{\lambda,Q,\eta}
 +C_{\lambda,Q,\eta}B_\lambda(f_{\eta,Q})
 &=B_\lambda(f_{\eta,Q})J_{\lambda,Q,\eta}
   +J_{\lambda,Q,\eta}B_\lambda(f_{\eta,Q})\notag\\
 &\quad+B_\lambda(f_{\eta,Q})
        (C_{\lambda,Q,\eta}-J_{\lambda,Q,\eta})
 +(C_{\lambda,Q,\eta}-J_{\lambda,Q,\eta})
        B_\lambda(f_{\eta,Q}).
 \label{eq:fixed-Q-two-orientations}
\end{align}
The only rearrangements needed in the last line are given by the
scaled canonical commutation relations
\begin{align}
 B_\lambda(f_{\eta,Q})a_\lambda^*(v)
 &=a_\lambda^*(v)B_\lambda(f_{\eta,Q})
   +\lambda a_\lambda^*(f_{\eta,Q}v),
 \label{eq:fixed-Q-order-right-creation}\\
 a_\lambda(v)B_\lambda(f_{\eta,Q})
 &=B_\lambda(f_{\eta,Q})a_\lambda(v)
   +\lambda a_\lambda(f_{\eta,Q}v).
 \label{eq:fixed-Q-order-left-annihilation}
\end{align}
All vectors in these identities belong to \(E_Q\).  After applying
them, every term has the form treated in
Lemma~\ref{lem:finite-mode-mixed-calculus}.  More precisely, we can write
\begin{equation}\label{eq:fixed-Q-normal-template}
 R_{\lambda,Q}(F)
 =\sum_{\ell=1}^{J_Q}
 \Op_{\lambda,E_Q}^{\rm W}(p_{\lambda,Q,\ell}^{\rm L})
 \Bigl(
  \Op_{\lambda,\Pi\gH}^{\rm A}(g_{Q,\ell})
  \otimes\1_{\Fock(E_Q\cap(\Pi\gH)^\perp)}
 \Bigr)
 \Op_{\lambda,E_Q}^{\rm W}(p_{\lambda,Q,\ell}^{\rm R}).
\end{equation}
Here \(J_Q<\infty\), the functions
\(g_{Q,\ell}\in C_c^\infty(\Pi\gH)\) form a finite family, and
\[
 p_{\lambda,Q,\ell}^{\rm L},p_{\lambda,Q,\ell}^{\rm R}
 \in\mathscr P_3(E_Q),
 \qquad
 \deg p_{\lambda,Q,\ell}^{\rm L}
 +\deg p_{\lambda,Q,\ell}^{\rm R}\le3,
\]
for $\mathscr P_3(E_Q)$ introduced in Appendix~\ref{appsubsec:finite-mode-mixed-convergence}. 
The functions \(g_{Q,\ell}\) are independent of \(\lambda\) and have
a common compact support.  The coefficients of the two Wick
polynomials depend on \(\lambda\) only through
\(\lambda\Tr(f_{\eta,Q}\gamma_{0,\lambda})\) and the explicit factors
\(\lambda\) in
\eqref{eq:fixed-Q-order-right-creation}--
\eqref{eq:fixed-Q-order-left-annihilation}.

It remains to identify their limits.  Since \(Q\) is fixed,
\[
 \lambda\Tr(f_{\eta,Q}\gamma_{0,\lambda})
 \longrightarrow\Tr(f_{\eta,Q}h^{-1}),
\]
and hence
\[
 b_{\lambda,Q,\eta}\longrightarrow\rho_{Q,\eta}
 \quad\text{in }\mathscr P_2(E_Q).
\]
To identify the leading classical function associated with
\(C_{\lambda,Q,\eta}\), we use the coordinate replacements
\[
a_\lambda^*(v)\mapsto \langle z,v\rangle,\qquad
a_\lambda(v)\mapsto \langle v,z\rangle,\qquad
\Op_{\lambda,\gH}^{\rm A}(\phi)\mapsto \phi(\Pi z)
\]
in the previous formula for \(C_{\lambda,Q,\eta}\), and then multiply
the resulting scalar factors.  This gives
\begin{align*}
 c_{Q,\eta}(z)
 &= (\ell_{f_{\eta,Q}}G)(\Pi z)
  +\sum_{j=1}^{r_\Pi}\left[
    \langle z,v_{\eta,Q,j}\rangle\partial_{\bar z_j}G(\Pi z)
    -\partial_{z_j}G(\Pi z)\langle v_{\eta,Q,j},z\rangle
  \right]\\
 &= -\sum_{j=1}^{r_\Pi}
    \partial_{z_j}G(\Pi z)(\Pi f_{\eta,Q}z)_j
  +\sum_{j=1}^{r_\Pi}
    \partial_{\bar z_j}G(\Pi z)
    \overline{(\Pi f_{\eta,Q}z)_j}
 =-\ii\mathscr D_{f_{\eta,Q}}F(z),
\end{align*}
where the last equality follows from
\eqref{eq:complex-Hamiltonian-derivation}.  Thus the terms in
\eqref{eq:fixed-Q-two-orientations} obtained from the first terms on
the right-hand sides of
\eqref{eq:fixed-Q-order-right-creation}--
\eqref{eq:fixed-Q-order-left-annihilation} converge to
\[
 \frac{\ii}{2}
 \bigl(\rho_{Q,\eta}c_{Q,\eta}
       +c_{Q,\eta}\rho_{Q,\eta}\bigr)
 =\rho_{Q,\eta}\mathscr D_{f_{\eta,Q}}F.
\]
Every remaining coefficient contains an explicit factor \(\lambda\)
and therefore tends to zero.  Consequently, for each \(\ell\), the
polynomials in \eqref{eq:fixed-Q-normal-template} converge
coefficientwise in \(\mathscr P_3(E_Q)\), which is defined in Appendix~\ref{appsubsec:finite-mode-mixed-convergence}, and the sum of the limiting
functions is
\[
 \mathscr D_hF
 +\sum_{\eta\in\mathfrak I_Q}
   w_\eta\rho_{Q,\eta}\mathscr D_{f_{\eta,Q}}F
 =\Lcal_{0,Q}F.
\]
Using Lemma~\ref{lem:finite-mode-mixed-calculus}, with
\(E_0=\Pi\gH\), \(E=E_Q\), and \(D=3\), we obtain
\eqref{eq:fixed-Q-convergence}.
\end{proof}
\begin{theorem}
\label{thm:one-generator}
For every \(F\in\Acal_{\rm cyl}\),
\begin{equation}\label{eq:one-generator}
 \Op_{\lambda,\gH}^{\rm A}F\in\Dom(\Lbb_\lambda),\qquad
 \Lbb_\lambda \Op_{\lambda,\gH}^{\rm A}F
 \to_{\mathrm{id}}\Lcal_0F
 \qquad\lambda\downarrow0,
\end{equation}
and
\begin{equation}\label{eq:generator-bound}
 \sup_{0<\lambda\le\lambda_0}
 \norm{\Lbb_\lambda \Op_{\lambda,\gH}^{\rm A}F}_{2,\lambda}<\infty.
\end{equation}
\end{theorem}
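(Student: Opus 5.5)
By linearity and since constants are fixed by \(\Ubb_\lambda\), it suffices to take \(F=G\circ\Pi\) with \(G\in C_c^\infty(\Pi\gH)\). The plan has two parts. First, identify the already-constructed candidate \(R_\lambda(F)\) of \eqref{eq:candidate-generator} as the closed generator applied to \(\Op_{\lambda,\gH}^{\rm A}F\). Second, deduce convergence and the uniform bound from the estimates of the previous subsections.

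\emph{Convergence and bound, given the identification.} Suppose we know \(\Op_{\lambda,\gH}^{\rm A}F\in\Dom(\Lbb_\lambda)\) and \(\Lbb_\lambda\Op_{\lambda,\gH}^{\rm A}F=R_\lambda(F)\). Then \eqref{eq:generator-bound} is exactly \eqref{eq:candidate-generator-bound}. For the convergence we apply Lemma~\ref{lem:identified-calculus}\textup{(v)} with \(x_\lambda=R_\lambda(F)\), \(x_{\lambda,Q}=R_{\lambda,Q}(F)\), \(x_Q=\Lcal_{0,Q}F\) and \(x=\Lcal_0F\). The three hypotheses are supplied as follows:
\begin{itemize}
\item the fixed-\(Q\) convergence is Proposition~\ref{prop:fixed-compression};
\item the convergence \(\Lcal_{0,Q}F\to\Lcal_0F\) in \(L^2(\nu)\) is Lemma~\ref{lem:classical-compression};
\item the vanishing tail is \eqref{eq:generator-tail}.
\end{itemize}
This part is therefore bookkeeping.

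\emph{Identification with the closed generator.} Here I would use a particle-number truncation. For \(N\ge1\) let \(P_N=\1(\mathcal N\le N)\), which commutes with \(\mathbb H_\lambda\). Since \(\Op_{\lambda,\gH}^{\rm A}F\) does not commute with \(\mathcal N\), I would truncate in the Hilbert–Schmidt picture through \(\iota_\lambda\). By \eqref{eq:weighted-HS-conjugation}, \(\Ubb_\lambda(t)\) acts as conjugation by \(\ee^{\ii t\mathbb H_\lambda}\) on \(\mathfrak S^2(\Fock)\). Its generator there is the skew-adjoint operator \(X\mapsto\ii[\mathbb H_\lambda,X]\), defined on those \(X\) for which the weak commutator form extends to a Hilbert–Schmidt operator.

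So it suffices to show the following weak identity: for all \(\phi,\psi\) in the finite-particle subspace with finitely many Fourier modes (a core for \(\mathbb H_\lambda\)),
\[
\ip{\mathbb H_\lambda\phi}{A\Gamma_\lambda^{1/2}\psi}-\ip{\phi}{A\Gamma_\lambda^{1/2}\mathbb H_\lambda\psi}=\ip{\phi}{-\ii R_\lambda(F)\Gamma_\lambda^{1/2}\psi},\qquad A=\Op_{\lambda,\gH}^{\rm A}F.
\]
Since \(\Gamma_\lambda^{1/2}\) commutes with \(\mathbb H_\lambda\), this reduces to the commutator \([\mathbb H_\lambda,A]\) acting on \(\Gamma_\lambda^{1/2}\psi\). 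Such vectors have all moments of \(\mathcal N\) finite by \eqref{eq:fixed-lambda-number-moments}, and they lie in the form domain.

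On vectors of this type, the expansion \eqref{eq:exact-quantum-generator-formula} holds. Each \(\eta\)-term \(B_\lambda(f_\eta)C_{\lambda,\eta}+C_{\lambda,\eta}B_\lambda(f_\eta)\) is well defined because \(C_{\lambda,\eta}\) maps \(\mathcal N\)-sectors into neighbouring sectors with a field of norm \(\lesssim\sqrt{\lambda(\mathcal N+1)}\). The sum over \(\eta\) converges because \(\sum_\eta w_\eta<\infty\) and \(\|B_\lambda(f_\eta)\|\lesssim\lambda\mathcal N+g_d(\lambda)\) on \(P_N\). The series thus matches \(R_\lambda(F)\) once we pair against \(P_N\phi\), and we then let \(N\to\infty\). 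To pass to the limit we use that \(R_\lambda(F)\Gamma_\lambda^{1/2}\in\mathfrak S^2\) by \eqref{eq:candidate-generator-bound} and that the series converges absolutely in \(\Hcal_\lambda\). Skew-adjointness of the generator, which equals its own adjoint up to sign, then shows that the weak identity places \(A\) in \(\Dom(\Lbb_\lambda)\) with \(\Lbb_\lambda A=R_\lambda(F)\).

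\emph{Main obstacle.} The hardest step is the domain and limit argument. We must justify interchanging the infinite \(\eta\)-sum with the unbounded form \(\mathbb H_\lambda\), and check that the unbounded products \(B_\lambda(f_\eta)C_{\lambda,\eta}\Gamma_\lambda^{1/2}\) coincide with their \(\Hcal_\lambda\)-limits. I would handle this at fixed \(\lambda\) using sector-wise finite-particle bounds (moments of \(\mathcal N\) under \(\Gamma_\lambda\)) together with dominated convergence in \(N\), since no uniformity in \(\lambda\) is needed here.
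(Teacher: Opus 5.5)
Your proposal is correct and follows the paper's proof. It uses the same reduction to \(F=G\circ\Pi\), identifies \(\Lbb_\lambda\Op_{\lambda,\gH}^{\rm A}F\) with \(R_\lambda(F)\) through a particle-number truncation, applies Lemma~\ref{lem:identified-calculus}\textup{(v)} with Proposition~\ref{prop:fixed-compression}, Lemma~\ref{lem:classical-compression} and \eqref{eq:generator-tail}, and takes the uniform bound from \eqref{eq:candidate-generator-bound}.

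The only difference is in the closure step, and it is an equivalent device. The paper truncates the operator itself, \(A_N=\mathsf P_NA\mathsf P_N\), integrates the bounded commutator \(\ii[\mathbb H_\lambda,A_N]=\mathsf P_NR\mathsf P_N\), and uses closedness of \(\Lbb_\lambda\). You instead test the commutator weakly on a core of rank-one operators and use skew-adjointness.

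One small correction: \(C_{\lambda,\eta}\) does not shift the particle number by a bounded amount, because its anti-Wick factor spreads over all sectors of \(\Fock(\Pi\gH)\). The well-definedness you need comes instead from boundedness of \(\dG(\Pi)^k\Op_{\lambda,\Pi\gH}^{\rm A}(g)\) for compactly supported \(g\).
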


\begin{proof}
By linearity, it suffices to take \(F(u)=G(\Pi u)\).  We first prove \eqref{eq:one-generator} and identify the
generator with \(R_\lambda(F)\).  For fixed \(\lambda>0\), set
\[
 \mathsf E_n:=\1_{\{\mathcal N=n\}},
 \qquad
 \mathsf P_N:=\sum_{n=0}^N\mathsf E_n
 =\1_{\{\mathcal N\le N\}}.
\]
Put
\[
 A=\Op_{\lambda,\gH}^{\rm A}F,\qquad R=R_\lambda(F),\qquad
 A_N=\mathsf P_NA\mathsf P_N,\qquad
 R_N=\mathsf P_NR\mathsf P_N.
\]
On sectors with particle number at most \(N\), the bounded commutator
calculation gives \(\ii[\mathbb H_\lambda,A_N]=R_N\). Here the free commutator is
bounded by \eqref{eq:fixed-Q-free-part}, while the particle-number
cutoff makes the other part
bounded. Hence
\[
 \Ubb_\lambda(t)A_N-A_N
 =\int_0^t\Ubb_\lambda(s)R_N\,\dd s
 \quad\text{in }\Hcal_\lambda,
\]
so \(A_N\in\Dom(\Lbb_\lambda)\) and
\(\Lbb_\lambda A_N=R_N\).  Since \(\mathsf P_N\) commutes with
\(\Gamma_\lambda\), Hilbert--Schmidt approximation gives
\(A_N\to A\) and \(R_N\to R\) in \(\Hcal_\lambda\).
Closedness of \(\Lbb_\lambda\) proves
\eqref{eq:closed-generator-identification}.

We apply Lemma~\ref{lem:identified-calculus}\textup{(v)} with
\[
 x_\lambda=R_\lambda(F),\qquad
 x_{\lambda,Q}=R_{\lambda,Q}(F),\qquad
 x_Q=\Lcal_{0,Q}F,\qquad x=\Lcal_0F.
\]
The three hypotheses follow, respectively, from
Proposition~\ref{prop:fixed-compression},
Lemma~\ref{lem:classical-compression}, and
\eqref{eq:generator-tail}.  Hence
\(R_\lambda(F)\to_{\mathrm{id}}\Lcal_0F\).  Together with
\eqref{eq:closed-generator-identification}, we obtain
\eqref{eq:one-generator}.

Finally, \eqref{eq:closed-generator-identification} and
\eqref{eq:candidate-generator-bound} give
\[
 \sup_{0<\lambda\le\lambda_0}
 \|\Lbb_\lambda \Op_{\lambda,\gH}^{\rm A}F\|_{2,\lambda}
 =
 \sup_{0<\lambda\le\lambda_0}
 \|R_\lambda(F)\|_{2,\lambda}<\infty,
\]
which is \eqref{eq:generator-bound}.
\end{proof}

\section{Positive Liouville limits and correlations of bounded observables}
\label{sec:liouville-correlations}

We now combine generator convergence with positivity and dominated
Liouville uniqueness to obtain the bounded dynamical limits described in
Subsection~\ref{subsec:proof-outline}.  Subsection~\ref{subsec:positive-functionals-compactness}
proves that every such dominated positive sequence has a subsequential
limit given by a weakly continuous measure curve dominated by \(C\nu\)
and satisfying the Liouville equation.  Subsection~\ref{subsec:Liouville-transfer}
then assumes statement~\textup{(U)}, identifies the limiting curve with
its Hartree  push-forward, and deduces the two-point limit, identified
strong convergence of the dynamics, and the ordered multi-time limits.
Section~\ref{sec:DLU} proves statement~\textup{(U)} for the renormalized
Hartree NLS equation.

\subsection{Positive Liouville limits dominated by the Gibbs measure}
\label{subsec:positive-functionals-compactness}

We first compare the ordered weighted inner product with a positive trace
functional.  The estimate below replaces \(\Gamma_\lambda A\) or
\(A\Gamma_\lambda\) by
\(\Gamma_\lambda^{1/2}A\Gamma_\lambda^{1/2}\), with a trace-norm error
controlled by the generator.

\begin{lemma}\label{lem:gibbs-square-root-comparison}
Let \(A=A^*\in\mathcal B(\Fock)\cap\Dom(\Lbb_\lambda)\).  Then
\begin{equation}
 \norm{\Gamma_\lambda A-\Gamma_\lambda^{1/2}A\Gamma_\lambda^{1/2}}
      _{\mathfrak S^1(\Fock)}
 \le
 \bigl(\lambda\norm{A}_{2,\lambda}
       \norm{\Lbb_\lambda A}_{2,\lambda}\bigr)^{1/2}.
 \label{eq:gibbs-square-root-left}
\end{equation}
\begin{equation}
 \norm{A\Gamma_\lambda-\Gamma_\lambda^{1/2}A\Gamma_\lambda^{1/2}}
      _{\mathfrak S^1(\Fock)}
 \le
 \bigl(\lambda\norm{A}_{2,\lambda}
       \norm{\Lbb_\lambda A}_{2,\lambda}\bigr)^{1/2}.
 \label{eq:gibbs-square-root-right}
\end{equation}
\end{lemma}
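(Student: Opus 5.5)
The plan is to diagonalize the Gibbs state and reduce everything to an elementary scalar inequality for matrix entries. The starting point is the factorization
\[
 \Gamma_\lambda A-\Gamma_\lambda^{1/2}A\Gamma_\lambda^{1/2}
 =\Gamma_\lambda^{1/2}\,[\Gamma_\lambda^{1/2},A].
\]
Since \(\|\Gamma_\lambda^{1/2}\|_{\mathfrak S^2}^2=\Tr\Gamma_\lambda=1\), H\"older's inequality for Schatten norms gives
\[
 \|\Gamma_\lambda A-\Gamma_\lambda^{1/2}A\Gamma_\lambda^{1/2}\|_{\mathfrak S^1}
 \le\|[\Gamma_\lambda^{1/2},A]\|_{\mathfrak S^2}.
\]
It therefore suffices to prove
\[
 \|[\Gamma_\lambda^{1/2},A]\|_{\mathfrak S^2}^2
 \le\lambda\|A\|_{2,\lambda}\|\Lbb_\lambda A\|_{2,\lambda}.
\]
The right-hand inequality \eqref{eq:gibbs-square-root-right} then follows by taking adjoints, because \(A=A^*\) and the trace norm is invariant under adjoints.

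\emph{Eigenbasis computation.} Since \(\ee^{-\lambda\mathbb H_\lambda}\) is trace class, \(\mathbb H_\lambda\) has an orthonormal eigenbasis \((\phi_i)\) with eigenvalues \(e_i\). Put \(p_i=\mathcal Z_\lambda^{-1}\ee^{-\lambda e_i}\) and \(A_{ij}=\langle\phi_i,A\phi_j\rangle\); note \(|A_{ij}|=|A_{ji}|\) because \(A=A^*\). Then
\[
 \|A\|_{2,\lambda}^2=\sum_{i,j}p_j|A_{ij}|^2=\sum_{i,j}\tfrac{p_i+p_j}{2}|A_{ij}|^2,
 \qquad
 \|[\Gamma_\lambda^{1/2},A]\|_{\mathfrak S^2}^2=\sum_{i,j}(\sqrt{p_i}-\sqrt{p_j})^2|A_{ij}|^2 .
\]
For the generator I will use \eqref{eq:weighted-HS-conjugation}. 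Under \(\iota_\lambda\), the group \(\Ubb_\lambda(t)\) acts on the entries of \(A\Gamma_\lambda^{1/2}\) by multiplication by \(\ee^{\ii t(e_i-e_j)}\). Hence \(A\in\Dom(\Lbb_\lambda)\) means exactly that \(\sum_{i,j}(e_i-e_j)^2p_j|A_{ij}|^2<\infty\), and this sum equals \(\|\Lbb_\lambda A\|_{2,\lambda}^2\). Symmetrizing with \(|A_{ij}|=|A_{ji}|\), its value is \(\sum_{i,j}\tfrac{p_i+p_j}{2}(e_i-e_j)^2|A_{ij}|^2\).

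\emph{Key scalar inequality.} For \(x,y>0\), I claim
\[
 (\sqrt x-\sqrt y)^2\le|\sqrt x-\sqrt y|(\sqrt x+\sqrt y)=|x-y|\le\tfrac{x+y}{2}\,|\log x-\log y| .
\]
The last step is the logarithmic-mean inequality \((x-y)/(\log x-\log y)\le(x+y)/2\). Applied with \(x=p_i\), \(y=p_j\), so that \(|\log p_i-\log p_j|=\lambda|e_i-e_j|\), it yields
\[
 \|[\Gamma_\lambda^{1/2},A]\|_{\mathfrak S^2}^2\le\lambda\sum_{i,j}\tfrac{p_i+p_j}{2}|e_i-e_j|\,|A_{ij}|^2 .
\]
The Cauchy--Schwarz inequality with respect to the weights \(\tfrac{p_i+p_j}{2}|A_{ij}|^2\) bounds the right-hand side by \(\lambda\|A\|_{2,\lambda}\|\Lbb_\lambda A\|_{2,\lambda}\). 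Taking square roots gives \eqref{eq:gibbs-square-root-left}.

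The only point I expect to need care is the rigorous identification of \(\|\Lbb_\lambda A\|_{2,\lambda}\) with the weighted sum of \((e_i-e_j)^2|A_{ij}|^2\) for \(A\) in the abstract domain. I plan to obtain it from the unitary implementation \eqref{eq:weighted-HS-conjugation} together with the spectral theorem for the induced one-parameter group on \(\mathfrak S^2(\Fock)\), which acts diagonally on the matrix units \(|\phi_i\rangle\langle\phi_j|\). All sums above have nonnegative terms, so Tonelli's theorem justifies the rearrangements.
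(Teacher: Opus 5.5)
Your proof is correct. The first reduction is the same as in the paper: the factorization $\Gamma_\lambda A-\Gamma_\lambda^{1/2}A\Gamma_\lambda^{1/2}=\Gamma_\lambda^{1/2}[\Gamma_\lambda^{1/2},A]$, Schatten--H\"older with $\|\Gamma_\lambda^{1/2}\|_{\mathfrak S^2}=1$, and adjoints for \eqref{eq:gibbs-square-root-right}. The two arguments differ in how $\|[\Gamma_\lambda^{1/2},A]\|_{\mathfrak S^2}^2$ is bounded.

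The paper invokes the double-commutator (Falk--Bruch-type) estimate \cite[Theorem~7.2(ii)]{LNR}, which gives $\|[\Gamma_\lambda^{1/2},A]\|_{\mathfrak S^2}^2\le\frac{\lambda}{2}\Tr(\Gamma_\lambda[[A,\mathbb H_\lambda],A])$. It then bounds the double commutator by $2\|A\|_{2,\lambda}\|\Lbb_\lambda A\|_{2,\lambda}$ using trace cyclicity and Cauchy--Schwarz in the state. You instead diagonalize $\mathbb H_\lambda$ and apply the logarithmic-mean inequality entrywise.

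In your basis, the paper's intermediate quantity equals $\frac{\lambda}{2}\sum_{i,j}|p_i-p_j|\,|e_i-e_j|\,|A_{ij}|^2$. This is smaller than your $\lambda\sum_{i,j}\frac{p_i+p_j}{2}|e_i-e_j|\,|A_{ij}|^2$, since $|p_i-p_j|\le p_i+p_j$, but after Cauchy--Schwarz both routes reach the same constant.

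The paper's route is shorter and fits the KMS/double-commutator framework it reuses elsewhere. Yours is self-contained and makes the domain issue explicit. For a general $A\in\Dom(\Lbb_\lambda)$, the expression $\Tr(\Gamma_\lambda[[A,\mathbb H_\lambda],A])$ and cyclicity with the unbounded $\mathbb H_\lambda$ need interpretation. Your identification of $\|\Lbb_\lambda A\|_{2,\lambda}^2$ with $\sum_{i,j}(e_i-e_j)^2p_j|A_{ij}|^2$, by contrast, is exact.

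For that identification, record one point: the range of $\iota_\lambda$ is all of $\mathfrak S^2(\Fock)$, because $\Gamma_\lambda^{1/2}$ is injective with dense range. Then, by \eqref{eq:weighted-HS-conjugation}, the generator is multiplication by $\ii(e_i-e_j)$ on the matrix units $|\phi_i\rangle\langle\phi_j|$, with its maximal domain.
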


\begin{proof}
We use the Schatten--H\"older inequality to obtain
\[
 \|A\Gamma_\lambda-\Gamma_\lambda^{1/2}A\Gamma_\lambda^{1/2}\|_{\mathfrak S^1(\Fock)}
 =\|\Gamma_\lambda A-\Gamma_\lambda^{1/2}A\Gamma_\lambda^{1/2}\|_{\mathfrak S^1(\Fock)}
 \le \|A\Gamma_\lambda^{1/2}-\Gamma_\lambda^{1/2}A\|_{\mathfrak S^2(\Fock)}.
\]
Applying \cite[Theorem~7.2(ii)]{LNR} with
\(H=\lambda\mathbb H_\lambda\) and \(s=1/2\), and then using
the Cauchy--Schwarz inequality, we obtain
\begin{align*}
 \|A\Gamma_\lambda^{1/2}-\Gamma_\lambda^{1/2}A\|_{\mathfrak S^2(\Fock)}^2
 &=2\left[\Tr(A^2\Gamma_\lambda)
   -\Tr(A\Gamma_\lambda^{1/2}A\Gamma_\lambda^{1/2})\right]\\
 &\le \frac{\lambda}{2}
   \Tr\!\left(\Gamma_\lambda[[A,\mathbb H_\lambda],A]\right)\le \lambda\norm{A}_{2,\lambda}
       \norm{\Lbb_\lambda A}_{2,\lambda}.
\end{align*}
In the last step, we use trace cyclicity and
\(\Lbb_\lambda A=\ii[\mathbb H_\lambda,A]\).
\end{proof}

\begin{corollary}\label{cor:positive-decomp}
Let \(F\in\Acal_{\rm cyl}\) be real-valued, and set
\[
 A_\lambda=\Op_{\lambda,\gH}^{\rm A}F,\qquad
 M=\|F\|_\infty,\qquad
 A_\lambda^+=A_\lambda+M\1,
\qquad
 T_\lambda=\Gamma_\lambda^{1/2}A_\lambda^+
 \Gamma_\lambda^{1/2}.
\]
Then
\[
 0\le A_\lambda^+\le2M\1,
\]
and
\begin{equation}\label{eq:T-domination}
 0\le T_\lambda\le2M\Gamma_\lambda.
\end{equation}
Moreover, for every bounded operator \(C_\lambda\),
\begin{equation}\label{eq:ordered-positive-reduction}
 \left|\Tr\!\left(C_\lambda(\Gamma_\lambda A_\lambda-T_\lambda+M\Gamma_\lambda)\right)\right|
 \le \|C_\lambda\|_{\mathrm{op}}
 \bigl(\lambda M\|\Lbb_\lambda A_\lambda\|_{2,\lambda}\bigr)^{1/2}.
\end{equation}
\end{corollary}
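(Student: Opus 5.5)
The argument has three short steps: operator bounds for \(A_\lambda^+\), the domination \eqref{eq:T-domination} by conjugation, and the trace estimate \eqref{eq:ordered-positive-reduction} from Lemma~\ref{lem:gibbs-square-root-comparison}.

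\emph{Bounds on \(A_\lambda^+\).} Since \(F\) is real-valued with \(-M\le F\le M\), positivity of the anti-Wick quantization gives
\[
-M\1\le A_\lambda=\Op_{\lambda,\gH}^{\rm A}F\le M\1 .
\]
Here we also use unitality: \(\Op_{\lambda,\gH}^{\rm A}1=\1\), which follows from the coherent-state resolution \eqref{eq:coherent-resolution}. Adding \(M\1\) yields \(0\le A_\lambda^+\le 2M\1\).

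\emph{Domination of \(T_\lambda\).} We conjugate this operator inequality by the bounded self-adjoint operator \(\Gamma_\lambda^{1/2}\). For any vector \(\psi\),
\[
\langle\psi,T_\lambda\psi\rangle=\langle\Gamma_\lambda^{1/2}\psi,A_\lambda^+\Gamma_\lambda^{1/2}\psi\rangle\in\bigl[0,\,2M\langle\psi,\Gamma_\lambda\psi\rangle\bigr].
\]
This is exactly \eqref{eq:T-domination}.

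\emph{The trace estimate.} First we rewrite the operator inside the trace:
\[
\Gamma_\lambda A_\lambda-T_\lambda+M\Gamma_\lambda=\Gamma_\lambda A_\lambda^+-\Gamma_\lambda^{1/2}A_\lambda^+\Gamma_\lambda^{1/2}=\Gamma_\lambda A_\lambda-\Gamma_\lambda^{1/2}A_\lambda\Gamma_\lambda^{1/2}.
\]
The constant \(M\) cancels because it commutes with \(\Gamma_\lambda^{1/2}\).

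Next, \(|\Tr(CX)|\le\|C\|_{\mathrm{op}}\|X\|_{\mathfrak S^1}\) reduces the claim to Lemma~\ref{lem:gibbs-square-root-comparison}, inequality \eqref{eq:gibbs-square-root-left}, applied to \(A=A_\lambda\). This application needs two facts:
\begin{itemize}
\item \(A_\lambda=A_\lambda^*\), since \(\Op_{\lambda,\gH}^{\rm A}\overline F=(\Op_{\lambda,\gH}^{\rm A}F)^*\) and \(F\) is real;
\item \(A_\lambda\in\mathcal B(\Fock)\cap\Dom(\Lbb_\lambda)\), which is Theorem~\ref{thm:one-generator}.
\end{itemize}
The lemma then bounds the trace norm by \((\lambda\|A_\lambda\|_{2,\lambda}\|\Lbb_\lambda A_\lambda\|_{2,\lambda})^{1/2}\).

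Finally, since \(\Gamma_\lambda\) is a state, \(\|A_\lambda\|_{2,\lambda}\le\|A_\lambda\|_{\mathrm{op}}\le M\), and \eqref{eq:ordered-positive-reduction} follows.

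No step is a real obstacle. The only points needing care are that \(A_\lambda\) lies in the domain of the generator, which Theorem~\ref{thm:one-generator} supplies, and the cancellation of the constant \(M\) in the rewriting above.
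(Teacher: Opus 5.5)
Your proof is correct and follows essentially the same route as the paper. It uses anti-Wick positivity and unitality to get \(0\le A_\lambda^+\le 2M\1\), conjugation by \(\Gamma_\lambda^{1/2}\) for the domination, and Lemma~\ref{lem:gibbs-square-root-comparison} together with \(\|A_\lambda\|_{2,\lambda}\le M\) and \(\Gamma_\lambda^{1/2}A_\lambda\Gamma_\lambda^{1/2}=T_\lambda-M\Gamma_\lambda\) for the trace estimate. The only cosmetic difference is in the last step: the paper starts from \(\Tr(A_\lambda C_\lambda\Gamma_\lambda)\) and uses trace cyclicity, while you bound \(\Tr(C_\lambda X)\) directly by \(\|C_\lambda\|_{\mathrm{op}}\|X\|_{\mathfrak S^1}\).
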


\begin{proof}
Since \(0\le F+M\le 2M\), positivity of the anti-Wick
quantization and the identity
\(\Op_{\lambda,\gH}^{\rm A}1=\1\) imply
\[
0\le A_\lambda^+
=\Op_{\lambda,\gH}^{\rm A}(F+M)
\le 2M\1.
\]
 Thus
\eqref{eq:T-domination} follows.  Moreover,
\(\|A_\lambda\|_{2,\lambda}\le\|A_\lambda\|_{\mathrm{op}}\le M\).
By cyclicity and
Lemma~\ref{lem:gibbs-square-root-comparison},
\[
 \Tr(A_\lambda C_\lambda\Gamma_\lambda)
 =\Tr(C_\lambda\Gamma_\lambda A_\lambda)
 =\Tr(C_\lambda\Gamma_\lambda^{1/2}A_\lambda
       \Gamma_\lambda^{1/2})+\mathcal R_\lambda,
\]
where
\[
 |\mathcal R_\lambda|\le\|C_\lambda\|_{\mathrm{op}}
 \bigl(\lambda M\|\Lbb_\lambda A_\lambda\|_{2,\lambda}\bigr)^{1/2}.
\]
Substituting
\(\Gamma_\lambda^{1/2}A_\lambda\Gamma_\lambda^{1/2}
=T_\lambda-M\Gamma_\lambda\) implies \eqref{eq:ordered-positive-reduction}.
\end{proof}

The next result is a compactness criterion for sequences of positive
trace-class operators \(T_n\) satisfying
\(0\le T_n\le C\Gamma_{\lambda_n}\).  

\begin{theorem}
\label{thm:positive-compactness}
Let \(\lambda_n\downarrow0\), and let
\(T_n\in\mathfrak S^1(\Fock)\) be positive operators satisfying
$ 0\le T_n\le C\Gamma_{\lambda_n}$
with a constant \(C\) independent of \(n\).  Define
$ T_n(t)
 :=\ee^{-\ii t\mathbb H_{\lambda_n}}T_n
   \ee^{\ii t\mathbb H_{\lambda_n}}.$
After passing to a subsequence, there exists a positive measure curve
\((\mu_t)_{t\in\mathbb R}\) such that, for every
\(G\in\Acal_{\rm cyl}\) and \(T>0\),
\begin{equation}\label{eq:positive-limit}
 \lim_{n\to\infty}
 \sup_{|t|\le T}
 \left|
  \Tr(\Op_{\lambda_n,\gH}^{\rm A}G\,T_n(t))
  -\int G\,\dd\mu_t
 \right|
 =0.
\end{equation}
More precisely,
\begin{equation}\label{eq:positive-density-form}
 \mu_t=g_t\nu,
 \qquad
 0\le g_t\le C
 \quad\nu\hbox{-almost everywhere},
 \qquad t\in\mathbb R.
\end{equation}
In particular,
\(0\le\mu_t\le C\nu\) for \(t\in\mathbb R\),
and \(\mu_t\) is concentrated on \(\Xcal\).  For every
\(\kappa>(d-2)/2\), the curve is weakly continuous as a curve of finite
Borel measures on \(H^{-\kappa}(\mathbb T^d)\).  It satisfies the
integrated Liouville equation
\begin{equation}\label{eq:positive-Liouville}
 \int G\,\dd\mu_t
 -\int G\,\dd\mu_0
 =
 \int_0^t\int\Lcal_0G\,\dd\mu_s\,\dd s,
 \qquad G\in\Acal_{\rm cyl}.
\end{equation}
\end{theorem}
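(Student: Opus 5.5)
The plan is to work throughout in the weighted Hilbert spaces and to encode \(T_n\) as a vector there. Since \(0\le T_n\le C\Gamma_{\lambda_n}\), there is a bounded operator \(X_n=\Gamma_{\lambda_n}^{-1/2}T_n\Gamma_{\lambda_n}^{-1/2}\) (defined via the Douglas factorization) with \(0\le X_n\le C\1\) and \(T_n=\Gamma_{\lambda_n}^{1/2}X_n\Gamma_{\lambda_n}^{1/2}\). Because \(\ee^{-\ii t\mathbb H}\) commutes with \(\Gamma_\lambda\), we get \(T_n(t)=\Gamma^{1/2}(\Ubb(-t)X_n)\Gamma^{1/2}\). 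Hence, with \(X_n(t):=\Ubb_{\lambda_n}(-t)X_n\),
\[
\Tr(\Op^{\rm A}_{\lambda_n,\gH}G\,T_n(t))=\Tr\bigl(\Gamma_{\lambda_n}^{1/2}\Op^{\rm A}_{\lambda_n,\gH}G\,\Gamma_{\lambda_n}^{1/2}X_n(t)\bigr),
\]
and \(\|X_n(t)\|_{2,\lambda_n}\le C\) uniformly in \(t\). By Lemma~\ref{lem:gibbs-square-root-comparison} (applied to the real and imaginary parts of \(\Op^{\rm A}G\)) together with Theorem~\ref{thm:one-generator}, this equals \(\ip{\Op^{\rm A}_{\lambda_n,\gH}\overline G}{X_n(t)}_{\lambda_n}+O(\lambda_n^{1/2})\), uniformly in \(t\). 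The problem therefore reduces to identified weak compactness of the uniformly bounded family \(X_n(t)\).

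The limit is extracted as follows. For a fixed countable \(L^2(\nu)\)-dense set of \(G\in\Acal_{\rm cyl}\) and rational \(t\), the numbers \(\phi_n(G,t):=\Tr(\Op^{\rm A}G\,T_n(t))\) are bounded by \(C\|\Op^{\rm A}G\|_{2,\lambda_n}\to C\|G\|_{L^2(\nu)}\). Equicontinuity in \(t\) comes from invariance of \(\Gamma_\lambda\):
\[
\frac{\dd}{\dd t}\phi_n(G,t)=\Tr\bigl((\Lbb_{\lambda_n}\Op^{\rm A}G)\,T_n(t)\bigr),
\]
which is bounded by \(C\|\Lbb\Op^{\rm A}G\|_{2,\lambda_n}\), and this is uniformly bounded by \eqref{eq:generator-bound}. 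Here I use \(|\Tr(YT)|\le\Tr(Y^*YT)^{1/2}\Tr(T)^{1/2}\le C\|Y\|_{2,\lambda}\), which follows from \(T\le C\Gamma\). A diagonal argument plus Arzelà–Ascoli then gives a subsequence along which \(\phi_n(G,\cdot)\to\phi(G,\cdot)\) locally uniformly for the dense countable family. The bound \(|\phi(G,t)|\le C\|G\|_{L^1(\nu)}\)-type domination is what makes the limit extend. Positivity of the anti-Wick quantization gives \(\phi(G,t)\ge0\) for \(G\ge0\), and \(\phi(G,t)\le C\int G\,\dd\nu\) follows from \(T_n\le C\Gamma\) and \eqref{eq:static-mean}. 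By density these extend to all of \(\Acal_{\rm cyl}\), and \(\phi(\cdot,t)\) extends to a positive functional on \(L^1(\nu)\) dominated by \(C\nu\). Riesz representation then yields \(g_t\) with \(0\le g_t\le C\), which is \eqref{eq:positive-density-form}. Uniform convergence in \(t\) on all of \(\Acal_{\rm cyl}\) follows from the same equicontinuity and an \(\varepsilon/3\) argument, using \(|\phi_n(G-G',t)|\le C\|\Op^{\rm A}(G-G')\|_{2,\lambda_n}\).

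The Liouville equation comes from integrating the derivative formula: \(\phi_n(G,t)-\phi_n(G,0)=\int_0^t\Tr((\Lbb\Op^{\rm A}G)T_n(s))\,\dd s\). The step I expect to be the main obstacle is passing to the limit inside this integral, since \(\Lbb\Op^{\rm A}G\) is not itself a quantized cylinder function. I would use Theorem~\ref{thm:one-generator}: given \(\varepsilon\), choose \(H\in\Acal_{\rm cyl}\) with \(\|\Lcal_0G-H\|_{L^2(\nu)}<\varepsilon\) and \(\limsup\|\Lbb\Op^{\rm A}G-\Op^{\rm A}H\|_{2,\lambda_n}<\varepsilon\). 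The Cauchy–Schwarz bound above then controls the difference by \(C^{1/2}\varepsilon\) uniformly in \(s\), and on the limit side \(|\int(\Lcal_0G-H)g_s\,\dd\nu|\le C\varepsilon\). Since the convergence for \(H\) is locally uniform in \(s\), \eqref{eq:positive-Liouville} follows.

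Finally, weak continuity on \(H^{-\kappa}\) for \(\kappa>(d-2)/2\) requires two ingredients. The first is continuity of \(t\mapsto\int G\,\dd\mu_t\) for cylinder \(G\), which follows from Liouville. The second is tightness: every \(\mu_t\le C\nu\), and \(\nu\) is concentrated on \(H^{-\kappa'}\) for all \(\kappa'\) in the admissible range, so the family is uniformly tight on \(H^{-\kappa}\) via compact embedding. Combining these, cylinder functions determine the measures, and the total mass \(\mu_t(1)=\Tr T_n\) is constant, which gives weak continuity.
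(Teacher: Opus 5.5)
Your proposal is correct and follows essentially the same route as the paper: the Cauchy--Schwarz bound $|\Tr(xT_n(t))|\le C\|x\|_{2,\lambda_n}$ coming from domination, equi-Lipschitz bounds from \eqref{eq:generator-bound}, Arzel\`a--Ascoli with diagonal extraction, Riesz representation plus positivity to obtain $0\le g_t\le C$, and an $\varepsilon$-approximation of $\Lbb_{\lambda_n}\Op^{\rm A}_{\lambda_n,\gH}G$ via identified convergence to pass to the Liouville equation. Two differences are minor: your Douglas-factorization reformulation in the first paragraph is never actually used, and for weak continuity the paper avoids tightness and Prokhorov by approximating any $G\in C_b(H^{-\kappa})$ in $L^2(\nu)$ by cylinder functions and using $0\le\mu_t\le C\nu$ uniformly in $t$, which is shorter (your tightness argument is also valid).
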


\begin{proof}
We first prove a
weighted Hilbert-space estimate. 
Since \(\Gamma_{\lambda_n}\) commutes with
\(\mathbb H_{\lambda_n}\), the assumed domination is preserved by unitary conjugation; hence
\begin{equation}\label{eq:dynamic-domination}
 0\le T_n(t)\le C\Gamma_{\lambda_n},
 \quad t\in\mathbb R,  \qquad
 \Tr (T_n(t))\le C.
\end{equation}
The Cauchy--Schwarz inequality implies
\begin{equation}\label{eq:positive-dominated-Hilbert-bound}
 \abs{\Tr(xT_n(t))}
  \le \Tr(T_n(t))^{1/2}\Tr(x^*xT_n(t))^{1/2}\le C\norm{x}_{2,\lambda_n},
 \qquad
 x\in\Hcal_{\lambda_n},\quad t\in\mathbb R.
\end{equation}
Thus $x\mapsto\operatorname{Tr}(xT_n(t))$ extends uniquely to a
bounded linear functional on $\mathcal H_{\lambda_n}$.

\smallskip
\noindent\emph{Step 1: compactness of the scalar functions.}
Choose a countable $\mathbb{Q}\ii$-linear subspace
$ \mathscr D\subset\Acal_{\rm cyl}$
which contains the constants, is invariant under complex conjugation,
and is dense in \(L^2(\nu)\). The diagonal limits below are $\mathbb Q(\mathrm i)$-linear on
$\mathscr D$; the uniform $L^2(\nu)$ bound then gives a unique
continuous complex-linear extension to $L^2(\nu)$.
For \(G\in\Acal_{\rm cyl}\), set
$ F_n^G(t):=\Tr(\Op_{\lambda_n,\gH}^{\rm A}G\,T_n(t)).$
 \eqref{eq:positive-dominated-Hilbert-bound} and the norm
convergence following from \eqref{eq:equilibrium-inner-product} imply
\[
 \sup_n\sup_{t\in\mathbb R}|F_n^G(t)|<\infty.
\]
Using
\(
 F_n^G(t)=\Tr\!\left(
  \Ubb_{\lambda_n}(t)(\Op_{\lambda_n,\gH}^{\rm A}G)T_n
 \right),
\)
and the integrated unitary-group formula $$\Ubb_{\lambda}(t)A-\Ubb_{\lambda}(s)A=\int_s^t \Ubb_{\lambda}(r)\Lbb_{\lambda}A\,\dd r,\qquad A\in \textrm{Dom}(\Lbb_\lambda),$$ we obtain
\begin{equation}\label{eq:quantum-positive-integrated}
 F_n^G(t)-F_n^G(s)
 =
 \int_s^t
 \Tr\!\left(
  (\Lbb_{\lambda_n}\Op_{\lambda_n,\gH}^{\rm A}G)T_n(r)
 \right)\,\dd r.
\end{equation}
Hence 
\eqref{eq:generator-bound} implies
\begin{equation}\label{eq:scalar-equi-Lipschitz}
 |F_n^G(t)-F_n^G(s)|
 \le C_G|t-s|,
 \qquad
 C_G:=C\sup_{0<\lambda\le\lambda_0}
 \norm{\Lbb_\lambda\Op_{\lambda,\gH}^{\rm A}G}_{2,\lambda}<\infty.
\end{equation}
Arzel\`a--Ascoli and diagonal extraction over
\(\mathscr D\times\mathbb N\), where \(m\in\mathbb N\) corresponds to
the interval \([-m,m]\), give a subsequence and functions
\(\ell^G\in C(\mathbb R;\mathbb C)\) such that
\begin{equation}\label{eq:D-local-uniform-limit}
 F_n^G\longrightarrow\ell^G
 \quad\hbox{locally uniformly on }\mathbb R
\end{equation}
for every \(G\in\mathscr D\).  By
\eqref{eq:scalar-equi-Lipschitz}, each \(\ell^G\) is
\(C_G\)-Lipschitz.  We work with this subsequence from now on.

\smallskip
\noindent\emph{Step 2: construction and domination of the limiting
measure curve.}
Fix \(t\). Define the map
\[
 \ell_t:\mathscr D\longrightarrow\mathbb C,
 \qquad
 \ell_t(G):=\ell^G(t).
\]
  By
\eqref{eq:positive-dominated-Hilbert-bound} and the norm convergence
following from \eqref{eq:equilibrium-inner-product},
\[
 |\ell_t(G)|
 \le C\lim_{n\to\infty}
 \norm{\Op_{\lambda_n,\gH}^{\rm A}G}_{2,\lambda_n}
 =C\norm{G}_{L^2(\nu)},
 \qquad G\in\mathscr D.
\]
Thus it extends uniquely to a bounded complex-linear functional on
\(L^2(\nu)\).  Hence there is a unique \(g_t\in L^2(\nu)\) such that
\begin{equation}\label{eq:positive-functional-density}
 \ell_t(x)=\int xg_t\,\dd\nu,
 \qquad x\in L^2(\nu),
 \qquad \norm{g_t}_{L^2(\nu)}\le C.
\end{equation}

We extend \eqref{eq:D-local-uniform-limit} from \(\mathscr D\) to every
cylinder function.  For \(G\in\Acal_{\rm cyl}\), \(H\in\mathscr D\),
and \(T>0\), \eqref{eq:positive-dominated-Hilbert-bound}, the norm
convergence following from \eqref{eq:equilibrium-inner-product}, and
\eqref{eq:positive-functional-density} give
\begin{align*}
 \limsup_{n\to\infty}\sup_{|t|\le T}
 |F_n^G(t)-F_n^H(t)|
 &\le C\norm{G-H}_{L^2(\nu)},\\
 \sup_{|t|\le T}
 \left|\int(G-H)g_t\,\dd\nu\right|
 &\le C\norm{G-H}_{L^2(\nu)}.
\end{align*}
Using the density of \(\mathscr D\) in \(L^2(\nu)\), we obtain, for $G\in\Acal_{\rm cyl}$,
\begin{equation}\label{eq:positive-cylinder-local-uniform}
 \lim_{n\to\infty}\sup_{|t|\le T}
 \left|F_n^G(t)-\int Gg_t\,\dd\nu\right|=0.
\end{equation}

If \(G\in\Acal_{\rm cyl}\) is nonnegative, positivity of the anti-Wick
quantization and \(0\le T_n(t)\le C\Gamma_{\lambda_n}\) give
\[
 0\le F_n^G(t)
 \le C\Tr(\Op_{\lambda_n,\gH}^{\rm A}G\,\Gamma_{\lambda_n}).
\]
Letting \(n\to\infty\) and using
\eqref{eq:positive-cylinder-local-uniform} and \eqref{eq:static-mean},
we obtain, for $G\in\Acal_{\rm cyl},\ G\ge0$,
\[
 0\le\int Gg_t\,\dd\nu
 \le C\int G\,\dd\nu.
\]
Since
\(\Acal_{\rm cyl}\) is dense in \(L^2(\nu)\), the extended functional is
positive and bounded above by \(G\mapsto C\int G\,\dd\nu\). We then have
\[
 0\le g_t\le C
 \qquad\nu\hbox{-almost everywhere}.
\]
Define \(\mu_t:=g_t\nu\).  Then \eqref{eq:positive-density-form}
holds, \eqref{eq:positive-cylinder-local-uniform} is
\eqref{eq:positive-limit}, and \(\mu_t\) is concentrated on \(\Xcal\).
Since \(t\mapsto\ell^G(t)\) is continuous for every \(G\in\mathscr D\),
the same is true for every \(G\in C_b(H^{-\kappa})\): indeed,
approximate \(G\) in \(L^2(\nu)\) by functions in \(\mathscr D\) and
use \(0\le\mu_t\le C\nu\), uniformly in \(t\).  Thus
\(t\mapsto\mu_t\) is weakly continuous on \(H^{-\kappa}\) for every
\(\kappa>(d-2)/2\).

\smallskip
\noindent\emph{Step 3: identified convergence and passing to the
Liouville equation.}
We claim that whenever
\[
 x_n\to_{\mathrm{id}}x,
 \qquad
 x_n\in\Hcal_{\lambda_n},\quad x\in\Hcal,
\]
one has, for every \(T>0\),
\begin{equation}\label{eq:positive-identified-uniform}
 \sup_{|t|\le T}
 \left|
  \Tr(x_nT_n(t))-\int x\,\dd\mu_t
 \right|
 \longrightarrow0.
\end{equation}
Given \(\varepsilon>0\), identified convergence gives
\(G\in\Acal_{\rm cyl}\) such that
\[
\norm{G-x}_{L^2(\nu)}<\varepsilon,
\qquad
\limsup_{n\to\infty}
\norm{x_n-\Op_{\lambda_n,\gH}^{\rm A}G}_{2,\lambda_n}
<\varepsilon.
\]
By \eqref{eq:positive-dominated-Hilbert-bound} and
\(0\le\mu_t\le C\nu\), for \(|t|\le T\),
\begin{align*}
	\left|\Tr(x_nT_n(t))-\int x\,\dd\mu_t\right|
	&\le
	C\norm{x_n-\Op_{\lambda_n,\gH}^{\rm A}G}_{2,\lambda_n}
	+\left|F_n^G(t)-\int G\,\dd\mu_t\right|
	+C\norm{G-x}_{L^2(\nu)}.
\end{align*}
The middle term tends to zero locally uniformly by
\eqref{eq:positive-cylinder-local-uniform}.  Taking the upper limit
as \(n\to\infty\), and then letting \(\varepsilon\downarrow0\), proves
\eqref{eq:positive-identified-uniform}.

 By \eqref{eq:one-generator}, we obtain
\[
 \Lbb_{\lambda_n}\Op_{\lambda_n,\gH}^{\rm A}G
 \to_{\mathrm{id}}\Lcal_0G.
\]
Applying \eqref{eq:positive-identified-uniform} to this convergence
gives
\begin{equation}\label{eq:positive-generator-limit}
 \sup_{|r|\le T}
 \left|
  \Tr\!\left(
   (\Lbb_{\lambda_n}\Op_{\lambda_n,\gH}^{\rm A}G)T_n(r)
  \right)
  -\int\Lcal_0G\,\dd\mu_r
 \right|
 \longrightarrow0.
\end{equation}
Taking \(s=0\) in \eqref{eq:quantum-positive-integrated}, using
\eqref{eq:positive-limit} and
\eqref{eq:positive-generator-limit} yields
\[
 \int G\,\dd\mu_t-\int G\,\dd\mu_0
 =\int_0^t\int\Lcal_0G\,\dd\mu_r\,\dd r.
\]
Hence \eqref{eq:positive-Liouville} follows.
\end{proof}

\subsection{Identifying the limit by Liouville uniqueness}
\label{subsec:Liouville-transfer}

We now state the consequences of dominated Liouville uniqueness.
Theorem~\ref{thm:bounded-two-point},
Corollary~\ref{cor:strong-identified-dynamics}, and
Theorem~\ref{thm:bounded-multipoint} are conditional on
statement~\textup{(U)}.  Theorem~\ref{thm:main-identified-dynamics} is
different: it is stated directly under the same hypotheses as
Theorem~\ref{thm:main-bounded}, and its proof is completed after
Section~\ref{sec:DLU} establishes statement~\textup{(U)} for the
present Hartree NLS equation.  

Fix \(1/2<\kappa<1\).  Statement~\textup{(U)} says that
for every \(T>0\), \(C_0<\infty\), and every weakly continuous curve
\((\sigma_t)_{|t|\le T}\) of finite positive Borel measures on
\(H^{-\kappa}(\mathbb T^d)\) satisfying
\begin{equation}\label{eq:dominated-Liouville-equation}
 0\le\sigma_t\le C_0\nu,\qquad
 \int G\,\dd\sigma_t-\int G\,\dd\sigma_s
 =
 \int_s^t\!\int\Lcal_0G\,\dd\sigma_r\,\dd r
\end{equation}
for \(G\in\Acal_{\rm cyl}\) and \(s,t\in[-T,T]\), one has
\begin{equation*}\tag{U}\label{eq:statement-U}
 \sigma_t=(S_t)_\#\sigma_0,\qquad |t|\le T.
\end{equation*}

\begin{theorem}
\label{thm:bounded-two-point}
Assume the same hypotheses as in Theorem~\ref{thm:main-bounded}
and, in addition, statement~\textup{(U)} above.  Then, for every
\(F,G\in\Acal_{\rm cyl}\) and \(t\in\mathbb R\),
\begin{equation*}
 \Tr\left(
  (\Op_{\lambda,\gH}^{\rm A}F)^*
  \Ubb_\lambda(t)(\Op_{\lambda,\gH}^{\rm A}G)
  \Gamma_\lambda
 \right)
 \longrightarrow
 \int
  \overline{F(u)}G(S_tu)\,\dd\nu(u)
 \qquad(\lambda\downarrow0).
\end{equation*}
\end{theorem}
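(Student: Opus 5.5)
By linearity in \(F\), and since \(\Op^{\rm A}_{\lambda,\gH}\overline F=(\Op^{\rm A}_{\lambda,\gH}F)^*\), it suffices to treat real-valued \(F\); the general case follows by splitting \(F\) into its real and imaginary parts. Fix a real \(F\in\Acal_{\rm cyl}\), \(G\in\Acal_{\rm cyl}\), and \(t\). Set \(A_\lambda=\Op^{\rm A}_{\lambda,\gH}F\), \(M=\|F\|_\infty\), and \(T_\lambda=\Gamma_\lambda^{1/2}(A_\lambda+M\1)\Gamma_\lambda^{1/2}\) as in Corollary~\ref{cor:positive-decomp}.

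By cyclicity, the correlation equals \(\Tr\bigl(\Ubb_\lambda(t)(\Op^{\rm A}_{\lambda,\gH}G)\,\Gamma_\lambda A_\lambda\bigr)\). We apply \eqref{eq:ordered-positive-reduction} with \(C_\lambda=\Ubb_\lambda(t)\Op^{\rm A}_{\lambda,\gH}G\), whose operator norm is at most \(\|G\|_\infty\). Since \(\sup_\lambda\|\Lbb_\lambda A_\lambda\|_{2,\lambda}<\infty\) by \eqref{eq:generator-bound}, this gives
\[
 \Tr\bigl(A_\lambda\Ubb_\lambda(t)(\Op^{\rm A}_{\lambda,\gH}G)\Gamma_\lambda\bigr)
 =\Tr\bigl(\Op^{\rm A}_{\lambda,\gH}G\;T_\lambda(t)\bigr)
 -M\,\Tr\bigl(\Op^{\rm A}_{\lambda,\gH}G\,\Gamma_\lambda\bigr)+O(\lambda^{1/2}),
\]
where \(T_\lambda(t)=\ee^{-\ii t\mathbb H_\lambda}T_\lambda\ee^{\ii t\mathbb H_\lambda}\). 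Here we used the invariance \(\Ubb_\lambda(t)\)-conjugation of \(\Gamma_\lambda\) together with cyclicity. The last trace converges to \(\int G\,\dd\nu\) by \eqref{eq:static-mean}, and \(\int G\,\dd\nu=\int G\circ S_t\,\dd\nu\) by invariance of \(\nu\).

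For the main term, take any sequence \(\lambda_n\downarrow0\). Since \(0\le T_{\lambda_n}\le 2M\Gamma_{\lambda_n}\), Theorem~\ref{thm:positive-compactness} yields a subsequence and a weakly continuous Liouville curve \(\mu_t=g_t\nu\) with \(0\le\mu_t\le2M\nu\) and \(\Tr(\Op^{\rm A}G\,T_n(t))\to\int G\,\dd\mu_t\). By statement~(U), applied on each \([-T,T]\) with \(C_0=2M\) (the two-sided integrated form follows from \eqref{eq:positive-Liouville} by subtraction), we get \(\mu_t=(S_t)_\#\mu_0\). The initial measure is identified at \(t=0\):
\[
 \int H\,\dd\mu_0=\lim_n\Tr\bigl(\Op^{\rm A}H\,\Gamma^{1/2}(A+M)\Gamma^{1/2}\bigr).
\]
By Lemma~\ref{lem:gibbs-square-root-comparison} and cyclicity, this equals \(\lim_n\bigl\langle \Op^{\rm A}(F+M),\Op^{\rm A}H\bigr\rangle_{\lambda_n}=\int (F+M)H\,\dd\nu\), using \eqref{eq:equilibrium-inner-product} and the fact that \(F\) is real. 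Hence \(\mu_0=(F+M)\nu\), since \(\Acal_{\rm cyl}\) determines measures dominated by \(\nu\).

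It remains to compute the limit and remove the subsequence. We have \(\int G\,\dd\mu_t=\int G(S_tu)(F(u)+M)\,\dd\nu(u)\), and the \(M\)-term cancels against \(-M\int G\,\dd\nu\). This gives \(\int F\,G\circ S_t\,\dd\nu\). Because the limit is independent of the subsequence, the full family converges.

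The main obstacle is the passage \(\int G\,\dd\mu_t=\int G\circ S_t\,\dd\mu_0\). Here \(G\circ S_t\) is not a cylinder function, but this is exactly what (U) provides as an identity of measures. The identification of \(\mu_0\) also needs care: it must be obtained through the positive functional, with the ordering error controlled by Lemma~\ref{lem:gibbs-square-root-comparison}.
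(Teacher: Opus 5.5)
Your proposal is correct and follows essentially the same route as the paper: reduce to real $F$, pass to the positive operator $T_\lambda=\Gamma_\lambda^{1/2}(A_\lambda+M\1)\Gamma_\lambda^{1/2}$ via Corollary~\ref{cor:positive-decomp} with the uniform generator bound, extract a dominated Liouville limit by Theorem~\ref{thm:positive-compactness}, identify $\mu_0=(F+M)\nu$, and apply statement~(U). The constant $M$ then cancels by invariance of $\nu$, and the subsequence-independence of the limit gives convergence of the full family, exactly as in the paper.
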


\begin{proof}
	We first assume that \(F\) is real-valued.  Set
	\[
	A_\lambda:=\Op_{\lambda,\gH}^{\rm A}F,
	\qquad
	B_\lambda:=\Op_{\lambda,\gH}^{\rm A}G,
	\qquad
	M:=\norm{F}_\infty,
	\]
	and
	\[
	T_\lambda
	:=\Gamma_\lambda^{1/2}(A_\lambda+M\1)
	\Gamma_\lambda^{1/2}.
	\]
	Then Corollary~\ref{cor:positive-decomp} gives
	\[
	0\le T_\lambda\le 2M\Gamma_\lambda.
	\]
	Moreover, Corollary~\ref{cor:positive-decomp} and the uniform bound in
	Theorem~\ref{thm:one-generator} imply that, for every family
	\((C_\lambda)\) bounded uniformly in operator norm,
	\begin{equation}\label{eq:c-5}
		\Tr(C_\lambda T_\lambda)
		-\Tr(C_\lambda\Gamma_\lambda A_\lambda)
		-M\Tr(C_\lambda\Gamma_\lambda)
		\longrightarrow0
		\qquad(\lambda\downarrow0).
	\end{equation}
	Define the scalar quantity
	\(
	\Phi_\lambda
	:=
	\Tr\left(
	A_\lambda\Ubb_\lambda(t)(B_\lambda)\Gamma_\lambda
	\right).
	\)
	Let \(\lambda_n\downarrow0\) be arbitrary.  Apply
	Theorem~\ref{thm:positive-compactness} to
	\(T_n:=T_{\lambda_n}\).  After passing to a subsequence,
	there is a positive weakly continuous Liouville curve
	\((\mu_s)_{s\in\mathbb R}\), with
	\(0\le\mu_s\le2M\nu\), such that
	\[
	T_n(s)
	:=
	\ee^{-\ii s\mathbb H_{\lambda_n}}
	T_n
	\ee^{\ii s\mathbb H_{\lambda_n}},
	\qquad
	\Tr\left(
	\Op_{\lambda_n,\gH}^{\rm A}K\,T_n(s)
	\right)
	\longrightarrow
	\int K\,\dd\mu_s
	\]
	for every \(K\in\Acal_{\rm cyl}\) and \(s\in\mathbb R\).
	
	We first identify the initial measure.  Applying \eqref{eq:c-5} with
	\(C_{\lambda_n}=\Op_{\lambda_n,\gH}^{\rm A}K\), and using trace
	cyclicity together with
	\eqref{eq:static-mean} and
	\eqref{eq:equilibrium-inner-product}, gives
	\[
	\int K\,\dd\mu_0
	=
	\int (F+M)K\,\dd\nu,
	\qquad K\in\Acal_{\rm cyl}.
	\]
	Since \(0\le\mu_0\le2M\nu\) and
	\(\Acal_{\rm cyl}\) is dense in \(L^2(\nu)\), it follows that
	\(
	\mu_0=(F+M)\nu.
	\)
	
	The curve \((\mu_s)\) satisfies
	\eqref{eq:positive-Liouville}.  Subtracting this identity at two
	times and applying statement~\textup{(U)} on an interval containing
	\(0\) and \(t\), we obtain
	\[
	\mu_s=(S_s)_\#\mu_0
	=(S_s)_\#\bigl((F+M)\nu\bigr).
	\]
	
	Now put
	\(
	C_n:=\Ubb_{\lambda_n}(t)(B_{\lambda_n}).
	\)
	Anti-Wick contractivity and unitary invariance give
	\(
	\norm{C_n}_{\mathrm{op}}
	\le\norm{G}_\infty.
	\)
	Therefore \eqref{eq:c-5}, trace cyclicity, and the invariance of
	\(\Gamma_{\lambda_n}\) under the quantum dynamics yield
	\begin{align*}
		\Phi_{\lambda_n}
		&=\Tr(C_n\Gamma_{\lambda_n}A_{\lambda_n})=\Tr(C_nT_n)
		-M\Tr(C_n\Gamma_{\lambda_n})
		+o_n(1)\\
		&=\Tr(B_{\lambda_n}T_n(t))
		-M\Tr(B_{\lambda_n}\Gamma_{\lambda_n})
		+o_n(1),
	\end{align*}
	where $o_n(1)\to0$ as $n\to\infty$. 
	The defining convergence of \(\mu_t\) and
	\eqref{eq:static-mean} now imply, along the subsequence selected above,
	\begin{align*}
		\Phi_{\lambda_n}
		&\longrightarrow
		\int G\,\dd\mu_t-M\int G\,\dd\nu\\
		&=\int(F(u)+M)G(S_tu)\,\dd\nu(u)
		-M\int G\,\dd\nu=\int F(u)G(S_tu)\,\dd\nu(u),
	\end{align*}
	where the last equality follows from \((S_t)_\#\nu=\nu\).
		Thus every sequence \(\lambda_n\downarrow0\) has a subsequence along
	which \(\Phi_{\lambda_n}\) converges to the same limit.  This proves
	the convergence of the full family \(\Phi_\lambda\) as
	\(\lambda\downarrow0\).
	
	Finally, for general \(F\), write
	\(F=F_1+\ii F_2\), where \(F_1,F_2\) are real-valued.  Since
	\[
	(\Op_{\lambda,\gH}^{\rm A}F)^*
	=
	\Op_{\lambda,\gH}^{\rm A}F_1
	-\ii\Op_{\lambda,\gH}^{\rm A}F_2,
	\]
	the result follows from the real-valued case by linearity.

\end{proof}

\begin{corollary}
\label{cor:strong-identified-dynamics}
Assume the hypotheses of Theorem~\ref{thm:bounded-two-point}.  Then, for
every \(t\in\mathbb R\) and \(G\in\Acal_{\rm cyl}\),
\[
 \Ubb_\lambda(t)\Op_{\lambda,\gH}^{\rm A}G
 \to_{\mathrm{id}}\Ucal(t)G.
\]
More generally, for every identified sequence
\(x_\lambda\to_{\mathrm{id}}x\),
\begin{equation}\label{eq:identified-dynamics-strong}
 \Ubb_\lambda(t)x_\lambda\to_{\mathrm{id}}\Ucal(t)x.
\end{equation}
\end{corollary}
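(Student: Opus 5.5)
The plan is to derive weak identified convergence from Theorem~\ref{thm:bounded-two-point}, upgrade it to strong convergence using unitarity, and then pass from cylinder vectors to general identified sequences with Lemma~\ref{lem:identified-calculus}\textup{(iv)}.

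\emph{Weak convergence on cylinder vectors.} Fix \(t\) and \(G\in\Acal_{\rm cyl}\), and set \(x_\lambda:=\Ubb_\lambda(t)\Op_{\lambda,\gH}^{\rm A}G\). Since \(\Ubb_\lambda(t)\) is unitary on \(\Hcal_\lambda\), we have \(\|x_\lambda\|_{2,\lambda}=\|\Op_{\lambda,\gH}^{\rm A}G\|_{2,\lambda}\). By \eqref{eq:equilibrium-inner-product} this tends to \(\|G\|_{L^2(\nu)}\), which equals \(\|\Ucal(t)G\|_{L^2(\nu)}\) by invariance of \(\nu\). In particular \(\limsup_\lambda\|x_\lambda\|_{2,\lambda}<\infty\). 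For any \(g\in\Acal_{\rm cyl}\),
\[
 \ip{\Op_{\lambda,\gH}^{\rm A}g}{x_\lambda}_\lambda
 =\Tr\bigl((\Op_{\lambda,\gH}^{\rm A}g)^*\Ubb_\lambda(t)(\Op_{\lambda,\gH}^{\rm A}G)\Gamma_\lambda\bigr).
\]
By Theorem~\ref{thm:bounded-two-point}, this converges to \(\int\overline g\,G\circ S_t\,\dd\nu=\ip{g}{\Ucal(t)G}_{\Hcal}\). Hence \(x_\lambda\rightharpoonup_{\mathrm{id}}\Ucal(t)G\) in the sense of Definition~\ref{def:identified}.

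\emph{Upgrade to strong convergence.} The weak convergence above, together with the norm convergence \(\|x_\lambda\|_{2,\lambda}\to\|\Ucal(t)G\|_{L^2(\nu)}\), gives \(x_\lambda\to_{\mathrm{id}}\Ucal(t)G\) by Lemma~\ref{lem:identified-calculus}\textup{(i)}. This is the first claim.

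\emph{General identified sequences.} For \eqref{eq:identified-dynamics-strong}, apply Lemma~\ref{lem:identified-calculus}\textup{(iv)} with \(T_\lambda=\Ubb_\lambda(t)\) and \(T=\Ucal(t)\). The uniform operator bound holds because both operators are unitary, so their norms equal \(1\). The hypothesis that \(T_\lambda\Op_{\lambda,\gH}^{\rm A}g\to_{\mathrm{id}}Tg\) for every \(g\in\Acal_{\rm cyl}\) is exactly the cylinder case just established. Part \textup{(iv)} then yields \(\Ubb_\lambda(t)x_\lambda\to_{\mathrm{id}}\Ucal(t)x\) whenever \(x_\lambda\to_{\mathrm{id}}x\).

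All the substantive analysis sits in Theorem~\ref{thm:bounded-two-point}. The only point needing care here is that the weak testing in Definition~\ref{def:identified} uses the pairing \(\ip{\Op_{\lambda,\gH}^{\rm A}g}{\cdot}_\lambda\). This pairing places the adjoint on the left factor, so it matches the ordering \(\Tr(A^*\,\Ubb_\lambda(t)B\,\Gamma_\lambda)\) in the two-point theorem without any rearrangement.
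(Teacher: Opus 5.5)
Your proposal is correct and follows the paper's proof essentially step for step. You obtain weak testing against cylinder vectors from Theorem~\ref{thm:bounded-two-point} and norm convergence from unitarity together with \eqref{eq:equilibrium-inner-product}, combine them via Lemma~\ref{lem:identified-calculus}\textup{(i)}, and then extend to arbitrary identified sequences with Lemma~\ref{lem:identified-calculus}\textup{(iv)} using that \(\Ubb_\lambda(t)\) and \(\Ucal(t)\) have norm one.
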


\begin{proof}
For every \(F\in\Acal_{\rm cyl}\),
Theorem~\ref{thm:bounded-two-point} gives
\[
 \ip{\Op_{\lambda,\gH}^{\rm A}F}
     {\Ubb_\lambda(t)\Op_{\lambda,\gH}^{\rm A}G}_\lambda
 \longrightarrow
 \int\overline F\,G\circ S_t\,\dd\nu
 =\ip F{\Ucal(t)G}_{L^2(\nu)}.
\]
By unitarity and \eqref{eq:equilibrium-inner-product} with \(F=G\),
\[
 \norm{\Ubb_\lambda(t)\Op_{\lambda,\gH}^{\rm A}G}_{2,\lambda}
 =
 \norm{\Op_{\lambda,\gH}^{\rm A}G}_{2,\lambda}
 \longrightarrow
 \norm{G}_{L^2(\nu)}
 =
 \norm{\Ucal(t)G}_{L^2(\nu)}.
\]
Lemma~\ref{lem:identified-calculus}\textup{(i)} therefore yields
\[
 \Ubb_\lambda(t)\Op_{\lambda,\gH}^{\rm A}G
 \to_{\mathrm{id}}\Ucal(t)G.
\]
Lemma~\ref{lem:identified-calculus}\textup{(iv)}, applied with
\(B_\lambda=\Ubb_\lambda(t)\) and \(B=\Ucal(t)\), both of norm one,
gives \eqref{eq:identified-dynamics-strong}.
\end{proof}

\begin{theorem}
\label{thm:main-identified-dynamics}
Assume the same hypotheses as in Theorem~\ref{thm:main-bounded}.  Then
the following statements hold.
\begin{enumerate}[label=\textup{(\roman*)},leftmargin=2.2em]
\item For every \(F\in\Acal_{\rm cyl}\),
\[
 \Op_{\lambda,\gH}^{\rm A}F\in\Dom(\Lbb_\lambda),\qquad
 \Lbb_\lambda \Op_{\lambda,\gH}^{\rm A}F
 \to_{\mathrm{id}}\Lcal_0F,
 \qquad
 \sup_{0<\lambda\le\lambda_0}
 \|\Lbb_\lambda \Op_{\lambda,\gH}^{\rm A}F\|_{2,\lambda}<\infty.
\]
\item For every \(F\in\Acal_{\rm cyl}\) and every fixed
\(t\in\mathbb R\),
\[
 \Ubb_\lambda(t)\Op_{\lambda,\gH}^{\rm A}F
 \to_{\mathrm{id}}\Ucal(t)F.
\]
More generally, \(x_\lambda\to_{\mathrm{id}}x\) implies
\(\Ubb_\lambda(t)x_\lambda\to_{\mathrm{id}}\Ucal(t)x\).
\end{enumerate}
\end{theorem}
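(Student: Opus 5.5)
Part~\textup{(i)} needs nothing new. It is exactly Theorem~\ref{thm:one-generator}: domain membership, identified convergence of $\Lbb_\lambda\Op_{\lambda,\gH}^{\rm A}F$ to $\Lcal_0F$, and the uniform bound \eqref{eq:generator-bound}. That proof used only Sections~\ref{sec:generators}--\ref{sec:higher-density-generator} and not statement~\textup{(U)}, so part~\textup{(i)} holds unconditionally under the hypotheses of Theorem~\ref{thm:main-bounded}.

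For part~\textup{(ii)}, the plan is to reduce to the conditional results of Subsection~\ref{subsec:Liouville-transfer}. Corollary~\ref{cor:strong-identified-dynamics} already gives both assertions of~\textup{(ii)}: the convergence $\Ubb_\lambda(t)\Op_{\lambda,\gH}^{\rm A}F\to_{\mathrm{id}}\Ucal(t)F$, and its extension to arbitrary identified sequences via Lemma~\ref{lem:identified-calculus}\textup{(iv)} with the uniformly norm-one operators $\Ubb_\lambda(t)$ and $\Ucal(t)$. That corollary depends on Theorem~\ref{thm:bounded-two-point}, which in turn assumes only statement~\textup{(U)}. So all that remains is to verify \textup{(U)} for the present equation. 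I would do this by invoking Theorem~\ref{thm:main-DLU}, equivalently Theorem~\ref{thm:DLU} of Section~\ref{sec:DLU}, with some fixed $1/2<\kappa<1$.

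One compatibility check is needed. In the proof of Theorem~\ref{thm:bounded-two-point}, the curve $(\mu_s)$ comes from Theorem~\ref{thm:positive-compactness}. That theorem provides weak continuity on $H^{-\kappa}$ for every $\kappa>(d-2)/2$, which covers $1/2<\kappa<1$ when $d\in\{2,3\}$. It also provides domination $0\le\mu_s\le 2M\nu$ and the Liouville identity \eqref{eq:positive-Liouville} based at $0$. Subtracting two instances of \eqref{eq:positive-Liouville} gives the two-time form \eqref{eq:dominated-Liouville-equation} on any $[-T,T]$, and the one-sided form required by Theorem~\ref{thm:main-DLU} is then immediate. Hence \textup{(U)} applies to the limiting curves, and the chain Theorem~\ref{thm:bounded-two-point} $\Rightarrow$ Corollary~\ref{cor:strong-identified-dynamics} yields~\textup{(ii)}.

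The main obstacle lies entirely in Section~\ref{sec:DLU}, in proving the uniqueness Theorem~\ref{thm:DLU}. The steps there are the superposition representation of a dominated Liouville curve by a measure on integral paths, pathwise uniqueness for the Hartree Duhamel equation via frozen potentials and a Gronwall argument on the density modes, and recovery of the zero mode and phase. Once that theorem is available, the present statement needs only the bookkeeping described above.
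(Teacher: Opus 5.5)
Your proposal is correct and matches the paper's own proof. Part~\textup{(i)} is Theorem~\ref{thm:one-generator}, and part~\textup{(ii)} is Corollary~\ref{cor:strong-identified-dynamics}, which becomes unconditional once statement~\textup{(U)} is supplied by Theorem~\ref{thm:DLU}. Your compatibility check matches what the paper uses in the proof of Theorem~\ref{thm:bounded-two-point}: weak continuity on $H^{-\kappa}$ for $1/2<\kappa<1$, domination by $2M\nu$, and the two-time Liouville identity obtained by subtraction.
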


The proof is given at the end of
Subsection~\ref{subsec:pathwise-uniqueness}.

\begin{theorem}
\label{thm:bounded-multipoint}
Assume the hypotheses of Theorem~\ref{thm:bounded-two-point}.  Then, for
every \(J\ge1\), \(F_1,\ldots,F_J\in\Acal_{\rm cyl}\), and
\(t_1,\ldots,t_J\in\mathbb R\),
\begin{equation}\label{eq:bounded-multipoint}
 \Tr\left(
  \prod_{j=1}^J
  \Ubb_\lambda(t_j)(\Op_{\lambda,\gH}^{\rm A}F_j)
  \Gamma_\lambda
 \right)
 \longrightarrow
 \int
 \prod_{j=1}^JF_j(S_{t_j}u)
 \,\dd\nu(u)
 \qquad\lambda\downarrow0.
\end{equation}
\end{theorem}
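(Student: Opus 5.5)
The plan is to write the ordered product as an iterated left action on \(\Hcal_\lambda\) applied to the unit vector \(\1\), and then read off the trace as a pairing with \(\1\). Since \(\Tr(X\Gamma_\lambda)=\ip{\1}{X}_\lambda\) for bounded \(X\), we have
\[
 \Tr\Bigl(\prod_{j=1}^J\Ubb_\lambda(t_j)(\Op_{\lambda,\gH}^{\rm A}F_j)\Gamma_\lambda\Bigr)
 =\ip{\1}{T_{1,\lambda}\cdots T_{J,\lambda}\1}_\lambda,
\]
where \(T_{j,\lambda}\) is left multiplication by the bounded operator \(\Ubb_\lambda(t_j)(\Op_{\lambda,\gH}^{\rm A}F_j)\). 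Each \(T_{j,\lambda}\) has norm at most \(\|F_j\|_\infty\). The classical counterpart of \(T_{j,\lambda}\) is the multiplication operator \(T_j\) on \(L^2(\nu)\) by \(\Ucal(t_j)F_j=F_j\circ S_{t_j}\), which is bounded by \(\|F_j\|_\infty\). Because \(\Op_{\lambda,\gH}^{\rm A}1=\1\) and \(1\in\Acal_{\rm cyl}\), we get \(\1\to_{\mathrm{id}}1\) trivially. Once each \(T_{j,\lambda}\) is shown to preserve identified convergence, Lemma~\ref{lem:identified-products} with \(x_\lambda=y_\lambda=\1\) gives convergence to \(\ip{1}{\prod_j F_j\circ S_{t_j}}_{L^2(\nu)}\), which is the right-hand side of \eqref{eq:bounded-multipoint}.

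The key step is to show that left multiplication by \(\Ubb_\lambda(t)(\Op_{\lambda,\gH}^{\rm A}F)\) sends \(x_\lambda\to_{\mathrm{id}}x\) to \((F\circ S_t)x\). I would conjugate by the dynamics using \eqref{eq:weighted-unitary-implementation}:
\[
 \bigl(\Ubb_\lambda(t)\Op_{\lambda,\gH}^{\rm A}F\bigr)x_\lambda
 =\Ubb_\lambda(t)\Bigl(\Op_{\lambda,\gH}^{\rm A}F\;\Ubb_\lambda(-t)x_\lambda\Bigr).
\]
Corollary~\ref{cor:strong-identified-dynamics}, applied at time \(-t\), gives \(\Ubb_\lambda(-t)x_\lambda\to_{\mathrm{id}}\Ucal(-t)x\). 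Theorem~\ref{thm:static-argument} then gives \(\Op_{\lambda,\gH}^{\rm A}F\,\Ubb_\lambda(-t)x_\lambda\to_{\mathrm{id}}F\,\Ucal(-t)x\). Applying Corollary~\ref{cor:strong-identified-dynamics} once more, now at time \(t\), yields the limit \(\Ucal(t)\bigl(F\,\Ucal(-t)x\bigr)\).

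It remains to identify this limit with \((F\circ S_t)x\). For bounded \(F\), the Koopman group acts multiplicatively, \(\Ucal(t)(FY)=(\Ucal(t)F)(\Ucal(t)Y)\), since it is composition with \(S_t\) \(\nu\)-a.e. The group property from Theorem~\ref{thm:canonical-flow}(i) gives \(\Ucal(t)\Ucal(-t)=\mathrm{id}\) on \(L^2(\nu)\). Together these give \(\Ucal(t)(F\,\Ucal(-t)x)=(F\circ S_t)x\). The uniform operator bounds needed in Lemma~\ref{lem:identified-products} are immediate from anti-Wick contractivity and unitarity.

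The substantive input is entirely contained in Corollary~\ref{cor:strong-identified-dynamics}, which rests on the two-point theorem and statement (U). The only care needed here is the composition bookkeeping: checking that \(\Ubb_\lambda(\pm t)\) extended to \(\Hcal_\lambda\) interacts with the left action exactly as in \eqref{eq:weighted-unitary-implementation}, and that the ordering in the product matches the order in which Lemma~\ref{lem:identified-products} applies the \(T_{j,\lambda}\) (innermost \(T_{J,\lambda}\) first).
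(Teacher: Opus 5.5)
Your proposal is correct and follows the paper's proof essentially step for step. You conjugate by the dynamics via \eqref{eq:weighted-unitary-implementation}, apply Corollary~\ref{cor:strong-identified-dynamics} at times $-t$ and $t$ around Theorem~\ref{thm:static-argument}, identify the limit as $(F\circ S_t)x$, and conclude with Lemma~\ref{lem:identified-products} using $x_\lambda=y_\lambda=\1$.
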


\begin{proof}
Let \(F\in\Acal_{\rm cyl}\), \(t\in\mathbb R\), and suppose that
\(x_\lambda\to_{\mathrm{id}}x\).  Corollary~\ref{cor:strong-identified-dynamics},
first at time \(-t\), and Theorem~\ref{thm:static-argument} give
\[
 \Ubb_\lambda(-t)x_\lambda\to_{\mathrm{id}}\Ucal(-t)x,
 \qquad
 \Op_{\lambda,\gH}^{\rm A}F\,\Ubb_\lambda(-t)x_\lambda
 \to_{\mathrm{id}}F\,\Ucal(-t)x.
\]
Applying Corollary~\ref{cor:strong-identified-dynamics} at time \(t\) and using
\eqref{eq:weighted-unitary-implementation}, we obtain
\[
 \bigl[\Ubb_\lambda(t)(\Op_{\lambda,\gH}^{\rm A}F)\bigr]x_\lambda
 =\Ubb_\lambda(t)\bigl(
   \Op_{\lambda,\gH}^{\rm A}F\,\Ubb_\lambda(-t)x_\lambda
  \bigr)
 \to_{\mathrm{id}}
 \Ucal(t)\bigl(F\,\Ucal(-t)x\bigr)
 =(F\circ S_t)x.
\]
Thus these multiplication maps preserve identified convergence and have operator
norm at most \(\norm{F}_\infty\).  Apply
Lemma~\ref{lem:identified-products} with
\(x_\lambda=y_\lambda=\1\) and \(x=y=1\).  Since
\(\1\to_{\mathrm{id}}1\), this gives
\eqref{eq:bounded-multipoint}.
\end{proof}

\section{Uniqueness for Liouville curves dominated by the Gibbs measure}
\label{sec:DLU}

By Section~\ref{sec:liouville-correlations}, it remains to prove
statement~\textup{(U)} for the renormalized Hartree NLS equation.  By
Theorem~\ref{thm:positive-compactness}, the limiting
curves are weakly continuous on \(H^{-\kappa}\).   Let
\(t\mapsto\sigma_t\) be a weakly continuous curve of finite positive
measures satisfying, for $G\in\Acal_{\rm cyl}$,
\begin{equation*}
 0\leq\sigma_t\leq C_0\nu,
 \qquad
 \int G\,\dd\sigma_t-\int G\,\dd\sigma_s
 =\int_s^t\!\int\Lcal_0G\,\dd\sigma_r\,\dd r.
\end{equation*}

We first use the superposition
principle to represent such a curve by a measure on rough integral
paths, but it does not place those paths in the random-averaging
solution class used to construct the flow in dimension three (see Remark~\ref{dny} below).  We
then prove uniqueness directly in the class of paths obtained
from the superposition principle and marginal domination.  Instead of comparing two rough nonlinear
Hartree paths directly, we first prescribe an arbitrary real potential
\(A\in L^1([-T,T];\mathcal W^1)\) and consider the linear equation.
For a prescribed potential, the evolution is linear and unique.  The
question is then how the renormalized terms generated by
\(U_A(t,0)u_0\) change when \(A\) is replaced by another prescribed
potential \(B\).
At the regularity \(H^{-\kappa}\), the renormalized terms
 are obtained as almost-sure limits of regularized quadratic
forms.  Consequently, the self-consistent potentials generated by two
rough paths cannot be compared by a deterministic pointwise Lipschitz
estimate.   Proposition~\ref{prop:transported-density-comparison}
gives this comparison for the transported nonzero Fourier modes.
 After
summing over the interaction modes, this estimate gives a Gronwall-type argument for the mean-zero parts of the two potentials.

The proof is organized as follows.
Subsection~\ref{subsec:DLU-superposition-regularity}
uses domination and the superposition principle to obtain paths
satisfying the Hartree Duhamel equation. Subsection~\ref{subsec:DLU-density-comparison}
freezes the potentials and proves the comparison of the nonzero
density modes.  Subsection~\ref{subsec:pathwise-uniqueness} first
identifies the paths modulo a scalar phase, then recovers the zero
mode and the phase, and finally passes from pathwise uniqueness to
uniqueness of the dominated Liouville curve.

\begin{remark}
The condition \(0\leq\sigma_t\leq C_0\nu\) is analogous in spirit to
the restriction used in \cite{RZZ17}.  There, the process obtained
from the Dirichlet form is related, after subtracting the corresponding
Ornstein--Uhlenbeck process, to a shifted equation in a class where
uniqueness is available.  Here domination selects the class arising
from the Gibbs limit and transfers the \(\nu\)-almost-sure definitions
and \(\nu\)-integrability estimates to the superposition measure.  We
therefore prove uniqueness for Liouville curves in this dominated
class, rather than for arbitrary distributional curves.  
\end{remark}

For \(s\ge0\), let
\[
 \|w\|_{\mathcal W^s}
 :=\sum_{k\in\mathbb Z^d}\langle k\rangle^s|\widehat w(k)|.
\]
Using
the discrete Young inequality and $\langle p\rangle^r
 \le C\langle p-k\rangle^r\langle k\rangle^{|r|}$, we obtain
\begin{equation}\label{eq:weighted-Wiener-multiplier}
 \|wf\|_{H^r}\le C_s\|w\|_{\mathcal W^s}\|f\|_{H^r},
 \qquad |r|\le s.
\end{equation}
Fix throughout this section
\begin{equation}\label{eq:path-Sobolev-exponent}
 \frac12<\kappa<1.
\end{equation}
For a Borel map
\(\Phi\), \(\Phi_\#\mu\) denotes the push-forward of \(\mu\).  Let \(\mathscr X_T:=C([-T,T];H^{-\kappa}(\mathbb T^d))\) be the path space, and let
\(e_t(\gamma)=\gamma(t)\) denote evaluation on the path space.

\subsection{From Liouville curves to Hartree trajectories}
\label{subsec:DLU-superposition-regularity}

This subsection converts the Eulerian Liouville equation into a
Lagrangian description.  
For each \(k\in\mathbb Z^d\), fix the Borel representative of
\(\rho_k\) on \(H^{-\kappa}\), and define
\begin{equation}\label{eq:Borel-self-consistent-potential}
 A(u)(x):=\sum_{k\in\mathbb Z^d}
 \widehat v(k)\rho_k(u)\ee^{-\ii k\cdot x}.
\end{equation}
 The moment bounds for the density modes,
Gaussian hypercontractivity, and
\(\sum_k\langle k\rangle|\widehat v(k)|<\infty\) imply
\(A:H^{-\kappa}\to\mathcal W^1\) is Borel and, for every
\(1\le r<\infty\),
\[
 A\in L^r(\mu_0;\mathcal W^1)\cap L^r(\nu;\mathcal W^1).
\]
Moreover, \(A(u)\) is real for \(\mu_0\)- and \(\nu\)-almost every
\(u\).  By \eqref{eq:weighted-Wiener-multiplier}, the map
\(u\mapsto A(u)u\) is Borel from \(H^{-\kappa}\) to
\(H^{-\kappa}\) and belongs to
\(L^1(\mu_0;H^{-\kappa})\cap L^1(\nu;H^{-\kappa})\). 
For each \(p\in\mathbb Z^d\),
\(
\langle \mathrm e_p,A(u)u\rangle
=\mathscr F_p(u)
\)
for \(\nu\)-almost every \(u\), with \(\mathscr F_p(u)\) defined in
Subsection~\ref{subsec:classical-generator}.

\begin{proposition}
\label{prop:common-superposition}
Let \(T>0\) and \(0\le C_0<\infty\).  Let
\(t\mapsto\sigma_t\), \(|t|\le T\), be a weakly continuous curve of
finite positive Borel measures on \(H^{-\kappa}\) such that
\begin{equation*}
 0\le\sigma_t\le C_0\nu,
 \qquad |t|\le T,
\end{equation*}
and, for \(G\in\Acal_{\rm cyl}\) and \(s,t\in[-T,T]\),
\begin{equation}\label{eq:superposition-Liouville-assumption}
 \int G\,\dd\sigma_t-\int G\,\dd\sigma_s
 =\int_s^t\!\int\Lcal_0G\,\dd\sigma_r\,\dd r.
\end{equation}
Then there exists a finite positive Borel measure \(\eta\) on
\(\mathscr X_T\) with the following properties.
\begin{enumerate}[label=\textup{(\roman*)},leftmargin=2.2em]
\item For every \(|t|\le T\),
\begin{equation*}
 (e_t)_\#\eta=\sigma_t.
\end{equation*}
\item For \(\eta\)-almost every \(\gamma\), let
$ A_\gamma(t)(x):=A(\gamma(t))(x).$
Then  $A_\gamma\in L^1([-T,T];\mathcal W^1),$
and, for all \(s,t\in[-T,T]\),
\begin{equation}\label{eq:self-consistent-path-Duhamel}
 \gamma(t)=\ee^{-\ii(t-s)h}\gamma(s)
 -\ii\int_s^t\ee^{-\ii(t-r)h}
 A_\gamma(r)\gamma(r)\,\dd r
 \quad\text{in }H^{-\kappa}.
\end{equation}
\end{enumerate}
\end{proposition}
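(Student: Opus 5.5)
The plan is to invoke the superposition principle of Ammari--Farhat--Sohinger \cite{AFS} for measure-valued solutions of Liouville equations on Hilbert spaces. After that, domination $\sigma_t\le C_0\nu$ transfers the $\nu$-almost-sure definitions and integrability of the nonlinearity to almost every path.

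\emph{Step 1: the vector field and its integrability.} Define the Borel vector field $X(u):=-\ii hu-\ii A(u)u$. Since $h$ is unbounded, I view it as a map $H^{-\kappa}\to H^{-\kappa-2}$, or else work in the interaction picture. For every $r$ the moment bounds give $A(\cdot)u\in L^1(\nu;H^{-\kappa})$ and $hu\in L^1(\nu;H^{-\kappa-2})$. Domination then yields
\[
\int_{-T}^T\!\int\bigl(\|hu\|_{H^{-\kappa-2}}+\|A(u)u\|_{H^{-\kappa}}+\|A(u)\|_{\mathcal W^1}\bigr)\,\dd\sigma_t\,\dd t\le 2TC_0\int(\cdots)\,\dd\nu<\infty .
\]
This is the integrability hypothesis in \cite{AFS}.

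\emph{Step 2: identify the Liouville equation.} For $G=f\circ\Pi\in\Acal_{\rm cyl}$, formula \eqref{eq:classical-generator-direct} together with $\langle \mathrm e_p,A(u)u\rangle=\mathscr F_p(u)$ $\nu$-a.e. gives $\Lcal_0G(u)=D_{\Pi X(u)}f(\Pi u)$ for $\nu$-a.e.\ $u$, hence $\sigma_t$-a.e. So \eqref{eq:superposition-Liouville-assumption} is the weak Liouville equation $\partial_t\sigma_t+\nabla\cdot(X\sigma_t)=0$ tested against smooth cylinder functions. Weak continuity on $H^{-\kappa}$ is assumed. The test class of \cite{AFS} consists of cylinder functions $\varphi(\langle e_1,u\rangle,\dots,\langle e_n,u\rangle)$ with $\varphi\in C_c^\infty$ in a Hilbert basis, which the Fourier basis provides. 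If their class uses real coordinates, I pass by real and imaginary parts. If it requires $C_c^\infty$ without the constant, that is harmless.

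\emph{Step 3: apply superposition.} \cite{AFS} gives a finite positive measure $\eta$ on $C([-T,T];H^{-\kappa-2})$, or on $\mathscr X_T$ after using weak continuity. Its marginals are $(e_t)_\#\eta=\sigma_t$, and it is concentrated on paths solving $\gamma(t)=\gamma(s)+\int_s^tX(\gamma(r))\,\dd r$ in the weaker space. To get (i) on $\mathscr X_T$, I will use that $\eta$ sits on continuous $H^{-\kappa-2}$ paths whose marginals live in $H^{-\kappa}$. Continuity in $H^{-\kappa}$ then follows from the Duhamel formula proved next, since its right side is continuous in $H^{-\kappa}$. This matches the choice $\mathscr X_T$ in the statement; alternatively, one applies \cite{AFS} directly in $H^{-\kappa}$ with the interaction-picture field.

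\emph{Step 4: Duhamel form and potential integrability.} By Fubini and (i), $\int\int_{-T}^T\|A(\gamma(r))\|_{\mathcal W^1}\,\dd r\,\dd\eta=\int_{-T}^T\int\|A\|_{\mathcal W^1}\,\dd\sigma_r\,\dd r<\infty$, so $A_\gamma\in L^1([-T,T];\mathcal W^1)$ for $\eta$-a.e.\ $\gamma$. Likewise $A_\gamma\gamma\in L^1(H^{-\kappa})$ by \eqref{eq:weighted-Wiener-multiplier}. A Fubini argument also shows that for $\eta$-a.e.\ $\gamma$ the identity $A(\gamma(r))$ real and $X(\gamma(r))=-\ii h\gamma(r)-\ii A_\gamma(r)\gamma(r)$ holds for a.e.\ $r$. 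The integral equation then holds modewise: $\frac{\dd}{\dd t}\gamma_p=-\ii h_p\gamma_p-\ii(A_\gamma\gamma)_p$ in $W^{1,1}$. Variation of constants for each $p$ gives \eqref{eq:self-consistent-path-Duhamel} coefficientwise, and therefore in $H^{-\kappa}$, since both sides lie there.

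\emph{Main obstacle.} I expect the hardest part to be matching the precise hypotheses of \cite{AFS}: unbounded $h$, the space in which the paths live, and the test class. The fallback is to conjugate by $\ee^{\ii th}$. The transformed curve $\tilde\sigma_t=(\ee^{\ii th})_\#\sigma_t$ satisfies a Liouville equation with the time-dependent field $-\ii\ee^{\ii th}A(\ee^{-\ii th}u)\ee^{-\ii th}u$. This field is integrable in $H^{-\kappa}$ with no loss, since $\ee^{\ii th}$ is an isometry and maps cylinder functions to cylinder functions of the same Fourier modes. I then apply \cite{AFS} there and undo the conjugation, which yields \eqref{eq:self-consistent-path-Duhamel} directly.
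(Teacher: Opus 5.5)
Your proposal is correct and follows essentially the same route as the paper. You apply the superposition principle of \cite{AFS} in $H^{-\kappa-2}$ to the field $-\ii(hu+A(u)u)$, integrable because $\sigma_t\le C_0\nu$. Marginal domination plus Fubini then gives $A_\gamma\in L^1([-T,T];\mathcal W^1)$ and $A_\gamma\gamma\in L^1([-T,T];H^{-\kappa})$ for $\eta$-almost every path, the time-zero marginal places $\gamma(0)$ in $H^{-\kappa}$, and variation of constants yields the Duhamel formula and continuity in $H^{-\kappa}$. The interaction-picture fallback is unnecessary: the paper applies \cite{AFS} directly in $H^{-\kappa-2}$, after extending the Liouville identity to the required cylinder test class by a finite-dimensional cutoff approximation.
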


\begin{proof}
Testing \eqref{eq:superposition-Liouville-assumption} with \(G=1\)
shows that the total mass
\(M:=\sigma_t(H^{-\kappa})\) is independent of \(t\).  If \(M=0\),
the conclusion is immediate, so assume \(M>0\).

Define
\[
 \mathfrak b(x):=-\ii\bigl(hx+A(x)x\bigr).
\]
Since \(\sigma_t\le C_0\nu\),
\begin{equation}\label{eq:full-field-integrability}
 \int_{-T}^T\!\int_{H^{-\kappa-2}}
 \|\mathfrak b(x)\|_{H^{-\kappa-2}}\,
 \dd\sigma_t(x)\,\dd t<\infty.
\end{equation}
By a standard finite-dimensional cutoff approximation,
\eqref{eq:superposition-Liouville-assumption} extends to the cylinder
test class required in \cite[Proposition~2.1]{AFS}.  Applying that
superposition principle to the normalized curve \(M^{-1}\sigma_t\)
and then multiplying the resulting path measure by \(M\), we obtain a
finite positive Borel measure \(\eta\) on
\(C([-T,T];H^{-\kappa-2})\) such that
\((e_t)_\#\eta=\sigma_t\) and, for \(\eta\)-almost every path,
\begin{equation}\label{eq:full-field-integral-curve}
 \gamma(t)=\gamma(s)-\ii\int_s^t
 \bigl(h\gamma(r)+A(\gamma(r))\gamma(r)\bigr)\,\dd r,
 \qquad s,t\in[-T,T].
\end{equation}

We now use marginal domination to improve the path regularity.  If
\(\Psi:H^{-\kappa}\to[0,\infty]\) is Borel and belongs to
\(L^1(\nu)\), Fubini's theorem implies
\begin{equation}\label{eq:marginal-domination-transfer}
 \int\!\int_{-T}^T\Psi(\gamma(t))\,\dd t\,\dd\eta(\gamma)
 \le2TC_0\int\Psi\,\dd\nu.
\end{equation}
By the construction of \(A\), the two Borel functions
\[
 u\longmapsto\|A(u)u\|_{H^{-\kappa}},
 \qquad
 u\longmapsto\|A(u)\|_{\mathcal W^1}
\]
belong to \(L^1(\nu)\).  Applying
\eqref{eq:marginal-domination-transfer} to these two functions shows that, for \(\eta\)-almost every path,
\begin{equation}\label{eq:path-potential-integrability}
 \begin{aligned}
 \int_{-T}^T\|A(\gamma(t))\gamma(t)\|_{H^{-\kappa}}\,\dd t<\infty,
 \qquad
 \int_{-T}^T\|A(\gamma(t))\|_{\mathcal W^1}\,\dd t<\infty.
 \end{aligned}
\end{equation}
The time-zero marginal also implies
\[
 \eta\{\gamma:\gamma(0)\notin H^{-\kappa}\}
 \le C_0\nu\bigl(H^{-\kappa-2}\setminus H^{-\kappa}\bigr)=0.
\]
Using \eqref{eq:full-field-integral-curve} we obtain \eqref{eq:self-consistent-path-Duhamel}.  Hence
\(\gamma\in C([-T,T];H^{-\kappa})\).
\end{proof}

\begin{remark}\label{dny}
One may ask whether the integral paths obtained above can be
identified directly with the Hartree NLS flow from
Theorem~\ref{thm:canonical-flow}.  In dimension three,
\cite{DNYHartree} constructs this flow as the limit of canonical
finite-dimensional approximations by means of random-averaging
operators.  Theorem~1.3 and Remark~1.4 of \cite{DNYHartree} show that
several canonical approximation procedures give the same limit, but
they do not state uniqueness among all paths satisfying the limiting
Duhamel equation.  The Schr\"odinger evolution has no smoothing effect
that would automatically place a path furnished by the superposition
principle in the random-averaging solution class.  A direct appeal to
\cite{DNYHartree} would therefore require an additional weak--strong
uniqueness statement showing that every dominated Duhamel path belongs
to that class.  The comparison argument below avoids this missing
step.
\end{remark}

\subsection{Comparing nonzero density modes for prescribed potentials}
\label{subsec:DLU-density-comparison}

This subsection proves the comparison estimate for the renormalized
density modes.  We first fix two real potentials \(A\) and \(B\).
The corresponding equations are linear, and
Lemma~\ref{lem:prescribed-potential-propagators} gives their
propagators.  Proposition~\ref{prop:transported-density-comparison}
compares each nonzero density mode along the two linear flows.
Lemma~\ref{lem:sharp-density-identification} then uses marginal
domination to identify these limits with the fixed Borel
representatives \(\rho_q\) used to define the potential associated
with a path.

\begin{lemma}
\label{lem:prescribed-potential-propagators}
Let \(A\in L^1([-T,T];\mathcal W^1)\) be real.  For every
\(\alpha\in\{-\kappa,0,\kappa\}\), there is a unique strongly
continuous two-parameter evolution family
$U_A(t,s)\in\mathcal B(H^\alpha),$ for $-T\le s,t\le T,$
characterized by
\begin{equation*}
 U_A(t,s)f
 =\ee^{-\ii(t-s)h}f
 -\ii\int_s^t\ee^{-\ii(t-r)h}A(r)
 U_A(r,s)f\,\dd r.
\end{equation*}
  The family obeys
\begin{equation}\label{eq:prescribed-propagator-cocycle}
 U_A(s,s)=\1,\qquad U_A(t,r)U_A(r,s)=U_A(t,s).
\end{equation}
It satisfies \begin{equation}\label{eq:prescribed-propagator-bound}
 \|U_A(t,s)\|_{\mathrm{op}}
 \leq \ee^{C_\kappa
 \int_{\min\{s,t\}}^{\max\{s,t\}}
 \|A(r)\|_{\mathcal W^1}\,\dd r}.
\end{equation}
For every \(u_s\in H^{-\kappa}\), the equation
\[
 \ii\partial_tu=(h+A(t))u,\qquad u(s)=u_s,
\]
has a unique solution \(U_A(t,s)u_s\) in
$ C([-T,T];H^{-\kappa})
 \cap W^{1,1}([-T,T];H^{-\kappa-2})$.
 When $\alpha=0$, $U_A$ is unitary and
 $U_A(t,s)^*=U_A(s,t).$
\end{lemma}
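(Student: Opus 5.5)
The plan is a Picard iteration in Duhamel form on each space \(H^\alpha\), \(\alpha\in\{-\kappa,0,\kappa\}\), using \eqref{eq:weighted-Wiener-multiplier} with \(s=1\) and \(|\alpha|\le\kappa<1\). This gives \(\|A(r)f\|_{H^\alpha}\le C_\kappa\|A(r)\|_{\mathcal W^1}\|f\|_{H^\alpha}\). Since \(\ee^{-\ii th}\) is unitary on every \(H^\alpha\), the map
\[
 \Phi(u)(t)=\ee^{-\ii(t-s)h}f-\ii\int_s^t\ee^{-\ii(t-r)h}A(r)u(r)\,\dd r
\]
sends \(C([-T,T];H^\alpha)\) into itself. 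Strong continuity of the integral follows from dominated convergence, since \(r\mapsto\|A(r)\|_{\mathcal W^1}\) is integrable.

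To avoid smallness conditions, I will iterate the Volterra series directly. The \(n\)-th term is bounded by
\[
 \frac{1}{n!}\Bigl(C_\kappa\int_{\min\{s,t\}}^{\max\{s,t\}}\|A(r)\|_{\mathcal W^1}\dd r\Bigr)^n\|f\|_{H^\alpha},
\]
so the series converges and its sum bounds the solution by the exponential in \eqref{eq:prescribed-propagator-bound}. Equivalently, one can use the weighted norm \(\sup_t \ee^{-2C_\kappa|\int_s^t\|A\|\,\dd r|}\|u(t)\|\), for which \(\Phi\) is a contraction. Uniqueness in \(C([-T,T];H^\alpha)\) follows from Gronwall's inequality applied to the difference of two solutions. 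Uniqueness also gives the cocycle identity \eqref{eq:prescribed-propagator-cocycle}, because both \(U_A(t,r)U_A(r,s)f\) and \(U_A(t,s)f\) solve the Duhamel equation starting at time \(r\) with the same datum. The families on different \(\alpha\) agree on intersections, again by uniqueness in the weakest space \(H^{-\kappa}\). Joint strong continuity in \((t,s)\) comes from the uniform bound together with continuity of the Duhamel integral in both endpoints.

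For the differential formulation, I will first show that any Duhamel solution in \(C([-T,T];H^{-\kappa})\) lies in \(W^{1,1}([-T,T];H^{-\kappa-2})\). Indeed, \(hu\in C(H^{-\kappa-2})\) and \(Au\in L^1(H^{-\kappa})\), and differentiating the Duhamel formula in the distributional sense gives \(\ii\partial_tu=(h+A)u\). Conversely, suppose \(u\) lies in that class and solves the equation. Then \(w(t)=\ee^{\ii(t-s)h}u(t)\) is \(W^{1,1}\) with values in \(H^{-\kappa-2}\), with derivative \(-\ii\,\ee^{\ii(t-s)h}A(t)u(t)\). Integrating recovers the Duhamel formula in \(H^{-\kappa-2}\), and hence in \(H^{-\kappa}\) because all terms belong there. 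Uniqueness then reduces to the Duhamel uniqueness already proved. This product-rule step for \(\ee^{\ii th}\) acting on \(H^{-\kappa-2}\)-valued \(W^{1,1}\) functions is the only mildly technical point. I would handle it by testing against smooth \(\phi\in H^{\kappa+4}\), for which \(\ee^{-\ii th}\phi\) is \(C^1\) into \(H^{\kappa+2}\).

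For unitarity at \(\alpha=0\), I would first take \(A\) smooth in time, for example by mollifying in time and truncating in frequency. Then \(\frac{\dd}{\dd t}\|u(t)\|_{L^2}^2=2\Imm\langle u,(h+A)u\rangle=0\) for \(H^2\) data, because \(A\) is real so \(h+A(t)\) is self-adjoint. Passing to general \(A\in L^1\mathcal W^1\) uses the stability estimate \(\|U_A-U_{A_n}\|\lesssim\int\|A-A_n\|_{\mathcal W^1}\), obtained from Duhamel and Gronwall. This shows \(U_A(t,s)\) is isometric. The cocycle identity gives the inverse \(U_A(s,t)\), so \(U_A(t,s)\) is unitary and \(U_A(t,s)^*=U_A(t,s)^{-1}=U_A(s,t)\). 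I expect no serious obstacle; the main care is in the approximation argument for unitarity and in the \(W^{1,1}\) uniqueness class.
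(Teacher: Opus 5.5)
Your proposal is correct and is the standard argument the paper has in mind. The paper's proof only records the multiplier bound \(\|A(t)f\|_{H^\alpha}\le C_\kappa\|A(t)\|_{\mathcal W^1}\|f\|_{H^\alpha}\) from \eqref{eq:weighted-Wiener-multiplier} and then appeals to ``the standard argument'', i.e.\ the Duhamel/Volterra iteration, Gronwall uniqueness, and cocycle-by-uniqueness that you carry out; your treatment of the \(W^{1,1}([-T,T];H^{-\kappa-2})\) uniqueness class and of unitarity (via time-mollified, frequency-truncated real potentials plus the Duhamel--Gronwall stability estimate) correctly fills in details the paper leaves implicit.
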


\begin{proof}
For \(\alpha\in\{-\kappa,0,\kappa\}\), let
\(S_0(t)=\ee^{-\ii th}\). Using
\eqref{eq:weighted-Wiener-multiplier},
we obtain
\[
 \|A(t)f\|_{H^\alpha}
 \le C_{\kappa}\|A(t)\|_{\mathcal W^1}\|f\|_{H^\alpha}.
\]
The standard argument implies the result.
\end{proof}

 To state Proposition~\ref{prop:transported-density-comparison}, we use
 the potential-independent full-measure set \(\Omega_*\) and the control
 function \(\mathcal K\) constructed in
 Appendix~\ref{app:transported-density-calculus}.  On this common
 full-measure set, all the elementary centered quadratic expressions
 entering the proof are defined simultaneously.  Consequently, the
 density-mode limits below are well defined for every pair of prescribed
 potentials.  This allows the same estimate to be applied to two paths
 with the same initial field but different associated potentials.

With \(\mathsf H_\varepsilon:=\ee^{-\varepsilon h/2}\), define, for
\(q\in\mathbb Z^d\), the regularized density by
\[
\rho_{\varepsilon,q}(w)
:=\langle \mathsf H_\varepsilon w,
M_q\mathsf H_\varepsilon w\rangle
-\Tr(h^{-1}\mathsf H_\varepsilon
M_q\mathsf H_\varepsilon).
\]
Whenever the following limit exists, set
\[
\widetilde\rho_q^A(t;u)
:=\lim_{\varepsilon\downarrow0}
\rho_{\varepsilon,q}(U_A(t,0)u).
\]
 We also put
\[
 \mathfrak a_T(A,B)
 :=\int_{-T}^T
   \bigl(\|A(r)\|_{\mathcal W^1}
        +\|B(r)\|_{\mathcal W^1}\bigr)\,\dd r,
 \qquad
 I_t:=[\min\{0,t\},\max\{0,t\}].
\]
For \(q\ne0\), the Fourier shift \(M_q\) has zero diagonal, and hence
\[
 \Tr(h^{-1}\mathsf H_\varepsilon
 M_q\mathsf H_\varepsilon)=0,
 \qquad
 \rho_{\varepsilon,q}(w)
 =\langle\mathsf H_\varepsilon w,
 M_q\mathsf H_\varepsilon w\rangle.
\]
\begin{proposition}
\label{prop:transported-density-comparison}
Let \(u\in\Omega_*\), and let
\(A,B\in L^1([-T,T];\mathcal W^1)\) be real.  For every
\(q\in\mathbb Z^d\setminus\{0\}\) and \(t\in[-T,T]\), the two limits
\(\widetilde\rho_q^A(t;u)\) and \(\widetilde\rho_q^B(t;u)\) exist and
satisfy
\begin{equation}\label{eq:transported-density-comparison}
 \begin{aligned}
 |\widetilde\rho_q^A(t;u)-\widetilde\rho_q^B(t;u)|
 &\leq C_T(1+\mathcal K(u))\langle q\rangle
       \ee^{C\mathfrak a_T(A,B)}
       \int_{I_t}\|A(r)-B(r)\|_{\mathcal W^1}\,\dd r .
 \end{aligned}
\end{equation}
\end{proposition}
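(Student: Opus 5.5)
The plan is to reduce everything to the Gaussian (free) initial datum by viewing $U_A(t,0)$ as a perturbation of $\ee^{-\ii th}$. I would use Duhamel (interaction picture) and write the regularized quadratic form $\rho_{\varepsilon,q}(U_A(t,0)u)$ as a centered quadratic expression in $u$ whose kernel is a bounded operator. Concretely,
\[
 \rho_{\varepsilon,q}(U_A(t,0)u)=\langle u,\,\mathcal T^A_{\varepsilon}(t)\,u\rangle,
 \qquad
 \mathcal T^A_{\varepsilon}(t):=U_A(t,0)^*\mathsf H_\varepsilon M_q\mathsf H_\varepsilon U_A(t,0).
\]
The key structural point is that $\mathcal T^A_\varepsilon(t)$ has zero trace against $h^{-1}$ only up to an error. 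Because $U_A$ is unitary on $L^2$ and $q\neq0$, the diagonal part vanishes for $A=0$. For general $A$, I would expand $U_A(t,0)=\ee^{-\ii th}V_A(t)$ with $V_A$ solving the interaction-picture equation. Then $\mathcal T^A_\varepsilon=V_A^*\,\ee^{\ii th}\mathsf H_\varepsilon M_q\mathsf H_\varepsilon\ee^{-\ii th}\,V_A$, which is a conjugation of a fixed Fourier shift by a propagator that is close to the identity in the $\mathcal W^1$-weighted sense.

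The first step is to represent $V_A(t)-\1$ through the Dyson series, using coefficient measures in Fourier space. Each term is an iterated time integral of products of $\ee^{\ii rh}A(r)\ee^{-\ii rh}$, which act in Fourier space as shift-by-$k$ with phase $\ee^{\ii r(h_{p+k}-h_p)}$ and weight $\widehat{A(r)}(k)$. Summing the Dyson series gives a complex measure on shifts with total mass at most $\ee^{\|A\|_{L^1\mathcal W^0}}$, with one extra factor $\langle k\rangle$ available from $\mathcal W^1$. The quadratic form $\langle u,\mathcal T^A_\varepsilon u\rangle$ then becomes a superposition, over this coefficient measure, of elementary forms $\langle \ee^{-\ii sh}\mathsf H_\varepsilon u, M_{q'}\ee^{-\ii s'h}\mathsf H_\varepsilon u\rangle$ and their centered versions. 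These are exactly the quantities controlled on $\Omega_*$ by $\mathcal K(u)$: existence of the $\varepsilon\downarrow0$ limit, and a bound of the form $(1+\mathcal K(u))\langle q'\rangle$ uniformly in $s,s'\in[-T,T]$. This gives existence of $\widetilde\rho^A_q$ and $\widetilde\rho^B_q$ by dominated convergence in the coefficient measure. The renormalization constant is the diagonal part $\Tr(h^{-1}\cdot)$ of each elementary form, and it cancels consistently because $\Tr(h^{-1}U^*M_qU)=\Tr(M_q U h^{-1}U^*)$.

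For the difference, I would use the telescoping identity
\[
 V_A(t)-V_B(t)=-\ii\int_0^t V_A(t)V_A(r)^{-1}\,\ee^{\ii rh}(A(r)-B(r))\ee^{-\ii rh}\,V_B(r)\,\dd r.
\]
This writes $\mathcal T^A-\mathcal T^B$ as an integral over $r\in I_t$ of expressions with exactly one factor of $A(r)-B(r)$, surrounded by Dyson products in $A$ and $B$. The coefficient-measure bound then yields mass at most $\ee^{C\mathfrak a_T(A,B)}\|A(r)-B(r)\|_{\mathcal W^1}$. Integrating in $r$ gives \eqref{eq:transported-density-comparison}, with the $\langle q\rangle$ loss coming from the elementary estimate.

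The main obstacle is the centering. The shifts produced by $A-B$ can bring $M_q$ back to the zero mode, so that the elementary forms are no longer off-diagonal, and their renormalized trace parts must be shown to cancel between the $A$ and $B$ expressions. A naive expectation gives only $\|h^{-1}\|$-type divergent traces. I would handle this by pairing the $u$-dependent chaos part, which is controlled by $\mathcal K$, with the deterministic trace part. For the trace part I would prove, by the unitarity of $U_A$ and $U_B$, that $\Tr(h^{-1}(\mathcal T^A-\mathcal T^B))$ is an absolutely convergent expression bounded by the same right-hand side. This uses the $\langle k\rangle$ weight together with the bound $|h_{p+k}-h_p|\lesssim\langle p\rangle\langle k\rangle+\langle k\rangle^2$, so that $h^{-1}[h,\cdot]$ remains Hilbert--Schmidt-summable in $d\le3$.
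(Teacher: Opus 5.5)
Your architecture (interaction picture, Dyson series with coefficient measures, one potential-independent full-measure set, a single factor of $A-B$) is the right one. However, the step you yourself flag as the main obstacle does not close as proposed in $d=3$.

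You want $\Tr(h^{-1}(\mathcal T^A-\mathcal T^B))$ to be \emph{absolutely} convergent via $|h_{p+k}-h_p|\lesssim\langle p\rangle\langle k\rangle+\langle k\rangle^2$. After unitarity this amounts to pairing $Vh^{-1}V^*-h^{-1}=h^{-1}[h,V]h^{-1}V^*$ with $Y=\alpha_t(M_q)$, or with its heat-regularized version uniformly in $\varepsilon$. For $|p|\gg|k|$ the entries of $h^{-1}[h,W_{k,y}]h^{-1}$ are of size $\langle k\rangle\langle p\rangle^{-3}$. This operator is Hilbert--Schmidt in $d=3$ but not trace class, since $\sum_{p\in\mathbb Z^3}\langle p\rangle^{-3}=\infty$. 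Its pairing with the bounded, non-Hilbert--Schmidt operator $Y$ is therefore not absolutely defined.

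The first-order term already exhibits this. With $\xi=2(t-r)q$, $\Tr(h^{-1}[\alpha_r(M_{-q}),\alpha_t(M_q)])$ equals, up to a unimodular factor, $(1-\ee^{-\ii q\cdot\xi})G_0(\xi)=\sum_p(h_p^{-1}-h_{p+q}^{-1})\ee^{\ii p\cdot\xi}$, where $G_0(\xi)=\sum_ph_p^{-1}\ee^{\ii p\cdot\xi}$. In $d=3$ the terms are of size $\langle q\rangle\langle p\rangle^{-3}$, so this sum converges only conditionally.

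The paper relies on two ingredients instead. First, an \emph{exact} commutator structure: interpolating $A_\theta=B+\theta(A-B)$ gives $\partial_\theta Y_{\theta,q,\varepsilon}(t)=\ii\int_0^t[\mathcal D_\theta(r),Y_{\theta,q,\varepsilon}(t)]\,\dd r$. Second, the position-space identity $\Tr(h^{-1}[W_{\ell,y},W_{-\ell,z}])=\ee^{-\ii\ell\cdot y}(1-\ee^{\ii\ell\cdot(y+z)})G_0(y+z)$. Here the prefactor vanishes linearly exactly where $|G_0(\xi)|\lesssim1+|\xi|^{-1}$ blows up, which gives the bound $C\langle\ell\rangle$ with no summation in $p$. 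The same cancellation, together with evenness of the heat kernel and of $G_0$ when $y+z\to0$, is what removes the heat regularization.

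For the same reason, your claim that $\widetilde\rho_q^A$ exists by dominated convergence is incomplete. The quantity $\Tr(h^{-1}U^*\mathsf H_\varepsilon M_q\mathsf H_\varepsilon U)=\Tr(\mathsf H_\varepsilon M_q\mathsf H_\varepsilon Uh^{-1}U^*)$ need not vanish, since it is a Fourier mode of the transported covariance. Its limit as $\varepsilon\downarrow0$ requires the commutator estimate. The paper obtains existence by running the comparison for the pair $(A,0)$, where the trace vanishes and the free term converges to $\ee^{\ii t|q|^2}Z_q(2tq;u)$. In $d=2$, $\sum_p\langle p\rangle^{-3}<\infty$, so your absolute bound is available there.

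Second, the elementary forms you reduce to, $\langle\ee^{-\ii sh}u,M_{q'}\ee^{-\ii s'h}u\rangle$ with $s,s'\in[-T,T]$, are not what the Dyson products produce, and this family is not closed under composition. The key algebraic fact is that $h_{p+k}-h_p=|k|^2+2p\cdot k$ is \emph{linear} in $p$. Hence $\alpha_r(M_k)=\ee^{\ii r|k|^2}W_{k,2rk}$ with $W_{\ell,y}=M_\ell\tau_y$, and $W_{\ell,y}W_{k,z}=\ee^{\ii k\cdot y}W_{\ell+k,y+z}$. Every Dyson product, including the heat multipliers, is therefore a superposition of modulation-translations $W_{\sum_jk_j,\,2\sum_jr_jk_j}$, whose translation is generally not parallel to the Fourier shift. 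By contrast, a two-time form with $s\neq s'$ carries a phase quadratic in $p$.

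The correct elementary objects are the centered fields $Z_\ell(y;u)$, $y\in\mathbb T^d$, constructed continuous in $y$ with $\sup_\ell\langle\ell\rangle^{1/4}\|Z_\ell(\cdot\,;u)\|_{C^\gamma}<\infty$ on $\Omega_*$. The compactness of $\mathbb T^d$ and the closure of this class under products, adjoints, and free conjugation (the algebra $\mathscr A_1$) are what let one full-measure set and the single constant $\mathcal K(u)=\sup_{\ell,y}|Z_\ell(y;u)|$ serve all prescribed potentials simultaneously.
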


\begin{proof}
The existence of the two limits and the estimate
\eqref{eq:transported-density-comparison}, including the construction
of the common full-measure set and the passage to the limit
\(\varepsilon\downarrow0\), are proved in
Appendix~\ref{app:transported-density-calculus}.

\end{proof}

Here we emphasize that the full-measure set \(\Omega_*\) in
Proposition~\ref{prop:transported-density-comparison} has to be chosen
before the prescribed potentials.  This point is essential.  If, for
each fixed potential \(A\), the renormalized terms were defined only
on a set \(\Omega_A\) of full measure, then
\(\mu_0(\Omega_A)=1\) for every deterministic \(A\) would not imply
that an initial field \(u_0\) belongs to \(\Omega_{A_\gamma}\), because
\(A_\gamma\) is itself determined by the path issued from \(u_0\).
Nor can one take the intersection over the uncountable family of
possible potentials.  To compare two paths coupled to the same rough
initial field, both terms must therefore be defined
at that field on one common set.

The previous estimate is stated for the regularized terms, whereas
the self-consistent potential is defined by the fixed Borel representatives
\(\rho_q\).  We now identify all nonzero modes on one common set of full
measure.

\begin{lemma}
\label{lem:sharp-density-identification}
Let \(\eta\) be a finite measure on
\(\mathscr X_T\) such that
\[
 (e_t)_\#\eta\leq C_0\nu,
 \quad |t|\leq T,
\]
and let \(\gamma\mapsto A_\gamma\) be a family of real potentials.  Assume that,
for \(\eta\)-almost every \(\gamma\),
\[
 u_0:=\gamma(0)\in\Omega_*,
 \qquad A_\gamma\in L^1([-T,T];\mathcal W^1),
 \qquad \gamma(t)=U_{A_\gamma}(t,0)u_0.
\]
Then, for \(\dd t\,\dd\eta\)-almost every \((t,\gamma)\),
\begin{equation}\label{eq:sharp-equals-heat}
 \rho_q(\gamma(t))=\widetilde\rho_q^{A_\gamma}(t;\gamma(0)), \quad q\ne0
\end{equation}
\end{lemma}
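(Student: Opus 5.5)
The plan is to identify both sides of \eqref{eq:sharp-equals-heat} with the heat-regularized limit $\lim_{\varepsilon\downarrow0}\rho_{\varepsilon,q}(\gamma(t))$, which is well defined pathwise. The right-hand side is, by definition and by Proposition~\ref{prop:transported-density-comparison}, exactly
\[
\widetilde\rho_q^{A_\gamma}(t;\gamma(0))=\lim_{\varepsilon\downarrow0}\rho_{\varepsilon,q}\bigl(U_{A_\gamma}(t,0)\gamma(0)\bigr)=\lim_{\varepsilon\downarrow0}\rho_{\varepsilon,q}(\gamma(t)).
\]
This uses only $\gamma(0)\in\Omega_*$ and $\gamma(t)=U_{A_\gamma}(t,0)\gamma(0)$, and it holds for every $t$ and every such $\gamma$. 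It therefore suffices to show that, for $\dd t\,\dd\eta$-a.e.\ $(t,\gamma)$, the limit $\lim_{\varepsilon\downarrow0}\rho_{\varepsilon,q}(\gamma(t))$ coincides with the fixed Borel representative $\rho_q(\gamma(t))$ for every $q\ne0$.

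\emph{Step 1: the Gibbs-null exceptional set.} For fixed $q\ne0$, I will show that
\[
\rho_{\varepsilon_n,q}\to\rho_q \qquad \mu_0\text{-almost surely along } \varepsilon_n=2^{-n}.
\]
Each $\rho_{\varepsilon,q}$ is a centered element of the second Wiener chaos (the trace term vanishes for $q\ne0$). By the Wick isometry, exactly as in the proof of Lemma~\ref{lem:classical-compression},
\[
\|\rho_{\varepsilon,q}-\rho_q\|_{L^2(\mu_0)}\lesssim\|h^{-1/2}(\mathsf H_\varepsilon M_q\mathsf H_\varepsilon-M_q)h^{-1/2}\|_{\mathfrak S^2}.
\]
The right-hand side tends to zero because $h^{-1}\in\mathfrak S^2$ and $\mathsf H_\varepsilon\to\1$ strongly. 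The dyadic rate needs quantifying: writing the norm as a sum over $p$ gives roughly $\varepsilon^{\theta}$ for some $\theta>0$ via $1-\ee^{-\varepsilon h_p/2}\le(\varepsilon h_p)^\theta$ with the slack in $\sum h_p^{-2}$, at least for $d=2$; for $d=3$ one has $h^{-1}\in\mathfrak S^{2}$ only marginally, so I would use $\mathfrak S^{2}$-convergence of $\sum_p h_p^{-1}h_{p+q}^{-1}(\cdots)^2$ with a logarithmic margin, or else pass to a subsequence. Summability of the squared norms then gives Borel–Cantelli convergence. The set $N_q$ where convergence fails is $\mu_0$-null, hence $\nu$-null since $\nu\ll\mu_0$.

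\emph{Step 2: upgrading from the subsequence to the full limit.} Outside $N_q$, the subsequential limit equals $\rho_q$. On the other hand, the full limit $\lim_{\varepsilon\downarrow0}\rho_{\varepsilon,q}(\gamma(t))$ exists on the path by Proposition~\ref{prop:transported-density-comparison}. It therefore equals the dyadic subsequential limit, and hence $\rho_q(\gamma(t))$, whenever $\gamma(t)\notin N_q$.

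\emph{Step 3: transfer along paths by marginal domination.} By Fubini and $(e_t)_\#\eta\le C_0\nu$,
\[
\int\!\int_{-T}^T\1_{N_q}(\gamma(t))\,\dd t\,\dd\eta(\gamma)\le 2TC_0\,\nu(N_q)=0.
\]
The exceptional sets are countably many, one per $q\in\mathbb Z^d\setminus\{0\}$, so their union is still $\dd t\,\dd\eta$-null. This yields \eqref{eq:sharp-equals-heat} simultaneously for all $q\ne0$.

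The main obstacle is Step 1: obtaining almost-sure convergence along a sequence for the fixed Borel representative. One must check that "the Borel representative of $\rho_q$" as fixed in \eqref{eq:Borel-self-consistent-potential} agrees $\nu$-a.e.\ with the heat-regularized $L^2(\mu_0)$-limit. This holds because the cutoff-independence of the limit in \eqref{eq:classical-density-mode} extends to smooth multipliers, but the Schatten estimate with a summable rate needs care in $d=3$.
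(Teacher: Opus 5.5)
Your argument is correct and is essentially the paper's proof. Both proceed in three steps: get almost-sure convergence of $\rho_{\varepsilon_j,q}$ to the fixed representative $\rho_q$ along some sequence $\varepsilon_j\downarrow0$; transfer the exceptional set to $\dd t\,\dd\eta$ via the marginal bound $(e_t)_\#\eta\le C_0\nu$; and identify the subsequential limit with the full heat limit supplied by Proposition~\ref{prop:transported-density-comparison}.

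The only difference is cosmetic. The paper simply chooses $\varepsilon_j$ with $\sum_{0<|q|\le j}\|\rho_{\varepsilon_j,q}-\rho_q\|_{L^1(\nu)}\le 2^{-2j}$ and sums the transferred $L^1$ errors directly, so no quantitative rate is needed. Your worry about $d=3$ is also unfounded, since $h^{-1}\in\mathfrak S^p$ for every $p>3/2$ there, which gives a polynomial rate anyway.
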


\begin{proof}
Using Wick's rule and Gaussian hypercontractivity, we obtain, for every finite \(r\),
\[
 \|\rho_{R,q}-\rho_q\|_{L^r(\mu_0)}
 +\|\rho_{\varepsilon,q}-\rho_q\|_{L^r(\mu_0)}
 \longrightarrow0
\]
as \(R\to\infty\) and \(\varepsilon\downarrow0\).  The bound
\(\dd\nu/\dd\mu_0\le\mathfrak Z^{-1}\) implies the same convergence in
\(L^r(\nu)\).

Choose \(\varepsilon_j\downarrow0\) such that
\begin{equation}\label{eq:summable-cutoff-errors}
 \sum_{0<|q|\le j}
 \|\rho_{\varepsilon_j,q}-\rho_q\|_{L^1(\nu)}
 \leq2^{-2j}.
\end{equation}
Using \eqref{eq:marginal-domination-transfer}, we obtain
\begin{align*}
 &\sum_{j\geq1}\sum_{0<|q|\le j}
 \int_{-T}^T\!\int
 |\rho_{\varepsilon_j,q}(\gamma(t))
       -\rho_q(\gamma(t))|\,\dd\eta\,\dd t
 \leq2TC_0\sum_{j\geq1}\sum_{0<|q|\le j}
 \|\rho_{\varepsilon_j,q}-\rho_q\|_{L^1(\nu)}
 \leq2TC_0\sum_{j\geq1}2^{-2j}<\infty.
\end{align*}
By Fubini's theorem, outside one
\((\dd t\otimes\dd\eta)\)-null set the sequence converges to
\(\rho_q(\gamma(t))\) for every \(q\ne0\).  The full
heat limit exists by
Proposition~\ref{prop:transported-density-comparison}; hence the same
value is
\(\widetilde\rho_q^{A_\gamma}(t;\gamma(0))\).  This proves
\eqref{eq:sharp-equals-heat}.
\end{proof}

\subsection{Uniqueness of Hartree trajectories and Liouville curves}
\label{subsec:pathwise-uniqueness}

This subsection turns the mode comparison into uniqueness.  We first
remove the spatially constant part of each potential.  The nonzero
density modes are unchanged when a path is multiplied by a scalar
phase, so Proposition~\ref{prop:transported-density-comparison} can be
summed against \(\langle q\rangle|\widehat v(q)|\).  Gronwall's lemma
identifies the mean-zero potentials and hence the corresponding paths
after their scalar phases have been removed.  In
Theorem~\ref{thm:crossed-product-DNS}, we then identify the zero modes,
recover the scalar phases, and prove pathwise uniqueness.  Finally,
Theorem~\ref{thm:DLU} gives uniqueness of the dominated Liouville curve
and identifies it with the push-forward by the Hartree NLS flow.

For a real potential \(V\in L^1([-T,T];\mathcal W^1)\), write
\[
 V^\circ(t):=V(t)-\widehat{V(t)}(0),
 \qquad
 \phi_V(t):=\int_0^t\widehat{V(s)}(0)\,\dd s.
\]
The scalar part of the potential can then be removed by the identity
\begin{equation}\label{eq:scalar-phase-propagator}
 U_V(t,0)u=\ee^{-\ii\phi_V(t)}U_{V^\circ}(t,0)u.
\end{equation}

\begin{proposition}
\label{prop:mean-zero-volterra-uniqueness}
Let \(u_0\in\Omega_*\), and let
\(A,B\in L^1([-T,T];\mathcal W^1)\) be real.
Assume that there are measurable modes \(\rho_q^A,\rho_q^B\) such that
\[
 A(t,x)=\sum_q\widehat v(q)\rho_q^A(t)\ee^{-\ii q\cdot x},
 \qquad
 B(t,x)=\sum_q\widehat v(q)\rho_q^B(t)\ee^{-\ii q\cdot x}
\]
define \(A,B\in L^1([-T,T];\mathcal W^1)\) and satisfy, for every
\(q\ne0\),
\[
 \rho_q^A(t)=\widetilde\rho_q^A(t;u_0),
 \qquad
 \rho_q^B(t)=\widetilde\rho_q^B(t;u_0)
 \quad\text{for almost every }t.
\]
Then
\begin{equation}\label{eq:gauge-reduced-propagator-equality}
 A^\circ=B^\circ,\qquad U_{A^\circ}(t,0)u_0=U_{B^\circ}(t,0)u_0,
 \quad |t|\le T.
\end{equation}
\end{proposition}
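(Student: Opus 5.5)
The plan is a Gronwall argument on the mean-zero parts of the potentials, made possible by the phase identity \eqref{eq:scalar-phase-propagator}. The nonzero density modes do not see a spatially constant phase. Since $M_q$ is linear and $\mathsf H_\varepsilon$ commutes with scalars, we have $\rho_{\varepsilon,q}(\ee^{-\ii\phi}w)=\rho_{\varepsilon,q}(w)$ for real $\phi$ and $q\ne0$. Combined with \eqref{eq:scalar-phase-propagator}, this gives
\[
 \widetilde\rho_q^A(t;u_0)=\widetilde\rho_q^{A^\circ}(t;u_0),\qquad
 \widetilde\rho_q^B(t;u_0)=\widetilde\rho_q^{B^\circ}(t;u_0),\qquad q\ne0,
\]
where the limits on the right exist by Proposition~\ref{prop:transported-density-comparison}, since $A^\circ,B^\circ$ are real and lie in $L^1([-T,T];\mathcal W^1)$. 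The assumption on the modes then says that, for a.e.\ $t$, the nonzero Fourier coefficients of $A^\circ(t)$ and $B^\circ(t)$ are $\widehat v(q)\widetilde\rho_{-q}^{A^\circ}(t;u_0)$ and $\widehat v(q)\widetilde\rho_{-q}^{B^\circ}(t;u_0)$, up to the sign convention $\ee^{-\ii q\cdot x}$. Note that $A^\circ$ and $B^\circ$ have no zero mode.

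Next, set $D(t):=\|A^\circ(t)-B^\circ(t)\|_{\mathcal W^1}$ and sum the comparison estimate over the nonzero modes:
\[
 D(t)\le\sum_{q\ne0}\langle q\rangle|\widehat v(q)|\,
 \bigl|\widetilde\rho_{q}^{A^\circ}(t;u_0)-\widetilde\rho_{q}^{B^\circ}(t;u_0)\bigr|
 \le C_T(1+\mathcal K(u_0))\,\ee^{C\mathfrak a_T(A^\circ,B^\circ)}
 \Bigl(\sum_q\langle q\rangle^2|\widehat v(q)|\Bigr)\int_{I_t}D(r)\,\dd r .
\]
This inequality holds for a.e.\ $t$. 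We use $\|\widehat v\|_{\ell^1_2}<\infty$ from \eqref{eq:finite-potential-regularity}, which is exactly where the extra factor $\langle q\rangle$ in \eqref{eq:transported-density-comparison} is absorbed. We also use $u_0\in\Omega_*$, so $\mathcal K(u_0)<\infty$. Finally, $\mathfrak a_T(A^\circ,B^\circ)\le 2\mathfrak a_T(A,B)<\infty$, because $\|A^\circ\|_{\mathcal W^1}\le 2\|A\|_{\mathcal W^1}$.

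The function $D$ lies in $L^1([-T,T])$. The integral Gronwall inequality, applied separately for $t\ge0$ and $t\le0$, then gives $D=0$ almost everywhere, so $A^\circ=B^\circ$ in $L^1([-T,T];\mathcal W^1)$. The propagator $U_V(t,0)$ constructed in Lemma~\ref{lem:prescribed-potential-propagators} depends only on the $L^1$ class of $V$, through its Duhamel characterization. Hence $U_{A^\circ}(t,0)u_0=U_{B^\circ}(t,0)u_0$ for all $|t|\le T$, which is \eqref{eq:gauge-reduced-propagator-equality}.

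The main point to check carefully is the first step: that phase invariance holds for the limiting quantities $\widetilde\rho_q$ and not only for the regularized ones. This is immediate once both limits exist at the same $u_0$, and Proposition~\ref{prop:transported-density-comparison} guarantees that on the common set $\Omega_*$, independently of the potentials. A second point is that the Gronwall inequality holds only for a.e.\ $t$. This is harmless, because its right-hand side is continuous in $t$, so $D$ is dominated almost everywhere by a continuous function to which the standard argument applies.
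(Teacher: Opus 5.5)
Your proposal is correct and follows essentially the same route as the paper's proof. You use phase invariance of the nonzero regularized modes together with \eqref{eq:scalar-phase-propagator} to replace $A,B$ by $A^\circ,B^\circ$, apply Proposition~\ref{prop:transported-density-comparison} to the pair $(A^\circ,B^\circ)$, sum against $\langle q\rangle|\widehat v(q)|$ using $\|\widehat v\|_{\ell^1_2}<\infty$, and close with Gronwall's lemma and uniqueness of the prescribed-potential propagator, exactly as the paper does (your explicit tracking of $\mathcal K(u_0)$ and $\mathfrak a_T$ is what the paper absorbs into its constant $C$).
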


\begin{proof}
By \eqref{eq:scalar-phase-propagator},
\[
 \ee^{\ii\phi_A(t)}U_A(t,0)u_0=U_{A^\circ}(t,0)u_0,
 \qquad
 \ee^{\ii\phi_B(t)}U_B(t,0)u_0=U_{B^\circ}(t,0)u_0.
\]
It is easy to see that, for every scalar phase
\(\ee^{\ii\theta}\),
\[
 \rho_{\varepsilon,q}(\ee^{\ii\theta}u)
 =\rho_{\varepsilon,q}(u).
\]
Hence, for every \(q\ne0\) and almost every \(t\),
\[
 \widetilde\rho_q^{A^\circ}(t;u_0)=\rho_q^A(t),
 \qquad
 \widetilde\rho_q^{B^\circ}(t;u_0)=\rho_q^B(t).
\]
Using Proposition~\ref{prop:transported-density-comparison}, we obtain, for every \(q\ne0\),
\[
 |\rho_q^A(t)-\rho_q^B(t)|
 \le C\langle q\rangle\int_{I_t}
 \|A^\circ(r)-B^\circ(r)\|_{\mathcal W^1}\,\dd r.
\]
For almost every \(t\),
\[
 \|A^\circ(t)-B^\circ(t)\|_{\mathcal W^1}
 =\sum_{q\ne0}\langle q\rangle|\widehat v(q)|
   |\rho_q^A(t)-\rho_q^B(t)|.
\]
Multiplying by \(\langle q\rangle|\widehat v(q)|\) and summing over \(q\ne0\),
we obtain, for $t\in[-T,T]$,
\begin{equation*}
 \|A^\circ(t)-B^\circ(t)\|_{\mathcal W^1}
 \le C\|\widehat v\|_{\ell^1_2}
 \int_{I_t}\|A^\circ(r)-B^\circ(r)\|_{\mathcal W^1}\,\dd r.
\end{equation*}
Then Gronwall's lemma implies that 
\[
 A^\circ=B^\circ
 \quad\text{in }L^1([-T,T];\mathcal W^1).
\]
 Thus
Lemma~\ref{lem:prescribed-potential-propagators} implies
\eqref{eq:gauge-reduced-propagator-equality}.
\end{proof}

Now we are ready to prove pathwise uniqueness.

\begin{theorem}
\label{thm:crossed-product-DNS}
Let \(\overline\eta\) be a finite positive Borel
measure on \(\mathscr X_T\times\mathscr X_T\).  Assume that
the following holds for \(\overline\eta\)-almost every pair:
\[
 u_0:=\gamma_1(0)=\gamma_2(0),
\]
and, for \(j\in\{1,2\}\),
\[
 V_j:=A_{\gamma_j}=A\circ\gamma_j\in L^1([-T,T];\mathcal W^1)
 \quad\text{is real},
 \qquad
 \gamma_j(t)=U_{V_j}(t,0)u_0\quad(|t|\le T),
\]
where \(A\) is the Borel potential map defined in
\eqref{eq:Borel-self-consistent-potential}.
Let \(\operatorname{pr}_j(\gamma_1,\gamma_2):=\gamma_j\) be the
coordinate projections.  Assume moreover that
\[
 (e_t\circ\operatorname{pr}_j)_\#\overline\eta
 \le C_j\nu,
 \qquad |t|\le T,\quad j\in\{1,2\}.
\]
Then
\[
 \overline\eta\bigl\{(\gamma_1,\gamma_2):
 \gamma_1(t)=\gamma_2(t)\text{ for every }|t|\le T\bigr\}
 =\overline\eta(\mathscr X_T\times\mathscr X_T).
\]
\end{theorem}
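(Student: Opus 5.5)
The plan is to work pair by pair on a full-measure set of pairs, apply Proposition~\ref{prop:mean-zero-volterra-uniqueness} to identify the mean-zero potentials, and then recover the zero modes and the scalar phases by a separate argument.

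\emph{Step 1: a good set of pairs.} We first show that for $\overline\eta$-almost every pair the initial field satisfies $u_0\in\Omega_*$. Since $(e_0\circ\operatorname{pr}_1)_\#\overline\eta\le C_1\nu$ and $\nu(\Omega_*^c)\le\mathfrak Z^{-1}\mu_0(\Omega_*^c)=0$, this is immediate. Next we apply Lemma~\ref{lem:sharp-density-identification} separately to the measures $(\operatorname{pr}_j)_\#\overline\eta$, $j=1,2$, whose hypotheses are exactly the assumed marginal domination. This gives, for $\overline\eta$-almost every pair and almost every $t$,
\[
\rho_q(\gamma_j(t))=\widetilde\rho_q^{V_j}(t;u_0)\qquad (q\ne0,\ j=1,2).
\]
Here we use Fubini to pass from $\dd t\,\dd\overline\eta$-null sets to a statement for a.e.\ pair and a.e.\ $t$. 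By definition~\eqref{eq:Borel-self-consistent-potential}, $V_j(t)=\sum_q\widehat v(q)\rho_q(\gamma_j(t))\ee^{-\ii q\cdot x}$. Hence the hypotheses of Proposition~\ref{prop:mean-zero-volterra-uniqueness} hold with $A=V_1$, $B=V_2$, and we obtain
\[
V_1^\circ=V_2^\circ,\qquad U_{V_1^\circ}(t,0)u_0=U_{V_2^\circ}(t,0)u_0=:w(t).
\]
Together with \eqref{eq:scalar-phase-propagator} this gives $\gamma_j(t)=\ee^{-\ii\phi_{V_j}(t)}w(t)$.

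\emph{Step 2: zero mode and phase (main obstacle).} It remains to show that $\widehat{V_1(t)}(0)=\widehat{V_2(t)}(0)$ for a.e.\ $t$, i.e.\ $\widehat v(0)\rho_0(\gamma_1(t))=\widehat v(0)\rho_0(\gamma_2(t))$. The difficulty is that $\rho_0$ is only defined almost surely as a renormalized limit. The renormalized quantity $\rho_0$ is gauge invariant, and $\gamma_1(t),\gamma_2(t)$ differ only by a phase. The plan is to use a common Fourier cutoff: for each $R$, $\rho_{R,0}(\gamma_1(t))=\rho_{R,0}(\gamma_2(t))$ exactly, since $\rho_{R,0}(\ee^{\ii\theta}u)=\rho_{R,0}(u)$. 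As in the proof of Lemma~\ref{lem:sharp-density-identification}, choose $R_j\to\infty$ with $\sum_j\|\rho_{R_j,0}-\rho_0\|_{L^1(\nu)}<\infty$. Marginal domination for each $j$ and Borel--Cantelli with Fubini then give $\rho_{R_j,0}(\gamma_i(t))\to\rho_0(\gamma_i(t))$ for $\dd t\,\dd\overline\eta$-a.e.\ $(t,\text{pair})$, for both $i=1,2$ simultaneously. Since the approximants agree, the limits agree, so $\widehat{V_1(t)}(0)=\widehat{V_2(t)}(0)$ for a.e.\ $t$. We also need to check that the Borel representative of $\rho_0$ used in $A$ is the one chosen in this limit. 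This holds up to $\nu$-null sets, which is harmless by domination.

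\emph{Step 3: conclusion.} Since the zero modes agree for a.e.\ $t$, the phases satisfy $\phi_{V_1}(t)=\phi_{V_2}(t)$ for every $|t|\le T$, as they are integrals of the same $L^1$ function. Hence $V_1=V_2$ in $L^1([-T,T];\mathcal W^1)$, and the uniqueness in Lemma~\ref{lem:prescribed-potential-propagators} gives $\gamma_1(t)=U_{V_1}(t,0)u_0=\gamma_2(t)$ for all $|t|\le T$. This holds on a set of full $\overline\eta$-measure. Measurability of the diagonal set $\{\gamma_1=\gamma_2\}$ in $\mathscr X_T\times\mathscr X_T$ is clear because it is closed, and the claimed identity follows.
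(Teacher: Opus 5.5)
Your proposal is correct and follows essentially the same route as the paper. Marginal domination gives $u_0\in\Omega_*$; Lemma~\ref{lem:sharp-density-identification} applied to each marginal feeds Proposition~\ref{prop:mean-zero-volterra-uniqueness}, which identifies the mean-zero potentials; and the zero modes, hence the phases, are matched through the gauge-invariant common cutoffs $\rho_{R_m,0}$ along a sequence with summable $L^1(\nu)$ errors. The differences are cosmetic and do not affect the argument: you choose the cutoff sequence after invoking the proposition rather than before, and you conclude through $V_1=V_2$ instead of through $\gamma_j(t)=\ee^{-\ii\phi_{V_j}(t)}w(t)$.
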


\begin{proof}
Using the marginal bound at \(t=0\), together with \(\nu\ll\mu_0\), we obtain
$u_0\in\Omega_*$
for $\overline\eta$-almost every pair.
Let \(\eta_j=(\operatorname{pr}_j)_\#\overline\eta\).   Applying Lemma~\ref{lem:sharp-density-identification} to
\(\eta_1\) and \(\eta_2\), and then discarding one
\(\overline\eta\)-null set, gives, simultaneously for
\(j\in\{1,2\}\) and every \(q\ne0\),
\[
 \rho_q(\gamma_j(t))
 =\widetilde\rho_q^{V_j}(t;u_0)
 \quad\text{for almost every }t.
\]
Using an argument similar to that in Lemma~\ref{lem:sharp-density-identification}, we can choose a subsequence such that, outside one further \(\overline\eta\)-null set,
\begin{equation*}
 \rho_{R_m,0}(\gamma_j(t))
 \longrightarrow\rho_0(\gamma_j(t))
 \quad\text{for almost every }t,
 \qquad j\in\{1,2\}.
\end{equation*}
By marginal domination and the \(\nu\)-full convergence set in
the definition of \(A\), after discarding another null set, the Fourier
series defining
\(A(\gamma_j(t))\) converges for almost every \(t\) and \(j\in\{1,2\}\).
Using Proposition~\ref{prop:mean-zero-volterra-uniqueness}, we obtain
\[
 V_1^\circ=V_2^\circ
\]
and, for every \(|t|\le T\),
\[
 w(t)
 :=U_{V_1^\circ}(t,0)u_0
 =U_{V_2^\circ}(t,0)u_0.
\]
By \eqref{eq:scalar-phase-propagator}, we obtain
\[
 w(t)=\ee^{\ii\phi_{V_j}(t)}\gamma_j(t),
 \qquad j\in\{1,2\},
\]
which implies
\[
 \rho_{R_m,0}(\gamma_1(t))
 =\rho_{R_m,0}(w(t))
 =\rho_{R_m,0}(\gamma_2(t)).
\]
Passing to the limit, we obtain
\[
 \rho_0(\gamma_1(t))=\rho_0(\gamma_2(t))
 \quad\text{for almost every }t.
\]
Thus
\[
 \widehat{V_1(t)}(0)=\widehat{V_2(t)}(0)
 \quad\text{for almost every }t,
\]
and
\[
 \phi_{V_1}(t)=\phi_{V_2}(t),
 \qquad |t|\le T.
\]
Hence, for $|t|\le T$, we obtain
\[
 \gamma_1(t)
 =\ee^{-\ii\phi_{V_1}(t)}w(t)
 =\ee^{-\ii\phi_{V_2}(t)}w(t)
 =\gamma_2(t).
\]
Thus the result follows.
\end{proof}

We now use the previous pathwise uniqueness result to prove uniqueness
of dominated Liouville curves and to identify the unique curve with the
push-forward by the Hartree NLS flow.

\begin{theorem}
\label{thm:DLU}
Let \(d\in\{2,3\}\), fix \(1/2<\kappa<1\) and \(T>0\), and let
\(S_t\) be the dimension-\(d\) Hartree NLS flow of
Theorem~\ref{thm:canonical-flow}.
Let \((\sigma_t^1)_{|t|\le T}\) and
\((\sigma_t^2)_{|t|\le T}\) be weakly continuous curves of finite
positive Borel measures on \(H^{-\kappa}(\mathbb T^d)\) satisfying
\[
 \int G\,\dd\sigma_t^j-\int G\,\dd\sigma_s^j
 =\int_s^t\!\int\Lcal_0G\,\dd\sigma_r^j\,\dd r
\]
for every \(G\in\Acal_{\rm cyl}\), every \(s,t\in[-T,T]\), and
\(j\in\{1,2\}\).  Assume moreover that, for some finite constants
\(C_1,C_2\),
\[
 0\le\sigma_t^j\le C_j\nu,
 \qquad |t|\le T,
 \quad j\in\{1,2\}.
\]
If \(\sigma_0^1=\sigma_0^2\), then
\[
 \sigma_t^1=\sigma_t^2= (S_t)_\#\sigma_0^1,
 \qquad |t|\le T.
\]
Moreover, 
the same conclusion holds if weak continuity is replaced by the following condition:
there is a countable $\mathbb Q\ii$-linear subspace
\(\mathscr D\subset\Acal_{\rm cyl}\), dense in \(L^2(\nu)\), such
that \(t\mapsto\int G\,\dd\sigma_t^j\) is continuous for every
\(G\in\mathscr D\). 
\end{theorem}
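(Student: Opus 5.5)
The plan has four steps: build a superposition measure for each curve, couple the two through their common initial marginal, apply pathwise uniqueness, and then identify the common curve with the Hartree push-forward.

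\emph{Step 1 (superposition).} Apply Proposition~\ref{prop:common-superposition} to each curve \(\sigma^j\). This yields finite path measures \(\eta_j\) on \(\mathscr X_T\) with \((e_t)_\#\eta_j=\sigma^j_t\). For \(\eta_j\)-almost every \(\gamma\), the potential \(A_\gamma\in L^1([-T,T];\mathcal W^1)\) is real and \(\gamma\) solves the Duhamel equation \eqref{eq:self-consistent-path-Duhamel}. By the uniqueness part of Lemma~\ref{lem:prescribed-potential-propagators}, applied with the frozen potential \(A_\gamma\), this means \(\gamma(t)=U_{A_\gamma}(t,0)\gamma(0)\). Realness of \(A_\gamma(t)\) for a.e.\ \(t\) follows from \eqref{eq:marginal-domination-transfer} applied to the indicator of the \(\nu\)-null set where \(A(u)\) is not real.

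\emph{Step 2 (coupling and pathwise uniqueness).} Both \(\eta_1\) and \(\eta_2\) have the same time-zero marginal \(\sigma_0:=\sigma_0^1=\sigma_0^2\). Disintegrate \(\eta_j=\int\eta_j^{u}\,\dd\sigma_0(u)\) with respect to \(e_0\); the disintegration exists because \(\mathscr X_T\) is Polish. Then define the coupling
\[
\overline\eta:=\int\eta_1^{u}\otimes\eta_2^{u}\,\dd\sigma_0(u)\Big/\sigma_0(H^{-\kappa})
\]
(or without normalization, adjusting constants; if the mass vanishes there is nothing to prove). This measure is concentrated on pairs with \(\gamma_1(0)=\gamma_2(0)\). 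Its marginals are \(\eta_j\) up to a constant factor, so \((e_t\circ\operatorname{pr}_j)_\#\overline\eta\le C_j'\nu\). Theorem~\ref{thm:crossed-product-DNS} then shows that \(\gamma_1=\gamma_2\) holds \(\overline\eta\)-a.e. Hence \(\eta_1^u\otimes\eta_2^u\) is supported on the diagonal for \(\sigma_0\)-a.e.\ \(u\). This forces both factors to be the same Dirac mass \(\delta_{\Gamma(u)}\). Consequently \(\eta_1=\eta_2\), and therefore \(\sigma^1_t=\sigma^2_t\).

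\emph{Step 3 (identification with \(S_t\)).} It remains to show that the curve \(\tau_t:=(S_t)_\#\sigma_0\) is itself an admissible competitor, so that the uniqueness just proved gives \(\sigma^1_t=\tau_t\).

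Write \(\sigma_0=g\nu\) with \(0\le g\le C_1\). Invariance \((S_t)_\#\nu=\nu\) and the group property give \(\tau_t=(g\circ S_{-t})\nu\le C_1\nu\).

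For the Liouville equation, integrate \eqref{eq:canonical-cylinder-identity} against \(g\,\dd\nu\) and use Fubini; this is justified since \(\Lcal_0G\in L^2(\nu)\) and \(g\) is bounded. This gives
\[
\int G\,\dd\tau_t-\int G\,\dd\tau_0=\int_0^t\!\int\Lcal_0G\,\dd\tau_r\,\dd r.
\]
Differences between two times \(s,t\) follow from the group law.

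Weak continuity on \(H^{-\kappa}\) follows from path continuity in Theorem~\ref{thm:canonical-flow}, taking \(s=-\kappa\), which is allowed because \(-\kappa<-1/2\) (and hence \(<0\) when \(d=2\)), together with dominated convergence.

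\emph{Variant under the weaker continuity hypothesis.} Here the expected obstacle is that Proposition~\ref{prop:common-superposition} needs weak continuity. I would first recover it. By the domination \(\sigma^j_t\le C_j\nu\), continuity of \(t\mapsto\int G\,\dd\sigma^j_t\) for \(G\) in the dense set \(\mathscr D\) extends to every bounded Borel \(G\) by uniform \(L^2(\nu)\) approximation, exactly as in Step~2 of Theorem~\ref{thm:positive-compactness}. In particular it holds for \(G\in C_b(H^{-\kappa})\), so the curve is weakly continuous and the first part applies.

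The main delicate point overall is the measurable disintegration and coupling in Step~2. In particular one must check that the null sets produced in Theorem~\ref{thm:crossed-product-DNS} are \(\overline\eta\)-null rather than merely \(\eta_j\)-null. This holds because every condition involved is a property of a single coordinate \(\gamma_j\), and \(\overline\eta\) has \(\eta_j\) as its \(j\)-th marginal up to a constant.
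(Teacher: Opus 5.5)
Your proposal is correct and follows the paper's proof essentially step for step. The shared steps are superposition via Proposition~\ref{prop:common-superposition}, the conditional-product coupling over the common initial marginal, pathwise uniqueness from Theorem~\ref{thm:crossed-product-DNS}, identification by checking that $(S_t)_\#\sigma_0$ is an admissible dominated Liouville curve, and the $L^2(\nu)$-approximation argument for the weaker continuity hypothesis. Your additions are correct fillings-in of points the paper leaves implicit, namely realness of $A_\gamma$ via marginal domination, the Dirac-mass conclusion $\eta_1=\eta_2$, and the explicit Fubini verification of the Liouville equation for $(S_t)_\#\sigma_0$.
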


\begin{proof}
 Write
\(\sigma_0:=\sigma_0^1=\sigma_0^2\).  Testing the Liouville equations
with \(1\) shows that
\[
 \sigma_t^j(H^{-\kappa})=\sigma_0(H^{-\kappa}),
 \qquad |t|\le T,
 \quad j\in\{1,2\}.
\]
If this common mass is zero, then \(\sigma_t^1=\sigma_t^2=0\).
Otherwise, divide all measures by this mass.
It is therefore enough to consider the case in which \(\sigma_0\) and
all measures below are probability measures.

Proposition~\ref{prop:common-superposition} gives probability measures
\(\eta_1,\eta_2\) on the Polish path space \(\mathscr X_T\) such that
\[
 (e_t)_\#\eta_j=\sigma_t^j,
 \qquad |t|\le T.
\]
Taking regular conditional
probabilities \(u\mapsto\eta_{j,u}\) with respect to \(e_0\), we obtain
\[
 \eta_j(\dd\gamma)
 =\int\eta_{j,u}(\dd\gamma)\,\dd\sigma_0(u),
 \qquad
 \eta_{j,u}\{\gamma:e_0(\gamma)=u\}=1
\]
for \(\sigma_0\)-almost every \(u\).  Their conditional product defines
the coupling
\[
 \overline\eta
 :=\int\eta_{1,u}\otimes\eta_{2,u}\,\dd\sigma_0(u)
 \quad\text{on }\mathscr X_T\times\mathscr X_T.
\]
Its coordinate marginals are \(\eta_1,\eta_2\), and it is concentrated
on pairs with the same initial value:
\[
 (\operatorname{pr}_j)_\#\overline\eta=\eta_j,
 \qquad j\in\{1,2\},
 \qquad
 \overline\eta\{(\gamma_1,\gamma_2):
 e_0(\gamma_1)=e_0(\gamma_2)\}=1,
\]
and
\[
 (e_t\circ\operatorname{pr}_j)_\#\overline\eta
 =(e_t)_\#\eta_j
 =\sigma_t^j
 \le C_j\nu.
\]
Using Proposition~\ref{prop:common-superposition},
Lemma~\ref{lem:prescribed-potential-propagators} and
Theorem~\ref{thm:crossed-product-DNS}, we obtain, for $|t|\le T$,
\[
 \gamma_1(t)=\gamma_2(t)
 \quad\text{for }\overline\eta\text{-almost every pair.}
\]
Consequently,
\[
 \sigma_t^1
 =(e_t\circ\operatorname{pr}_1)_\#\overline\eta
 =(e_t\circ\operatorname{pr}_2)_\#\overline\eta
 =\sigma_t^2,
 \qquad |t|\le T.
\]
 Define
\[
 \tau_t:=(S_t)_\#\sigma_0.
\]
Theorem~\ref{thm:canonical-flow} implies \(S_0u=u\) for \(\nu\)-almost
every \(u\), \((S_t)_\#\nu=\nu\), and
\(t\mapsto S_tu\) is continuous in \(H^{-\kappa}\) for
\(\nu\)-almost every \(u\).  Hence \(\tau_0=\sigma_0\), and
\[
 \tau_t(E)=\sigma_0(S_t^{-1}E)
 \le C_0\nu(S_t^{-1}E)=C_0\nu(E)
\]
for every Borel set \(E\) and $C_0=\min\{C_1,C_2\}$. It is easy to see that \((\tau_t)\) is a dominated Liouville curve with
\(\tau_0=\sigma_0\). As a result, we have
\(\sigma_t=\tau_t=(S_t)_\#\sigma_0\).  Under the second continuity
assumption, approximation in \(L^2(\nu)\), together with
\(0\le\sigma_t\le C_0\nu\), extends the continuity from
\(\mathscr D\) to every bounded continuous test function.  Hence the
curve is weakly continuous.  
\end{proof}

Theorem~\ref{thm:DLU} proves statement~\textup{(U)} of
Subsection~\ref{subsec:proof-outline}.  Hence
Subsection~\ref{subsec:Liouville-transfer} applies to the
renormalized Hartree NLS dynamics in dimensions two and three.

\begin{proof}[Proof of Theorem~\ref{thm:main-bounded}]
By Theorem~\ref{thm:DLU}, statement~\textup{(U)} holds.
The conclusion follows from Theorem~\ref{thm:bounded-multipoint}.
\end{proof}

\begin{proof}[Proof of Theorem~\ref{thm:main-identified-dynamics}]
Part~\textup{(i)} is Theorem~\ref{thm:one-generator}, including its
uniform bound.  Part~\textup{(ii)} follows from
Corollary~\ref{cor:strong-identified-dynamics}, since statement~\textup{(U)} has now
been proved.
\end{proof}

\section{Wick observables on finitely many Fourier modes}
\label{sec:Wick-observables}

%

Theorem~\ref{thm:main-bounded} gives the dynamical limits for bounded cylinder observables.
We now turn to ordinary Wick polynomials depending on finitely many Fourier
modes. Since their quantizations are unbounded, the bounded multiplication
argument of Theorem~\ref{thm:bounded-multipoint} cannot be applied directly. We first prove their
two-point limits in both dimensions, and then establish arbitrary ordered
multi-time convergence in 2D by comparing the ordered trace with an equally
spaced Gibbs-weighted trace.

For a finite Fourier space \(E_j\) and a polynomial \(P_j\) on \(E_j\),
write
\begin{equation}\label{eq:main-Wick-notation}
 X_{\lambda,j}=\Op_{\lambda,E_j}^{\rm W}(P_j),
 \qquad X_j=(P_j)_{E_j}.
\end{equation}

\subsection{Equilibrium convergence and two-point correlations}
\label{subsec:Wick-static-Hilbert}

We first prove equilibrium convergence in \(L^2(\Gamma_\lambda)\) and the
two-point limit for
fixed finite-mode Wick polynomials.

\begin{proposition}
\label{prop:finite-mode-Wick-Hilbert}
Let \(P_1:E_1\to\mathbb C\) and \(P_2:E_2\to\mathbb C\) be polynomials on
fixed finite Fourier spaces \(E_1,E_2\).  By
\eqref{eq:classical-Wick-symbol},
\((P_1)_{E_1}(u)=P_1(z_{E_1}(u),\overline{z_{E_1}(u)})\), and similarly for
\((P_2)_{E_2}\), in the coordinate convention of
\eqref{eq:classical-Wick-symbol}.  Then
\[
 \Op_{\lambda,E_1}^{\rm W}(P_1)\to_{\mathrm{id}} (P_1)_{E_1},
 \qquad
 \Op_{\lambda,E_2}^{\rm W}(P_2)\to_{\mathrm{id}} (P_2)_{E_2},\qquad \lambda \downarrow 0.
\]
Consequently, for arbitrary fixed \(s,t\in\mathbb R\),
\begin{align}
 &\Tr\!\left(
  \bigl(\Ubb_\lambda(s)\Op_{\lambda,E_1}^{\rm W}(P_1)\bigr)^*
  \Ubb_\lambda(t)\bigl(\Op_{\lambda,E_2}^{\rm W}(P_2)\bigr)
  \Gamma_\lambda\right)                                      \notag\\
 &\hspace{18mm}\longrightarrow
 \int\overline{(P_1)_{E_1}(S_su)}\,
       (P_2)_{E_2}(S_tu)\,\dd\nu(u)
       \qquad(\lambda\downarrow0).                     \label{eq:finite-mode-Wick-two-point}
\end{align}
\end{proposition}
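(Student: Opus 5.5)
The proof has two parts: first the identified convergence $\Op^{\rm W}_{\lambda,E}(P)\to_{\mathrm{id}}(P)_E$, then the two-point limit, which follows from unitarity and the dynamics convergence already established.

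\emph{Step 1: identified convergence.} Fix $E=E_1$ and $P=P_1$. By Berezin's identity, $\Op^{\rm W}_{\lambda,E}(P)=\Op^{\rm A}_{\lambda,E}(\ee^{-\lambda\Delta^{\mathbb C}_E}P)$, and $\ee^{-\lambda\Delta^{\mathbb C}_E}P=P+\lambda P_\lambda$, where $P_\lambda$ is a polynomial of lower degree whose coefficients are bounded uniformly in $\lambda$. We truncate with a cutoff $\chi_R\in C_c^\infty(E)$, equal to $1$ on $\{|z|\le R\}$, and set $x^R:=\chi_R(\Pi_E u)\,(P)_E(u)\in\Acal_{\rm cyl}$. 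On the classical side, $\|x^R-(P)_E\|_{L^2(\nu)}\to0$ as $R\to\infty$, because finite-mode polynomials have all moments under $\mu_0$ and $\dd\nu/\dd\mu_0\le\mathfrak Z^{-1}$. On the quantum side we need
\[
\lim_{R\to\infty}\limsup_{\lambda\downarrow0}\bigl\|\Op^{\rm W}_{\lambda,E}(P)-\Op^{\rm A}_{\lambda,\gH}(x^R)\bigr\|_{2,\lambda}=0 .
\]
We expand this squared norm as a trace against $\Gamma_\lambda$ of a product of two anti-Wick operators on $\Fock(E)$ with polynomially growing symbols. We then move everything to a single anti-Wick symbol using the calculus in the proof of Lemma~\ref{lem:AW-product} and \eqref{eq:finite-mode-AW-eigen}--\eqref{eq:finite-mode-AW-integration}. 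Every correction term carries a factor $\lambda$ and is dominated by $(1+\lambda\mathcal N_E)^{D}$. Its expectation is uniformly bounded by \eqref{eq:finite-mode-number-moment-bound}, so these terms vanish as $\lambda\downarrow0$.

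By \eqref{eq:lower-symbol-pairing}, the main term becomes $\int_E|P(1-\chi_R)|^2\,\dd\mu_{\lambda,L}$. Its limit is controlled by the weak convergence \eqref{conver}, combined with uniform integrability of $|P|^{2}(1-\chi_R)^2$, which follows from uniform higher moments of $\mu_{\lambda,L}$; these are again obtained from \eqref{eq:finite-mode-number-moment-bound} via the lower symbol. The result is that the main term tends to $\int|(P)_E|^2(1-\chi_R)^2\,\dd\nu$, which goes to $0$ as $R\to\infty$. This proves Definition~\ref{def:identified}. We expect this step to be the main obstacle: the unbounded operators must be controlled uniformly in $\lambda$, and the moment transfer from $\Gamma_\lambda$ to lower symbols has to be arranged carefully. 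Alternatively, one could reuse Lemma~\ref{lem:finite-mode-mixed-calculus} from the appendix with $D=\deg P$, applied to $\Op^{\rm W}(P)\,\Op^{\rm A}(1)$, which should yield the convergence directly.

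\emph{Step 2: the two-point limit.} By Theorem~\ref{thm:main-identified-dynamics}(ii), identified convergence is preserved by the dynamics, so
\[
\Ubb_\lambda(s)\Op^{\rm W}_{\lambda,E_1}(P_1)\to_{\mathrm{id}}\Ucal(s)(P_1)_{E_1},
\qquad
\Ubb_\lambda(t)\Op^{\rm W}_{\lambda,E_2}(P_2)\to_{\mathrm{id}}\Ucal(t)(P_2)_{E_2}.
\]
Here we use that the Wick operators lie in $\Hcal_\lambda$: they are unbounded, but their $\|\cdot\|_{2,\lambda}$-norms are finite by \eqref{eq:fixed-lambda-number-moments}, and $\Ubb_\lambda$ acts unitarily on $\Hcal_\lambda$. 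Lemma~\ref{lem:identified-calculus}(ii) then gives convergence of the inner product $\langle\cdot,\cdot\rangle_\lambda$, whose limit is $\int\overline{(P_1)_{E_1}\circ S_s}\,(P_2)_{E_2}\circ S_t\,\dd\nu$. This is exactly \eqref{eq:finite-mode-Wick-two-point}.

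Two points need checking. First, the trace in the statement must coincide with the $\Hcal_\lambda$ inner product of these elements. We verify this by approximating with the particle-number cutoffs $\mathsf P_N$, which commute with $\Gamma_\lambda$ and with $\mathbb H_\lambda$, as in the proof of Theorem~\ref{thm:one-generator}. Second, the element $\Ubb_\lambda(t)X$ for unbounded $X$ should be identified with $\ee^{\ii t\mathbb H_\lambda}X\ee^{-\ii t\mathbb H_\lambda}$; by \eqref{eq:weighted-HS-conjugation}, this holds through the same cutoff approximation.
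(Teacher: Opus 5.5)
Your proposal is correct. Step~2 is the paper's argument: Corollary~\ref{cor:strong-identified-dynamics} (equivalently Theorem~\ref{thm:main-identified-dynamics}(ii)), followed by Lemma~\ref{lem:identified-calculus}(ii). Your check, via the cutoffs $\mathsf P_N$, that the trace equals the $\Hcal_\lambda$ inner product is more careful than the paper, which simply asserts it.

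For Step~1 the paper uses the route you list as the alternative. It applies Lemma~\ref{lem:finite-mode-mixed-calculus} with trivial middle factor, so the truncation happens on the quantum side through $\Op^{\rm A}_{\lambda,E}(\vartheta_S)$. The tail is then bounded uniformly in $\lambda$ by $0\le\1-\Op^{\rm A}_{\lambda,E}(\vartheta_S)\le C_MS^{-M}(\1+\lambda\mathcal N_E)^M$ together with \eqref{eq:finite-mode-number-moment-bound}.

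Your main route truncates the symbol instead and controls $\Op^{\rm A}_{\lambda,E}((1-\chi_R)P)+\lambda\Op^{\rm A}_{\lambda,E}(P_\lambda)$ through lower-symbol moments. It rests on the same input, since those moments are $\Tr(\Op^{\rm A}_{\lambda,E}(|z|^{2M})\Gamma_\lambda)$, computed by Lemma~\ref{lem:app-radial-AW}. It is self-contained, but it carries the product bookkeeping.

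You can shorten it in two ways:
\begin{itemize}
\item The coherent-state resolution of the identity gives the operator Cauchy--Schwarz bound $\Op^{\rm A}_{\lambda,E}(f)^*\Op^{\rm A}_{\lambda,E}(f)\le\Op^{\rm A}_{\lambda,E}(|f|^2)$, so no correction terms need tracking. In your expansion the $\chi_R\chi_R$ piece is in any case only $O(\lambda^{1/2})$ via Lemma~\ref{lem:AW-product}, not $O(\lambda)$; this is harmless.
\item The weak convergence \eqref{conver} is unnecessary. Since $(1-\chi_R)^2|P|^2\le R^{-2M}|z|^{2M}|P|^2$, the uniform moments give a tail bound $C_MR^{-2M}$ uniformly in $\lambda$.
\end{itemize}
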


\begin{proof}
	By Lemma~\ref{lem:finite-mode-mixed-calculus}, applied in the special
	case in which the middle anti-Wick factor is the identity, we have,
	for \(j=1,2\),
	\[
	\Op_{\lambda,E_j}^{\rm W}(P_j)
	\to_{\mathrm{id}} (P_j)_{E_j}.
	\]
	Corollary~\ref{cor:strong-identified-dynamics}, in the form
	\eqref{eq:identified-dynamics-strong}, now implies
	\[
	\Ubb_\lambda(s)\Op_{\lambda,E_1}^{\rm W}(P_1)
	\to_{\mathrm{id}}\Ucal(s)(P_1)_{E_1},
	\qquad
	\Ubb_\lambda(t)\Op_{\lambda,E_2}^{\rm W}(P_2)
	\to_{\mathrm{id}}\Ucal(t)(P_2)_{E_2}.
	\] 
The trace in \eqref{eq:finite-mode-Wick-two-point} is the corresponding
weighted inner product.  By
Lemma~\ref{lem:identified-calculus}\textup{(ii)}, these inner products
converge.  Since \(\Ucal(\tau)F=F\circ S_\tau\), their limit is the right-hand side of
\eqref{eq:finite-mode-Wick-two-point}.
\end{proof}

\medskip
The two-point argument above cannot be iterated directly: identified
convergence in \(L^2(\Gamma_\lambda)\) does not control products of
unbounded Wick operators. We therefore first prove weighted Schatten and radial-cutoff bounds in
Subsection~\ref{subsec:symmetric-KMS-particle-sectors},
and then establish trace-class estimates and compare the ordered trace
with the equally spaced weighted trace in
Subsection~\ref{subsec:two-dimensional-Wick-theorem}.

\subsection{Weighted Schatten estimates and radial cutoffs}
\label{subsec:symmetric-KMS-particle-sectors}

We use symmetric sandwiched Schatten norms to control equally spaced
weighted traces and derive radial-cutoff estimates for finite-mode
Wick polynomials.  The cases in which some Gibbs exponents vanish are
treated in the next subsection.

Let \(\Gamma\in\mathfrak S^1(\Fock)\) be a positive operator such that
\(\Tr\Gamma=1\) and \(\ker\Gamma=\{0\}\), and let \(p\ge2\).
For \(A\in\mathcal B(\Fock)\), define the symmetric sandwiched norm
\[
\|A\|_{p,\Gamma,\mathrm{sym}}
:=
\bigl\|
\Gamma^{1/(2p)}A\Gamma^{1/(2p)}
\bigr\|_{\mathfrak S^p(\Fock)}.
\]
We use the same notation for the corresponding completion of
\(\mathcal B(\Fock)\).
For \(\Gamma=\Gamma_\lambda\), we write
\(\|A\|_{p,\lambda,\mathrm{sym}}\).
Unbounded Wick polynomials are represented by bounded radial
approximants and regarded as elements of this completion.

\begin{lemma}
\label{lem:symmetric-KMS-Holder}
Let \(J\ge2\) and \(p=2J\).  If the right side below is finite, then
\begin{align}
 \mathcal T_{\Gamma}^{\rm sym}(A_1,\ldots,A_J)
 &:=\Tr\bigl(\Gamma^{1/J}A_1\Gamma^{1/J}A_2\cdots
              \Gamma^{1/J}A_J\bigr),                         \notag\\
 \abs{\mathcal T_{\Gamma}^{\rm sym}(A_1,\ldots,A_J)}
 &\le\prod_{j=1}^J\norm{A_j}_{p,\Gamma,\mathrm{sym}}.          \label{eq:symmetric-KMS-Holder}
\end{align}
For bounded \(B\),
\[
 \norm{B}_{p,\Gamma,\mathrm{sym}}\le\norm{B}_{\mathrm{op}}.
\]
If \(\Gamma=\Gamma_\lambda\), then for every operator in the symmetric
closure and every \(t\in\mathbb R\),
\begin{equation*}
 \norm{\Ubb_\lambda(t)(B)}_{p,\lambda,\mathrm{sym}}
 =\norm{B}_{p,\lambda,\mathrm{sym}}.
\end{equation*}
\end{lemma}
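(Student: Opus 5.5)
The plan is to reduce all three assertions to the Hölder inequality for Schatten norms, applied after regrouping the fractional powers of \(\Gamma\) symmetrically. For the main inequality \eqref{eq:symmetric-KMS-Holder}, split each factor \(\Gamma^{1/J}=\Gamma^{1/(2J)}\Gamma^{1/(2J)}\). Since \(p=2J\), we have \(1/(2J)=1/p\). Using cyclicity of the trace to move one half-power from the front to the back, the trace becomes
\[
 \Tr\Bigl(\prod_{j=1}^J\Gamma^{1/(2p)}\,\Gamma^{1/(2p)}A_j\Gamma^{1/(2p)}\,\Gamma^{1/(2p)}\Bigr).
\]
This is not yet of the right form, because the exponent attached to each side of \(A_j\) is \(1/(2J)\), which is twice \(1/(2p)=1/(4J)\). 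The exponents therefore need to be redistributed more carefully.

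Write \(\Gamma^{1/J}=\Gamma^{1/(2J)}\Gamma^{1/(2J)}\) and set \(X_j:=\Gamma^{1/(2J)}A_j\Gamma^{1/(2J)}\). Cyclicity gives \(\mathcal T_\Gamma^{\rm sym}=\Tr(X_1X_2\cdots X_J)\), and Hölder then yields
\[
 |\Tr(X_1\cdots X_J)|\le\prod_j\|X_j\|_{\mathfrak S^J}.
\]
Here \(\|X_j\|_{\mathfrak S^J}=\|\Gamma^{1/(2J)}A_j\Gamma^{1/(2J)}\|_{\mathfrak S^J}\), which corresponds to the sandwiched norm with \(p=J\). To reach \(p=2J\), I would invoke the Araki--Lieb--Thirring type monotonicity, or more directly the interpolation inequality for the sandwiched norms \(\|\Gamma^{1/(2p)}A\Gamma^{1/(2p)}\|_{\mathfrak S^p}\), which is nondecreasing in \(p\) for a normalized state (this is the standard Kosaki / noncommutative \(L^p\) embedding \(L^{2J}(\Gamma)\hookrightarrow L^J(\Gamma)\) with constant \(1\), since \(\Tr\Gamma=1\)). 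This step is the main obstacle. I would first try to prove it by Stein interpolation of the analytic family \(z\mapsto\Gamma^{z/2}A\Gamma^{z/2}\). As an alternative, I would factor \(\Gamma^{1/(2J)}A\Gamma^{1/(2J)}=\Gamma^{1/(4J)}\bigl(\Gamma^{1/(4J)}A\Gamma^{1/(4J)}\bigr)\Gamma^{1/(4J)}\) and apply Hölder with \(\|\Gamma^{1/(4J)}\|_{\mathfrak S^{4J}}=(\Tr\Gamma)^{1/(4J)}=1\). This gives \(1/J=1/(4J)+1/(2J)+1/(4J)\), and the alternative factorization is clean and suffices on its own. The finiteness assumption guarantees that each \(X_j\) is in the appropriate class. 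For operators in the closure, I would argue first for bounded \(A_j\) and then pass to the limit.

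For the bound on bounded \(B\), Hölder gives \(\|\Gamma^{1/(2p)}B\Gamma^{1/(2p)}\|_{\mathfrak S^p}\le\|\Gamma^{1/(2p)}\|_{\mathfrak S^{2p}}^2\|B\|_{\mathrm{op}}\), and \(\|\Gamma^{1/(2p)}\|_{\mathfrak S^{2p}}^{2p}=\Tr\Gamma=1\). For unitary invariance with \(\Gamma=\Gamma_\lambda\), note that \(e^{\pm\ii t\mathbb H_\lambda}\) commutes with every power of \(\Gamma_\lambda\). Hence \(\Gamma_\lambda^{1/(2p)}\Ubb_\lambda(t)(B)\Gamma_\lambda^{1/(2p)}=e^{\ii t\mathbb H_\lambda}\Gamma_\lambda^{1/(2p)}B\Gamma_\lambda^{1/(2p)}e^{-\ii t\mathbb H_\lambda}\), and Schatten norms are unitarily invariant. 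This isometry on the dense set \(\mathcal B(\Fock)\) extends to the completion.
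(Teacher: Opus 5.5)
Your proof is correct and is essentially the paper's. Composing your two H\"older steps gives \(X_1\cdots X_J=\Gamma^{1/(4J)}Z_1\Gamma^{1/p}Z_2\cdots\Gamma^{1/p}Z_J\Gamma^{1/(4J)}\) with \(Z_j=\Gamma^{1/(2p)}A_j\Gamma^{1/(2p)}\); after cyclicity this is exactly the paper's single \(2J\)-factor Schatten--H\"older bound for \(\Tr(Z_1\Gamma^{1/p}\cdots Z_J\Gamma^{1/p})\) using \(\|\Gamma^{1/p}\|_{\mathfrak S^p}=1\), the bounded-operator bound and unitary invariance are handled identically, and, as you noticed, the three-factor factorization makes the detour through Araki--Lieb--Thirring or Stein interpolation unnecessary.
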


\begin{proof}
Put
\(Z_j=\Gamma^{1/(2p)}A_j\Gamma^{1/(2p)}\) and
\(R_\Gamma=\Gamma^{1/p}\).  Since \(p=2J\), cyclicity gives
\[
 \mathcal T_{\Gamma}^{\rm sym}(A_1,\ldots,A_J)
 =\Tr(Z_1R_\Gamma Z_2R_\Gamma\cdots Z_JR_\Gamma).
\]
There are \(J\) factors \(Z_j\) and \(J\) factors \(R_\Gamma\), hence
\(2J=p\) factors in \(\mathfrak S^p(\Fock)\).  Moreover,
\[
 \norm{R_\Gamma}_{\mathfrak S^p(\Fock)}^p=\Tr\Gamma=1.
\]
Schatten H\"older inequality proves
\eqref{eq:symmetric-KMS-Holder}.  A second application of Schatten
H\"older inequality implies, for bounded \(B\),
\[
 \norm{\Gamma^{1/(2p)}B\Gamma^{1/(2p)}}_{\mathfrak S^p(\Fock)}
 \le
 \norm{\Gamma^{1/(2p)}}_{\mathfrak S^{2p}(\Fock)}^2
 \norm{B}_{\mathrm{op}}
 =\norm{B}_{\mathrm{op}},
\]
because
\(\|\Gamma^{1/(2p)}\|_{\mathfrak S^{2p}(\Fock)}^{2p}=\Tr\Gamma=1\).
Finally, \(\Gamma_\lambda\) commutes with
\(\ee^{\ii t\mathbb H_\lambda}\).  Hence
\[
 \Gamma_\lambda^{1/(2p)}\Ubb_\lambda(t)(B)
 \Gamma_\lambda^{1/(2p)}
 =\ee^{\ii t\mathbb H_\lambda}
  \bigl(\Gamma_\lambda^{1/(2p)}B
        \Gamma_\lambda^{1/(2p)}\bigr)
  \ee^{-\ii t\mathbb H_\lambda},
\]
and Schatten norms are unitarily invariant.  The identity extends by
continuity to the symmetric completion.
\end{proof}

Fix a finite Fourier space \(E\), and put
\[
 \mathcal N_E=\dG(\Pi_E),\qquad
 \mathsf E_n^E=\1_{\{\mathcal N_E=n\}},\qquad
 n_E(z)=\norm{z}^2\quad(z\in E).
\]
Choose \(\chi\in C_c^\infty([0,\infty);[0,1])\) such that
\(\chi=1\) on \([0,1]\) and \(\chi=0\) on \([2,\infty)\), and set
\[
 \chi_K(r)=\chi(r/K),\qquad K\ge1.
\]
Here and below, \(K\) denotes this finite-mode cutoff.

\begin{lemma}\label{lem:finite-mode-radial-sector}
Let \(E\) be fixed, let \(\mathsf E_n^E\) be as above, and write
\(\langle \lambda\mathcal N_E\rangle=\1+\lambda\mathcal N_E\).  Suppose
that
\[
 X_\lambda=\Op_{\lambda,E}^{\rm W}(P),
\]
where \(P\) is a fixed real-valued polynomial of total degree at most
\(D_X\).  Set
\[
 X_\lambda^{[K]}
 :=\chi_K(\lambda\mathcal N_E)X_\lambda
   \chi_K(\lambda\mathcal N_E).
\]
For every even integer \(p\ge2\), there is a finite nonnegative integer
\(\beta_p\) such that the first bound below holds.  For every integer
\(M\ge1\), there is a finite nonnegative integer \(\beta_{p,M}\) such that
the second bound holds.  Uniformly for \(0<\lambda\le1\),
\begin{align}
 |X_\lambda^{[K]}|^p
 &\le C_p\langle \lambda\mathcal N_E\rangle^{\beta_p},
 &&1\le K<\infty,                                     \label{eq:radial-form-moment}\\
 |X_\lambda^{[L]}-X_\lambda^{[K]}|^p
 &\le C_{p,M}K^{-M}
       \langle \lambda\mathcal N_E\rangle^{\beta_{p,M}},
 &&L\ge K\ge1.                                        \label{eq:radial-form-tail}
\end{align}
\end{lemma}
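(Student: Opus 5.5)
We would work entirely on the finite-mode factor \(\Fock(E)\), since \(X_\lambda\), \(\chi_K(\lambda\mathcal N_E)\) and \(\mathcal N_E\) all act as the identity on \(\Fock(E^\perp)\). We then decompose into the sectors \(\mathsf E_n^E\Fock(E)\), each finite-dimensional and each a joint eigenspace of \(\lambda\mathcal N_E\) with eigenvalue \(\lambda n\). A Wick monomial \(\prod a_\lambda^*(\psi_j)^{\alpha_j}\prod a_\lambda(\psi_j)^{\beta_j}\) maps sector \(n\) into sector \(n+|\alpha|-|\beta|\). The basic field bound \(\|a_\lambda^\#(\psi)\mathsf E_n^E\|_{\mathrm{op}}\le(\lambda(n+1))^{1/2}\) then gives
\[
 \|\mathsf E_{n'}^EX_\lambda\mathsf E_n^E\|_{\mathrm{op}}
 \le C_P(1+\lambda n)^{D_X/2}(1+\lambda n')^{D_X/2}
\]
for \(0<\lambda\le1\). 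This block is nonzero only when \(|n-n'|\le D_X\), and since \(\lambda\le1\) we have \(1+\lambda n'\le C(1+\lambda n)\) in that case. Because \(X_\lambda\) is a band matrix in the sector decomposition, we would use Schur's test over sectors together with the relative boundedness of neighbouring weights. This yields the operator inequality
\[
 \|X_\lambda\langle\lambda\mathcal N_E\rangle^{-D_X/2}\|_{\mathrm{op}}\le C,
\]
interpreted sectorwise, and it also gives the commutation estimate \(X_\lambda\langle\lambda\mathcal N_E\rangle^{s}=\langle\lambda\mathcal N_E\rangle^{s}\,O(1)\,X_\lambda\) up to bounded band factors.

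For \eqref{eq:radial-form-moment}, note that \(X_\lambda^{[K]}\) is self-adjoint because \(P\) is real, so the Wick quantization is symmetric. Hence \(|X_\lambda^{[K]}|^p=(X_\lambda^{[K]})^p\) for even \(p\), and this is a bounded band operator whose bandwidth is at most \(pD_X\). Since \(0\le\chi_K\le1\), the cutoffs only shrink blocks. Iterating the block estimate \(p\) times, and using the band structure to compare weights at neighbouring sectors, gives
\[
 \|\langle\lambda\mathcal N_E\rangle^{-pD_X/4}(X_\lambda^{[K]})^p\langle\lambda\mathcal N_E\rangle^{-pD_X/4}\|_{\mathrm{op}}\le C_p,
\]
uniformly in \(K\). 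We would then convert this into a genuine operator inequality \(A\le C\langle\lambda\mathcal N_E\rangle^{\beta}\) for the positive band operator \(A=(X^{[K]}_\lambda)^p\). A sectorwise Schur test, or equivalently Cauchy--Schwarz \(2|\langle\phi_n,A\phi_{n'}\rangle|\le w_n\|\phi_n\|^2+w_n^{-1}\ldots\) with comparable weights on the band, gives this with \(\beta_p=pD_X\) (one may enlarge \(\beta_p\) to an integer).

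For \eqref{eq:radial-form-tail}, set \(\delta_{L,K}:=\chi_L-\chi_K\) and write
\[
 X_\lambda^{[L]}-X_\lambda^{[K]}
 =\delta_{L,K}(\lambda\mathcal N_E)X_\lambda\chi_L(\lambda\mathcal N_E)
 +\chi_K(\lambda\mathcal N_E)X_\lambda\delta_{L,K}(\lambda\mathcal N_E).
\]
Each term is a band operator whose blocks vanish unless one of the sector indices satisfies \(\lambda n\ge K\). By the band structure, both indices then satisfy \(1+\lambda n\ge cK\). On such blocks we may insert the factor \((K^{-1}\langle\lambda\mathcal N_E\rangle)^{M'}\ge c^{M'}\) for free. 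Repeating the argument of the first bound with this extra weight yields
\[
 |X_\lambda^{[L]}-X_\lambda^{[K]}|^p\le C_{p,M}K^{-M}\langle\lambda\mathcal N_E\rangle^{\beta_p+M}.
\]
The difference is still self-adjoint, so the \(p\)-th power again equals its absolute value, and this gives \eqref{eq:radial-form-tail} with \(\beta_{p,M}=\beta_p+M\).

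The main technical point is converting the weighted block norm bounds into honest operator inequalities \(|Y|^p\le C\langle\lambda\mathcal N_E\rangle^\beta\) while keeping uniformity in \(\lambda\le1\) and \(K\). The resolution is the band structure: the sector shift of every term is bounded by a constant independent of \(\lambda\), so \(\lambda\) times that shift is at most a constant, and the weights \(\langle\lambda n\rangle\) on neighbouring sectors within a band are uniformly comparable.
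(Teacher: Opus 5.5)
Your proposal is correct and follows essentially the paper's proof. The ingredients match: the sector band structure of \(X_\lambda\) with uniformly comparable weights \(1+\lambda n\) on each band, weighted block bounds that are uniform in \(K\) because \(0\le\chi_K\le1\), and the \(K^{-M}\) gain from the fact that \(\chi_L-\chi_K\) vanishes on \([0,K]\).

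The only difference is cosmetic. The paper iterates a one-sided weighted bound \(p/2\) times and then squares \(\|(X_\lambda^{[K]})^{p/2}\psi\|\); you iterate \(p\) times to reach a two-sided weighted bound. In either form, the passage to an operator inequality is immediate from \(\langle\psi,A\psi\rangle=\langle W\psi,(W^{-1}AW^{-1})W\psi\rangle\). So the step you flag as the main technical point is not actually an obstacle.
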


\begin{proof}
Lemma~\ref{lem:app-field-sector-bounds} gives an integer
\(0\le L_X\le D_X\) such that
\[
 \mathsf E_m^EX_\lambda\mathsf E_n^E=0
 \quad\text{if }|m-n|>L_X,
 \qquad
 \|X_\lambda\mathsf E_n^E\|_{\mathrm{op}}
 \le C_X(1+\lambda n)^{D_X/2}.
\]
On every nonzero sector block of \(X_\lambda\), the input and output
particle numbers differ by at most \(L_X\); hence their number weights
are uniformly comparable. Consequently, for every \(s,q\ge0\),
\begin{align*}
 &\bigl\|
 \langle\lambda\mathcal N_E\rangle^s
 X_\lambda^{[K]}
 \langle\lambda\mathcal N_E\rangle^{-s-D_X/2}
 \bigr\|_{\mathrm{op}}\le C_s,\\
 &\bigl\|
 \langle\lambda\mathcal N_E\rangle^s
 (X_\lambda^{[L]}-X_\lambda^{[K]})
 \langle\lambda\mathcal N_E\rangle^{-s-D_X/2-q}
 \bigr\|_{\mathrm{op}}
 \le C_{s,q}K^{-q},\qquad L\ge K.
\end{align*}
Indeed, the second estimate follows because a nonzero sector coefficient
of the difference requires
\(1+\lambda n\ge c_XK\); the first follows directly from the same
sector decomposition.

Both operators in the statement are bounded and self-adjoint.  Iterating
the preceding weighted estimates \(p/2\) times, and choosing \(q\)
sufficiently large in the second one, gives exponents
\(\gamma_p,\gamma_{p,M}<\infty\) such that
\[
 \|(X_\lambda^{[K]})^{p/2}\psi\|
 \le C_p\|\langle\lambda\mathcal N_E\rangle^{\gamma_p}\psi\|,
\]
and
\[
 \|(X_\lambda^{[L]}-X_\lambda^{[K]})^{p/2}\psi\|
 \le C_{p,M}K^{-M/2}
 \|\langle\lambda\mathcal N_E\rangle^{\gamma_{p,M}}\psi\|.
\]
Squaring these inequalities and increasing the exponents to integers
proves \eqref{eq:radial-form-moment}--\eqref{eq:radial-form-tail}.
\end{proof}

We also use the following estimate, which does not depend on the
polynomial.  For every fixed integer \(M\ge0\),
\begin{equation}\label{eq:finite-mode-number-moments}
 \sup_{0<\lambda\le\lambda_0}
 \Tr\!\left(
  \langle\lambda\mathcal N_E\rangle^M\Gamma_\lambda
 \right)<\infty.
\end{equation}
For \(M\ge1\), this follows from
\eqref{eq:finite-mode-number-moment-bound} and the uniform
reduced-density bound \eqref{eq:LNR-RDM-uniform}; the case \(M=0\)
follows from \(\Tr\Gamma_\lambda=1\).

\begin{lemma}
\label{lem:static-symmetric-KMS}
Fix a finite Fourier space \(E\) and a polynomial
\(P:E\to\mathbb C\), and define \(X_\lambda\) and \(X_\lambda^{[K]}\), \(K\geq1\), as in Lemma~\ref{lem:finite-mode-radial-sector}.
Then for even $p\geq2$
\begin{equation}\label{eq:static-symmetric-KMS-bound}
 \sup_{0<\lambda\le\lambda_0}
 \norm{X_\lambda}_{p,\lambda,\mathrm{sym}}<\infty.
\end{equation}
For \(K\ge1\), define
\[
 F_K(z):=\chi_K(n_E(z))^2P(z),\qquad z\in E.
\]
Then \(F_K\in C_c^\infty(E;\mathbb C)\), and it is real-valued when
\(P\) is real-valued.  Moreover,
\begin{align}
 &\norm{X_\lambda^{[K]}-\Op_{\lambda,\gH}^{\rm A}F_K}_{\mathrm{op}}
\le C_K\lambda^{1/2},                                      \label{eq:radial-AW-approximation}\\
 &\lim_{K\to\infty}\limsup_{\lambda\downarrow0}
 \norm{X_\lambda-X_\lambda^{[K]}}_{p,\lambda,\mathrm{sym}}=0,\label{eq:radial-symmetric-tail}\\
 &F_K\longrightarrow (P)_E\quad\hbox{in  }L^r(\nu), r\geq 1
 \quad(K\to\infty). \label{eq:radial-classical-tail}
\end{align}
\end{lemma}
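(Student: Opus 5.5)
The proof splits into four estimates. All of them rest on the radial sector bounds of Lemma~\ref{lem:finite-mode-radial-sector} and the uniform finite-mode number moments \eqref{eq:finite-mode-number-moments}.

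\emph{Symmetric norm bound.} We reduce the symmetric norm to a Gibbs moment of a positive operator that commutes with \(\Gamma_\lambda\). For even \(p\) and a bounded self-adjoint \(Y\), we expect
\[
\|Y\|_{p,\lambda,\mathrm{sym}}^p=\Tr\bigl((\Gamma_\lambda^{1/(2p)}Y\Gamma_\lambda^{1/(2p)})^p\bigr)\le\Tr(|Y|^p\Gamma_\lambda).
\]
This follows from the Araki--Lieb--Thirring inequality, or equivalently from the operator-norm/\(\mathfrak S^{p}\) interpolation bound \(\Tr((\Gamma^{1/2p}Y\Gamma^{1/2p})^p)\le\Tr(\Gamma^{1/2}|Y|^p\Gamma^{1/2})\). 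We first treat real \(P\); the complex case follows by splitting into real and imaginary parts of the symbol, since Wick quantization maps real symbols to symmetric operators. Applying the inequality to \(Y=X_\lambda^{[K]}\) and using \eqref{eq:radial-form-moment} gives
\[
\|X_\lambda^{[K]}\|_{p,\lambda,\mathrm{sym}}^p\le C_p\Tr(\langle\lambda\mathcal N_E\rangle^{\beta_p}\Gamma_\lambda),
\]
which is bounded uniformly in \(K\) and \(\lambda\) by \eqref{eq:finite-mode-number-moments}. The same argument with \eqref{eq:radial-form-tail} shows that \((X_\lambda^{[K]})_K\) is Cauchy in the symmetric norm, with tail bounded by \(C_{p,M}K^{-M}\) uniformly in \(\lambda\). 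Interpreting \(X_\lambda\) as the limit of \(X_\lambda^{[K]}\) in the completion yields \eqref{eq:static-symmetric-KMS-bound} and \eqref{eq:radial-symmetric-tail} together. In fact the tail vanishes uniformly in \(\lambda\), not only in the \(\limsup\).

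\emph{Anti-Wick approximation \eqref{eq:radial-AW-approximation}.} We use the Berezin identity to write \(X_\lambda=\Op^{\rm A}_{\lambda,E}(\ee^{-\lambda\Delta_E^{\mathbb C}}P)\). The correction \(\ee^{-\lambda\Delta}P-P\) is a polynomial of degree at most \(D_X-2\) with coefficients \(O(\lambda)\). Next we need the cutoff \(\chi_K(\lambda\mathcal N_E)\). Note that \(\lambda\mathcal N_E=\Op^{\rm W}(n_E)=\Op^{\rm A}(n_E)-\lambda r_E\). We expect \(\chi_K(\lambda\mathcal N_E)\) to be within \(O_K(\lambda^{1/2})\) in operator norm of \(\Op^{\rm A}(\chi_K\circ n_E)\). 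One route is a Helffer--Sjöstrand representation combined with the resolvent bounds \(\|(\lambda\mathcal N_E-z)^{-1}-\Op^{\rm A}((n_E-z)^{-1})\|\lesssim\lambda^{1/2}|\Imm z|^{-2}\). This resolvent bound is obtained from the product estimate of Lemma~\ref{lem:AW-product} applied on the support of \(\chi\).

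The unbounded middle factor is handled using the sector structure. The product \(\chi_K X_\lambda\chi_K\) only involves sectors with \(\lambda n\le 2K+O(\lambda)\), so we may replace \(P\) by \(\tilde\chi P\), where \(\tilde\chi\) is a smooth cutoff equal to \(1\) on \(\{n_E\le 3K\}\). On those sectors the difference \(X_\lambda-\Op^{\rm A}(\tilde\chi P)\) is bounded by \(C_K\lambda^{1/2}\); this again comes from the Berezin identity and a coherent-state localization estimate in the spirit of Lemma~\ref{lem:AW-product}. Two applications of Lemma~\ref{lem:AW-product} then give
\[
\Op^{\rm A}(\chi_K\circ n_E)\,\Op^{\rm A}(\tilde\chi P)\,\Op^{\rm A}(\chi_K\circ n_E)=\Op^{\rm A}(F_K)+O_K(\lambda^{1/2}).
\]
This localization step, which controls the unbounded polynomial between the cutoffs, is the main obstacle. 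If the functional-calculus comparison turns out to be awkward, a fallback is to work sector by sector, using that both operators are explicitly diagonal in \(\mathcal N_E\) up to bandwidth \(L_X\).

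\emph{Classical limit \eqref{eq:radial-classical-tail}.} Since \(F_K\to(P)_E\) pointwise and \(|F_K|\le|(P)_E|\), which lies in \(L^r(\nu)\) by Gaussian moments and \(\dd\nu/\dd\mu_0\le\mathfrak Z^{-1}\), dominated convergence gives \eqref{eq:radial-classical-tail}. Smoothness and compact support of \(F_K\) are immediate from the definition.
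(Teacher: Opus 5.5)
Your arguments for \eqref{eq:static-symmetric-KMS-bound}, \eqref{eq:radial-symmetric-tail} and \eqref{eq:radial-classical-tail} are the paper's. You reduce the symmetric norm to $\Tr(|Y|^p\Gamma_\lambda)$, then use Lemma~\ref{lem:finite-mode-radial-sector}, \eqref{eq:finite-mode-number-moments}, the Cauchy property in $K$, and dominated convergence.

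One small point on that part. Araki--Lieb--Thirring applies only to positive operators, so the paper splits $Y=Y_+-Y_-$ and accepts a factor $2$. Your constant-free inequality is also true, but it needs an extra step. From $-|Y|\le Y\le|Y|$ you can write $\Gamma_\lambda^{1/(2p)}Y\Gamma_\lambda^{1/(2p)}=B^{1/2}KB^{1/2}$ with $B=\Gamma_\lambda^{1/(2p)}|Y|\Gamma_\lambda^{1/(2p)}$ and $\|K\|_{\mathrm{op}}\le1$, and then apply Araki--Lieb--Thirring to $B$.

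For \eqref{eq:radial-AW-approximation} you take a genuinely different route.
\begin{itemize}
\item The paper (Lemma~\ref{lem:domain-safe-radial-AW}) keeps the Wick operator. It swaps the number cutoffs for $\Op^{\rm A}_{\lambda,E}(\chi_K\circ n_E)$ one at a time, using $\|X_\lambda\chi_K(\lambda\mathcal N_E)\|_{\mathrm{op}}\le C_K$. It pushes the creation and annihilation operators of $X_\lambda$ through the anti-Wick cutoff with the exact relations \eqref{eq:finite-mode-AW-eigen}--\eqref{eq:finite-mode-AW-integration}, obtaining $\Op^{\rm A}(g_KP)+O(\lambda)$. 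It finishes with one use of Lemma~\ref{lem:AW-product}.
\item You instead compress $X_\lambda$ to the finitely many sectors $\lambda\mathcal N_E\le 2K$ and replace it there by the bounded $\Op^{\rm A}(\tilde\chi P)$.
\end{itemize}
Your route works. The particle-number distribution of $\xi_{\lambda,E}(z)$ is Poisson with mean $|z|^2/\lambda$, so the part of the symbol on $\{n_E\ge 3K\}$ contributes $O(\ee^{-c_K/\lambda})$ on these sectors. The Berezin correction is $O(\lambda)$ there. Your version reduces everything to bounded symbols and the product lemma. The price is this localization estimate, which the paper's field-moving argument never needs.

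The step that does not go through as you justify it is the cutoff comparison. Lemma~\ref{lem:AW-product} requires both symbols in $C^1_b$, so it cannot give a resolvent bound involving the unbounded symbol $n_E-w$. You would need instead the identity $\lambda\mathcal N_E\,\Op^{\rm A}(g)=\Op^{\rm A}(n_Eg)-\lambda\,\Op^{\rm A}\bigl(r_Eg+\sum_j z_j\partial_{z_j}g\bigr)$.

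The paper avoids Helffer--Sj\"ostrand entirely (Lemma~\ref{lem:app-radial-number-functional-calculus}). Since $\chi_K\circ n_E$ is invariant under the unitary group of $E$, $\Op^{\rm A}_{\lambda,E}(\chi_K\circ n_E)$ is itself diagonal in $\mathcal N_E$. Its eigenvalue on the $n$-particle sector is $\mathbb E\,\chi_K(\lambda Y_{n+r_E})$, where $Y_k$ is Gamma-distributed with mean and variance $k$. A second-order Taylor expansion handles bounded $\lambda n$, and a lower-tail bound for $Y_k$ handles large $\lambda n$. Together they give $|\mathbb E\,\chi_K(\lambda Y_{n+r_E})-\chi_K(\lambda n)|\le C_K\lambda$ uniformly in $n$, which is the comparison you need.
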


\begin{proof}
Lemma~\ref{lem:domain-safe-radial-AW} gives the boundedness of
\(X_\lambda^{[K]}\), the stated properties of \(F_K\), and
\eqref{eq:radial-AW-approximation}.

Assume first that \(P\) is real-valued.  For any bounded self-adjoint
operator \(Y\) and every even \(p\ge2\), the
Araki--Lieb--Thirring inequality \cite{ArakiALT}, applied to the positive
and negative parts of \(Y\), gives
\[
 \|Y\|_{p,\lambda,\mathrm{sym}}
 \le2\bigl(\Tr(|Y|^p\Gamma_\lambda)\bigr)^{1/p}.
\]
Lemma~\ref{lem:finite-mode-radial-sector}
and \eqref{eq:finite-mode-number-moments} therefore imply, for every
\(M\ge1\),
\[
 \sup_{0<\lambda\le\lambda_0}\sup_{L<\infty}
 \|X_\lambda^{[L]}\|_{p,\lambda,\mathrm{sym}}\le C_p,
 \qquad
 \sup_{0<\lambda\le\lambda_0}\sup_{L\ge K}
 \|X_\lambda^{[L]}-X_\lambda^{[K]}\|_{p,\lambda,\mathrm{sym}}
 \le C_{p,M}K^{-M}.
\]
Thus the radial cutoffs are Cauchy in the symmetric sandwiched norm, and
letting \(L\to\infty\) proves
\eqref{eq:static-symmetric-KMS-bound} and
\eqref{eq:radial-symmetric-tail}.    The complex-valued case
follows by applying the argument to \(\Ree P\) and \(\Imm P\).


Finally, \(|F_K|\le|P|\) and \(F_K\to P\) pointwise on \(E\).  The
finite moments of \(P\) and dominated convergence prove
\eqref{eq:radial-classical-tail}.
\end{proof}

\subsection{Comparison of ordered and weighted traces}
\label{subsec:two-dimensional-Wick-theorem}

In this subsection we compare the ordered trace
\(\Tr(A_1\cdots A_J\Gamma_\lambda)\) with the symmetric trace
\(\Tr(\Gamma_\lambda^{1/J}A_1\cdots
\Gamma_\lambda^{1/J}A_J)\).  Moving one Gibbs factor at a time and
using the commutator identity gives their difference.  
Except for \eqref{eq:subpower-number-moments}--
\eqref{eq:thermal-number-log} and
Lemma~\ref{lem:KMS-word-generator}, the results below hold in both
dimensions.  The proof of that lemma uses the two-dimensional bound on
the number operator.  All other proofs use only number conservation,
the sector estimates, and the fixed-\(\lambda\) moments
\eqref{eq:fixed-lambda-number-moments}.

Fix \(J\ge1\), and let
\[
 \Delta_{J-1}=\left\{\boldsymbol\alpha=(\alpha_1,\ldots,\alpha_J):
 \alpha_j\ge0,\ \sum_{j=1}^J\alpha_j=1\right\}.
\]
For \(r\in\{0,\ldots,J-1\}\), indices are read cyclically:
\(
 Y_{\lambda,j+r}:=
 Y_{\lambda,1+((j-1+r)\bmod J)}
\).
We use the same convention for every family indexed by \(j\).
For operator families \(Y_{\lambda,1},\ldots,Y_{\lambda,J}\), set
\[
 \mathcal T_{J,\lambda}(r,\boldsymbol\alpha)
 :=\Tr\!\left(
   \Gamma_\lambda^{\alpha_1}Y_{\lambda,1+r}
   \Gamma_\lambda^{\alpha_2}Y_{\lambda,2+r}\cdots
   \Gamma_\lambda^{\alpha_J}Y_{\lambda,J+r}
 \right)
\]
and define
\begin{equation*}
 \mathcal T_{J,\lambda}^{\max}(Y_{\lambda,1},\ldots,Y_{\lambda,J})
 :=\max_{r=0,\ldots,J-1}
   \sup_{\boldsymbol\alpha\in\Delta_{J-1}}
   \abs{\mathcal T_{J,\lambda}(r,\boldsymbol\alpha)}.
\end{equation*}
For notational simplicity, we do not display the dependence of
\(\mathcal T_{J,\lambda}\) on the fixed family
\((Y_{\lambda,1},\ldots,Y_{\lambda,J})\).
Lemma~\ref{lem:charge-sector-KMS} shows that these traces are well
defined and uniformly bounded for all
\(\boldsymbol\alpha\in\Delta_{J-1}\), including points at which some
\(\alpha_j\) vanish.

We use the following two consequences of
the comparison \cite[(5.57)]{LNR}, Wick's formula for the free state,
and the two-dimensional free-gas estimate \cite[(5.37)]{LNR}.  They
give, for every integer \(k\ge1\),
\[
 \lambda^k\Tr(\mathcal N^k\Gamma_\lambda)
 \le C_k(1+|\log\lambda|)^k.
\]
Hence, for every
\(0\le D<\infty\) and every \(\varepsilon>0\),
\begin{equation}\label{eq:subpower-number-moments}
 \Tr\bigl((\1+\lambda\mathcal N)^D\Gamma_\lambda\bigr)
 \le C_{D,\varepsilon}\lambda^{-\varepsilon},
 \qquad 0<\lambda\le\lambda_0.
\end{equation}
Moreover, the free one-particle density satisfies
\begin{equation}\label{eq:thermal-number-log}
 \lambda\Tr(\gamma_{0,\lambda})\le C\bigl(1+\abs{\log\lambda}\bigr).
\end{equation}


\begin{lemma}
	\label{lem:charge-sector-KMS}
	\label{lem:sectorwise-trace-realization}
	Let \(q\in\mathbb N_0\), and suppose that, for
	\(1\le j\le J\), the operator \(Y_{\lambda,j}\)  satisfies
	\begin{equation}\label{eq:number-sector-bound}
		\bigl\|Y_{\lambda,j}(\1+\lambda\mathcal N)^{-q}\bigr\|_{\rm{op}}
		\le b_{\lambda,j},
		\qquad
		\mathsf E_mY_{\lambda,j}\mathsf E_n=0
		\quad\text{if }\abs{m-n}>q,
	\end{equation}
	where \(b_{\lambda,j}\ge1\).
	Then, for every \(r\in\{0,\ldots,J-1\}\) and
	\(\boldsymbol\alpha\in\Delta_{J-1}\), the product defining
	\(\mathcal T_{J,\lambda}(r,\boldsymbol\alpha)\) has a trace-class
	extension, and
	\begin{equation}\label{eq:number-sector-word}
		\mathcal T_{J,\lambda}^{\max}
		(Y_{\lambda,1},\ldots,Y_{\lambda,J})
		\le
		C_{J,q}
		\left(\prod_{j=1}^J b_{\lambda,j}\right)
		\Tr\!\left((\1+\lambda\mathcal N)^{Jq}\Gamma_\lambda\right).
	\end{equation}
\end{lemma}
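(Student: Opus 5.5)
The plan is to exploit that \(\Gamma_\lambda\) commutes with \(\mathcal N\) and is a function of \(\mathbb H_\lambda\), which conserves particle number, so \(\Gamma_\lambda=\bigoplus_n\Gamma_{\lambda,n}\) with \(\Gamma_{\lambda,n}:=\mathsf E_n\Gamma_\lambda\mathsf E_n\). Write \(\mathsf W:=\1+\lambda\mathcal N\), which commutes with every \(\Gamma_\lambda^{\alpha}\). The second condition in \eqref{eq:number-sector-bound} says each \(Y_{\lambda,j}\) is a band operator of width \(q\) in the sector decomposition, and the first gives sectorwise operator bounds: since \(\mathsf W\) is constant on sectors, \(\|\mathsf E_mY_{\lambda,j}\mathsf E_n\|_{\mathrm{op}}\le b_{\lambda,j}(1+\lambda n)^q\). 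The key step is a weighted factorization: for fixed \(s\) and every \(t\ge0\), the operator \(\mathsf W^{s}Y_{\lambda,j}\mathsf W^{-s-q}\) is bounded by \(C_{s,q}b_{\lambda,j}\). This holds because on each nonzero block \(|m-n|\le q\), and hence \((1+\lambda m)/(1+\lambda n)\le 1+q\lambda\le 1+q\) for \(\lambda\le1\). Summing over the at most \(2q+1\) diagonals gives the bound.

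For a fixed cyclic rotation \(r\) and \(\boldsymbol\alpha\in\Delta_{J-1}\), I will insert weights and write the word as
\[
\Gamma_\lambda^{\alpha_1}\mathsf W^{Jq}\cdot\mathsf W^{-Jq}Y_{\lambda,1+r}\mathsf W^{(J-1)q}\cdot\Gamma_\lambda^{\alpha_2}\cdot\mathsf W^{-(J-1)q}Y_{\lambda,2+r}\mathsf W^{(J-2)q}\cdots\Gamma_\lambda^{\alpha_J}\mathsf W^{-q}Y_{\lambda,J+r}.
\]
The \(\Gamma\) factors commute with \(\mathsf W\). By the adjoint version of the weighted bound (the band structure is symmetric), each middle factor \(\mathsf W^{-(J-j+1)q}Y\mathsf W^{(J-j)q}\) is bounded by \(Cb_{\lambda,j}\), since \(\mathsf W^{-a-q}Y\mathsf W^{a}\) is the adjoint of \(\mathsf W^{a}Y^*\mathsf W^{-a-q}\). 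This leaves the weight \(\mathsf W^{Jq}\) attached to the \(\Gamma\) factors. To place it, I will distribute \(\mathsf W^{Jq}\) as \(\mathsf W^{Jq\alpha_j}\) next to each \(\Gamma_\lambda^{\alpha_j}\) instead of putting everything on the first one. This requires shifting weights through the \(Y\)'s in a way that is symmetric under cyclicity; it works the same way, because only the total exponent of \(\mathsf W\) through each \(Y\) must drop by \(q\), and the exponents of \(\mathsf W\) may be arbitrary reals.

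Then the Schatten–Hölder inequality with exponents \(1/\alpha_j\) applies. The factor \(\Gamma_\lambda^{\alpha_j}\mathsf W^{Jq\alpha_j}=(\Gamma_\lambda\mathsf W^{Jq})^{\alpha_j}\) has \(\mathfrak S^{1/\alpha_j}\) norm equal to \(\Tr(\mathsf W^{Jq}\Gamma_\lambda)^{\alpha_j}\). Multiplying over \(j\) gives \(\Tr(\mathsf W^{Jq}\Gamma_\lambda)\prod_j b_{\lambda,j}\). When some \(\alpha_j=0\), that factor is the identity in operator norm and is simply dropped from Hölder. The remaining exponents still sum to \(1\), so the estimate is uniform on \(\Delta_{J-1}\). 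The trace-class extension follows because the right side is finite by \eqref{eq:fixed-lambda-number-moments}. I will justify the identity by first truncating with \(\mathsf P_N\): on finitely many sectors everything is bounded, the truncated words converge in \(\mathfrak S^1\) by the same bound applied to differences, and the trace is defined as the limit.

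The main obstacle is making the weight bookkeeping rigorous, namely choosing real exponents \(a_0,\dots,a_J\) with \(a_{j-1}-a_j\) matching each \(Y\)'s loss of \(q\) and the leftover weights being exactly \(Jq\alpha_j\). I would solve this by taking the weights linear in the partial sums of \(\alpha\), accepting a harmless enlargement of the constant, and using \(\mathsf W\ge\1\) so that an excess weight is bounded above by \(\mathsf W^{Jq}\). If the exact split proves awkward, the fallback is to bound each \(\Gamma_\lambda^{\alpha_j}\mathsf W^{c_j}\) with \(c_j\le Jq\alpha_j+\) slack via \(\mathsf W^{c}\le\mathsf W^{Jq}\). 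The result is again Hölder with \(\Tr(\mathsf W^{Jq}\Gamma_\lambda)\).
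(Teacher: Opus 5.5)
Your argument is correct, but it follows a different route from the paper.

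\textbf{The paper's route.} It inserts the sector projections $\mathsf E_n$ between all factors and applies Schatten--H\"older blockwise. This gives block bounds $\prod_j b_{\lambda,j}\prod_j(1+\lambda n_{j+1})^q\prod_j\tau_{\lambda,n_j}^{\alpha_j}$, with $\tau_{\lambda,n}=\Tr(\mathsf E_n\Gamma_\lambda)$. Comparability of the $n_j$ and the weighted arithmetic--geometric mean inequality then turn this into $C\sum_j\alpha_j(1+\lambda n_j)^{Jq}\tau_{\lambda,n_j}$. Counting admissible sector sequences finishes the estimate. Absolute convergence of the block sum in $\mathfrak S^1$ supplies the trace-class extension.

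\textbf{Your route.} You conjugate each $Y$ by powers of $\mathsf W=\1+\lambda\mathcal N$ until it is uniformly bounded. A single global H\"older inequality with $(\Gamma_\lambda\mathsf W^{Jq})^{\alpha_j}\in\mathfrak S^{1/\alpha_j}$ then does the interpolation that AM--GM does in the paper.

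\textbf{The exact split exists.} Take $s_J=0$ and $s_{j-1}=s_j+q-Jq\alpha_j$. Then $s_0=0$ because $\sum_j\alpha_j=1$, and $|s_j|\le Jq$. Set $Z_j=\mathsf W^{-(s_j+q)}Y_{\lambda,j+r}\mathsf W^{s_j}$. On finite-particle vectors the word equals $\prod_j(\Gamma_\lambda\mathsf W^{Jq})^{\alpha_j}Z_j$ exactly. Your diagonal-by-diagonal block estimate gives $\|Z_j\|_{\mathrm{op}}\le C_{J,q}b_{\lambda,j}$. Every factor preserves the finite-particle subspace, which is dense. So no cyclic conjugation and no $\mathsf P_N$-truncation is needed: the trace-class extension is simply this product, and H\"older gives the bound with exactly $\Tr(\mathsf W^{Jq}\Gamma_\lambda)$.

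\textbf{Comparison.} Your approach is shorter and avoids the sector-sequence combinatorics. The paper's approach yields sector-by-sector absolute convergence, which it uses directly as the meaning of the trace.

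\textbf{Two minor remarks.}
\begin{itemize}
\item Your constant depends on an upper bound for $\lambda$, since the ratio $(1+\lambda m)/(1+\lambda n)$ is only bounded for bounded $\lambda$. The paper's comparability step has the same dependence.
\item Your fallback of replacing $\mathsf W^{c_j}$ by $\mathsf W^{Jq}$ next to $\Gamma_\lambda^{\alpha_j}$ would not work. The relevant norm is $\|\Gamma_\lambda^{\alpha}\mathsf W^{c}\|_{\mathfrak S^{1/\alpha}}=\Tr(\Gamma_\lambda\mathsf W^{c/\alpha})^{\alpha}$, which degenerates as $\alpha\to0$. The same happens if all of $\mathsf W^{Jq}$ is attached to one factor with small $\alpha_j$. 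The exact proportional split above is what makes the bound uniform on $\Delta_{J-1}$.
\end{itemize}
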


\begin{proof}
	Since \([\Gamma_\lambda,\mathcal N]=0\), we can write
	\[
	\Gamma_\lambda=\bigoplus_{n\ge0}\Gamma_{\lambda,n},
	\qquad
	\Gamma_{\lambda,n}:=\mathsf E_n\Gamma_\lambda\mathsf E_n,
	\qquad
	\tau_{\lambda,n}:=\Tr(\Gamma_{\lambda,n}).
	\]
	Fix \(r\) and \(\boldsymbol\alpha\).  After inserting the particle-number
	projections, the weighted product is the sum over
	\(n_{J+1}=n_1\) of the blocks
	\[
	\prod_{j=1}^J
	\Gamma_{\lambda,n_j}^{\alpha_j}
	\mathsf E_{n_j}Y_{\lambda,j+r}\mathsf E_{n_{j+1}}.
	\]
	Only sequences satisfying
	\(\abs{n_j-n_{j+1}}\le q\) contribute.   Schatten--H\"older inequality,
	with exponent \(1/\alpha_j\) and the convention \(1/0=\infty\), gives
	\[
	\left\|
	\prod_{j=1}^J
	\Gamma_{\lambda,n_j}^{\alpha_j}
	\mathsf E_{n_j}Y_{\lambda,j+r}\mathsf E_{n_{j+1}}
	\right\|_{\mathfrak S^1}
	\le
	\left(\prod_{j=1}^J b_{\lambda,j}\right)
	\prod_{j=1}^J(1+\lambda n_{j+1})^q
	\prod_{j=1}^J\tau_{\lambda,n_j}^{\alpha_j}.
	\]
	All the \(n_j\)'s differ by at most \(Jq\).  Hence their number weights
	are uniformly comparable, and the weighted arithmetic--geometric mean
	implies
	\[
	\prod_{j=1}^J(1+\lambda n_{j+1})^q
	\prod_{j=1}^J\tau_{\lambda,n_j}^{\alpha_j}
	\le
	C_{J,q}\sum_{j=1}^J
	\alpha_j(1+\lambda n_j)^{Jq}\tau_{\lambda,n_j}.
	\]
	For fixed \(j\) and \(n_j\), only finitely many admissible sector
	sequences occur, with a number bounded in terms of \(J\) and \(q\).
	Summing the last estimate proves absolute convergence in
	\(\mathfrak S^1(\Fock)\) and gives
	\eqref{eq:number-sector-word}.  
\end{proof}

\begin{lemma}
	\label{lem:KMS-word-generator}
	Assume \(d=2\) and \(J\ge3\).  Let
	\(X_{\lambda,j}\), \(1\le j\le J\), be fixed finite-mode ordinary Wick
	polynomials, and set
	\[
	A_{\lambda,j}:=\Ubb_\lambda(t_j)(X_{\lambda,j}).
	\]
	Let \(E\) be a finite Fourier space containing all modes occurring in
	this family and let
	\(F_{j,K}\), \(K\ge1\), be the corresponding radial cutoff classical functions from
	Lemma~\ref{lem:static-symmetric-KMS}, and set
	\[
	\widetilde A_{\lambda,j}^{K}
	:=\Ubb_\lambda(t_j)
	\bigl(\Op_{\lambda,\gH}^{\rm A}F_{j,K}\bigr).
	\]
		Both \(X_{\lambda,j}\) and
	\(\Op_{\lambda,\gH}^{\rm A}F_{j,K}\) belong to
	\(\Dom(\Lbb_\lambda)\), and their Heisenberg evolutions satisfy
	\[
	\Lbb_\lambda A_{\lambda,j}
	=\Ubb_\lambda(t_j)(\Lbb_\lambda X_{\lambda,j}),
	\qquad
	\Lbb_\lambda\widetilde A_{\lambda,j}^{K}
	=\Ubb_\lambda(t_j)
	\bigl(\Lbb_\lambda\Op_{\lambda,\gH}^{\rm A}F_{j,K}\bigr).
	\]
	For every \(\varepsilon>0\) and \(1\le j\le J\),
	\begin{equation}\label{eq:KMS-word-generator-bound}
		\sup_{0<\lambda\le\lambda_0}
		\lambda^\varepsilon
		\mathcal T_{J,\lambda}^{\max}
		(A_{\lambda,1},\ldots,
		\Ubb_\lambda(t_j)(\Lbb_\lambda X_{\lambda,j}),\ldots,
		A_{\lambda,J})<\infty.
	\end{equation}
	For every fixed \(K\ge1\), every \(\varepsilon>0\), and
	\(1\le j\le J\),
	\begin{equation}\label{eq:radial-KMS-word-generator-bound}
		\sup_{0<\lambda\le\lambda_0}
		\lambda^\varepsilon
		\mathcal T_{J,\lambda}^{\max}
		(\widetilde A_{\lambda,1}^{K},\ldots,
		\Ubb_\lambda(t_j)
		(\Lbb_\lambda\Op_{\lambda,\gH}^{\rm A}F_{j,K}),\ldots,
		\widetilde A_{\lambda,J}^{K})<\infty.
	\end{equation}
\end{lemma}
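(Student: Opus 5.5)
The plan is to reduce every word in \eqref{eq:KMS-word-generator-bound} to Lemma~\ref{lem:charge-sector-KMS}, since Heisenberg evolution does not change the weighted traces. Because \(\Gamma_\lambda\) commutes with \(\ee^{\ii t\mathbb H_\lambda}\), we have
\[
\mathcal T_{J,\lambda}(r,\boldsymbol\alpha)(\Ubb_\lambda(t_1)Y_1,\ldots,\Ubb_\lambda(t_J)Y_J)
=\Tr\bigl(\Gamma_\lambda^{\alpha_1}\ee^{\ii t_{1+r}\mathbb H_\lambda}Y_{1+r}\ee^{-\ii t_{1+r}\mathbb H_\lambda}\cdots\bigr).
\]
The evolutions \(\ee^{\pm\ii t\mathbb H_\lambda}\) commute with \(\mathcal N\), so they preserve particle-number sectors and the sector norms \(\|\mathsf E_n\cdot\|\). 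The proof of Lemma~\ref{lem:charge-sector-KMS} therefore applies verbatim with the evolved factors. The only input it uses is that each factor \(Y\) has sector bandwidth at most \(q\) and satisfies \(\|Y(\1+\lambda\mathcal N)^{-q}\|_{\rm op}\le b_\lambda\). After Schatten--H\"older, the unitaries between the \(\Gamma^{\alpha_j}_{\lambda,n_j}\) blocks cost nothing. The bound we aim for is \(C\prod_j b_{\lambda,j}\,\Tr((\1+\lambda\mathcal N)^{Jq}\Gamma_\lambda)\). By \eqref{eq:subpower-number-moments} (the \(d=2\) input) this is at most \(C_\varepsilon\lambda^{-\varepsilon}\prod_j b_{\lambda,j}\). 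It then suffices to show \(b_{\lambda,j}\le C_\varepsilon\lambda^{-\varepsilon}\) for every factor, after renaming \(\varepsilon\).

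For the undifferentiated factors \(X_{\lambda,i}\), Lemma~\ref{lem:app-field-sector-bounds} gives bandwidth \(\le D_X\) and \(\|X_{\lambda,i}\mathsf E_n\|\le C(1+\lambda n)^{D_X/2}\), so \(b\) is uniform in \(\lambda\). The same holds for \(\Op^{\rm A}_{\lambda,\gH}F_{i,K}\), which is bounded by \(\|F_{i,K}\|_\infty\), commutes with \(\mathcal N_E\), and with \(\mathcal N\), sector-diagonally. For the generator factor, we first establish the domain statements and the identity \(\Lbb_\lambda\Ubb_\lambda(t)=\Ubb_\lambda(t)\Lbb_\lambda\). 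The latter is general for a unitary group. For membership in the domain, we repeat the particle-number truncation from the proof of Theorem~\ref{thm:one-generator}: truncate with \(\mathsf P_N\), compute the bounded commutator, and pass to the limit using closedness together with \eqref{eq:fixed-lambda-number-moments}. Then we compute \(\Lbb_\lambda X=\ii[\dG(h),X]+\frac{\ii}{2}\sum_k\widehat v(k)(B_{\lambda,k}\adop{M_{-k}}X+\adop{M_k}X\,B_{\lambda,-k})\) via \eqref{eq:exact-quantum-generator-formula}. The free part is again a Wick polynomial on \(E\), with uniform constants. Each \(\adop{M_k}X\) is, by the CCR, a sum of Wick monomials in which one field \(a^\#_\lambda(\psi)\) is replaced by \(a^\#_\lambda(M_{\pm k}\psi)\). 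This factor has bandwidth \(\le D_X\) and norm \(\lambda^{-1}\cdot\lambda\cdot C(1+\lambda n)^{D_X/2}\), uniformly in \(k\), since \(\|M_k\psi\|=1\) and the \(\lambda^{-1}\) from \eqref{eq:quantum-density-derivation} cancels one \(\lambda\) from the commutator.

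The main obstacle is the unbounded centered density \(B_{\lambda,k}\). It preserves \(\mathcal N\), and we bound it by \(\|B_{\lambda,k}\mathsf E_n\|\le\lambda n+\lambda\Tr\gamma_{0,\lambda}\le(1+\lambda n)+C(1+|\log\lambda|)\) using \eqref{eq:thermal-number-log}. This is the step that uses \(d=2\); in \(d=3\) one would get \(\lambda^{-1/2}\). Hence each summand has bandwidth \(\le D_X\) and \(b\le C(1+|\log\lambda|)\) with \(q=D_X/2+1\), uniformly in \(k\). Summing with \(\sum_k\widehat v(k)<\infty\) gives the same bound for \(\Lbb_\lambda X_{\lambda,j}\). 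The series converges sectorwise in operator norm, so the trace bound of Lemma~\ref{lem:charge-sector-KMS} passes to the sum by linearity and absolute convergence. Finally \(1+|\log\lambda|\le C_\varepsilon\lambda^{-\varepsilon}\), which gives \eqref{eq:KMS-word-generator-bound}.

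The radial estimate \eqref{eq:radial-KMS-word-generator-bound} follows in the same way. Here \(\ii[\dG(h),\Op^{\rm A}F_{j,K}]=\Op^{\rm A}(\mathscr D_hF_{j,K})\) is bounded, and \(\adop{M_k}\Op^{\rm A}F_{j,K}\) is the finite sum \eqref{eq:common-finite-mode-first-commutator}, with bounded anti-Wick factors times at most one scaled field, so its bandwidth is \(\le1\) and its sector norm is \(\le C_K(1+\lambda n)^{1/2}\). The \(B_{\lambda,k}\) factor is handled exactly as above, and the remaining factors are bounded.
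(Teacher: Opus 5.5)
Your proposal follows the paper's proof: the sector bounds are invariant under Heisenberg evolution, the domain statement comes from the particle-number cutoff as in Theorem~\ref{thm:one-generator}, the estimate $\|B_{\lambda,k}\mathsf E_n\|\le C(1+|\log\lambda|)(1+\lambda n)$ follows from \eqref{eq:thermal-number-log}, and the conclusion uses Lemma~\ref{lem:charge-sector-KMS} together with \eqref{eq:subpower-number-moments}.

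One claim is false as stated, though the fix is easy. $\Op_{\lambda,\gH}^{\rm A}F_{j,K}$ is not sector-diagonal, and $\adop{M_k}\Op_{\lambda,\gH}^{\rm A}F_{j,K}$ does not have bandwidth $\le1$, unless $P_j$ is gauge invariant. For example, $P_j=z_1$ gives an operator that lowers the particle number by one.

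The correct argument is the one the paper alludes to with ``gauge invariance of the radial factor.'' Anti-Wick quantization is gauge covariant: $\ee^{\ii\theta\mathcal N}\Op_{\lambda,E}^{\rm A}(f)\ee^{-\ii\theta\mathcal N}=\Op_{\lambda,E}^{\rm A}(f(\ee^{-\ii\theta}\cdot))$. The factor $\chi_K(n_E)^2$ is gauge invariant. Hence each component $\chi_K(n_E)^2\bar z^\alpha z^\beta$ shifts the particle number by exactly $|\alpha|-|\beta|$. The bandwidth is therefore at most $\deg P_j$ for $\Op_{\lambda,\gH}^{\rm A}F_{j,K}$, and at most $\deg P_j+1$ for the commutator terms. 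Finite bandwidth is all that Lemma~\ref{lem:charge-sector-KMS} needs, so the rest of your argument goes through unchanged.
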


\begin{proof}
We first treat one Wick polynomial and use Lemma~\ref{lem:app-field-sector-bounds} to have an integer
\(d_X\), independent of \(\lambda\), such that
\[
 \|X_\lambda(\1+\lambda\mathcal N)^{-d_X}\|_{\mathrm{op}}\le C_X,
 \qquad
 \mathsf E_mX_\lambda\mathsf E_n=0
 \quad\text{if }|m-n|>d_X.
\]
Define
\[
 R_\lambda
 :=\ii[\dG(h),X_\lambda]
 +\frac{\ii}{2}\sum_k\widehat v(k)
 \bigl(
  B_{\lambda,k}\adop{M_{-k}}X_\lambda
  +\adop{M_k}X_\lambda B_{\lambda,-k}
 \bigr).
\]
  Moreover,
Lemma~\ref{lem:app-field-sector-bounds},
\(\widehat v\in\ell^1(\mathbb Z^2)\), and
\[
 \|B_{\lambda,k}\mathsf E_n\|_{\mathrm{op}}
 \le C(1+|\log\lambda|)(1+\lambda n),
\]
which follows from \eqref{eq:thermal-number-log}, show, after increasing
\(d_X\) if necessary, that
\[
 \|R_\lambda(\1+\lambda\mathcal N)^{-d_X}\|_{\mathrm{op}}
 \le C_X(1+|\log\lambda|),
 \qquad
 \mathsf E_mR_\lambda\mathsf E_n=0
 \quad\text{if }|m-n|>d_X.
\]
For every fixed \(\lambda>0\), these bounds and
\eqref{eq:fixed-lambda-number-moments} imply
\(X_\lambda,R_\lambda\in L^2(\Gamma_\lambda)\).


As in the proof of Theorem~\ref{thm:one-generator} we use a finite-number-particle cutoff to obtain
\[
 X_\lambda\in\Dom(\Lbb_\lambda),
 \qquad
 \Lbb_\lambda X_\lambda=R_\lambda.
\]

Now put \(Z_{\lambda,j,K}=\Op_{\lambda,\gH}^{\rm A}F_{j,K}\).
Since \(F_{j,K}\in C_c^\infty(E)\),
Theorem~\ref{thm:one-generator} gives
\(Z_{\lambda,j,K}\in\Dom(\Lbb_\lambda)\) and the generator formula
\eqref{eq:candidate-generator}.  Anti-Wick contractivity, the gauge
invariance of the radial factor, the finite-mode recursion
\eqref{eq:finite-mode-recursion}, and
Lemma~\ref{lem:app-field-sector-bounds} give an integer \(d_{j,K}\) such
that both \(Z_{\lambda,j,K}\) and \(\Lbb_\lambda Z_{\lambda,j,K}\)
change the particle number by at most \(d_{j,K}\), and
\[
 \|Z_{\lambda,j,K}(\1+\lambda\mathcal N)^{-d_{j,K}}\|_{\mathrm{op}}
 +\frac{
 \|(\Lbb_\lambda Z_{\lambda,j,K})
 (\1+\lambda\mathcal N)^{-d_{j,K}}\|_{\mathrm{op}}}
 {1+|\log\lambda|}
 \le C_K.
\]
Because \([\mathbb H_\lambda,\mathcal N]=0\), Heisenberg evolution
preserves all these bounds and
\[
 \Lbb_\lambda\Ubb_\lambda(t)(A)
 =\Ubb_\lambda(t)(\Lbb_\lambda A).
\]
Lemma~\ref{lem:charge-sector-KMS} now applies to each family in
\eqref{eq:KMS-word-generator-bound} and
\eqref{eq:radial-KMS-word-generator-bound}.  For some fixed exponent
\(D\),
\[
 \mathcal T_{J,\lambda}^{\max}
 \le C(1+|\log\lambda|)
       \Tr\!\left((\1+\lambda\mathcal N)^D\Gamma_\lambda\right).
\]
Using \eqref{eq:subpower-number-moments} with exponent
\(\varepsilon/2\) and
\(1+|\log\lambda|\le C_\varepsilon\lambda^{-\varepsilon/2}\), the
right-hand side is bounded by
\(C_\varepsilon\lambda^{-\varepsilon}\).  This proves
\eqref{eq:KMS-word-generator-bound} and
\eqref{eq:radial-KMS-word-generator-bound}.
\end{proof}

\begin{lemma}
	\label{lem:ordered-symmetric-KMS-comparison}
	Let \(J\ge2\), and fix \(0<\lambda\le\lambda_0\).  For
	\(1\le j\le J-1\), suppose that operators $A_j\in \text{Dom}(\mathbb L_\lambda), B_j$ satisfy
	\begin{equation}\label{eq:cutoff-stable-commutator}
		\ii[\mathbb H_\lambda,A_{j}]
		=B_{j},
		\qquad 1\le j\le J-1.
	\end{equation}
Set
	\begin{align*}
		\mathcal T_{\rm ord}
		&:=
		\Tr(A_1\cdots A_J\Gamma_\lambda),\qquad
		\mathcal T_{\rm sym}
		:=
		\Tr\!\left(
		\Gamma_\lambda^{1/J}A_1
		\Gamma_\lambda^{1/J}A_2
		\cdots
		\Gamma_\lambda^{1/J}A_J
		\right).
	\end{align*}
	Then
	\begin{align}
		\left|\mathcal T_{\rm ord}-\mathcal T_{\rm sym}\right|
		&\le
		\lambda\sum_{j=1}^{J-1}
		\mathcal T_{J,\lambda}^{\max}
		(A_1,\ldots,B_j,\ldots,A_J).
		\label{eq:ordered-symmetric-KMS-comparison}
	\end{align}
\end{lemma}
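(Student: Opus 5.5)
We move the Gibbs factor through the product one factor at a time, by interpolation in the exponents. Write $\Gamma_\lambda=\mathcal Z_\lambda^{-1}\ee^{-\lambda\mathbb H_\lambda}$. Using cyclicity, $\mathcal T_{\rm ord}=\Tr(\Gamma_\lambda A_1A_2\cdots A_J)$, which is the weighted trace $\mathcal T_{J,\lambda}(0,\boldsymbol\alpha)$ at $\boldsymbol\alpha=(1,0,\ldots,0)$ with $Y_j=A_j$. The symmetric trace $\mathcal T_{\rm sym}$ is the same family at $\boldsymbol\alpha=(1/J,\ldots,1/J)$. We connect these two points by a path in $\Delta_{J-1}$ that transfers mass from the slot before $A_1$ to the slots before $A_2,\ldots,A_J$. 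Concretely, we use $J-1$ segments: on segment $j$ (for $1\le j\le J-1$) only the exponents $\alpha_j$ and $\alpha_{j+1}$ change, with $\alpha_j$ decreasing by some $\delta_j\ge0$ and $\alpha_{j+1}$ increasing by the same amount. The cumulative transfers are arranged so that every final exponent equals $1/J$, which forces $\delta_j=(J-j)/J\le1$.

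On segment $j$ we parametrize
\[
 \phi_j(s)=\Tr\bigl(\cdots\Gamma_\lambda^{\alpha_j-s}A_j\Gamma_\lambda^{\alpha_{j+1}+s}\cdots\bigr),\qquad s\in[0,\delta_j].
\]
Formally,
\[
 \phi_j'(s)=-\Tr\bigl(\cdots\Gamma_\lambda^{\alpha_j-s}[\log\Gamma_\lambda,A_j]\Gamma_\lambda^{\alpha_{j+1}+s}\cdots\bigr).
\]
Since $\log\Gamma_\lambda=-\lambda\mathbb H_\lambda-\log\mathcal Z_\lambda$ and the scalar part commutes with everything, $[\log\Gamma_\lambda,A_j]=-\lambda[\mathbb H_\lambda,A_j]=\ii\lambda B_j$ by \eqref{eq:cutoff-stable-commutator}. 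Hence
\[
 |\phi_j'(s)|=\lambda\,\bigl|\mathcal T_{J,\lambda}(0,\boldsymbol\alpha(s))\bigr|
\]
for the family $(A_1,\ldots,B_j,\ldots,A_J)$ with $\boldsymbol\alpha(s)\in\Delta_{J-1}$. This is at most $\lambda\,\mathcal T_{J,\lambda}^{\max}(A_1,\ldots,B_j,\ldots,A_J)$. Integrating over a segment of length $\delta_j\le1$ and summing over $j$ gives \eqref{eq:ordered-symmetric-KMS-comparison}. The cyclic shifts $r$ are included in $\mathcal T^{\max}$ so that it does not matter at which position the trace is opened.

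The main obstacle is justifying the differentiation for unbounded $\mathbb H_\lambda$ and possibly unbounded $A_j,B_j$, especially at endpoints where some exponent is $0$. We handle it with the particle-number cutoffs $\mathsf P_N$ used in the proof of Theorem~\ref{thm:one-generator}. Since $\mathsf P_N$ commutes with $\Gamma_\lambda$ and $\mathbb H_\lambda$, we compress each factor, $A_j\mapsto\mathsf P_NA_j\mathsf P_N$. On each finite-$N$ range, $\mathbb H_\lambda$ restricted to the block is still unbounded because of the kinetic term. So we additionally insert the spectral projections $\1_{\{\lambda\mathbb H_\lambda\le\Lambda\}}$, which commute with $\Gamma_\lambda$. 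On the range of these projections, $\Gamma_\lambda^{s}$ is a norm-analytic function of $s$, and the identity above is an exact derivative computation.

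It remains to show that the compressed commutator $\ii[\mathbb H_\lambda,A_j]$ between spectral projections equals the compression of $B_j$; we obtain this from $A_j\in\Dom(\Lbb_\lambda)$ and \eqref{eq:cutoff-stable-commutator} in matrix-element form. We then integrate the derivative exactly and remove $\Lambda\to\infty$, then $N\to\infty$. Both limits are taken by dominated convergence in the sector sums of Lemma~\ref{lem:charge-sector-KMS}, which uses the fixed-$\lambda$ moments \eqref{eq:fixed-lambda-number-moments}. Throughout the limits, the right-hand side is bounded by the uncompressed $\mathcal T^{\max}$, since the compressions are contractions inserted at commuting positions.
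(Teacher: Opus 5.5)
Your proof is correct and is essentially the paper's: the paper uses the same telescoping through intermediate traces $C_r$, moving the exponent $(J-r)/J$ from the slot before $A_r$ to the slot after it. It relies on the Duhamel identity $[\Gamma_\lambda^\alpha,A_r]=\ii\lambda\int_0^\alpha\Gamma_\lambda^{\alpha-s}B_r\Gamma_\lambda^{s}\,\dd s$ taken from Theorem~7.2 of LNR, which is the integrated form of your $\phi_j'$, and bounds each integrand by $\mathcal T_{J,\lambda}^{\max}$ over a segment of length $(J-r)/J\le1$.

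One point in your regularization needs fixing: inserting the projections inside the product does not by itself bound the compressed integrands by the uncompressed $\mathcal T_{J,\lambda}^{\max}$. You should instead pass to the limit in the exact integrated identity first, using dominated convergence via the sector bounds of Lemma~\ref{lem:charge-sector-KMS}, and only then bound the unregularized integrand.
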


\begin{proof}
	The Gibbs-power
	commutator formula from \cite[Theorem~7.2]{LNR}, together with
	\eqref{eq:cutoff-stable-commutator}, gives
	\begin{equation}\label{eq:power-commutator-Duhamel}
		[\Gamma_\lambda^\alpha,A_{j}]
		=
		\ii\lambda\int_0^\alpha
		\Gamma_\lambda^{\alpha-s}B_{j}
		\Gamma_\lambda^s\,\dd s,
		\qquad 0<\alpha\le1.
	\end{equation}
		For \(0\le r\le J-1\), define
	\[
	C_r
	:=
	\Tr\!\left[
	\left(
	\prod_{k=1}^{r}
	\Gamma_\lambda^{1/J}A_{k}
	\right)
	\Gamma_\lambda^{(J-r)/J}
	A_{r+1}\cdots A_{J}
	\right],
	\]
	where the product is understood to be \(\1\) when \(r=0\).
	Thus, by cyclicity,
	\[
	C_0
	=
	\Tr(A_{1}\cdots A_{J}\Gamma_\lambda),
	\qquad
	C_{J-1}
	=
	\Tr\!\left(
	\Gamma_\lambda^{1/J}A_{1}
	\cdots
	\Gamma_\lambda^{1/J}A_{J}
	\right).
	\]
	Hence the difference between the cutoff ordered and symmetric traces
	is the telescoping sum
	\[
	C_0-C_{J-1}
	=
	\sum_{r=1}^{J-1}(C_{r-1}-C_r).
	\]
	For \(1\le r\le J-1\), set
	\[
	\mathcal P_{r-1}
	:=
	\prod_{k=1}^{r-1}
	\bigl(\Gamma_\lambda^{1/J}A_{k}\bigr),
	\qquad
	\mathcal P_{0,M}:=\1.
	\]
	Then
	\begin{align*}
		C_{r-1}-C_r
		&=
		\Tr\!\left(
		\mathcal P_{r-1}\Gamma_\lambda^{1/J}
		[\Gamma_\lambda^{(J-r)/J},A_{r}]
		A_{r+1}\cdots A_{J}
		\right)\\
		&=
		\ii\lambda
		\int_0^{(J-r)/J}
		\Tr\!\left(
		\mathcal P_{r-1}
		\Gamma_\lambda^{(J-r+1)/J-s}
		B_{r}\Gamma_\lambda^s
		A_{r+1}\cdots A_{J}
		\right)\,\dd s,
	\end{align*}
	where the second identity follows from
	\eqref{eq:power-commutator-Duhamel}.
	
	For every \(0\le s\le(J-r)/J\), all Gibbs exponents in the last
	integrand are nonnegative and their sum is
	\[
	\frac{r-1}{J}
	+
	\left(\frac{J-r+1}{J}-s\right)
	+s
	=1.
	\]
	Thus the integrand is one of the weighted products covered by
	Lemma~\ref{lem:charge-sector-KMS}, with \(A_r\) replaced by \(B_r\).
	
%
	
Each	integrand is bounded in absolute value by
	\[
	\mathcal T_{J,\lambda}^{\max}
	(A_1,\ldots,B_r,\ldots,A_J).
	\]
	Consequently,
	\[
	|C_{r-1}-C_r|
	\le
	\lambda\frac{J-r}{J}\,
	\mathcal T_{J,\lambda}^{\max}
	(A_1,\ldots,B_r,\ldots,A_J).
	\]
	Since \(C_0=\mathcal T_{\rm ord}\) and
	\(C_{J-1}=\mathcal T_{\rm sym}\), summing over
	\(1\le r\le J-1\) proves
	\eqref{eq:ordered-symmetric-KMS-comparison}.
\end{proof}

\subsection{Multi-time convergence for finite-mode Wick observables}
\label{subsec:Wick-multipoint-convergence}
In this subsection we give the proof of multi-time convergence for finite-mode Wick observables.

\begin{theorem}
\label{thm:finite-mode-Wick-multipoint}
Assume \(d=2\), \(m>0\), and
\eqref{eq:finite-potential-regularity}.  Fix
\(1\le J<\infty\).  For each \(j\), let \(E_j\) be a fixed finite Fourier
space and let \(P_j\) be a fixed polynomial on \(E_j\).  Set
\[
 X_{\lambda,j}=\Op_{\lambda,E_j}^{\rm W}(P_j),
 \qquad X_j=(P_j)_{E_j}.
\]
For arbitrary fixed times \(t_1,\ldots,t_J\), put
\[
 A_{\lambda,j}=\Ubb_\lambda(t_j)(X_{\lambda,j}).
\]
One has
\begin{equation}\label{eq:finite-mode-Wick-multipoint}
 \Tr\!\left(
  \prod_{j=1}^J\Ubb_\lambda(t_j)(X_{\lambda,j})
 \Gamma_\lambda\right)
 \longrightarrow
 \int\prod_{j=1}^JX_j(S_{t_j}u)\,\dd\nu(u)
 \qquad(\lambda\downarrow0).
\end{equation}
\end{theorem}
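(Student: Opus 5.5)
The plan is to reduce the ordered Wick trace to a symmetric Gibbs-weighted trace, approximate that by bounded anti-Wick radial cutoffs, and then come back to an ordered bounded trace where Theorem~\ref{thm:main-bounded} applies.

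The cases $J=1$ and $J=2$ follow from Proposition~\ref{prop:finite-mode-Wick-Hilbert}: for $J=1$ take $P_1=1$ on one side, and for $J=2$ use $X_{\lambda,1}=\Op^{\rm W}_{\lambda,E_1}(\overline{P_1})^*$. So assume $J\ge3$, with $A_{\lambda,j}=\Ubb_\lambda(t_j)(X_{\lambda,j})$. Since $\mathcal N$ commutes with everything, the Heisenberg generator relation gives $\ii[\mathbb H_\lambda,A_{\lambda,j}]=\Ubb_\lambda(t_j)(\Lbb_\lambda X_{\lambda,j})$.

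\emph{Step 1: ordered to symmetric.} Apply Lemma~\ref{lem:ordered-symmetric-KMS-comparison} with $B_j=\Ubb_\lambda(t_j)(\Lbb_\lambda X_{\lambda,j})$. Lemma~\ref{lem:KMS-word-generator} bounds each $\mathcal T^{\max}_{J,\lambda}$ by $C_\varepsilon\lambda^{-\varepsilon}$. Hence
\[
\Tr\Bigl(\prod_jA_{\lambda,j}\Gamma_\lambda\Bigr)-\mathcal T^{\rm sym}_{\Gamma_\lambda}(A_{\lambda,1},\ldots,A_{\lambda,J})=O(\lambda^{1-\varepsilon})\to0 .
\]
Because everything sits in a fixed finite-mode sector, the sectorwise realization of Lemma~\ref{lem:charge-sector-KMS} justifies the trace manipulations with unbounded operators.

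\emph{Step 2: radial cutoff in the symmetric trace.} Replace each $A_{\lambda,j}$ by $\widetilde A^K_{\lambda,j}=\Ubb_\lambda(t_j)(\Op^{\rm A}_{\lambda,\gH}F_{j,K})$. Telescope the symmetric trace one factor at a time and use the Hölder bound \eqref{eq:symmetric-KMS-Holder} with $p=2J$ together with unitary invariance of $\|\cdot\|_{p,\lambda,\mathrm{sym}}$. The error is controlled by products of the uniform bound \eqref{eq:static-symmetric-KMS-bound} with
\[
\|X_{\lambda,j}-X^{[K]}_{\lambda,j}\|_{p,\lambda,\mathrm{sym}}+\|X^{[K]}_{\lambda,j}-\Op^{\rm A}_{\lambda,\gH}F_{j,K}\|_{\mathrm{op}} .
\]
By \eqref{eq:radial-symmetric-tail} and \eqref{eq:radial-AW-approximation}, this vanishes as $\lambda\downarrow0$ and then $K\to\infty$.

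\emph{Step 3: back to ordered bounded traces.} Apply Lemma~\ref{lem:ordered-symmetric-KMS-comparison} again, now to the bounded family $\widetilde A^K_{\lambda,j}$, using \eqref{eq:radial-KMS-word-generator-bound}. For fixed $K$, the symmetric and ordered traces of this family differ by $O_K(\lambda^{1-\varepsilon})$. Theorem~\ref{thm:main-bounded} then gives
\[
\lim_{\lambda\downarrow0}\Tr\Bigl(\prod_j\widetilde A^K_{\lambda,j}\Gamma_\lambda\Bigr)=\int\prod_jF_{j,K}(S_{t_j}u)\,\dd\nu(u).
\]

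\emph{Step 4: remove the cutoff classically.} Let $K\to\infty$. This uses \eqref{eq:radial-classical-tail} in $L^J(\nu)$, invariance $(S_t)_\#\nu=\nu$ (so each $F_{j,K}\circ S_{t_j}\to X_j\circ S_{t_j}$ in $L^J(\nu)$), and the generalized Hölder inequality. The $\varepsilon/3$ bookkeeping is: first take $\limsup_{\lambda}$ of the total difference, which is bounded by the Step 2 error (since Steps 1 and 3 vanish for each $K$), and then send $K\to\infty$.

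\emph{Main obstacle.} The delicate point is Step 1: the generator insertion in the weighted trace must be summable, and this holds only because the two-dimensional number moments grow at most like $\lambda^{-\varepsilon}$. That is exactly why the statement is restricted to $d=2$. With Lemma~\ref{lem:KMS-word-generator} available, what remains is to confirm that Lemma~\ref{lem:ordered-symmetric-KMS-comparison} applies to the unbounded $A_{\lambda,j}$. For this I would run the comparison with the particle-number truncations $\mathsf P_N A_{\lambda,j}\mathsf P_N$, which commute with $\Gamma_\lambda$ and $\mathbb H_\lambda$, and pass $N\to\infty$ using the sector bounds and \eqref{eq:fixed-lambda-number-moments}.
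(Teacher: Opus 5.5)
Your proposal is correct and follows the paper's proof essentially step for step. It uses the same reduction of the ordered trace to the equally spaced Gibbs-weighted trace via Lemma~\ref{lem:ordered-symmetric-KMS-comparison} and Lemma~\ref{lem:KMS-word-generator}, the same radial-cutoff and anti-Wick replacement inside the symmetric trace controlled by the symmetric H\"older bound, the same return to the ordered bounded trace followed by Theorem~\ref{thm:bounded-multipoint}, and the same classical removal of the cutoff. The only cosmetic difference is that you merge the paper's two intermediate symmetric comparisons (Wick $\to$ radially truncated Wick $\to$ $\Op_{\lambda,\gH}^{\rm A}F_{j,K}$) into a single telescoping estimate.
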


\begin{proof}

	For \(J=1\) and \(J=2\), the ordered traces are, respectively,
	\[
	\Tr(A_{\lambda,1}\Gamma_\lambda)
	=\langle\1,A_{\lambda,1}\rangle_\lambda,
	\qquad
	\Tr(A_{\lambda,1}A_{\lambda,2}\Gamma_\lambda)
	=\langle A_{\lambda,1}^*,A_{\lambda,2}\rangle_\lambda.
	\]
	Proposition~\ref{prop:finite-mode-Wick-Hilbert} and
	Corollary~\ref{cor:strong-identified-dynamics} give the limit in these
	two cases.
		Assume henceforth that \(J\ge3\), and set
	\[
	E=E_1+\cdots+E_J,
	\qquad
	\mathcal N_E=\dG(\Pi_E),
	\qquad
	p=2J.
	\]
	Use the cutoff function \(\chi\) fixed in
	Subsection~\ref{subsec:symmetric-KMS-particle-sectors}.  For \(K\ge1\),
	define
	\[
	X_{\lambda,j}^{[K]}
	=\chi_K(\lambda\mathcal N_E)X_{\lambda,j}
	\chi_K(\lambda\mathcal N_E),
	\qquad
	A_{\lambda,j}^{[K]}
	=\Ubb_\lambda(t_j)(X_{\lambda,j}^{[K]}).
	\]
	Let \(F_{j,K}\) be the corresponding symbols from
	Lemma~\ref{lem:static-symmetric-KMS}, and put
	\[
	\widetilde A_{\lambda,j}^{K}
	=\Ubb_\lambda(t_j)
	\bigl(\Op_{\lambda,\gH}^{\rm A}F_{j,K}\bigr).
	\]
		Define
	\begin{align*}
		C_\lambda^{\rm ord}
		&:=\Tr\!\left(
		\prod_{j=1}^J A_{\lambda,j}\Gamma_\lambda
		\right),
		&
		C_\lambda^{\rm sym}
		&:=\mathcal T_{\Gamma_\lambda}^{\rm sym}
		(A_{\lambda,1},\ldots,A_{\lambda,J}),\\
		C_{\lambda,K}^{\rm sym}
		&:=\mathcal T_{\Gamma_\lambda}^{\rm sym}
		(A_{\lambda,1}^{[K]},\ldots,A_{\lambda,J}^{[K]}),
		&
		\widetilde C_{\lambda,K}^{\rm sym}
		&:=\mathcal T_{\Gamma_\lambda}^{\rm sym}
		(\widetilde A_{\lambda,1}^{K},\ldots,
		\widetilde A_{\lambda,J}^{K}),\\
		\widetilde C_{\lambda,K}^{\rm ord}
		&:=\Tr\!\left(
		\prod_{j=1}^J
		\widetilde A_{\lambda,j}^{K}\Gamma_\lambda
		\right),
		&
		C_K^{\rm cl}
		&:=\int\prod_{j=1}^J
		F_{j,K}(S_{t_j}u)\,\dd\nu(u),\\
		C^{\rm cl}
		&:=\int\prod_{j=1}^J
		X_j(S_{t_j}u)\,\dd\nu(u).
	\end{align*}

	We now compare
	\[
	C_\lambda^{\rm ord}
	\rightsquigarrow C_\lambda^{\rm sym}
	\rightsquigarrow C_{\lambda,K}^{\rm sym}
	\rightsquigarrow \widetilde C_{\lambda,K}^{\rm sym}
	\rightsquigarrow \widetilde C_{\lambda,K}^{\rm ord}
	\rightsquigarrow C_K^{\rm cl}
	\rightsquigarrow C^{\rm cl}.
	\]
	Fix \(0<\varepsilon<1\).  The estimates proved above give
	\begin{equation}\label{eq:Wick-correlation-chain-estimates}
		\begin{aligned}
			|C_\lambda^{\rm ord}-C_\lambda^{\rm sym}|
			&\le C_\varepsilon\lambda^{1-\varepsilon},\\
			\lim_{K\to\infty}\limsup_{\lambda\downarrow0}
			|C_\lambda^{\rm sym}-C_{\lambda,K}^{\rm sym}|
			&=0,\\
			|C_{\lambda,K}^{\rm sym}
			-\widetilde C_{\lambda,K}^{\rm sym}|
			&\le C_K\lambda^{1/2},\\
			|\widetilde C_{\lambda,K}^{\rm sym}
			-\widetilde C_{\lambda,K}^{\rm ord}|
			&\le C_{K,\varepsilon}\lambda^{1-\varepsilon},\\
			\widetilde C_{\lambda,K}^{\rm ord}
			&\longrightarrow C_K^{\rm cl}
			\qquad(\lambda\downarrow0,\ K\ \hbox{fixed}),\\
			C_K^{\rm cl}
			&\longrightarrow C^{\rm cl}
			\qquad(K\to\infty).
		\end{aligned}
	\end{equation}
	Indeed, the first and fourth estimates follow from
	Lemma~\ref{lem:KMS-word-generator} and
	Lemma~\ref{lem:ordered-symmetric-KMS-comparison}.  
		The second estimate follows from
	Lemma~\ref{lem:static-symmetric-KMS} and
	Lemma~\ref{lem:symmetric-KMS-Holder}.  The third follows from
	Lemma~\ref{lem:domain-safe-radial-AW}, Heisenberg invariance of the
	operator norm, and the same symmetric H\"older estimate.  The fifth is
	Theorem~\ref{thm:bounded-multipoint}.  The last follows from
	\eqref{eq:radial-classical-tail}, invariance of \(\nu\), finite
	telescoping, and H\"older's inequality.
	
	For fixed \(K\), the triangle inequality and
	\eqref{eq:Wick-correlation-chain-estimates} yield
	\[
	\limsup_{\lambda\downarrow0}
	\abs{C_\lambda^{\rm ord}-C^{\rm cl}}
	\le
	\limsup_{\lambda\downarrow0}
	\abs{C_\lambda^{\rm sym}-C_{\lambda,K}^{\rm sym}}
	+\abs{C_K^{\rm cl}-C^{\rm cl}}.
	\]
	Letting \(K\to\infty\) proves
	\eqref{eq:finite-mode-Wick-multipoint}.

\end{proof}

\begin{proof}[Proof of Theorem~\ref{thm:main-Wick}]
Part~\textup{(i)} is Proposition~\ref{prop:finite-mode-Wick-Hilbert}.
Part~\textup{(ii)} is Theorem~\ref{thm:finite-mode-Wick-multipoint}.
This completes the passage from bounded cylinder observables
to the finite-mode ordinary Wick observables described in
Subsection~\ref{subsec:proof-outline}.
\end{proof}

\begin{corollary}
\label{cor:FKSS-finite-mode}
Assume \(d=2\).
Let \(\xi_j\) be fixed finite-rank \(p_j\)-body operators with finite
Fourier support.  Under \(\tau=\lambda^{-1}\), the corresponding 
observables \(\Theta_\tau(\xi_j)\) in \cite{FKSS} are
\(\Op_{\lambda,E_j}^{\rm W}(P_{\xi_j})\).  The traces of their ordered
multi-time products converge as in
Theorem~\ref{thm:finite-mode-Wick-multipoint}.
\end{corollary}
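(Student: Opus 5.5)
The corollary has two parts: an algebraic identification of the FKSS observables with finite-mode Wick operators, and the convergence, which then follows at once from Theorem~\ref{thm:finite-mode-Wick-multipoint}.

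\emph{Identification.} Recall from \cite{FKSS} that for a bounded operator \(\xi\) on \(\gH^{\otimes_s p}\) the quantum observable is
\[
\Theta_\tau(\xi)=\tau^{-p}\int \overline{\xi(x_1,\ldots,x_p;y_1,\ldots,y_p)}\,a^*_{x_1}\cdots a^*_{x_p}a_{y_1}\cdots a_{y_p}\,\dd x\,\dd y,
\]
with the convention of that paper for the kernel and conjugation. Substituting \(\tau=\lambda^{-1}\) gives \(\tau^{-p}a^*\cdots a=\lambda^{p}a^*\cdots a=a_\lambda^*\cdots a_\lambda\), so \(\Theta_\tau(\xi)\) is the normal-ordered product of \(p\) scaled creation and \(p\) scaled annihilation operators smeared against the kernel of \(\xi\).

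Since \(\xi_j\) has finite rank and finite Fourier support, there is a finite Fourier space \(E_j\) with \(\xi_j=\Pi_{E_j}^{\otimes p_j}\xi_j\Pi_{E_j}^{\otimes p_j}\). I would expand the kernel in the orthonormal basis \((\psi_i)\) of \(E_j\). This turns \(\Theta_\tau(\xi_j)\) into a finite sum
\[
\sum c_{\alpha\beta}\prod_i a_\lambda^*(\psi_i)^{\alpha_i}\prod_i a_\lambda(\psi_i)^{\beta_i},\qquad |\alpha|=|\beta|=p_j .
\]
By \eqref{eq:ordinary-Wick-definition} this is exactly \(\Op^{\rm W}_{\lambda,E_j}(P_{\xi_j})\), where
\[
P_{\xi_j}(z,\bar z)=\langle u^{\otimes p_j},\xi_j u^{\otimes p_j}\rangle,
\]
written in the coordinates \(z_i=\langle\psi_i,u\rangle\). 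Two points need care: the ordering of the \(a^*\) factors is immaterial because creation operators commute, and the coefficients \(c_{\alpha\beta}\) inherit multinomial factors from symmetrization. Both are absorbed by the fact that the symbol is computed from \(u^{\otimes p_j}\) directly.

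The main obstacle is bookkeeping: I need to match the FKSS conjugation and kernel convention to our convention, where \(z\) corresponds to \(a_\lambda\) and \(\bar z\) to \(a_\lambda^*\). Getting this wrong would only replace \(P_{\xi_j}\) by its complex conjugate. To fix the convention I would evaluate both sides on coherent states via \eqref{eq:coherent-eigenvector}, which determines the Wick symbol uniquely.

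\emph{Convergence.} With the identification in hand, the ordered multi-time traces of \(\Ubb_\lambda(t_j)\Theta_\tau(\xi_j)\) are exactly the left-hand side of \eqref{eq:finite-mode-Wick-multipoint}, with \(P_j=P_{\xi_j}\). Since \(d=2\), Theorem~\ref{thm:finite-mode-Wick-multipoint} gives convergence to \(\int\prod_j P_{\xi_j}(S_{t_j}u)\,\dd\nu(u)\), which completes the proof.
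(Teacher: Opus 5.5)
Your proposal is correct and follows the paper's proof. You restrict $\xi_j$ to a finite Fourier space $E_j$, expand in its Fourier basis so that $\Theta_\tau(\xi_j)$ and $\Op^{\rm W}_{\lambda,E_j}(P_{\xi_j})$ agree coefficient by coefficient under $\tau^{-1/2}a^{\#}=a_\lambda^{\#}$, and then apply Theorem~\ref{thm:finite-mode-Wick-multipoint}. Your coherent-state check of the conjugation convention is a harmless addition, but the conjugate you put on the kernel in your displayed formula should be dropped: with the kernel of $\xi_j$ unconjugated, as $P_{\xi_j}=\langle z^{\otimes p_j},\xi_j z^{\otimes p_j}\rangle$ requires, the match is immediate.
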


\begin{proof}
For each \(j\), finite Fourier support gives a fixed finite Fourier
space \(E_j\) such that
\[
 \xi_j
 =\Pi_{E_j}^{\otimes p_j}\xi_j
  \Pi_{E_j}^{\otimes p_j}
 \quad\hbox{on }\gH^{\otimes_s p_j}.
\]
Define
\[
 P_{\xi_j}(z,\bar z)
 :=\langle z^{\otimes p_j},\xi_jz^{\otimes p_j}\rangle,
 \qquad z\in E_j.
\]
Expanding \(\xi_j\) in a fixed orthonormal Fourier basis of \(E_j\)
shows that this is a finite sum of monomials
\(\bar z^\alpha z^\beta\) with
\(|\alpha|=|\beta|=p_j\).  By
\eqref{eq:ordinary-Wick-definition}, ordinary Wick quantization
replaces these coordinates by the corresponding normally ordered
products of \(a_\lambda^*,a_\lambda\).  Under
\(\tau=\lambda^{-1}\), these are exactly the fields
\(\tau^{-1/2}a^*,\tau^{-1/2}a\) in \cite{FKSS}.  Hence, coefficient by coefficient,
\[
 \Theta_\tau(\xi_j)
 =\Op_{\lambda,E_j}^{\rm W}(P_{\xi_j}).
\]
Theorem~\ref{thm:finite-mode-Wick-multipoint}, with these polynomials,
gives the asserted ordered multi-time limit.
\end{proof}

\appendix

\section{Finite-dimensional Fock-space and anti-Wick estimates}
\label{app:Toeplitz}

This appendix collects the finite-dimensional Fock-space and quantization
estimates used in the generator and radial-cutoff arguments.  We first
give in Subsection~\ref{appsubsec:field-sector-bounds} the
creation--annihilation bounds.
Lemma~\ref{lem:app-radial-AW} proves the exact radial anti-Wick formula.
Subsection~\ref{appsubsec:finite-mode-mixed-convergence} proves the mixed
Wick--anti-Wick convergence used for the fixed spectral cutoff in
Section~\ref{sec:higher-density-generator} and for the finite-mode
argument in Section~\ref{sec:Wick-observables}.  Subsection~\ref{appsubsec:finite-rank-mixed-words} proves the finite-rank mixed-word
estimate used in Section~\ref{sec:higher-density-generator}.
Subsection~\ref{appsubsec:radial-number-functional-calculus} proves the
smooth finite-mode-number cutoff comparison needed in
Section~\ref{sec:Wick-observables} by diagonalizing the radial anti-Wick
operator and using a one-dimensional Gamma-distribution estimate.  It
also contains the radially truncated Wick-to-anti-Wick comparison used
in the multipoint argument.

\subsection{Bounds for creation and annihilation operators}
\label{appsubsec:field-sector-bounds}

The following standard estimates are used several times.

\begin{lemma}
\label{lem:app-field-sector-bounds}
Let \(E\subset\gH\) be finite dimensional, and let
\(\mathcal N_E=\dG(\Pi_E)\),
\(\mathsf E_n^E=\1_{\{\mathcal N_E=n\}}\).  For
\(f_1,\ldots,f_r,g_1,\ldots,g_s\in E\), set
\[
 \mathsf W_{r,s}(\boldsymbol f;\boldsymbol g)
 =a_\lambda^*(f_1)\cdots a_\lambda^*(f_r)
   a_\lambda(g_1)\cdots a_\lambda(g_s).
\]
For \(n\ge0\),
\begin{equation}
 \|\mathsf W_{r,s}(\boldsymbol f;\boldsymbol g)\mathsf E_n^E\|_{\mathrm{op}}
 \le C_{r,s}(1+\lambda n)^{(r+s)/2}
 \prod_{i=1}^r\|f_i\|_{\gH}
 \prod_{j=1}^s\|g_j\|_{\gH},
 \qquad 0<\lambda\le1.                                    \label{eq:app-field-sector-bound}
\end{equation}

More generally, if \(V\) is a fixed monomial of degree \(d_V\) in scaled
creation and annihilation operators with vectors in \(E\), then
\begin{equation}\label{eq:app-field-word-bound}
 \|V\psi\|
 \le C_V\|\langle\lambda\mathcal N_E\rangle^{d_V/2}\psi\|,
 \qquad
 \langle\lambda\mathcal N_E\rangle=\1+\lambda\mathcal N_E.
\end{equation}
\end{lemma}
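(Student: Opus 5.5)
The plan is to reduce everything to the standard one-mode bounds on the particle-number sectors of $\Fock(E)$, using the factorization $\Fock(\gH)\simeq\Fock(E)\otimes\Fock(E^\perp)$. The operators in question act as the identity on the second factor, so it suffices to work in $\Fock(E)$. First I treat a single field. For $f\in E$ and $\psi$ in the $n$-particle sector of $\Fock(E)$, the sectorwise definition together with Cauchy--Schwarz gives
\[
\|a(f)\psi\|\le\sqrt n\,\|f\|\,\|\psi\|,
\qquad
\|a^*(f)\psi\|\le\sqrt{n+1}\,\|f\|\,\|\psi\|.
\]
The creation bound follows from the CCR identity $\|a^*(f)\psi\|^2=\|a(f)\psi\|^2+\|f\|^2\|\psi\|^2$. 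After scaling by $\sqrt\lambda$, both bounds read $\|a_\lambda^\#(f)\mathsf E_n^E\|_{\mathrm{op}}\le(\lambda(n+1))^{1/2}\|f\|\le(1+\lambda n)^{1/2}\|f\|$, using $0<\lambda\le1$.

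For the word $\mathsf W_{r,s}$ I proceed from right to left. Each annihilator lowers the particle number by one and each creator raises it by one, so after $j$ factors the intermediate vector lies in a single sector $n'$ with $|n'-n|\le r+s$. Since $\lambda\le1$, we have $1+\lambda n'\le(r+s+1)(1+\lambda n)$. Multiplying the $r+s$ one-step bounds then gives \eqref{eq:app-field-sector-bound} with $C_{r,s}=(r+s+1)^{(r+s)/2}$. The field vectors lie in $E$, so the $E$-number operator is what controls the bound. A component of $\psi$ with particles in $E^\perp$ only multiplies tensorially and does not change the estimate, because $\mathsf E_n^E$ is defined through $\mathcal N_E$.

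For \eqref{eq:app-field-word-bound} I take an arbitrary monomial $V$ of degree $d_V$ in the scaled fields, in any order. The same bookkeeping applies: $V$ maps the $\mathcal N_E=n$ sector into the sectors $n+\ell$ with $|\ell|\le d_V$, and $\|V\mathsf E_n^E\|_{\mathrm{op}}\le C_V(1+\lambda n)^{d_V/2}$. Writing $\psi=\sum_n\mathsf E_n^E\psi$, the images $V\mathsf E_n^E\psi$ are not mutually orthogonal. However, only vectors whose indices differ by at most $2d_V$ can overlap, which is a bounded number of neighbours. A Cauchy--Schwarz (almost-orthogonality) argument therefore gives
\[
\|V\psi\|^2\le(2d_V+1)\sum_n\|V\mathsf E_n^E\psi\|^2\le C_V\sum_n(1+\lambda n)^{d_V}\|\mathsf E_n^E\psi\|^2=C_V\|\langle\lambda\mathcal N_E\rangle^{d_V/2}\psi\|^2.
\]

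There is no real obstacle in this argument. The only points that need care are the shift of sectors, which is absorbed into the comparability of $1+\lambda n'$ and $1+\lambda n$, and the bounded-overlap summation in the last step.
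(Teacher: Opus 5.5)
Your proof is correct and follows the same route as the paper, which applies the one-field sector bounds successively, using the bounded sector shift, first to \(\mathsf W_{r,s}\) and then to \(V\) in its given order. One small correction: a monomial \(V\) shifts \(\mathcal N_E\) by a fixed amount (the number of creators minus the number of annihilators), so the vectors \(V\mathsf E_n^E\psi\) are in fact mutually orthogonal, contrary to your claim, and your almost-orthogonality step is harmless but unnecessary.
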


\begin{proof}
Successive application of the one-field bounds in
\cite[Section~6.1]{LNR15} proves \eqref{eq:app-field-sector-bound} and
the sector shift.  The same argument in the given order of $V$ proves
\eqref{eq:app-field-word-bound}.
\end{proof}

The next computation is the finite-dimensional radial analog of the
coherent-state Gamma representation used in
\cite[Lemma~11.1]{NZZ25}.  We include the short proof because the exact
operator identity, including the dimension shift, is used below.

\begin{lemma}
\label{lem:app-radial-AW}
Let \(E\simeq\mathbb C^{r_E}\) be fixed, where
\(r_E=\dim_{\mathbb C}E\), and recall the notation
$\mathcal N_E=\dG(\1_E).$
For every integer \(M\ge1\), 
\begin{equation}\label{eq:app-radial-AW}
 \Op_{\lambda,E}^{\rm A}(\|z\|^{2M})
 =\prod_{\ell=0}^{M-1}
  \bigl(\lambda\mathcal N_E+\lambda(r_E+\ell)\bigr).
\end{equation}
\end{lemma}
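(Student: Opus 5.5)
Both sides of \eqref{eq:app-radial-AW} commute with the $U(r_E)$-action and with $\mathcal N_E$. The plan is therefore to compute the left side directly via the anti-normal ordering formula and then identify it with the stated polynomial in $\mathcal N_E$.

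\emph{Step 1: reduce to an anti-normally ordered expression.} Write $\|z\|^{2M}=\bigl(\sum_{j}\bar z_jz_j\bigr)^M$ and expand it by the multinomial formula into monomials $\bar z^\alpha z^\alpha$ with $|\alpha|=M$. The anti-normal formula for $\Op^{\rm A}_{\lambda,E}(P)$ then gives
\[
\Op_{\lambda,E}^{\rm A}(\|z\|^{2M})=\sum_{|\alpha|=M}\binom{M}{\alpha}\prod_j a_\lambda(\psi_j)^{\alpha_j}\prod_j a_\lambda^*(\psi_j)^{\alpha_j}.
\]
Since all $a_\lambda(\psi_j)$ commute among themselves, and likewise all $a^*_\lambda(\psi_j)$, this equals the ordered product $\sum_{j_1,\dots,j_M}a_\lambda(\psi_{j_1})\cdots a_\lambda(\psi_{j_M})a_\lambda^*(\psi_{j_M})\cdots a_\lambda^*(\psi_{j_1})$.

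\emph{Step 2: induction on $M$.} For $M=1$, the CCR give $\sum_j a_\lambda(\psi_j)a_\lambda^*(\psi_j)=\sum_j\bigl(a^*_\lambda(\psi_j) a_\lambda(\psi_j)+\lambda\bigr)=\lambda\mathcal N_E+\lambda r_E$. For the inductive step, write the $M$-fold sum as
\[
\sum_{j}a_\lambda(\psi_j)\,\Bigl[\Op^{\rm A}_{\lambda,E}(\|z\|^{2(M-1)})\Bigr]\,a_\lambda^*(\psi_j),
\]
and insert the induction hypothesis $\prod_{\ell=0}^{M-2}(\lambda\mathcal N_E+\lambda(r_E+\ell))=:f(\mathcal N_E)$. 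The pull-through relation $f(\mathcal N_E)a^*_\lambda(\psi_j)=a^*_\lambda(\psi_j)f(\mathcal N_E+1)$ then gives
\[
\sum_j a_\lambda(\psi_j)a_\lambda^*(\psi_j)\,f(\mathcal N_E+1)=(\lambda\mathcal N_E+\lambda r_E)\,f(\mathcal N_E+1).
\]
Now $f(\mathcal N_E+1)=\prod_{\ell=1}^{M-1}(\lambda\mathcal N_E+\lambda(r_E+\ell))$, so multiplying by $(\lambda\mathcal N_E+\lambda r_E)$ yields exactly the product over $\ell=0,\dots,M-1$, which is \eqref{eq:app-radial-AW}.

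\emph{Main obstacle.} The only delicate point is to justify these identities for the unbounded operators. I would verify them on finite-particle vectors, which form a common core for all polynomials in $\mathcal N_E$ and in the fields. On such vectors the coherent-state integral defining $\Op^{\rm A}$ for a polynomial symbol agrees with the anti-normal formula, as stated before \eqref{eq:ordinary-Wick-definition}. Both sides are therefore equal as quadratic forms on the core and, being functions of the self-adjoint operator $\mathcal N_E$, agree as self-adjoint operators.

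As an alternative route, one could instead evaluate $\langle\xi,\cdot\,\xi\rangle$ sectorwise by integrating $\|z\|^{2M}|\langle e,\xi_z\rangle|^2$ in polar coordinates; this gives a Gamma integral $\lambda^M\Gamma(n+r_E+M)/\Gamma(n+r_E)$ on the $n$-particle sector, matching the product.
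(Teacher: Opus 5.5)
Your argument is correct, but your main route differs from the paper's.

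\emph{The paper's route.} The paper argues representation-theoretically. Since $\|z\|^{2M}$ is $U(E)$-invariant, its anti-Wick quantization commutes with the second-quantized unitaries, and so it acts as a scalar on each irreducible sector $E^{\otimes_s n}$. That scalar is then obtained by evaluating the coherent-state integral on $\psi_1^{\otimes n}$. In polar coordinates this is the Gamma ratio $\lambda^M\Gamma(n+r_E+M)/\Gamma(n+r_E)$, which is essentially your ``alternative route''.

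\emph{Your route.} Your main argument is purely algebraic, starting from the anti-normal formula stated in the paper. The multinomial expansion gives the nested sum $\sum a_\lambda(\psi_{j_1})\cdots a_\lambda(\psi_{j_M})a_\lambda^*(\psi_{j_M})\cdots a_\lambda^*(\psi_{j_1})$. Induction with $[a_\lambda(f),a_\lambda^*(g)]=\lambda\langle f,g\rangle$ and the pull-through identity $f(\mathcal N_E)a_\lambda^*(\psi_j)=a_\lambda^*(\psi_j)f(\mathcal N_E+1)$ then produces the shifted factors $\lambda\mathcal N_E+\lambda(r_E+\ell)$ one at a time.

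\emph{What each buys.} Your approach gives the diagonal form directly, with no appeal to irreducibility of $E^{\otimes_s n}$ or Schur's lemma. It also avoids the Gaussian moment computation $\int_{\mathbb C^{r_E}}\|z\|^{2M}|z_1|^{2n}\ee^{-\|z\|^2/\lambda}\,\dd z$, which the paper states in one line. Your treatment of the unbounded operators is adequate: on finite-particle vectors, the coherent-state quadratic form and the anti-normal formula agree.

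What the paper's route buys is generality. The polar-coordinate Gamma representation of the sector eigenvalue does not need a polynomial symbol. It is reused in Lemma~\ref{lem:app-radial-number-functional-calculus} for $\chi(\|z\|^2)$ with $\chi\in C_c^2$, where no anti-normal ordering formula is available. Your induction is tied to polynomial radial symbols.
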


\begin{proof}
The symbol \(\|z\|^{2M}\) is invariant under the unitary group of
\(E\).  Its anti-Wick quantization therefore commutes with the second
quantized unitary action and is scalar on each irreducible sector
\(E^{\otimes_s n}\).  Evaluate that scalar on
\(\psi_1^{\otimes n}\), where
\((\psi_j)_{j=1}^{r_E}\) is an orthonormal basis.
The coherent-state formula gives
\[
 \frac{1}{(\pi\lambda)^{r_E}\lambda^n n!}
 \int_{\mathbb C^{r_E}}
  \|z\|^{2M}|z_1|^{2n}
  \ee^{-\|z\|^2/\lambda}\,\dd z
 =\lambda^M\prod_{\ell=0}^{M-1}(n+r_E+\ell).
\]
The result then follows.
\end{proof}

\subsection{Convergence of finite-mode Wick--anti-Wick products}
\label{appsubsec:finite-mode-mixed-convergence}

For a finite-dimensional complex Hilbert space \(E\), let
\(\mathscr P_D(E)\) denote the polynomials in \((z,\bar z)\) of total
degree at most \(D\).  This is a finite-dimensional space; convergence
in \(\mathscr P_D(E)\) means convergence of the coefficients in any,
and hence every, fixed linear coordinate system.  For a monomial in which all creation operators stand to the left of
all annihilation operators,
\[
 W=a_\lambda^*(u_1)\cdots a_\lambda^*(u_r)
   a_\lambda(v_1)\cdots a_\lambda(v_s),
 \qquad u_\alpha,v_\beta\in E,
\]
we associate the polynomial
\[
 p_W(z)=
 \prod_{\alpha=1}^r\langle z,u_\alpha\rangle
 \prod_{\beta=1}^s\langle v_\beta,z\rangle.
\]
Then \(W=\Op_{\lambda,E}^{\rm W}(p_W)\) by
\eqref{eq:ordinary-Wick-definition}; the correspondence is extended
linearly to finite sums.

\begin{lemma}
\label{lem:finite-mode-mixed-calculus}
Let \(E_0\subset E\subset\gH\) be fixed finite-dimensional Fourier
spaces, and fix \(J\in\mathbb N\) and \(D\in\mathbb N_0\).  For
\(1\le\ell\le J\), let
\[
 p_{\lambda,\ell}^{\rm L},p_{\lambda,\ell}^{\rm R}
 \in\mathscr P_D(E),
 \qquad
 g_{\lambda,\ell}\in C_c^\infty(E_0),
 \qquad
 \deg p_{\lambda,\ell}^{\rm L}
 +\deg p_{\lambda,\ell}^{\rm R}\le D.
\]
Assume that there is a compact set \(\mathscr K\Subset E_0\) such that
\(\supp g_{\lambda,\ell}\subset\mathscr K\) for all \(\lambda,\ell\),
and, as \(\lambda\downarrow0\),
\[
 p_{\lambda,\ell}^{\rm L}\longrightarrow p_\ell^{\rm L},
 \qquad
 p_{\lambda,\ell}^{\rm R}\longrightarrow p_\ell^{\rm R}
 \quad\text{in }\mathscr P_D(E),
 \qquad
 g_{\lambda,\ell}\longrightarrow g_\ell
 \quad\text{in }C_c^\infty(E_0).
\]
Define
\[
 A_{\lambda,\ell}
 :=\Op_{\lambda,E_0}^{\rm A}(g_{\lambda,\ell})
   \otimes\1_{\Fock(E\cap (E_0)^\perp)},
\]
and
\begin{equation}
\label{eq:finite-mode-mixed-word}
 \mathfrak R_\lambda
 :=\sum_{\ell=1}^J
 \Op_{\lambda,E}^{\rm W}(p_{\lambda,\ell}^{\rm L})
 A_{\lambda,\ell}
 \Op_{\lambda,E}^{\rm W}(p_{\lambda,\ell}^{\rm R}).
\end{equation}
All operators are extended by the identity on \(\Fock(E^\perp)\).  Set
\[
 r(z):=\sum_{\ell=1}^J
 p_\ell^{\rm L}(z)g_\ell(\Pi_{E_0}z)p_\ell^{\rm R}(z),
 \qquad z\in E,
\]
and extend \(r\) to \(\Xcal\) as the cylinder function
\(r\circ\Pi_E\).  Then \(\mathfrak R_\lambda\) defines a vector of
\(\Hcal_\lambda\) for every fixed \(\lambda>0\), and
\begin{equation}
\label{eq:finite-mode-mixed-convergence}
 \mathfrak R_\lambda\to_{\mathrm{id}}r
 \qquad(\lambda\downarrow0).
\end{equation}
\end{lemma}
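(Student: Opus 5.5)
The plan is to reduce everything to a finite-dimensional semiclassical statement on \(\Fock(E)\) and then use Definition~\ref{def:identified} together with Lemma~\ref{lem:identified-calculus}. By linearity, it suffices to treat a single term \(\ell\) and then sum. The first step is to convert the mixed word into anti-Wick form. Write \(\Op^{\rm W}_{\lambda,E}(p)=\Op^{\rm A}_{\lambda,E}(\ee^{-\lambda\Delta_E^{\mathbb C}}p)\) by Berezin's identity. Then move the polynomial factors through \(A_{\lambda,\ell}\) using the exact relations \eqref{eq:finite-mode-AW-eigen}--\eqref{eq:finite-mode-AW-integration}, applied on \(\Fock(E)=\Fock(E_0)\otimes\Fock(E\cap E_0^\perp)\); the modes outside \(E_0\) commute with \(A_{\lambda,\ell}\). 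Each passage of \(a_\lambda^\#(\psi_j)\) through the anti-Wick factor yields the symbol multiplied by \(z_j\) or \(\bar z_j\), plus \(\lambda\) times a derivative of \(g\). After finitely many steps (at most \(D\)),
\[
\mathfrak R_\lambda=\sum_{\ell}\Bigl[\Op^{\rm W}_{\lambda,E}(\tilde p^{\rm L}_{\lambda,\ell})\,\Op^{\rm A}_{\lambda,E}(q_{\lambda,\ell})\Bigr],
\]
where \(q_{\lambda,\ell}=p^{\rm L}_{\lambda,\ell}\,g_{\lambda,\ell}\,p^{\rm R}_{\lambda,\ell}+\lambda(\dots)\). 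Concretely, the target representation is
\[
\mathfrak R_\lambda=\Op^{\rm A}_{\lambda,E}(r_\lambda)+\lambda\,\mathcal E_\lambda,
\]
where \(r_\lambda=\sum_\ell p^{\rm L}_{\lambda,\ell}(g_{\lambda,\ell}\circ\Pi_{E_0})p^{\rm R}_{\lambda,\ell}\). Here \(\mathcal E_\lambda\) is a finite sum of words of the same type: Wick polynomials of degree \(\le D\) around anti-Wick factors with symbols in a bounded set of \(C_c^\infty\) functions supported in \(\mathscr K\).

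The second step bounds the error \(\lambda\mathcal E_\lambda\) in \(\Hcal_\lambda\). Each word \(V\) satisfies \(\|V\psi\|\le C\|\langle\lambda\mathcal N_E\rangle^{D/2}\psi\|\) by Lemma~\ref{lem:app-field-sector-bounds}, since the anti-Wick factors have norm at most \(\|g\|_\infty\). Hence
\[
\|V\|_{2,\lambda}^2\le C\Tr(\langle\lambda\mathcal N_E\rangle^{D}\Gamma_\lambda),
\]
which is uniformly bounded by \eqref{eq:finite-mode-number-moments}. This shows both that \(\mathfrak R_\lambda\in\Hcal_\lambda\) for each fixed \(\lambda\) (with \eqref{eq:fixed-lambda-number-moments}) and that \(\|\lambda\mathcal E_\lambda\|_{2,\lambda}\to0\).

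The third step handles the main term \(\Op^{\rm A}_{\lambda,E}(r_\lambda)\), whose symbol is an unbounded polynomial times a compactly supported function in the \(E_0\) variables only. It is polynomially growing in the \(E\ominus E_0\) directions, so it does not lie in \(\Acal_{\rm cyl}\). I would insert a radial cutoff \(\chi_K(\|\Pi_E z\|^2)\) and compare two terms.
\begin{itemize}
\item The cutoff symbol \(r_\lambda\chi_K\) lies in \(\Acal_{\rm cyl}\). It converges uniformly to \(r\chi_K\) as \(\lambda\downarrow0\) because the coefficients converge and \(g_{\lambda,\ell}\to g_\ell\), so the corresponding anti-Wick operators converge in operator norm, hence in \(\Hcal_\lambda\).
\item The tail \(\Op^{\rm A}(r_\lambda(1-\chi_K))\) must be controlled. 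I would expand it back into Wick words using Lemma~\ref{lem:app-radial-AW}-type radial formulas, or more simply dominate it by a positive anti-Wick operator: \(|r_\lambda(1-\chi_K)|^2\le C K^{-1}\langle\|z\|^2\rangle^{D+1}\). This bounds \(\|\cdot\|_{2,\lambda}^2\) through operator monotonicity of the anti-Wick map applied to \(A^*A\le\Op^{\rm A}(|f|^2)\), together with \eqref{eq:app-radial-AW} and the uniform moments. The result is \(O(K^{-1})\) uniformly in \(\lambda\).
\end{itemize}
Classically, \(r\chi_K\to r\) in \(L^2(\nu)\) by dominated convergence and the Gaussian moments under \(\nu\). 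Lemma~\ref{lem:identified-calculus}\textup{(v)}, with \(Q=K\), then gives \eqref{eq:finite-mode-mixed-convergence}.

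The main obstacle is the tail estimate. The inequality \(\Op^{\rm A}(f)^*\Op^{\rm A}(f)\le\Op^{\rm A}(|f|^2)\) needs justification for unbounded symbols, which requires working on the finite-particle sectors of \(\Fock(E)\) where everything is bounded. The support condition on \(g\) only localizes the \(E_0\) directions, so the uniform number moments of \(\mathcal N_E\) are what actually control growth. If the positivity route fails, the fallback is to apply the commutation step to the truncated word and use Lemma~\ref{lem:finite-mode-radial-sector}-style sector estimates, as in Lemma~\ref{lem:static-symmetric-KMS}.
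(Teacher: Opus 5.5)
Your proof is correct, but it is organized differently from the paper's.

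\emph{Your route.} You collapse each word into a single anti-Wick quantization, \(\mathfrak R_\lambda=\Op^{\rm A}_{\lambda,E}(r_\lambda)+\lambda\mathcal E_\lambda\), whose symbol grows polynomially on all of \(E\). You then cut off radially in all of \(E\) at the level of symbols. The tail is controlled by the coherent-state Cauchy--Schwarz inequality \(\Op^{\rm A}(f)^*\Op^{\rm A}(f)\le\Op^{\rm A}(|f|^2)\) together with the exact radial formula \eqref{eq:app-radial-AW}.

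\emph{The paper's route.} It never quantizes an unbounded symbol. The Wick factors stay outside, and the cutoff \(\vartheta(S^{-1}\|\Pi_{E_1}z\|^2)\), with \(E_1=E\cap E_0^\perp\), is inserted only into the middle anti-Wick factor. Since \(g_{\lambda,\ell}\) already localizes the \(E_0\) directions, the cut-off word is exactly \(\Op^{\rm A}_{\lambda,\gH}(r_{\lambda,S})\) with \(r_{\lambda,S}\in C_c^\infty(E)\). All \(O(\lambda)\) corrections are absorbed into the convergence \(r_{\lambda,S}\to(\vartheta_S\circ\Pi_{E_1})r\) in \(C_c^\infty(E)\), so no separate error term \(\lambda\mathcal E_\lambda\) appears. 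The tail is handled by the operator bound \(0\le\1-X_{\lambda,S}\le C_MS^{-M}(\1+\lambda\mathcal N_{E_1})^M\), sandwiched between the remaining \(E_1\)-fields.

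\emph{Trade-off.} Your route gives a shorter tail estimate. The paper's route avoids any anti-Wick calculus for polynomially growing symbols.

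The obstacle you flag is easily removed. For vectors \(\psi\) with finitely many particles, Cauchy--Schwarz against the coherent-state POVM gives \(|\langle\phi,\Op^{\rm A}(f)\psi\rangle|\le\|\phi\|\,\langle\psi,\Op^{\rm A}(|f|^2)\psi\rangle^{1/2}\). Since \(\Gamma_\lambda\) commutes with \(\mathcal N\), its eigenvectors can be taken with finite particle number, and the weighted bound \(\|\Op^{\rm A}(f)\|_{2,\lambda}^2\le\Tr(\Op^{\rm A}(|f|^2)\Gamma_\lambda)\) follows.

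The same inequality also handles \(\lambda\mathcal E_\lambda\) once all fields have been moved through, because its symbol is then bounded by \(C\lambda(1+\|z\|^2)^{D/2}\). If you instead keep \(\mathcal E_\lambda\) as Wick words, the bound \(\|\Op^{\rm A}_{\lambda,E_0}(g)\|_{\mathrm{op}}\le\|g\|_\infty\) alone does not give \(\|V\psi\|\le C\|\langle\lambda\mathcal N_E\rangle^{D/2}\psi\|\), since this factor does not commute with \(\mathcal N_{E_0}\). You would first have to move the \(E_0\)-fields through it, as the paper does.

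Finally, the initial Berezin conversion is unnecessary: moving fields one at a time with \eqref{eq:finite-mode-AW-eigen}--\eqref{eq:finite-mode-AW-integration} suffices.
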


\begin{proof}
Put
\[
 E_1:=E\cap (E_0)^\perp,
 \qquad
 \mathcal N_{E_1}:=\dG(\Pi_{E_1}).
\]
The middle anti-Wick symbols have a common compact support in \(E_0\).
The only difficulty is therefore the polynomial growth of the Wick
factors in the remaining finite-dimensional directions \(E_1\).
The coefficient convergence also shows that all polynomial
coefficients are uniformly bounded.  Hence the limiting function
\(r\) has polynomial growth on \(E\), and
\(r\in L^p(\nu)\) for every finite \(p\).

Choose \(\vartheta\in C_c^\infty([0,\infty))\), with
\(0\le\vartheta\le1\) and \(\vartheta=1\) near zero, and put
\[
 \vartheta_S(z_1):=\vartheta(S^{-1}\|z_1\|^2),
 \qquad z_1\in E_1, S>0.
\]
In \eqref{eq:finite-mode-mixed-word}, replace
\(A_{\lambda,\ell}\) by
\[
 A_{\lambda,\ell}^S
 :=\Op_{\lambda,E}^{\rm A}\!\left(
     g_{\lambda,\ell}(\Pi_{E_0}z)\,
     \vartheta_S(\Pi_{E_1}z)
   \right),
\]
and denote the resulting sum by \(\mathfrak R_\lambda^S\).  For fixed
\(S\), all variables are restricted to a common compact subset of
\(E\).  Repeated use of
\eqref{eq:finite-mode-AW-eigen}--
\eqref{eq:finite-mode-AW-integration} therefore implies
\[
 \mathfrak R_\lambda^S
 =\Op_{\lambda,\gH}^{\rm A}(r_{\lambda,S})
\]
with \(r_{\lambda,S}\in C_c^\infty(E)\).  The assumed convergence of
the coefficients implies
\[
 r_{\lambda,S}\longrightarrow
 (\vartheta_S\circ\Pi_{E_1})r
 \quad\text{in }C_c^\infty(E).
\]
Indeed, the leading terms give the ordinary product in the definition
of \(r\), while every correction produced by \eqref{eq:finite-mode-AW-integration} contains an explicit factor \(\lambda\).  Hence as  $\lambda\downarrow0$
\begin{equation}
\label{eq:finite-mode-fixed-S-convergence}
 \left\|\mathfrak R_\lambda^S
 -\Op_{\lambda,\gH}^{\rm A}
   \bigl((\vartheta_S\circ\Pi_{E_1})r\bigr)
 \right\|_{2,\lambda}
 \longrightarrow0
.
\end{equation}

We next remove the cutoff.  Let
$ X_{\lambda,S}:=\Op_{\lambda,E_1}^{\rm A}(\vartheta_S).$
Factorization of the coherent-state positive operator-valued measure
implies
\[
 A_{\lambda,\ell}-A_{\lambda,\ell}^S
 =\Op_{\lambda,E_0}^{\rm A}(g_{\lambda,\ell})
  \otimes(\1-X_{\lambda,S}).
\]
For every integer \(M\ge1\),
\(0\le1-\vartheta(t)\le C_Mt^M\).  Anti-Wick positivity and
Lemma~\ref{lem:app-radial-AW} imply
\[
 0\le\1-X_{\lambda,S}
 \le C_{M,E}S^{-M}
      (\1+\lambda\mathcal N_{E_1})^M.
\]
Move the fields from \(E_0\) through the middle anti-Wick operator by
\eqref{eq:finite-mode-AW-eigen}--
\eqref{eq:finite-mode-AW-integration}.  The remaining fields lie in
\(E_1\), and their total number is at most \(D\).  The previous
inequality and Lemma~\ref{lem:app-field-sector-bounds} then imply, as a
quadratic-form inequality 
\[
 (\mathfrak R_\lambda-\mathfrak R_\lambda^S)^*
 (\mathfrak R_\lambda-\mathfrak R_\lambda^S)
 \le C_{M,E,D}S^{-2M}
 (\1+\lambda\mathcal N_{E_1})^{2M+D},
\]
which yields
\begin{equation}
\label{eq:finite-mode-mixed-tail}
 \|\mathfrak R_\lambda-\mathfrak R_\lambda^S\|_{2,\lambda}^2
 \le C_{M,E,D}S^{-2M}
 \Tr\!\left[(\1+\lambda\mathcal N_{E_1})^{2M+D}
             \Gamma_\lambda\right].
\end{equation}
Since \(E_1\) is fixed and finite-dimensional,
we obtain by \eqref{eq:finite-mode-number-moment-bound}
\begin{equation}
\label{eq:finite-mode-tail-removal}
 \lim_{S\to\infty}\limsup_{\lambda\downarrow0}
 \|\mathfrak R_\lambda-\mathfrak R_\lambda^S\|_{2,\lambda}=0.
\end{equation}

Finally,
\((\vartheta_S\circ\Pi_{E_1})r\to r\) in \(L^2(\nu)\) by dominated
convergence.  First choose \(S\) so that this classical error and the
error in \eqref{eq:finite-mode-tail-removal} are small, and then use
\eqref{eq:finite-mode-fixed-S-convergence} in
Definition~\ref{def:identified}.  Thus
\eqref{eq:finite-mode-mixed-convergence} follows.
\end{proof}

\begin{corollary}
\label{cor:finite-Wick-mixed-pairing}
Let \(E\subset\gH\) be a fixed finite Fourier space and let
\(P\) be a polynomial on \(E\).  Then, for every
\(\Phi\in\Acal_{\rm cyl}\),
\[
 \left\langle \Op_{\lambda,\gH}^{\rm A}\Phi,
  \Op_{\lambda,E}^{\rm W}(P)\right\rangle_\lambda
 \longrightarrow
 \int\overline{\Phi(u)}\,(P)_E(u)\,\dd\nu(u)
 \qquad(\lambda\downarrow0).
\]
\end{corollary}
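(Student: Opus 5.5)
The plan is to read the corollary off from Lemma~\ref{lem:finite-mode-mixed-calculus} combined with the inner-product convergence in Lemma~\ref{lem:identified-calculus}\textup{(ii)}. First take \(E_0\) to be any finite Fourier space with \(\Pi_{E_0}\gH\subset E\), or simply \(E_0=E\). Let \(J=1\), \(p^{\rm L}_{\lambda,1}=P\), \(p^{\rm R}_{\lambda,1}=1\), and \(D=\deg P\). The difficulty is that the lemma needs a middle anti-Wick factor \(g_{\lambda,1}\in C_c^\infty(E_0)\), and the identity does not have compact support. This is exactly the special case the paper already uses in Proposition~\ref{prop:finite-mode-Wick-Hilbert}: inside the proof of Lemma~\ref{lem:finite-mode-mixed-calculus}, the middle factor is in effect replaced by the radial cutoff \(\Op^{\rm A}_{\lambda,E_1}(\vartheta_S)\), and the tail is then removed via \eqref{eq:finite-mode-mixed-tail}. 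I would therefore quote the identified convergence \(\Op^{\rm W}_{\lambda,E}(P)\to_{\mathrm{id}}(P)_E\), which was already obtained in Proposition~\ref{prop:finite-mode-Wick-Hilbert}.

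If one prefers to make this explicit, the argument runs as follows. Take \(E_0=\{0\}\), or equivalently insert \(\vartheta_S(\|z\|^2)\) on all of \(E\). Then \(\mathfrak R^S_\lambda=\Op^{\rm W}_{\lambda,E}(P)\,\Op^{\rm A}_{\lambda,E}(\vartheta_S)\). Equations \eqref{eq:finite-mode-AW-eigen}--\eqref{eq:finite-mode-AW-integration} turn this into \(\Op^{\rm A}_{\lambda,\gH}(r_{\lambda,S})\), where \(r_{\lambda,S}\to\vartheta_S P\) in \(C_c^\infty\). The error \(\Op^{\rm W}(P)(\1-\Op^{\rm A}(\vartheta_S))\) is controlled in \(\|\cdot\|_{2,\lambda}\) by \(S^{-M}\) times a finite-mode number moment, using Lemma~\ref{lem:app-radial-AW}, Lemma~\ref{lem:app-field-sector-bounds}, and \eqref{eq:finite-mode-number-moment-bound}. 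Finally, \(\vartheta_S P\to P\) in \(L^2(\nu)\), and Lemma~\ref{lem:identified-calculus}\textup{(v)} gives the claim.

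Second, for \(\Phi\in\Acal_{\rm cyl}\) the constant sequence \(\Op^{\rm A}_{\lambda,\gH}\Phi\to_{\mathrm{id}}\Phi\) holds trivially from Definition~\ref{def:identified}, taking \(x^\varepsilon=\Phi\). Lemma~\ref{lem:identified-calculus}\textup{(ii)} then gives
\[
\langle\Op^{\rm A}_{\lambda,\gH}\Phi,\Op^{\rm W}_{\lambda,E}(P)\rangle_\lambda\to\langle\Phi,(P)_E\rangle_{L^2(\nu)}=\int\overline{\Phi}\,(P)_E\,\dd\nu,
\]
which is the assertion.

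The only point that needs care is that \(\Op^{\rm W}_{\lambda,E}(P)\) is unbounded, so one must check that it defines an element of \(\Hcal_\lambda\) and that the pairing equals the trace \(\Tr((\Op^{\rm A}\Phi)^*\Op^{\rm W}(P)\Gamma_\lambda)\). For fixed \(\lambda\), this follows from \eqref{eq:fixed-lambda-number-moments} together with the sector bound \eqref{eq:app-field-word-bound}. Concretely, \(\Op^{\rm W}(P)\Gamma_\lambda^{1/2}\in\mathfrak S^2\), and approximating by \(\mathsf P_N\Op^{\rm W}(P)\mathsf P_N\) shows that both sides agree. I expect no genuine obstacle: the substantive tail estimate is already contained in Lemma~\ref{lem:finite-mode-mixed-calculus}.
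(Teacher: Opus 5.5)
Your argument is correct, but it arranges the pairing differently from the paper.

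\textbf{The paper's route.} It reduces by linearity to \(\Phi=G\circ\Pi_{E_0}\) with \(G\in C_c^\infty(E_0)\). It then applies Lemma~\ref{lem:finite-mode-mixed-calculus} with left polynomial \(1\), middle symbol \(\overline G\), and right polynomial \(P\), obtaining \(\Op^{\rm A}_{\lambda,\gH}\overline\Phi\,\Op^{\rm W}_{\lambda,E}(P)\to_{\mathrm{id}}\overline\Phi\,(P)_E\). Finally it pairs this product with \(\1\to_{\mathrm{id}}1\).

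\textbf{Your route.} You place the anti-Wick factor in the other slot of the inner product. You use \(\Op^{\rm W}_{\lambda,E}(P)\to_{\mathrm{id}}(P)_E\) on its own, together with the trivial \(\Op^{\rm A}_{\lambda,\gH}\Phi\to_{\mathrm{id}}\Phi\), and conclude with Lemma~\ref{lem:identified-calculus}\textup{(ii)}.

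\textbf{What each buys.} In the paper's choice the middle symbol \(\overline G\) is compactly supported, so the mixed lemma applies exactly as stated. Your choice needs the identity-middle-factor case, which lies outside the lemma's literal hypotheses unless one allows \(E_0=\{0\}\). You correctly fill this in with three ingredients: the radial cutoff \(\Op^{\rm A}_{\lambda,E}(\vartheta_S)\), the bound \(0\le\1-\Op^{\rm A}_{\lambda,E}(\vartheta_S)\le C_MS^{-M}(\1+\lambda\mathcal N_E)^M\), and Lemma~\ref{lem:identified-calculus}\textup{(v)}. This is the same special case that Proposition~\ref{prop:finite-mode-Wick-Hilbert} invokes. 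Quoting it is not circular, since you use only its static part.

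In exchange, your route treats every \(\Phi\in\Acal_{\rm cyl}\) at once, including the constants. The paper's reduction to \(G\in C_c^\infty(E_0)\) passes over the constants, and for \(\Phi=1\) the Wick-only convergence is needed in any case. Your route also makes clear that the corollary is simply the inner-product form of \(\Op^{\rm W}_{\lambda,E}(P)\to_{\mathrm{id}}(P)_E\).
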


\begin{proof}
By finite linearity it is enough to take
\(\Phi=G\circ\Pi_{E_0}\), with \(E_0\) a finite Fourier space and
\(G\in C_c^\infty(E_0)\).  Enlarge the finite Fourier space so that it
contains both \(E\) and \(E_0\).  Lemma~\ref{lem:finite-mode-mixed-calculus},
with left Wick polynomial \(1\), anti-Wick coefficient
\(\Op_{\lambda,\gH}^{\rm A}\overline\Phi\), and right Wick polynomial
\(\Op_{\lambda,E}^{\rm W}(P)\), implies
\[
 \Op_{\lambda,\gH}^{\rm A}\overline\Phi\,
 \Op_{\lambda,E}^{\rm W}(P)
 \to_{\mathrm{id}}\overline\Phi(P)_E.
\]
Since \(\Op_{\lambda,\gH}^{\rm A}1=\1\to_{\mathrm{id}}1\), using
Lemma~\ref{lem:identified-calculus}\textup{(ii)} we obtain convergence of
the pairing with the unit vector \(\1\), which is the required result.
\end{proof}

\subsection{Bounds for products with finite-rank one-particle operators}
\label{appsubsec:finite-rank-mixed-words}

The following estimate is used in the summation of the interaction modes in
Section~\ref{sec:higher-density-generator}.

\begin{lemma}
\label{lem:finite-rank-mixed-word}
Fix a finite Fourier projection \(\Pi\), 
\(G\in C_c^\infty(\Pi\gH)\), and \(d_0,r_0,K_0<\infty\).  Let
\(b=b^*\) satisfy \(\|b\|_{\mathrm{op}}\le1\), and suppose that a Fourier
space \(E\supset \Pi\gH+b\Pi\gH\) has dimension at most \(d_0\).  Set
\[
 C_{\lambda,b}:=
 \adop{b}\Op_{\lambda,\gH}^{\rm A}G.
\]
Let \(R=R^*\) obey
\[
 R=\Pi_ER\Pi_E,\qquad
 \operatorname{rank}R\le r_0,\qquad
 \|R\|_{\mathrm{op}}+\|R\|_{\mathfrak S^2(\gH)}\le K_0.
\]
Then
\begin{equation}\label{eq:abstract-finite-rank-mixed-word}
 \|B_\lambda(R)C_{\lambda,b}\|_{2,\lambda}^2
 +\|C_{\lambda,b}B_\lambda(R)\|_{2,\lambda}^2
 \le C\left(1+\sum_{j=1}^{3}
  \|\lambda^j\Gamma_\lambda^{(j)}\|_{\mathfrak S^2(\gH^{\otimes_s j})}\right),
\end{equation}
where \(C\) depends only on
\(\Pi,G,d_0,r_0,K_0,\lambda_0\).  In particular, the right side is bounded
uniformly for \(0<\lambda\le\lambda_0\).
\end{lemma}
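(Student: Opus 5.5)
The plan is to reduce both norms to moments of a fixed finite-mode number operator and then invoke the uniform reduced-density bounds. The first step is to write out both factors explicitly on the finite Fourier space \(E\). Since \(R=\Pi_ER\Pi_E\) is self-adjoint with rank at most \(r_0\), diagonalize it as \(R=\sum_{i\le r_0}\lambda_i|\phi_i\rangle\langle\phi_i|\) with \(\phi_i\in E\) orthonormal and \(|\lambda_i|\le K_0\). Then
\[
 B_\lambda(R)=\sum_i\lambda_i\bigl(a_\lambda^*(\phi_i)a_\lambda(\phi_i)-\lambda\langle\phi_i,\gamma_{0,\lambda}\phi_i\rangle\bigr).
\]
The centering constants are bounded by \(\lambda\Tr(R\gamma_{0,\lambda})\le\|R\|_{\mathfrak S^2}\|\lambda\gamma_{0,\lambda}\|_{\mathfrak S^2}\le C\); this is exactly where the Hilbert--Schmidt norm of \(R\) is used. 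For the other factor, Lemma~\ref{lem:finite-mode-commutator} shows that \(C_{\lambda,b}\) is a bounded cylinder operator \(\Op^{\rm A}(\ell_bG)\) plus at most \(2\operatorname{rank}\Pi\) terms of the form \(a_\lambda^*(v_j)T_j\) or \(T_j'a_\lambda(v_j)\). Here \(v_j=\Pi^\perp b\psi_j\in E\), \(\|v_j\|\le1\), and \(\|T_j\|_{\rm op}\le C_G\). Everything therefore acts on \(\Fock(E)\), where \(\dim E\le d_0\).

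The second step is the operator bound. By Lemma~\ref{lem:app-field-sector-bounds}, each monomial in \(B_\lambda(R)C_{\lambda,b}\) contains at most three scaled fields with vectors in \(E\), interleaved with bounded anti-Wick factors. Those anti-Wick factors act on \(\Fock(\Pi\gH)\subset\Fock(E)\) and commute with \(\mathcal N_E\) because they are gauge-invariant on the modes; if they are not, the field-sector shift bound still suffices. Using the sector structure exactly as in the proof of Lemma~\ref{lem:finite-mode-radial-sector}, where blocks change \(\mathcal N_E\) by at most three, I would derive
\[
 \|B_\lambda(R)C_{\lambda,b}\psi\|+\|C_{\lambda,b}B_\lambda(R)\psi\|\le C\|\langle\lambda\mathcal N_E\rangle^{3/2}\psi\|,
\]
with \(C\) depending on \(\Pi,G,d_0,r_0,K_0\). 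Taking \(\psi\) to range over the columns of \(\Gamma_\lambda^{1/2}\) then gives
\[
 \|B_\lambda(R)C_{\lambda,b}\|_{2,\lambda}^2\le C\Tr\bigl(\langle\lambda\mathcal N_E\rangle^3\Gamma_\lambda\bigr),
\]
and the same bound for the other order. I expect this step to be the main obstacle. Commuting weights past the anti-Wick factors \(T_j\) needs care, because \(\Op^{\rm A}\) of a non-radial \(G\) does not preserve the particle number. To handle this, I would use the exact relations \eqref{eq:finite-mode-AW-eigen}--\eqref{eq:finite-mode-AW-integration} to move every \(E\cap(\Pi\gH)^\perp\) field outside; those fields commute with \(T_j\) anyway. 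For fields inside \(\Pi\gH\), I would absorb the product \(z^\alpha G\) into a new compactly supported symbol, which is again bounded in operator norm. After that rearrangement, only fields on \(E\cap(\Pi\gH)^\perp\) remain, and their number operator commutes with every anti-Wick factor.

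The final step expands \(\langle\lambda\mathcal N_E\rangle^3\) as in \eqref{eq:finite-mode-number-moment-bound}:
\[
 \Tr\bigl((\lambda\mathcal N_E)^r\Gamma_\lambda\bigr)\le\sum_{j\le r}C_j\lambda^{r-j}\Tr\bigl(\Pi_E^{\otimes j}\lambda^j\Gamma_\lambda^{(j)}\bigr).
\]
Each term is at most \(\|\Pi_E^{\otimes j}\|_{\mathfrak S^2}\|\lambda^j\Gamma_\lambda^{(j)}\|_{\mathfrak S^2}\le d_0^{j/2}\|\lambda^j\Gamma_\lambda^{(j)}\|_{\mathfrak S^2}\). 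This yields \eqref{eq:abstract-finite-rank-mixed-word}, and \eqref{eq:LNR-RDM-uniform} then supplies the uniformity in \(\lambda\).
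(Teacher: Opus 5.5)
Your proposal is correct and follows the paper's proof. You diagonalize \(R\), bound the centering constant through \(\|R\|_{\mathfrak S^2}\|\lambda\gamma_{0,\lambda}\|_{\mathfrak S^2}\), use the field-moving relations to bring every term into the form (bounded anti-Wick factor)\(\times\)(monomial of degree at most three in fields on \(E\)), and control \(\Tr(\langle\lambda\mathcal N_E\rangle^3\Gamma_\lambda)\) by the reduced density matrices.

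One remark: your tentative claim that the anti-Wick factors commute with \(\mathcal N_E\) is false for non-gauge-invariant \(G\). It is also unnecessary: once the bounded factor sits on the left, \(\|\widetilde A W\|_{2,\lambda}\le\|\widetilde A\|_{\mathrm{op}}\|W\|_{2,\lambda}\), and this is exactly the paper's route (for \(C_{\lambda,b}B_\lambda(R)\) no rearrangement is even needed).
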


\begin{proof}
By Lemma~\ref{lem:finite-mode-commutator}, we may write
\[
 C_{\lambda,b}
 =\sum_{\alpha=1}^{N_0}
 A_{\lambda,\alpha}V_{\lambda,\alpha},
 \qquad
 N_0\le 1+2\operatorname{rank}\Pi,
\]
where each \(A_{\lambda,\alpha}\) is the anti-Wick quantization of a
bounded smooth function on \(\Pi\gH\), and \(V_{\lambda,\alpha}\) is
either \(\1\), one creation operator, or one annihilation operator with
a vector in \(E\cap(\Pi\gH)^\perp\). Moreover,
\[
 \sup_{0<\lambda\le\lambda_0}
 \max_{1\le\alpha\le N_0}
 \|A_{\lambda,\alpha}\|_{\mathrm{op}}
 \le C,
\]
where \(C\) depends only on \(\Pi\) and finitely many seminorms of
\(G\).

Since \(R=R^*\), \(R=\Pi_ER\Pi_E\), and
\(\operatorname{rank}R\le r_0\), there are orthonormal vectors
\(u_1,\ldots,u_r\in E\) and real numbers
\(\rho_1,\ldots,\rho_r\), with \(r\le r_0\), such that
\[
 R
 =\sum_{\ell=1}^r
 \rho_\ell |u_\ell\rangle\langle u_\ell|,
 \qquad
 \sum_{\ell=1}^r|\rho_\ell|
 =\|R\|_{\mathfrak S^1(\gH)}
 \le r_0^{1/2}\|R\|_{\mathfrak S^2(\gH)}
 \le r_0^{1/2}K_0.
\]
Consequently,
\[
 B_\lambda(R)
 =\sum_{\ell=1}^r
 \rho_\ell
 a_\lambda^*(u_\ell)a_\lambda(u_\ell)
 -\lambda\Tr(R\gamma_{0,\lambda})\1,
\]
and
\[
 |\lambda\Tr(R\gamma_{0,\lambda})|
 \le
 \|R\|_{\mathfrak S^2(\gH)}
 \|\lambda\gamma_{0,\lambda}\|_{\mathfrak S^2(\gH)}
 \le CK_0.
\]

For \(C_{\lambda,b}B_\lambda(R)\), the bounded anti-Wick factor is
already on the left. To treat \(B_\lambda(R)C_{\lambda,b}\), decompose
$u_\ell=\Pi u_\ell+\Pi^\perp u_\ell.$
The fields with vectors in \(E\cap(\Pi\gH)^\perp\) commute with
\(A_{\lambda,\alpha}\), whereas the fields with vectors in
\(\Pi\gH\) are moved through it by
\eqref{eq:finite-mode-AW-eigen}--%
\eqref{eq:finite-mode-AW-integration}.
The resulting anti-Wick factors involve derivatives of \(G\),
multiplied by at most two coordinate functions on \(\Pi\gH\), and are
therefore uniformly bounded because \(G\) is compactly supported.
The terms arising from the derivative parts of
\eqref{eq:finite-mode-AW-integration} carry an additional factor
\(\lambda\), which is harmless for \(0<\lambda\le\lambda_0\).

It follows that each product in
\eqref{eq:abstract-finite-rank-mixed-word} is a sum of a uniformly
bounded number of terms
 $\widetilde A_{\lambda,\beta}W_{\lambda,\beta},$
where
 $\|\widetilde A_{\lambda,\beta}\|_{\mathrm{op}}\le C$
and \(W_{\lambda,\beta}\) is a monomial of degree at most three in
scaled creation and annihilation operators with vectors in \(E\) of
norm at most one.

Set
\(
\mathcal N_{\lambda,E}:=\lambda d\Gamma(\Pi_E).
\)
By Lemma~\ref{lem:app-field-sector-bounds}, 
\[
 W_{\lambda,\beta}^*W_{\lambda,\beta}
 \le C(1+\mathcal N_{\lambda,E})^3.
\]
We therefore obtain
\begin{align*}
 \|\widetilde A_{\lambda,\beta}
   W_{\lambda,\beta}\|_{2,\lambda}^2
 &\le
 C\Tr\bigl((1+\mathcal N_{\lambda,E})^3\Gamma_\lambda\bigr)
 \\
 &\le
 C\left(
  1+\sum_{j=1}^3
  \|\Pi_E^{\otimes j}\|_{\mathfrak S^2(\gH^{\otimes_s j})}
  \|\lambda^j\Gamma_\lambda^{(j)}
    \|_{\mathfrak S^2(\gH^{\otimes_s j})}
 \right)
 \\
 &\le
 C\left(
  1+\sum_{j=1}^3
  \|\lambda^j\Gamma_\lambda^{(j)}
    \|_{\mathfrak S^2(\gH^{\otimes_s j})}
 \right).
\end{align*}
Summing the finitely many terms proves
\eqref{eq:abstract-finite-rank-mixed-word}. The
total-particle-number cutoff is removed using
\eqref{eq:fixed-lambda-number-moments}, and the uniform bound in
\(\lambda\) follows from \eqref{eq:LNR-RDM-uniform} for
\(j=1,2,3\).
\end{proof}

\subsection{Radial cutoffs for the finite-mode number operator}
\label{appsubsec:radial-number-functional-calculus}

We shall use only one estimate involving a smooth function of the
finite-mode number operator.  On the Fock space over a fixed
finite-dimensional space \(E\), we compare
\(
\chi(\lambda\mathcal N_E)\)
with \(
\Op^{\rm A}_{\lambda,E}\bigl(\chi(\|z\|^2)\bigr).
\)
Because \(\chi(\|z\|^2)\) depends only on \(\|z\|\), both operators are
diagonal with respect to the particle-number decomposition.  It is
therefore enough to compare their eigenvalues on the \(n\)-particle
sector.  The eigenvalue of the second operator is given by the
coherent-state integral, which becomes a one-dimensional Gamma
integral after passing to polar coordinates.  The same calculation
appears in \cite[Lemma~11.1]{NZZ25}.  We repeat it here to obtain a
bound that is uniform in \(n\), and hence an operator-norm estimate.

\begin{lemma}
\label{lem:app-radial-number-functional-calculus}
Let \(E\simeq\mathbb C^{r_E}\) be fixed. Recall $\mathcal N_E$ from Lemma~\ref{lem:app-radial-AW} and set \(
 n_E(z)=\|z\|^2.
\)
For every \(\chi\in C_c^2(\mathbb R;\mathbb C)\), there is a constant
\(C_{E,\chi}<\infty\) such that
\begin{equation}\label{eq:app-radial-number-functional-calculus}
 \left\|
 \chi(\lambda\mathcal N_E)
 -\Op_{\lambda,E}^{\rm A}(\chi\circ n_E)
 \right\|_{\mathrm{op}}
 \le C_{E,\chi}\lambda,
 \qquad 0<\lambda\le1.
\end{equation}
\end{lemma}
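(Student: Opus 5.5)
Both operators commute with the unitary action of $U(E)$ on $\Fock(E)$. For $\chi(\lambda\mathcal N_E)$ this is immediate. For $\Op^{\rm A}_{\lambda,E}(\chi\circ n_E)$ it follows from the radial invariance of the symbol together with the covariance of the coherent vectors, $\Gamma(U)\xi_{\lambda,E}(z)=\xi_{\lambda,E}(Uz)$. Hence, exactly as in the proof of Lemma~\ref{lem:app-radial-AW}, each operator acts as a scalar on every irreducible sector $E^{\otimes_s n}$. The operator norm of the difference is therefore the supremum over $n\ge0$ of the difference of the two eigenvalues, and the whole task reduces to a scalar estimate that is uniform in $n$.

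We compute the anti-Wick eigenvalue on the vector $\psi_1^{\otimes n}$. As in Lemma~\ref{lem:app-radial-AW},
\[
 a_n:=\frac{1}{(\pi\lambda)^{r_E}\lambda^n n!}\int_{\mathbb C^{r_E}}\chi(\|z\|^2)|z_1|^{2n}\ee^{-\|z\|^2/\lambda}\,\dd z .
\]
We pass to polar coordinates in $\mathbb C^{r_E}$. The angular average of $|z_1|^{2n}$ over the sphere of radius $\rho$ is $\rho^{2n}\,n!(r_E-1)!/(n+r_E-1)!$. Substituting $s=\|z\|^2/\lambda$ then gives
\[
 a_n=\mathbb E\bigl[\chi(\lambda X_n)\bigr],\qquad X_n\sim\mathrm{Gamma}(n+r_E,1).
\]
The mean of $X_n$ is $n+r_E$ and its variance is $n+r_E$. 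This one-dimensional Gamma representation is the key identity, and the rest is a Taylor argument.

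We Taylor expand $\chi$ around $\lambda(n+r_E)$. The first-order term vanishes in expectation. The second-order remainder is bounded by $\tfrac12\|\chi''\|_\infty\lambda^2\operatorname{Var}(X_n)=\tfrac12\|\chi''\|_\infty\lambda^2(n+r_E)$. Together with $|\chi(\lambda(n+r_E))-\chi(\lambda n)|\le\|\chi'\|_\infty\lambda r_E$, this gives
\[
 |a_n-\chi(\lambda n)|\le C\lambda+C\lambda^2(n+r_E).
\]
This bound is only useful when $\lambda n\lesssim1$, and handling large $n$ is the step that needs care.

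To deal with large $n$, suppose $\supp\chi\subset[-R,R]$ and split into two cases. If $\lambda n\le 2R$, the bound above already gives $O(\lambda)$. If $\lambda n>2R$, then $\chi(\lambda n)=0$, and $|a_n|\le\|\chi\|_\infty\,\mathbb P(\lambda X_n\le R)$. Since $\lambda(n+r_E)>2R$, the event $\lambda X_n\le R$ forces $X_n$ to lie below half of its mean. By Chebyshev's inequality this probability is at most $4\operatorname{Var}(X_n)/(n+r_E)^2=4/(n+r_E)\le C\lambda$; a Chernoff bound would give an even better estimate. Taking the supremum over $n$ yields \eqref{eq:app-radial-number-functional-calculus}, with a constant $C_{E,\chi}$ depending on $r_E$, $R$, and $\|\chi\|_{C^2}$.
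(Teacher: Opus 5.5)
Your proposal is correct and follows essentially the paper's route: Schur's lemma reduces the estimate to eigenvalues on each $n$-particle sector, the anti-Wick eigenvalue has the Gamma representation $\mathbb E[\chi(\lambda Y_{n+r_E})]$, a second-order Taylor bound handles bounded $\lambda n$, and a lower-tail bound handles large $\lambda n$. The only difference is that you control the tail by Chebyshev, obtaining $4/(n+r_E)\le 2\lambda/R$, whereas the paper uses an exponential Markov (Chernoff) bound giving $\ee^{-cR/\lambda}$; either suffices for the $O(\lambda)$ estimate.
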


\begin{proof}
Let \(\mathsf E_n^E=\1_{\{\mathcal N_E=n\}}\). The operator \(\chi(\lambda\mathcal N_E)\) acts on the \(n\)-particle sector
as multiplication by \(\chi(\lambda n)\).  Since
\(z\mapsto\chi(\|z\|^2)\) is invariant under every unitary
transformation of \(E\), its anti-Wick quantization commutes with the
corresponding second-quantized unitary operators.  Schur's lemma
therefore shows that it also acts as scalar multiplication on
\(E^{\otimes_s n}\).  The operator
\(\chi(\lambda\mathcal N_E)\) has eigenvalue \(\chi(\lambda n)\).
Using polar coordinates in the coherent-state integral, as in
Lemma~\ref{lem:app-radial-AW}, gives
\[
 \Op_{\lambda,E}^{\rm A}(\chi\circ n_E)\mathsf E_n^E
 =b_{n,\lambda}\mathsf E_n^E,
\]
where
\begin{align*}
 b_{n,\lambda}
 &=\frac{1}{\lambda^{n+r_E}(n+r_E-1)!}
   \int_0^\infty
   \chi(s)\ee^{-s/\lambda}s^{n+r_E-1}\,\dd s\\
 &=\frac{1}{(n+r_E-1)!}
   \int_0^\infty
   \chi(\lambda r)\ee^{-r}r^{n+r_E-1}\,\dd r.
\end{align*}
For \(k>0\), let \(Y_k\) be the positive random variable with density
\[
 \frac{1}{\Gamma(k)}r^{k-1}\ee^{-r}\1_{(0,\infty)}(r)\,\dd r.
\]
Direct integration gives, for \(\tau\ge0\),
\[
 \mathbb EY_k=k,\qquad
 \operatorname{Var}(Y_k)=k,\qquad
 \mathbb E\ee^{-\tau Y_k}=(1+\tau)^{-k}.
\]
Taking \(k=n+r_E\), the second integral above is
\[
 b_{n,\lambda}=\mathbb E\bigl[\chi(\lambda Y_{n+r_E})\bigr].
\]

Choose \(R\ge1\) such that
\(\supp\chi\cap[0,\infty)\subset[0,R]\).  Suppose first that
\(\lambda n\le4R\), and put \(Z=\lambda Y_{n+r_E}\).  Then
\[
 \mathbb EZ=\lambda(n+r_E),
 \qquad
 \operatorname{Var}(Z)=\lambda^2(n+r_E).
\]
Put \(\mu=\mathbb EZ\).  Taylor's formula with integral remainder,
applied pointwise at \(x=Z\), reads
\[
 \chi(Z)=\chi(\mu)+\chi'(\mu)(Z-\mu)
 +(Z-\mu)^2\int_0^1(1-t)
   \chi''\bigl(\mu+t(Z-\mu)\bigr)\,\dd t.
\]
Taking expectations removes the linear term because
\(\mathbb E(Z-\mu)=0\).  Hence
\[
 \left|\mathbb E\chi(Z)-\chi(\mathbb EZ)\right|
 \le \|\chi''\|_\infty\mathbb E|Z-\mu|^2
       \int_0^1(1-t)\,\dd t
 =\frac12\|\chi''\|_\infty\operatorname{Var}(Z).
\]
The mean-value theorem also gives
\[
 |\chi(\mathbb EZ)-\chi(\lambda n)|
 \le r_E\lambda\|\chi'\|_\infty.
\]
Since
\(\lambda^2(n+r_E)\le(4R+r_E)\lambda\), we obtain
\begin{equation}\label{eq:app-radial-number-small-sector}
 |b_{n,\lambda}-\chi(\lambda n)|
 \le C_{r_E,\chi}\lambda
 \qquad(\lambda n\le4R).
\end{equation}

Suppose next that \(\lambda n>4R\).  Then
\(\chi(\lambda n)=0\), while
\[
 |b_{n,\lambda}|
 \le\|\chi\|_\infty
 \mathbb P\!\left(Y_{n+r_E}\le\frac R\lambda\right)
 \le\|\chi\|_\infty
 \mathbb P\!\left(Y_{n+r_E}\le\frac{n+r_E}{4}\right).
\]
Markov's inequality applied to \(\ee^{-3Y_k}\), together with the formula
above at \(\tau=3\), gives
\[
 \mathbb P(Y_k\le k/4)
 \le \ee^{3k/4}\mathbb E\ee^{-3Y_k}
 =\ee^{-(\log4-3/4)k}.
\]
Since \(n>4R/\lambda\), there is \(c>0\) such that
\[
 |b_{n,\lambda}|\le\|\chi\|_\infty\ee^{-cR/\lambda}
 \le C_{R,\chi}\lambda,
 \qquad 0<\lambda\le1.
\]
Together with \eqref{eq:app-radial-number-small-sector}, this proves
\[
 \sup_{n\ge0}|b_{n,\lambda}-\chi(\lambda n)|
 \le C_{E,\chi}\lambda.
\]
Taking the supremum over the particle-number sectors proves
\eqref{eq:app-radial-number-functional-calculus}.
\end{proof}

\begin{lemma}
\label{lem:domain-safe-radial-AW}
Let \(E\subset\gH\) be a fixed finite Fourier space, with orthogonal
projection \(\Pi_E\), and recall 
\(
 \mathcal N_E
 \),  \(n_E(z)
\) from Lemma~\ref{lem:app-radial-number-functional-calculus}.
Let \(P\) be a fixed polynomial on \(E\), and set
\(X_\lambda:=\Op_{\lambda,E}^{\rm W}(P)\).  Choose
\(\chi\in C_c^\infty([0,\infty);[0,1])\), with \(\chi=1\) on
\([0,1]\), and set \(\chi_K(r)=\chi(r/K)\).  For every fixed
\(K<\infty\), the operator
\[
 X_\lambda^{[K]}
 :=\chi_K(\lambda\mathcal N_E)X_\lambda
   \chi_K(\lambda\mathcal N_E)
\]
extends from the finite-particle core to a bounded operator.  With
\[
 F_K(z)=\chi_K(n_E(z))^2P(z),
\]
one has \(F_K\in C_c^\infty(E;\mathbb C)\), independently of
\(\lambda\), and
\begin{equation}\label{eq:domain-safe-radial-AW}
 \|X_\lambda^{[K]}-\Op_{\lambda,\gH}^{\rm A}F_K\|_{\mathrm{op}}
 \le C_K\lambda^{1/2},
 \qquad 0<\lambda\le1.
\end{equation}
If \(P\) is real-valued, then \(F_K\) is real-valued.
\end{lemma}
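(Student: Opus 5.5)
The plan is to compare both operators with the bounded operator \(\chi_K(\lambda\mathcal N_E)\,\Op_{\lambda,E}^{\rm A}(P)\,\chi_K(\lambda\mathcal N_E)\) and with anti-Wick products. First, boundedness of \(X_\lambda^{[K]}\) is immediate. By Lemma~\ref{lem:app-field-sector-bounds}, \(X_\lambda\) shifts the \(\mathcal N_E\)-sector by at most \(\deg P\) and obeys \(\|X_\lambda\mathsf E_n^E\|_{\mathrm{op}}\le C(1+\lambda n)^{\deg P/2}\). Since \(\chi_K(\lambda\mathcal N_E)\) vanishes for \(\lambda n>2K\), only finitely many sector blocks in \(\lambda n\le 2K+\lambda\deg P\) contribute, each with norm \(\le C_K\). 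The properties of \(F_K\) are elementary: it is smooth, supported in \(\{\|z\|^2\le 2K\}\), independent of \(\lambda\), and real when \(P\) is real.

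For the estimate, replace the Wick quantization by anti-Wick: \(X_\lambda=\Op^{\rm A}_{\lambda,E}(\ee^{-\lambda\Delta_E^{\mathbb C}}P)\). Here \(\ee^{-\lambda\Delta}P-P\) is a finite sum of polynomials of lower degree carrying a factor \(\lambda\). The sandwiched correction \(\chi_K\,\Op^{\rm W}(\lambda Q)\,\chi_K\) then has operator norm \(\le C_K\lambda\) by the same sector bound. So it suffices to treat \(\chi_K(\lambda\mathcal N_E)\Op^{\rm A}_{\lambda,E}(P)\chi_K(\lambda\mathcal N_E)\), interpreted on the finite-particle core; this equals the Wick form modulo that correction.

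Next, by Lemma~\ref{lem:app-radial-number-functional-calculus}, replace \(\chi_K(\lambda\mathcal N_E)\) by \(\Op^{\rm A}_{\lambda,E}(\chi_K\circ n_E)\) at cost \(C_K\lambda\). This step needs care because \(\Op^{\rm A}(P)\) is unbounded. I would first write
\[
\chi_K\Op^{\rm A}(P)\chi_K=\chi_K\,\widetilde\chi\,\Op^{\rm A}(P)\,\widetilde\chi\,\chi_K,
\]
where \(\widetilde\chi=\chi_{4K}(\lambda\mathcal N_E)\) is an auxiliary wider cutoff. This holds up to a term that vanishes identically once \(\lambda\) is small, and that is harmless for \(\lambda\) bounded below. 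The point is that \(\widetilde\chi\,\Op^{\rm A}(P)\,\widetilde\chi\) is bounded uniformly in \(\lambda\), since \(\Op^{\rm A}(P)\) is a Wick polynomial with finite sector shift. Then each substitution \(\chi_K\to\Op^{\rm A}(\chi_K\circ n_E)\) multiplies a uniformly bounded operator by an \(O(\lambda)\) difference. Commuting the radial anti-Wick operator past \(\widetilde\chi\) is exact, because both are diagonal in \(\mathcal N_E\).

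Finally, merge the three anti-Wick factors \(\Op^{\rm A}(\chi_K\circ n_E)\Op^{\rm A}(P)\Op^{\rm A}(\chi_K\circ n_E)\) into \(\Op^{\rm A}(F_K)\). This is the main obstacle, since Lemma~\ref{lem:AW-product} needs bounded \(C^1\) symbols and \(P\) is unbounded. I would instead replace \(P\) by \(P\,\theta\), where \(\theta\in C_c^\infty(E)\) equals \(1\) on \(\{\|z\|^2\le 8K\}\). Then \(\Op^{\rm A}(P(1-\theta))\) sandwiched between the radial operators is small: its symbol is supported where \(\|z\|^2>8K\), while the radial symbols live on \(\|z\|^2\le 2K\). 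Gaussian coherent-state overlaps \(\ee^{-|z-z'|^2/(2\lambda)}\) make this term \(O(\ee^{-c/\lambda})\) by a Schur test, with polynomial weights absorbed into the Gaussian. With the bounded compactly supported symbols \(\chi_K\circ n_E\) and \(P\theta\), two applications of Lemma~\ref{lem:AW-product} give an error \(C_K\lambda^{1/2}\), and the product symbol is exactly \(F_K\). Collecting all errors yields \eqref{eq:domain-safe-radial-AW}.
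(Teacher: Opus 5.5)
Your argument is correct, but it is organized differently from the paper's proof.

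\textbf{The paper's route.} The paper keeps $X_\lambda$ in Wick form and never truncates $P$. It first replaces the left cutoff by $G_{\lambda,K}=\Op^{\rm A}_{\lambda,\gH}(\chi_K\circ n_E)$, using $\|X_\lambda\chi_K(\lambda\mathcal N_E)\|_{\mathrm{op}}\le C_K$ and Lemma~\ref{lem:app-radial-number-functional-calculus}. It then computes $G_{\lambda,K}X_\lambda$ exactly by \eqref{eq:finite-mode-AW-eigen}--\eqref{eq:finite-mode-AW-integration}. Because $X_\lambda$ is normally ordered, the creation operators to the right of the anti-Wick factor are absorbed into its symbol without error. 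Each annihilation operator produces only an extra term with an explicit factor $\lambda$. Hence $G_{\lambda,K}X_\lambda=\Op^{\rm A}_{\lambda,\gH}((\chi_K\circ n_E)P)+O(\lambda)$, with every symbol supported in $\supp(\chi_K\circ n_E)$. Replacing the right cutoff the same way and applying Lemma~\ref{lem:AW-product} once gives the bound.

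\textbf{Your route.} You pass to anti-Wick form via Berezin's identity. You insert a wider number cutoff to make the middle factor uniformly bounded. You then localize $P$ with a Gaussian off-diagonal Schur estimate, followed by two applications of Lemma~\ref{lem:AW-product}. This is a softer, symbol-level localization whose merging step would work for any polynomially bounded smooth middle symbol. It costs an extra cutoff and an extra estimate, whereas the paper's exact field calculus produces compactly supported symbols automatically.

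\textbf{Two details to tidy.}
\begin{enumerate}
\item The insertion of $\widetilde\chi$ is an exact identity, not one that holds only for small $\lambda$, as soon as $\widetilde\chi=1$ on $\supp\chi_K$. The lemma only assumes $\chi\in C_c^\infty([0,\infty);[0,1])$ with $\chi=1$ on $[0,1]$. So you should take $\widetilde\chi=\chi_{cK}(\lambda\mathcal N_E)$ with $c$ depending on $\supp\chi$, and adjust the radius on which $\theta=1$ accordingly.
\item Commuting $\Op^{\rm A}_{\lambda,E}(\chi_K\circ n_E)$ past $\widetilde\chi$ leaves you with $\widetilde\chi\,\Op^{\rm A}_{\lambda,E}(F_K)\,\widetilde\chi+O(\lambda^{1/2})$, not $\Op^{\rm A}_{\lambda,E}(F_K)$. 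The outer factors still have to be removed. One way is to write $\widetilde\chi=\Op^{\rm A}_{\lambda,E}(\chi_{cK}\circ n_E)+O(\lambda)$ by Lemma~\ref{lem:app-radial-number-functional-calculus}, then apply Lemma~\ref{lem:AW-product} twice, using $\chi_{cK}\circ n_E=1$ on $\supp F_K$.
\end{enumerate}
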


\begin{proof}
Put
\[
 C_{\lambda,K}:=\chi_K(\lambda\mathcal N_E),
 \qquad g_K:=\chi_K\circ n_E,
 \qquad G_{\lambda,K}:=\Op_{\lambda,\gH}^{\rm A}(g_K).
\]
Lemma~\ref{lem:app-radial-number-functional-calculus} gives
\begin{equation}\label{eq:radial-cutoff-functional-comparison}
 \|C_{\lambda,K}-G_{\lambda,K}\|_{\mathrm{op}}\le C_K\lambda.
\end{equation}
Since \(X_\lambda\) changes the finite-mode particle number by at most
a fixed integer, the sector estimate in
Lemma~\ref{lem:app-field-sector-bounds} gives
\begin{equation*}
 \|X_\lambda C_{\lambda,K}\|_{\mathrm{op}}
 +\|C_{\lambda,K}X_\lambda\|_{\mathrm{op}}\le C_K.
\end{equation*}

Replacing the left cutoff first, we obtain
\[
 C_{\lambda,K}X_\lambda C_{\lambda,K}
 =G_{\lambda,K}X_\lambda C_{\lambda,K}
  +O_{\mathrm{op},K}(\lambda).
\]
Here and below, \(O_{\mathrm{op},K}(\lambda)\) may denote a different
operator at each occurrence, with operator norm at most \(C_K\lambda\)
for \(0<\lambda\le1\). 
Write
\(P(z,\bar z)=\sum_{\alpha,\beta}
c_{\alpha\beta}\bar z^\alpha z^\beta\).  As in the proof of
Lemma~\ref{lem:finite-mode-mixed-calculus}, successive use of
\eqref{eq:finite-mode-AW-eigen} and
\eqref{eq:finite-mode-AW-integration} implies
\[
 G_{\lambda,K}X_\lambda
 =\sum_{\alpha,\beta}\sum_{\gamma\le\beta}
 c_{\alpha\beta}(-\lambda)^{|\gamma|}
 \binom{\beta}{\gamma}
 \Op_{\lambda,\gH}^{\rm A}\!\left(
 z^{\beta-\gamma}
 \partial_{\bar z}^{\gamma}(\bar z^\alpha g_K)
 \right).
\]
The term \(\gamma=0\) is
\(\Op_{\lambda,\gH}^{\rm A}(g_KP)\).  Every other term carries an
explicit factor \(\lambda\) and is the anti-Wick quantization of a
smooth function supported in \(\supp g_K\).  Hence
\begin{equation*}
 G_{\lambda,K}X_\lambda
 =\Op_{\lambda,\gH}^{\rm A}(g_KP)+O_{\mathrm{op},K}(\lambda).
\end{equation*}
Replacing the right cutoff and using
\eqref{eq:radial-cutoff-functional-comparison} gives
\[
 C_{\lambda,K}X_\lambda C_{\lambda,K}
 =\Op_{\lambda,\gH}^{\rm A}(g_KP)
  \Op_{\lambda,\gH}^{\rm A}(g_K)+O_{\mathrm{op},K}(\lambda).
\]
Finally, Lemma~\ref{lem:AW-product} yields
\[
 \left\|
 \Op_{\lambda,\gH}^{\rm A}(g_KP)
 \Op_{\lambda,\gH}^{\rm A}(g_K)
 -\Op_{\lambda,\gH}^{\rm A}(g_K^2P)
 \right\|_{\mathrm{op}}
 \le C_K\lambda^{1/2}.
\]
This proves \eqref{eq:domain-safe-radial-AW}.
\end{proof}

\section{Proof of Proposition~\ref{prop:transported-density-comparison}}
\label{app:transported-density-calculus}

We first explain the formal calculation.  Let
\(A,B\in L^1([-T,T];\mathcal W^1)\) be real-valued,
\(t\in[-T,T]\), and put
\(D=A-B\) and \(A_\theta=B+\theta D\), $\theta\in [0,1]$. For later use, we first introduce the
interaction propagator
\[
 V_\theta(t,s)
 :=\ee^{\ii th}U_{A_\theta}(t,s)\ee^{-\ii sh}.
\]
For a one-particle operator \(T\), write
\(\alpha_t(T):=\ee^{\ii th}T\ee^{-\ii th}\).
Suppose first that \(u\) is smooth, so that all quadratic forms below
are well defined.  Differentiating in \(\theta\) implies for \(q\in\mathbb Z^d\setminus\{0\}\), 
\[
 \langle U_A(t,0)u,M_qU_A(t,0)u\rangle
 -\langle U_B(t,0)u,M_qU_B(t,0)u\rangle
 =\ii\int_0^1\!\int_0^t
 \langle u,[\mathcal D_\theta(r),Y_{\theta,q}(t)]u\rangle
 \,\dd r\,\dd\theta,
\]
where
\[
 \mathcal D_\theta(r)
 :=V_\theta(r,0)^*\alpha_r(D(r))V_\theta(r,0),
 \qquad
 Y_{\theta,q}(t)
 :=V_\theta(t,0)^*\alpha_t(M_q)V_\theta(t,0).
\]

For \(u\in H^{-\kappa}\), the field does not belong to \(\gH\), so the unregularized
quadratic form may be undefined. Thus the proof reduces to giving a meaning to the quadratic form on
the right and estimating it uniformly in \(\theta\) and \(r\).

The main difficulty is that the exceptional set must be fixed before
the prescribed potentials are chosen.  In the application below, the
potentials are generated by the paths themselves and therefore depend
on the initial field.  If the transported quadratic forms were
constructed separately for each fixed potential, the exceptional set
could also depend on that potential.  Such a construction would not
allow us to compare two paths starting from the same rough initial
field.

We therefore begin with the elementary operators
\[
W_{\ell,y}f(x)=e^{i\ell\cdot x}f(x+y),
\qquad \ell\in\mathbb Z^d,\quad y\in\mathbb T^d.
\]
For each \(\ell\), we construct the corresponding centered quadratic
field as a continuous function of \(y\), on a single set of full
\(\mu_0\)-measure.  This set is fixed independently of the prescribed
potentials.

The propagator is expanded by iterating the Duhamel formula.  Each term
is an integral over ordered time variables.  For fixed values of these
variables, the product of the elementary operators again has the form
\[
f(x)\longmapsto c\,e^{i\ell\cdot x}f(x+y),
\qquad |c|=1.
\]
Here the Fourier index \(\ell\) is determined by the Fourier modes
appearing in the product, whereas the translation \(y\) depends on all
the time variables.  As these variables are integrated out, we
therefore obtain, for each fixed \(\ell\), not a single translation
but a weighted integral over different translations.  This weighted
integral is described by a finite complex measure in \(y\), which we
call the coefficient measure.
The centered quadratic expression corresponding to such a term is then
defined by integrating the elementary centered quadratic fields with
respect to its coefficient measures.  Since all the elementary fields
were constructed on one common set of full \(\mu_0\)-measure,
independently of the prescribed potentials, every expression obtained
from the propagator is defined on the same set.

A small technical issue arises when we pass from the Duhamel expansion
to the coefficient measures.  Under free conjugation, the coefficient measure of a Fourier mode is a Dirac mass whose location depends
continuously on \(t\).  Such moving Dirac masses are not continuous in
total variation.  Consequently, the time integrals in the
Duhamel formula cannot be defined directly as integrals in the
coefficient-measure norm.
Instead, we first obtain the Volterra expansion and prove that it converges in operator norm.
For each term of the expansion, we then define its coefficient measure
by pushing forward the measure on the corresponding time simplex under
the map that assigns the resulting translation parameter. The factorial bounds on these measures yield convergence of the full series, the exponential propagator estimate, and the bounds needed for the commutator.

\subsection{Coefficient measures for translated quadratic fields}
\label{appsubsec:translated-quadratic-fields}
\label{appsubsec:renormalized-commutator-pairings}

This subsection prepares the quadratic expressions used in the
comparison argument.  
To this end, we first introduce the elementary operator identities used in these
constructions.
 For
\(\ell\in\mathbb Z^d\) and \(y\in\mathbb T^d\), let
\(\tau_yf(x)=f(x+y)\) and set \(W_{\ell,y}=M_\ell\tau_y\).  Then
\begin{equation}\label{eq:modulation-translation-identities}
 W_{\ell,y}W_{k,z}
 =\ee^{\ii k\cdot y}W_{\ell+k,y+z},
 \qquad
 \alpha_t(W_{\ell,y})
 =\ee^{\ii t|\ell|^2}W_{\ell,y+2t\ell}.
\end{equation}
Set \(h_p=|p|^2+m\).  Since
\(W_{\ell,y}\mathrm e_p=\ee^{\ii p\cdot y}\mathrm e_{p+\ell}\), for
\(j\geq0\) define
\[
 \begin{aligned}
 Z_{\ell,j}(y;u)
 &:=\langle \Pi_{2^j}u,W_{\ell,y}\Pi_{2^j}u\rangle
 -\Tr(h^{-1}\Pi_{2^j}W_{\ell,y}\Pi_{2^j})\\
 &=\sum_{\substack{|p|\leq2^j\\|p+\ell|\leq2^j}}
 \left(
 \overline{u_{p+\ell}}\,u_p
 -\frac{\delta_{\ell0}}{h_p}
 \right)\ee^{\ii p\cdot y},
 \qquad y\in\mathbb T^d.
 \end{aligned}
\]
At \(y=0\), this is the mode defined in
\eqref{eq:classical-density-mode}, i.e. 
$Z_{\ell,j}(0;u)=\rho_{2^j,\ell}(u).$

\begin{lemma}
\label{lem:quadratic-translation-field}
Let \(\kappa\) be fixed by
\eqref{eq:path-Sobolev-exponent}.  There exists a \(\mu_0\)-full Borel
set \(\Omega_*\subset H^{-\kappa}(\mathbb T^d)\), such that for every \(u\in\Omega_*\),
\begin{equation}\label{eq:universal-Z-limit}
 Z_\ell(\,\cdot\,;u)
 :=\lim_{j\to\infty}Z_{\ell,j}(\,\cdot\,;u)
 \quad\text{in }C(\mathbb T^d)
\end{equation}
for all \(\ell\in\mathbb Z^d\).  Moreover, there
is \(\gamma>0\) such that
\begin{equation}\label{eq:universal-Z-weighted}
 \sup_{\ell\in\mathbb Z^d}
 \langle\ell\rangle^{1/4}
 \|Z_\ell(\,\cdot\,;u)\|_{C^\gamma(\mathbb T^d)}
 <\infty.
\end{equation}
After setting \(Z_\ell=0\) outside \(\Omega_*\), the map
\((u,y)\mapsto Z_\ell(y;u)\) is Borel.
\end{lemma}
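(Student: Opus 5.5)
The plan is a Kolmogorov-type chaining argument in the Gaussian second chaos, uniform over the countable index set $\ell\in\mathbb Z^d$, followed by Borel–Cantelli.

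\emph{Increments.} Under $\mu_0$ the coordinates $u_p$ are independent centered complex Gaussians with $\mathbb E|u_p|^2=h_p^{-1}$. For fixed $\ell$ and $y$, each increment $Z_{\ell,j+1}(y;u)-Z_{\ell,j}(y;u)$ is a centered element of the second chaos: the subtraction of $\delta_{\ell0}/h_p$ is exactly the Wick centering. Its variance, and that of the spatial increment $Z_{\ell,j+1}(y)-Z_{\ell,j}(y)-[Z_{\ell,j+1}(y')-Z_{\ell,j}(y')]$, is a sum over the shell $2^j<\max(|p|,|p+\ell|)\le2^{j+1}$. The sum is
\[
 \sum_{p}\frac{|\ee^{\ii p\cdot y}-\ee^{\ii p\cdot y'}|^2}{h_ph_{p+\ell}}
 \lesssim |y-y'|^{2\gamma}\sum_{p\in\text{shell}}\frac{|p|^{2\gamma}}{\langle p\rangle^2\langle p+\ell\rangle^2},
\]
with an extra factor $2$ for $\ell=0$ from the pairing. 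To extract decay in both $j$ and $\ell$, I would use $\langle p\rangle^{-2}\langle p+\ell\rangle^{-2}\le \langle\ell\rangle^{-1}\langle p\rangle^{-2}\langle p+\ell\rangle^{-1}$ up to constants, since $\max(|p|,|p+\ell|)\gtrsim|\ell|$. After this, the shell sum is of order $2^{j(d-3+2\gamma)}$, and $d-3<1$ even for $d=3$, where the sum is $\sum 2^{-2j}\cdot 2^{2j}$ per unit of $\langle p+\ell\rangle^{-1}$.

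\emph{Obstacle and fix.} The exponent count above is the main obstacle. In $d=3$ the naive sum $\sum_{|p|\sim2^j}|p|^{-4}\sim 2^{-j}$ is fine for $\ell=0$. For general $\ell$ I would instead split the weights as
\[
 \langle p\rangle^{-2}\langle p+\ell\rangle^{-2}\le \langle \ell\rangle^{-1/2}\,\langle p\rangle^{-2+1/4}\langle p+\ell\rangle^{-2+1/4},
\]
which again holds since one of $|p|,|p+\ell|$ is $\gtrsim|\ell|$. The resulting convolution sum over the shell is $\lesssim 2^{-j(1/2)+2j\gamma}$ in $d=3$, and better in $d=2$. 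So for small $\gamma>0$ the second moment of the $C^\gamma$-type increment is at most $C\langle\ell\rangle^{-1/2}2^{-\delta j}|y-y'|^{2\gamma}$.

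\emph{Hypercontractivity and summation.} Hypercontractivity of the second chaos upgrades this to all $L^r(\mu_0)$ moments, with constant $(r-1)$. Kolmogorov/Garsia–Rodemich–Rumsey on $\mathbb T^d$, with $r$ large, then gives
\[
 \mathbb E\|Z_{\ell,j+1}-Z_{\ell,j}\|_{C^{\gamma'}}^r\lesssim \langle\ell\rangle^{-r/4}\,2^{-\delta rj/2}
\]
for $\gamma'<\gamma$. Multiplying by $\langle\ell\rangle^{r/4}\cdot\langle\ell\rangle^{-(d+1)}$ is admissible once I spend a bit more of the $\ell$-decay: the split above actually yields $\langle\ell\rangle^{-1/2}$, and using weight $1/4$ leaves $\langle\ell\rangle^{-r/4}$ for summing over $\ell\in\mathbb Z^d$ when $r>4(d+1)$. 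The sums over $j$ and $\ell$ of $\mathbb P\bigl(\langle\ell\rangle^{1/4}\|Z_{\ell,j+1}-Z_{\ell,j}\|_{C^{\gamma'}}>2^{-\delta j/4}\bigr)$ are then finite. The same estimate at $j=0$ bounds $\langle\ell\rangle^{1/4}\|Z_{\ell,0}\|_{C^{\gamma'}}$; note that $Z_{\ell,0}=0$ unless $|\ell|\le2$.

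By Borel–Cantelli, on a $\mu_0$-full set every sequence $(Z_{\ell,j})_j$ is Cauchy in $C^{\gamma'}$ with the uniform weighted tail bound. This yields \eqref{eq:universal-Z-limit} and \eqref{eq:universal-Z-weighted} with $\gamma$ replaced by $\gamma'$. I would then define $\Omega_*$ as this set intersected with $H^{-\kappa}$, which has full measure since $\mu_0(\Xcal)=1$ and $\kappa>(d-2)/2$.

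\emph{Measurability.} Each $Z_{\ell,j}$ is continuous in $(u,y)$ on $H^{-\kappa}\times\mathbb T^d$, being a finite-mode polynomial. Hence $\Omega_*$, defined by countably many conditions, is Borel, and the limit, extended by $0$ off $\Omega_*$, is jointly Borel.
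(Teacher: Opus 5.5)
Your overall scheme (second-chaos moments, hypercontractivity, Kolmogorov, then Borel--Cantelli jointly in $(\ell,j)$) is reasonable, but the step that makes the sum over $\ell\in\mathbb Z^d$ converge fails as written.

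First, the split $\langle p\rangle^{-2}\langle p+\ell\rangle^{-2}\lesssim\langle\ell\rangle^{-1/2}\langle p\rangle^{-7/4}\langle p+\ell\rangle^{-7/4}$ requires $\langle p\rangle\langle p+\ell\rangle\gtrsim\langle\ell\rangle^{2}$. The fact $\max(|p|,|p+\ell|)\gtrsim|\ell|$ only gives $\langle p\rangle\langle p+\ell\rangle\gtrsim\langle\ell\rangle$. The split fails at $p=0$ with $2^j<|\ell|\le2^{j+1}$, which is a point of the shell; your first split fails in the same way at $p=-\ell$.

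Second, and decisively, even a variance bound $C\langle\ell\rangle^{-1/2}2^{-\delta j}$ only gives $L^r(\mu_0)$ norms of order $\langle\ell\rangle^{-1/4}2^{-\delta j/2}$. The weight $\langle\ell\rangle^{1/4}$ cancels this decay exactly. Chebyshev then yields $\mathbb P\bigl(\langle\ell\rangle^{1/4}\|Z_{\ell,j+1}-Z_{\ell,j}\|_{C^{\gamma'}}>2^{-\delta j/4}\bigr)\lesssim 2^{-\delta rj/4}$, with no decay in $\ell$. So the Borel--Cantelli series over $\ell\in\mathbb Z^d$ diverges. Your claim that the weight ``leaves $\langle\ell\rangle^{-r/4}$'' treats the variance exponent as if it were the exponent of the standard deviation.

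What is missing is the full convolution decay, which the paper uses. For $d=3$ one has $\sum_p|p|^{2\beta}h_p^{-1}h_{p+\ell}^{-1}\lesssim\langle\ell\rangle^{-1+2\beta}$, and for $d=2$ the bound is $\langle\ell\rangle^{-2+2\beta}\log(2+\langle\ell\rangle)$. Hence $\sup_j\|Z_{\ell,j}\|_{L^r(\mu_0;C^\gamma)}\lesssim\langle\ell\rangle^{-1/2+\beta}$. With $\beta=1/8$ the weight leaves $\langle\ell\rangle^{-1/8}$, which is summable in the $r$-th power once $r>8d$.

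The paper then decouples the two tasks. It proves almost-sure convergence for each fixed $\ell$ from increments of size $C_\ell 2^{-j(1/2-\beta)}$. It obtains the weighted bound by passing the uniform-in-$j$ estimate to the limit and showing $\mathbb E\sum_\ell(\langle\ell\rangle^{1/4}\|Z_\ell\|_{C^\gamma})^r<\infty$. No joint decay in $(\ell,j)$ is ever needed.

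If you prefer to keep your joint Borel--Cantelli, it can be repaired. The shell is empty unless $|\ell|\le 2^{j+2}$. Splitting according to whether $|p|$ or $|p+\ell|$ exceeds $2^j$ shows that the shell variance, including the H\"older factor, is $\lesssim 2^{-j(1-2\beta)}\mathbf 1_{\{|\ell|\le 2^{j+2}\}}\lesssim\langle\ell\rangle^{-3/4}2^{-j(1/4-2\beta)}$ for $\beta<1/8$. After the weight, this leaves $\langle\ell\rangle^{-1/8}$ per $L^r$ norm, which suffices.

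Minor: for $\ell=0$ there is no extra factor $2$, since $\operatorname{Var}|u_p|^2=h_p^{-2}$ for a complex Gaussian.
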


\begin{proof}
Wick's rule gives, for every \(j\geq0\), \(\ell\in\mathbb Z^d\), and
\(y\in\mathbb T^d\),
\[
 \mathbb E_{\mu_0}|Z_{\ell,j}(y)|^2
 =\sum_{\substack{|p|\leq2^j\\|p+\ell|\leq2^j}}
 \frac1{h_ph_{p+\ell}}.
\]
Fix \(0<\beta<1/2\).  Since
\[
 |\ee^{\ii p\cdot y}-\ee^{\ii p\cdot z}|
 \leq C_\beta |p|^\beta |y-z|^\beta,
\]
the same calculation and the standard discrete convolution estimate
imply
\[
 \mathbb E_{\mu_0}
 |Z_{\ell,j}(y)-Z_{\ell,j}(z)|^2
 \leq C_\beta |y-z|^{2\beta}
 \begin{cases}
 \langle\ell\rangle^{-2+2\beta}
       \log(2+\langle\ell\rangle),&d=2,\\[1mm]
 \langle\ell\rangle^{-1+2\beta},&d=3.
 \end{cases}
\]
Using Gaussian hypercontractivity and Kolmogorov's criterion, we obtain, whenever \(r>d/\beta\) and \(0<\gamma<\beta-d/r\),
\[
 \sup_{j\geq0}
 \|Z_{\ell,j}\|_{L^r(\mu_0;C^\gamma(\mathbb T^d))}
 \leq C_{r,\beta,\gamma}
 \langle\ell\rangle^{-1/2+\beta}.
\]
Applying the same argument to
\(Z_{\ell,j+1}-Z_{\ell,j}\), with the sum restricted to
\(2^j<\max\{|p|,|p+\ell|\}\leq2^{j+1}\), implies
\[
 \|Z_{\ell,j+1}-Z_{\ell,j}\|_{L^r(\mu_0;C^\gamma(\mathbb T^d))}
 \leq C_{r,\ell,\beta,\gamma}2^{-j(1/2-\beta)}.
\]
Hence, for every fixed \(\ell\), the dyadic sequence converges almost
surely in \(C^\gamma(\mathbb T^d)\).  Taking the intersection over the
countable set \(\mathbb Z^d\) gives a Borel set
\(\Omega_1\) of full measure on which all these limits exist.  Extend
each limit by zero outside \(\Omega_1\).   The uniform
estimate passes to the limit and gives
\[
 \|Z_\ell\|_{L^r(\mu_0;C^\gamma(\mathbb T^d))}
 \leq C_{r,\beta,\gamma}
 \langle\ell\rangle^{-1/2+\beta}.
\]

Choose \(\beta=1/8\), \(r>8d\), and
\(0<\gamma<1/8-d/r\).  Then
\[
 \mathbb E_{\mu_0}
 \sum_{\ell\in\mathbb Z^d}
 \left(
 \langle\ell\rangle^{1/4}
 \|Z_\ell\|_{C^\gamma(\mathbb T^d)}
 \right)^r
 \leq C\sum_{\ell\in\mathbb Z^d}
 \langle\ell\rangle^{-r/8}<\infty.
\]
Let \(\Omega_2\) be the Borel event on which this series is finite and
set $\Omega_*
 :=\Omega_1\cap\Omega_2\cap H^{-\kappa}(\mathbb T^d).$
This implies \eqref{eq:universal-Z-limit} and
\eqref{eq:universal-Z-weighted}. The map \(u\mapsto Z_\ell(\cdot;u)\), extended by zero outside
\(\Omega_1\), is Borel as a \(C(\mathbb T^d)\)-valued map.  Since
\((f,y)\mapsto f(y)\) is continuous on
\(C(\mathbb T^d)\times\mathbb T^d\), it follows that
\((u,y)\mapsto Z_\ell(y;u)\) is jointly Borel.
\end{proof}

For \(u\in\Omega_*\), let
\[
 \mathcal K(u):=\sup_{\ell\in\mathbb Z^d}
 \sup_{y\in\mathbb T^d}|Z_\ell(y;u)|<\infty,
\]
and put \(\mathcal K(u)=0\) outside \(\Omega_*\).

Now we introduce the coefficient measures as mentioned in the beginning of this section.
For a finite complex Borel measure \(\mu\) on \(\mathbb T^d\), define its
Fourier--Stieltjes coefficients by
\[
 (\mathcal F_{\mathrm{FS}}\mu)(p)
 :=\int_{\mathbb T^d}\ee^{-\ii p\cdot y}\,\dd\mu(y),
 \qquad p\in\mathbb Z^d.
\]
Let \(\mathcal M(\mathbb T^d)\) denote the finite complex Borel
measures on \(\mathbb T^d\), and define the weighted coefficient space
\[
 \mathfrak M_1
 :=\left\{
 \boldsymbol{\mathfrak m}
   =(\mathfrak m_\ell)_{\ell\in\mathbb Z^d}:
 \mathfrak m_\ell\in\mathcal M(\mathbb T^d),\quad
 \|\boldsymbol{\mathfrak m}\|_{\mathfrak M_1}
 :=\sum_{\ell\in\mathbb Z^d}
 \langle\ell\rangle\|\mathfrak m_\ell\|_{\mathrm{TV}}<\infty
 \right\}.
\]
For \(\boldsymbol{\mathfrak m}\in\mathfrak M_1\) and each
\(\ell\in\mathbb Z^d\), define \(T_\ell\in\mathcal B(\gH)\) by the
weak operator integral
\[
 \langle g,T_\ell f\rangle
 :=\int_{\mathbb T^d}\langle g,W_{\ell,y}f\rangle
   \,\dd\mathfrak m_\ell(y),
 \qquad f,g\in\gH.
\]
Thus \(T_\ell\) is the superposition of the operators
\(W_{\ell,y}\) over the translation variable \(y\).  Equivalently, in
orthonormal-basis coordinates,
\[
 (T_\ell f)_{p+\ell}
 =(\mathcal F_{\mathrm{FS}}\mathfrak m_\ell)(-p)f_p,
 \qquad
 \|T_\ell\|_{\mathrm{op}}\leq\|\mathfrak m_\ell\|_{\mathrm{TV}}.
\]
Define
\[
 \mathscr J(\boldsymbol{\mathfrak m})
 :=\sum_{\ell\in\mathbb Z^d}T_\ell.
\]
The series converges absolutely in operator norm, and we set
\(\mathscr A_1:=\mathscr J(\mathfrak M_1)\).  The next lemma
proves uniqueness of the coefficient family and 
estimates used below.

\begin{lemma}
\label{lem:crossed-product-renormalization}
The following assertions hold.
\begin{enumerate}[label=\textup{(\roman*)},leftmargin=2.2em]
\item The map \(\mathscr J:\mathfrak M_1\to\mathcal B(\gH)\) is
injective.  Hence every \(T\in\mathscr A_1\) has a unique
coefficient family \((\mathfrak m_\ell^T)_\ell\).  We equip
\(\mathscr A_1\) with the norm
\[
 \|T\|_{\mathscr A_1}
 :=\sum_{\ell\in\mathbb Z^d}
  \langle\ell\rangle\|\mathfrak m_\ell^T\|_{\mathrm{TV}}.
\]
\item It holds that
\begin{equation}\label{eq:crossed-algebra-product}
 \|TS\|_{\mathscr A_1}
 \leq C\|T\|_{\mathscr A_1}
        \|S\|_{\mathscr A_1},
 \qquad
 \|T^*\|_{\mathscr A_1}
 =\|T\|_{\mathscr A_1}.
\end{equation}
Moreover,
\begin{equation}\label{eq:free-conjugation-coefficient-measures}
 \mathfrak m_\ell^{\alpha_t(T)}
 =\ee^{\ii t|\ell|^2}
 (\tau_{2t\ell})_\#\mathfrak m_\ell^T,
 \qquad
 \|\alpha_t(T)\|_{\mathscr A_1}
 =\|T\|_{\mathscr A_1},
\end{equation}
where \(\tau_\xi:\mathbb T^d\to\mathbb T^d\) is the translation
\(\tau_\xi(y)=y+\xi\).
\item For \(u\in\Omega_*\) and
\(T\in\mathscr A_1\), define
\begin{equation}\label{eq:quadratic-functional-definition}
 \mathcal Q_u(T)
 :=\sum_{\ell\in\mathbb Z^d}
 \int_{\mathbb T^d}Z_\ell(y;u)\,
 \dd\mathfrak m_\ell^T(y).
\end{equation}
  In particular,
\begin{equation}\label{eq:quadratic-functional-bound}
 |\mathcal Q_u(T)|
 \leq\mathcal K(u)\|T\|_{\mathscr A_1},
 \qquad
 \mathcal Q_u(W_{\ell,y})=Z_\ell(y;u).
\end{equation}
For fixed \(T\), the map \(u\mapsto\mathcal Q_u(T)\), extended by zero
outside \(\Omega_*\), is Borel.
\end{enumerate}
\end{lemma}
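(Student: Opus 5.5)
The plan is to work entirely on the Fourier side, where each \(T_\ell\) is diagonal with symbol \((\mathcal F_{\mathrm{FS}}\mathfrak m_\ell)(-p)\) mapping \(\mathrm e_p\) to \(\mathrm e_{p+\ell}\). This symbol already appears in the defining formula for \(T_\ell\).

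\emph{Injectivity (i).} Suppose \(\mathscr J(\boldsymbol{\mathfrak m})=0\). Testing against \(\langle \mathrm e_{p+\ell},\,\cdot\,\mathrm e_p\rangle\) isolates the \(\ell\)-th summand, because \(T_{\ell'}\mathrm e_p\in\mathbb C\mathrm e_{p+\ell'}\) and the series converges absolutely in operator norm. This gives \((\mathcal F_{\mathrm{FS}}\mathfrak m_\ell)(-p)=0\) for all \(p\). Uniqueness of Fourier--Stieltjes coefficients of finite measures on \(\mathbb T^d\) then yields \(\mathfrak m_\ell=0\). Hence the coefficient family is unique and the norm \(\|\cdot\|_{\mathscr A_1}\) is well defined.

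\emph{Algebra bounds (ii).} For the product, I use the identity \eqref{eq:modulation-translation-identities},
\[
W_{\ell,y}W_{k,z}=\ee^{\ii k\cdot y}W_{\ell+k,y+z}.
\]
This suggests the candidate coefficient
\[
\mathfrak m^{TS}_n=\sum_{\ell+k=n}\Phi_\#\bigl(\ee^{\ii k\cdot y}\,\mathfrak m^T_\ell(\dd y)\otimes\mathfrak m^S_k(\dd z)\bigr),\qquad \Phi(y,z)=y+z.
\]
I will verify \(\mathscr J\) of this family equals \(TS\) by pairing with \(\mathrm e_{p+n},\mathrm e_p\) and applying Fubini. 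Injectivity then identifies it as the coefficient family of \(TS\). Its total variation satisfies
\[
\|\mathfrak m^{TS}_n\|_{\mathrm{TV}}\le\sum_{\ell+k=n}\|\mathfrak m^T_\ell\|\,\|\mathfrak m^S_k\|.
\]
The weight inequality \(\langle\ell+k\rangle\le C\langle\ell\rangle\langle k\rangle\) and discrete Young then give the product bound.

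For the adjoint, \(W_{\ell,y}^*=\tau_{-y}M_{-\ell}=\ee^{\ii\ell\cdot y}W_{-\ell,-y}\). So \(\mathfrak m^{T^*}_{-\ell}\) is the reflected conjugate measure \(\overline{\ee^{\ii\ell\cdot y}\mathfrak m^T_\ell}\) pushed forward by \(y\mapsto -y\), possibly up to a unimodular factor. Total variations are preserved and \(\langle-\ell\rangle=\langle\ell\rangle\), which gives equality of norms.

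The free conjugation formula \eqref{eq:free-conjugation-coefficient-measures} follows from the second identity in \eqref{eq:modulation-translation-identities}, integrated weakly, together with injectivity. Translation and multiplication by a phase preserve total variation.

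\emph{Quadratic functional (iii).} The definition \eqref{eq:quadratic-functional-definition} converges absolutely, since
\[
\Bigl|\int Z_\ell\,\dd\mathfrak m_\ell\Bigr|\le\mathcal K(u)\|\mathfrak m_\ell\|_{\mathrm{TV}}\le\mathcal K(u)\langle\ell\rangle\|\mathfrak m_\ell\|_{\mathrm{TV}},
\]
and \(\mathcal K(u)<\infty\) on \(\Omega_*\) by \eqref{eq:universal-Z-weighted}. This gives the bound in \eqref{eq:quadratic-functional-bound}. For \(T=W_{\ell,y}\), injectivity forces the coefficient family to be \(\delta_y\) in slot \(\ell\) and zero elsewhere, so \(\mathcal Q_u(W_{\ell,y})=Z_\ell(y;u)\).

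For Borel measurability in \(u\), each term \(u\mapsto\int Z_\ell(y;u)\,\dd\mathfrak m_\ell(y)\) is Borel. This follows from the joint Borel measurability in Lemma~\ref{lem:quadratic-translation-field} and Fubini, applied to the bounded (on \(\Omega_*\)) jointly measurable integrand. A pointwise limit of Borel partial sums is Borel.

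The main obstacle is the careful verification that the convolution formula reproduces \(TS\) via weak integrals, including handling the phases. That step should nonetheless be a direct Fubini computation on matrix elements.
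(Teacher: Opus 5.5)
Your proposal is correct and follows the paper's proof step by step. It uses the matrix elements $\langle \mathrm e_{p+\ell},\mathscr J(\boldsymbol{\mathfrak m})\mathrm e_p\rangle=(\mathcal F_{\mathrm{FS}}\mathfrak m_\ell)(-p)$ together with uniqueness of Fourier--Stieltjes coefficients for (i); for (ii) it uses the convolution formula for $\mathfrak m^{TS}_j$ obtained from \eqref{eq:modulation-translation-identities} with $\langle\ell+k\rangle\le\sqrt2\langle\ell\rangle\langle k\rangle$, the reflected measure for $T^*$, and the translated pushforward for $\alpha_t(T)$; and for (iii) it uses the termwise bound $\sup_y|Z_\ell(y;u)|\le\mathcal K(u)$.

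The one imprecision is the adjoint phase: the exact coefficient is $\mathfrak m^{T^*}_{-\ell}=R_\#\bigl(\ee^{\ii\ell\cdot y}\,\overline{\mathfrak m^T_\ell}\bigr)$ with $R(y)=-y$, so the exponential is not conjugated. As you note, this leaves total variations unchanged, so the norm identity holds.
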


\begin{proof}
For \(p,\ell\in\mathbb Z^d\),
\begin{equation*}
 \langle\mathrm e_{p+\ell},
 \mathscr J(\boldsymbol{\mathfrak m})\mathrm e_p\rangle
 =\int_{\mathbb T^d}\ee^{\ii p\cdot y}
   \,\dd\mathfrak m_\ell(y)
 =(\mathcal F_{\mathrm{FS}}\mathfrak m_\ell)(-p).
\end{equation*}
If \(\mathscr J(\boldsymbol{\mathfrak m})=0\), all Fourier--Stieltjes
coefficients of every \(\mathfrak m_\ell\) vanish, and hence
\(\mathfrak m_\ell=0\) for every \(\ell\).  This proves
\textup{(i)}.

For \(T,S\in\mathscr A_1\),
\eqref{eq:modulation-translation-identities} implies
\[
 \int f(r)\,\dd\mathfrak m_j^{TS}(r)
 =\sum_{\ell+k=j}\iint
 f(y+z)\ee^{\ii k\cdot y}
 \,\dd\mathfrak m_\ell^T(y)
 \,\dd\mathfrak m_k^S(z),
 \qquad f\in C(\mathbb T^d).
\]
The series converges absolutely in total variation and
\[
 \|\mathfrak m_j^{TS}\|_{\mathrm{TV}}
 \leq\sum_{\ell+k=j}
 \|\mathfrak m_\ell^T\|_{\mathrm{TV}}
 \|\mathfrak m_k^S\|_{\mathrm{TV}}.
\]
Thus, using
\(\langle\ell+k\rangle\leq\sqrt2
  \langle\ell\rangle\langle k\rangle\),
\[
 \|TS\|_{\mathscr A_1}
 \leq \sqrt2\|T\|_{\mathscr A_1}
          \|S\|_{\mathscr A_1}.
\]
If \(R(y)=-y\), then
\[
 W_{\ell,y}^*=\ee^{\ii\ell\cdot y}W_{-\ell,-y},
 \qquad
 \mathfrak m_{-\ell}^{T^*}
 =R_\#\!\left(
   \ee^{\ii\ell\cdot y}\overline{\mathfrak m_\ell^T}
  \right),
\]
which implies the adjoint equality in
\eqref{eq:crossed-algebra-product}.  The second identity in
\eqref{eq:modulation-translation-identities} implies
\eqref{eq:free-conjugation-coefficient-measures}.
Thus \textup{(ii)} holds and 
\textup{(iii)} follows easily.
\end{proof}

We finish this subsection by defining the trace correction for
commutators.  To this end, we introduce the following notations. 
Let
\[
 G_\delta(y)
 :=\sum_{p\in\mathbb Z^d}
 \frac{\ee^{-\delta h_p}}{h_p}\ee^{\ii p\cdot y},
 \qquad \delta>0,
\]
and, for \(y\neq0\), let
\(G_0(y)=\lim_{\delta\downarrow0}G_\delta(y)\).  Here \(|y|\)
denotes the distance from \(y\) to \(0\) on the torus.  It is well
known that
\begin{equation}\label{eq:truncated-green-bound}
 |G_\delta(y)|
 \leq C\left(1+\frac{1}{\sqrt\delta+|y|}\right).
\end{equation}
\begin{proposition}
\label{prop:renormalized-crossed-commutator}

For \(X,Y\in\mathscr A_1\), the following limit exists:
\begin{equation}\label{eq:trace-commutator-bound}
 \Tr(h^{-1}[X,Y])
 :=\lim_{\delta\downarrow0}
 \Tr(\ee^{-\delta h}h^{-1}[X,Y]),
 \qquad
 |\Tr(h^{-1}[X,Y])|
 \leq C\|X\|_{\mathscr A_1}\|Y\|_{\mathscr A_1}.
\end{equation}
For \(u\in\Omega_*\), define
\[
 \mathcal C_u([X,Y])
 :=\mathcal Q_u([X,Y])+\Tr(h^{-1}[X,Y]).
\]
  Then
\begin{equation}\label{eq:raw-commutator-bound}
 |\mathcal C_u([X,Y])|
 \leq C(1+\mathcal K(u))
 \|X\|_{\mathscr A_1}\|Y\|_{\mathscr A_1}.
\end{equation}
In particular, if \(V\in\mathscr A_1\) is unitary on \(\gH\),
\(X\in\mathscr A_1\), \(q\in\mathbb Z^d\setminus\{0\}\), and
\(y\in\mathbb T^d\), then
\begin{equation}\label{eq:conjugated-mode-commutator-bound}
 |\mathcal C_u([X,V^*W_{q,y}V])|
 \leq C(1+\mathcal K(u))\langle q\rangle
 \|X\|_{\mathscr A_1}\|V\|_{\mathscr A_1}^2.
\end{equation}
\end{proposition}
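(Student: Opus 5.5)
By bilinearity and the absolute convergence of the coefficient expansions in \(\mathscr A_1\), I would reduce everything to elementary pairs \(X=W_{\ell,y}\) and \(Y=W_{k,z}\). These are then integrated against \(\dd\mathfrak m_\ell^X(y)\,\dd\mathfrak m_k^Y(z)\) and summed over \(\ell,k\). By \eqref{eq:modulation-translation-identities},
\[
 [W_{\ell,y},W_{k,z}]
 =\bigl(\ee^{\ii k\cdot y}-\ee^{\ii\ell\cdot z}\bigr)W_{\ell+k,y+z}.
\]
The regularized trace \(\Tr(\ee^{-\delta h}h^{-1}W_{j,w})\) vanishes unless \(j=0\), and for \(j=0\) it equals \((2\pi)^{-d}\)-normalized \(G_\delta(w)\) up to a constant. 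Only the pairs with \(k=-\ell\) therefore contribute, with weight \((\ee^{-\ii\ell\cdot y}-\ee^{\ii\ell\cdot z})G_\delta(y+z)\).

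The key step is to bound this weight uniformly in \(\delta\) by \(C\langle\ell\rangle\). First I would write
\[
 \ee^{-\ii\ell\cdot y}-\ee^{\ii\ell\cdot z}
 =\ee^{\ii\ell\cdot z}\bigl(\ee^{-\ii\ell\cdot(y+z)}-1\bigr),
\]
whose modulus is at most \(\min\{2,|\ell||y+z|\}\). Next, \eqref{eq:truncated-green-bound} gives \(|G_\delta(w)|\le C(1+|w|^{-1})\). Combining the two, the product is bounded by \(C(1+|\ell|)\le C\langle\ell\rangle\), uniformly in \(\delta\) and \(w\). Since the product extends continuously to \(w=0\) (where it vanishes), it converges pointwise as \(\delta\downarrow0\). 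Dominated convergence against the finite product measures, together with summability in \(\ell\) (we need \(\sum_\ell\langle\ell\rangle\|\mathfrak m_\ell^X\|\,\|\mathfrak m_{-\ell}^Y\|\le\|X\|_{\mathscr A_1}\|Y\|_{\mathscr A_1}\)), then yields existence of the limit and \eqref{eq:trace-commutator-bound}.

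This step is the main obstacle. The renormalized trace of a commutator exists only because the commutator factor vanishes at the singularity of \(G_0\). I expect to spend care on the pointwise convergence of \(G_\delta\) away from \(0\) in \(d=3\), where the kernel is genuinely singular, and on justifying the exchange of the \(\delta\)-limit with the coefficient integrals.

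For \eqref{eq:raw-commutator-bound}, I would combine the estimate just obtained with \eqref{eq:quadratic-functional-bound} and \eqref{eq:crossed-algebra-product}:
\[
 |\mathcal Q_u([X,Y])|
 \le\mathcal K(u)\|[X,Y]\|_{\mathscr A_1}
 \le 2C\mathcal K(u)\|X\|_{\mathscr A_1}\|Y\|_{\mathscr A_1}.
\]
Finally, for \eqref{eq:conjugated-mode-commutator-bound}, I would set \(Y=V^*W_{q,y}V\). The coefficient family of \(W_{q,y}\) is a single Dirac mass at index \(q\), so \(\|W_{q,y}\|_{\mathscr A_1}=\langle q\rangle\). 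The product and adjoint bounds in \eqref{eq:crossed-algebra-product} then give
\[
 \|Y\|_{\mathscr A_1}\le C\langle q\rangle\|V\|_{\mathscr A_1}^2,
\]
and inserting this into \eqref{eq:raw-commutator-bound} proves the claim. Unitarity of \(V\) is not needed for the bound itself.
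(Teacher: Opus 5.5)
Your proposal is correct and follows the paper's proof essentially step by step. The common route is: reduce to elementary pairs $W_{\ell,y},W_{-\ell,z}$ via the zero-mode trace identity, bound $\sup_{\delta,\xi}|(1-\ee^{\ii\ell\cdot\xi})G_\delta(\xi)|\le C\langle\ell\rangle$ using \eqref{eq:truncated-green-bound}, apply dominated convergence against the coefficient measures, and finish with \eqref{eq:quadratic-functional-bound}, \eqref{eq:crossed-algebra-product} and $\|W_{q,y}\|_{\mathscr A_1}=\langle q\rangle$.

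One side remark needs correcting. In $d=3$ one has $G_0(w)\sim c|w|^{-1}$ near $0$, so the limiting weight $(1-\ee^{-\ii\ell\cdot w})G_0(w)$ behaves like a constant multiple of $\ell\cdot w/|w|$; it is bounded but does \emph{not} extend continuously to $w=0$. Pointwise convergence at $w=0$ holds for a simpler reason: the regularized weight is exactly $0$ there for every $\delta>0$, which is why the paper sets $F_{\ell,0}(0)=0$. Pointwise convergence plus the uniform $C\langle\ell\rangle$ bound is all dominated convergence needs.
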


\begin{proof}
We first take \(X=W_{\ell,y}\) and \(Y=W_{k,z}\).
For every \(r\in\mathbb Z^d\) and \(\xi\in\mathbb T^d\),
\begin{equation}\label{eq:zero-mode-trace-identity}
 \Tr(\ee^{-\delta h}h^{-1}W_{r,\xi})
 =\1_{\{r=0\}}G_\delta(\xi).
\end{equation}
Combining this identity with
\eqref{eq:modulation-translation-identities}, we obtain
\begin{equation*}
 \Tr\!\left(
 \ee^{-\delta h}h^{-1}[W_{\ell,y},W_{k,z}]
 \right)
 =\left(\ee^{-\ii\ell\cdot y}-\ee^{\ii\ell\cdot z}\right)
 G_\delta(y+z)\1_{\{\ell+k=0\}}
 =\ee^{-\ii\ell\cdot y}
 \left(1-\ee^{\ii\ell\cdot(y+z)}\right)
 G_\delta(y+z)\1_{\{\ell+k=0\}}.
\end{equation*}
Since
\(|1-\ee^{\ii\ell\cdot\xi}|
 \leq C\min\{1,|\ell||\xi|\}\),
using \eqref{eq:truncated-green-bound} we obtain
\[
 \sup_{\delta>0,\,\xi\in\mathbb T^d}
 |(1-\ee^{\ii\ell\cdot\xi})G_\delta(\xi)|
 \leq C\langle\ell\rangle.
\]
For \(\ell\in\mathbb Z^d\), define
\[
 F_{\ell,\delta}(\xi)
 :=(1-\ee^{\ii\ell\cdot\xi})G_\delta(\xi),
 \qquad
 F_{\ell,0}(\xi)
 :=\begin{cases}
 (1-\ee^{\ii\ell\cdot\xi})G_0(\xi),&\xi\neq0,\\
 0,&\xi=0.
 \end{cases}
\]
Then, for every \(\xi\in\mathbb T^d\),
\[
 F_{\ell,\delta}(\xi)\longrightarrow F_{\ell,0}(\xi),
 \qquad
 \sup_{\delta>0,\,\xi\in\mathbb T^d}
 |F_{\ell,\delta}(\xi)|\leq C\langle\ell\rangle.
\]
If \((\mathfrak m_\ell^X)_\ell\) and
\((\mathfrak m_k^Y)_k\) are the coefficient families of \(X\) and
\(Y\), respectively, then
\[
 \Tr(\ee^{-\delta h}h^{-1}[X,Y])
 =\sum_{\ell\in\mathbb Z^d}\iint
 \ee^{-\ii\ell\cdot y}F_{\ell,\delta}(y+z)
 \,\dd\mathfrak m_\ell^X(y)\,\dd\mathfrak m_{-\ell}^Y(z).
\]
Moreover,
\[
 \sum_{\ell\in\mathbb Z^d}\iint
 |F_{\ell,\delta}(y+z)|
 \,\dd|\mathfrak m_\ell^X|(y)\,\dd|\mathfrak m_{-\ell}^Y|(z)
 \leq C\sum_{\ell\in\mathbb Z^d}\langle\ell\rangle
 \|\mathfrak m_\ell^X\|_{\mathrm{TV}}\|\mathfrak m_{-\ell}^Y\|_{\mathrm{TV}}
 \leq C\|X\|_{\mathscr A_1}\|Y\|_{\mathscr A_1}.
\]
Using dominated convergence we obtain
\[
 \Tr(h^{-1}[X,Y])
 =\sum_{\ell\in\mathbb Z^d}\iint
 \ee^{-\ii\ell\cdot y}F_{\ell,0}(y+z)
 \,\dd\mathfrak m_\ell^X(y)
 \,\dd\mathfrak m_{-\ell}^Y(z),
\]
which implies \eqref{eq:trace-commutator-bound}.
 \eqref{eq:quadratic-functional-bound} and
\eqref{eq:crossed-algebra-product}, together with
\eqref{eq:trace-commutator-bound}, give
\eqref{eq:raw-commutator-bound}.

The coefficient family of \(W_{q,y}\) is
\(\mathfrak m_q=\delta_y\) and \(\mathfrak m_k=0\) for \(k\neq q\),
so
\[
 \|W_{q,y}\|_{\mathscr A_1}=\langle q\rangle.
\]
Thus Lemma~\ref{lem:crossed-product-renormalization} implies
\[
 \|V^*W_{q,y}V\|_{\mathscr A_1}
 \leq C\langle q\rangle\|V\|_{\mathscr A_1}^2.
\]
Substituting this into \eqref{eq:raw-commutator-bound}, we obtain
\eqref{eq:conjugated-mode-commutator-bound}.
\end{proof}

\subsection{Removing the heat regularization}
\label{appsubsec:heat-regularization}

In this subsection we regularize $Y_{\theta,q}$ using the heat
multiplier \(\mathsf H_\varepsilon:=\ee^{-\varepsilon h/2}\):
\[
 Y_{\theta,q,\varepsilon}(t)
 :=V_\theta(t,0)^*\mathsf H_\varepsilon
   \alpha_t(M_q)\mathsf H_\varepsilon V_\theta(t,0).
\]
 The heat multipliers make
the commutator in the formal identity sufficiently smoothing: it maps
\(H^{-\kappa}\) to \(H^\kappa\) and is trace class.  Its pairing with
the rough initial field is therefore a Sobolev-duality
pairing.  A cutoff calculation identifies this pairing with the
centered quadratic expression constructed above, together with a
trace correction.  The centered part is controlled by the total
variations of the coefficient measures.  In the trace correction, the
two multiplication orders in the commutator produce a factor that
vanishes at the singular point of the Green function (see the proof of Proposition~\ref{prop:renormalized-crossed-commutator}).  This
cancellation removes the singularity and gives a finite limit when
the heat regularization is removed.

\begin{lemma}
\label{lem:heat-smoothed-raw-form}
Let
\(X,V\in\mathscr A_1\), fix \(q\in\mathbb Z^d\) and
\(y\in\mathbb T^d\), and set
\[
 Y_\varepsilon
 =V^*\mathsf H_\varepsilon W_{q,y}\mathsf H_\varepsilon V,
 \qquad
 L_\varepsilon=[X,Y_\varepsilon].
\]
Then \(Y_\varepsilon\) and \(L_\varepsilon\) are contained in
\[
 \mathcal B(H^{-\kappa},H^{\kappa})
 \cap\mathfrak S^1(\gH).
\]
For every \(u\in\Omega_*\) and
\(T\in\{Y_\varepsilon,L_\varepsilon\}\),
\begin{equation}\label{eq:heat-smoothed-duality}
 \langle u,Tu\rangle_{H^{-\kappa},H^{\kappa}}
 =\mathcal Q_u(T)+\Tr(h^{-1}T).
\end{equation}
Moreover,
\begin{align}
 \|Y_\varepsilon\|_{H^{-\kappa}\to H^{\kappa}}
 &\leq C_{\varepsilon,\kappa}\langle q\rangle
 \|V\|_{\mathscr A_1}^2,\quad
 \|L_\varepsilon\|_{H^{-\kappa}\to H^{\kappa}}
 \leq C_{\varepsilon,\kappa}\langle q\rangle
 \|X\|_{\mathscr A_1}\|V\|_{\mathscr A_1}^2.
 \label{eq:heat-smoothed-Sobolev-bound}
\end{align}
\end{lemma}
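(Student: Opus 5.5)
The plan is to prove the two mapping properties first, then the duality identity \eqref{eq:heat-smoothed-duality} by a Fourier cutoff argument, first for $Y_\varepsilon$ and then for $L_\varepsilon$.

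\emph{Mapping properties and the bounds \eqref{eq:heat-smoothed-Sobolev-bound}.}
Elements of $\mathscr A_1$ are bounded on every $H^s$ with $|s|\le1$. Indeed, the coefficient representation gives $(T_\ell f)_{p+\ell}=(\mathcal F_{\mathrm{FS}}\mathfrak m_\ell)(-p)f_p$, so the weight $\langle p+\ell\rangle^s\le C\langle p\rangle^s\langle\ell\rangle^{|s|}$ together with $\sum_\ell\langle\ell\rangle\|\mathfrak m_\ell\|_{\mathrm{TV}}$ yields
\[
\|T\|_{H^s\to H^s}\le C\|T\|_{\mathscr A_1}.
\]
Hence $V$, $V^*$ and $X$ are bounded on $H^{\pm\kappa}$ with norm $\le C\|\cdot\|_{\mathscr A_1}$, using $\|V^*\|_{\mathscr A_1}=\|V\|_{\mathscr A_1}$. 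Next, $\mathsf H_\varepsilon$ maps $H^{-\kappa}\to H^{s}$ for every $s$, and $W_{q,y}$ is bounded on $H^{s}$ with norm $\le C\langle q\rangle^{|s|}$. This gives the bound for $Y_\varepsilon$, with the factor $\langle q\rangle$ obtained at regularity $\kappa<1$. The commutator bound for $L_\varepsilon=XY_\varepsilon-Y_\varepsilon X$ follows by composition. For trace class, I write $\mathsf H_\varepsilon=\mathsf H_{\varepsilon/2}^2$ and note that $\mathsf H_{\varepsilon/2}\in\mathfrak S^2$ in $d\le3$. This factorization also shows $Y_\varepsilon\in\mathfrak S^1$, and then $L_\varepsilon\in\mathfrak S^1$ because $X$ is bounded.

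\emph{Duality identity.}
I would prove \eqref{eq:heat-smoothed-duality} first for a general $T\in\mathscr A_1\cap\mathcal B(H^{-\kappa},H^\kappa)\cap\mathfrak S^1$ whose coefficient measures have finite total variation and give well-defined objects. With the cutoff $\Pi_{2^j}$, the definitions of $Z_{\ell,j}$ and $\mathcal Q_u$ give the exact finite-dimensional identity
\[
\langle\Pi_{2^j}u,T\Pi_{2^j}u\rangle
=\sum_\ell\int Z_{\ell,j}(y;u)\,\dd\mathfrak m_\ell^T(y)
+\Tr(h^{-1}\Pi_{2^j}T\Pi_{2^j}).
\]
This follows by expanding $T$ as a superposition of the operators $W_{\ell,y}$ and noting that only $\ell=0$ contributes to the trace. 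The argument then passes to the limit $j\to\infty$ term by term.

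The left side converges to $\langle u,Tu\rangle_{H^{-\kappa},H^\kappa}$ because $\Pi_{2^j}u\to u$ in $H^{-\kappa}$ and $T:H^{-\kappa}\to H^\kappa$ is bounded. The trace term converges to $\Tr(h^{-1}T)$ since $T\in\mathfrak S^1$ and $h^{-1}$ is bounded. For the middle term, Lemma~\ref{lem:quadratic-translation-field} gives $Z_{\ell,j}\to Z_\ell$ uniformly in $y$ for each $\ell$, so each summand converges.

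\emph{Main obstacle: exchanging the limit with the $\ell$-sum.}
I expect the hardest step to be justifying the interchange of $\lim_{j\to\infty}$ and $\sum_\ell$, because $\sup_j\sup_y|Z_{\ell,j}|$ is not known to be summable against $\|\mathfrak m_\ell\|_{\mathrm{TV}}$ uniformly in $j$. My first attempt is to avoid the uniform domination. Since the left and trace sides converge, the middle sum converges as well. It then suffices to identify its limit by truncating $\ell$: for $|\ell|>L$ the operator $T_\ell$ contributes a piece that is small in $\mathcal B(H^{-\kappa},H^\kappa)\cap\mathfrak S^1$, and the same is true of the truncated trace term. Applying the identity to $T-\sum_{|\ell|\le L}T_\ell$ and bounding its quadratic side by operator norms uniformly in $j$ would then give the result.

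For $Y_\varepsilon$, the heat factors make this smallness explicit. Concretely, $V^*\mathsf H_\varepsilon W_{q,y}\mathsf H_\varepsilon V$ has coefficient measures whose tails are controlled by the tails of $V$'s coefficients, and the Gaussian weight $e^{-\varepsilon h}$ controls the matrix entries $\langle e_{p+\ell},Te_p\rangle$. The case of $L_\varepsilon$ then follows by linearity: it is the difference of the identities for $XY_\varepsilon$ and $Y_\varepsilon X$, each of which lies in the same class.
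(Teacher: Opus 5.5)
Your proposal is correct and follows the paper's overall plan. The shared steps are: composition bounds for \eqref{eq:heat-smoothed-Sobolev-bound} and the trace-class claim; the exact identity $\mathcal Q_{u,2^j}(T)=\sum_\ell\int Z_{\ell,j}(y;u)\,\dd\mathfrak m_\ell^T(y)$ at the dyadic cutoff; term-by-term limits for finitely many modes; and then an approximation argument. The approximation step is where you differ.

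The paper truncates the Fourier modes of the \emph{factors} $X$ and $V$. The resulting $T_n$ has finite mode support, and $T_n-T$ is a sum of products each containing $X_n-X$ or $V_n-V$ next to the two heat factors. So $T_n\to T$ in $\mathscr A_1$, in $\mathcal B(H^{-\kappa},H^\kappa)$ and in $\mathfrak S^1$ follows from the composition bounds already proved, and $L_\varepsilon$ is handled together with $Y_\varepsilon$.

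You truncate the coefficient family of $T$ itself. This needs a separate estimate showing that the off-band part $T^{>L}:=\sum_{|\ell|>L}T_\ell$ is small in $\mathcal B(H^{-\kappa},H^\kappa)$, which your sketch only gestures at. Here is how to spell it out. Write $T=P\mathsf H_\varepsilon W_{q,y}\mathsf H_\varepsilon R$ with $P,R\in\mathscr A_1$; this covers $Y_\varepsilon$, $XY_\varepsilon$ and $Y_\varepsilon X$. Expand $\langle\mathrm e_{p'},T\mathrm e_p\rangle$ as a sum over the intermediate frequency $a$. When $|p'-p|>L$, in each term one of the two factors shifts frequency by more than $(L-|q|)/2$. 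The two heat factors absorb the weights $\langle a\rangle^\kappa\langle a+q\rangle^\kappa$ coming from the Peetre inequality. Entrywise domination by products of convolution majorants then gives
$\|T^{>L}\|_{H^{-\kappa}\to H^\kappa}\le C_\varepsilon\bigl(\|P\|_{\mathscr A_1}+\|R\|_{\mathscr A_1}\bigr)\sum_{|\ell|>(L-|q|)/2}\langle\ell\rangle\bigl(\|\mathfrak m^P_\ell\|_{\mathrm{TV}}+\|\mathfrak m^R_\ell\|_{\mathrm{TV}}\bigr)\to0$.

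What your route buys is that the trace correction of the tail vanishes identically, because $\Tr(h^{-1}\Pi_{2^j}W_{\ell,y}\Pi_{2^j})=0$ for $\ell\neq0$. So the $\mathfrak S^1$-smallness you assert for the tail is unnecessary. This is fortunate: band truncations are not uniformly bounded on $\mathfrak S^1$, so that smallness is not automatic, and you give no argument for it.

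Finally, add the bound $|\mathcal Q_u(T^{>L})|\le\mathcal K(u)\|T^{>L}\|_{\mathscr A_1}\to0$ from \eqref{eq:quadratic-functional-bound}. This is what identifies the limit with $\mathcal Q_u(T)$.

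Overall, the paper's route is more economical, since all three convergences come for free from the product structure. Yours needs the extra matrix-entry estimate but dispenses with trace-norm control of the remainder.
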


\begin{proof}
 For \(|\sigma|\leq\kappa\), the Fourier transform of
\(W_{\ell,y}\) and
\(
 \langle p+\ell\rangle^{|\sigma|}
 \leq2^{|\sigma|/2}\langle p\rangle^{|\sigma|}
       \langle\ell\rangle^{|\sigma|}
\)
imply
\[
 \|T\|_{H^\sigma\to H^\sigma}
 \leq C_{\kappa}\sum_\ell\langle\ell\rangle
 \|\mathfrak m_\ell^T\|_{\mathrm{TV}}
 =C_{\kappa}\|T\|_{\mathscr A_1}.
\]
For \(\mathsf H_\varepsilon=\ee^{-\varepsilon h/2}\),
\begin{equation*}
 \|\mathsf H_\varepsilon\|_{H^{-\kappa}\to H^{\kappa}}
 =\sup_{p\in\mathbb Z^d}
 \langle p\rangle^{2\kappa}\ee^{-\varepsilon h_p/2}<\infty,
 \qquad
 \|\mathsf H_\varepsilon\|_{\mathfrak S^2(\gH)}^2
 =\sum_{p\in\mathbb Z^d}\ee^{-\varepsilon h_p}<\infty.
\end{equation*}
  Since
\(\|\mathsf H_\varepsilon\|_{H^\kappa\to H^\kappa}\leq1\), the above
 bound implies
\[
 \begin{aligned}
 \|Y_\varepsilon\|_{H^{-\kappa}\to H^\kappa}
 &\leq
 \|V^*\|_{H^\kappa\to H^\kappa}
 \|\mathsf H_\varepsilon\|_{H^\kappa\to H^\kappa}
 \|W_{q,y}\|_{H^\kappa\to H^\kappa}
 \|\mathsf H_\varepsilon\|_{H^{-\kappa}\to H^\kappa}
 \|V\|_{H^{-\kappa}\to H^{-\kappa}}\\
 &\leq C_{\varepsilon,\kappa}\langle q\rangle
 \|V\|_{\mathscr A_1}^2.
 \end{aligned}
\]
Applying the same bound to
\(L_\varepsilon=XY_\varepsilon-Y_\varepsilon X\) gives
\[
 \|L_\varepsilon\|_{H^{-\kappa}\to H^\kappa}
 \leq C_{\varepsilon,\kappa}\langle q\rangle
 \|X\|_{\mathscr A_1}
 \|V\|_{\mathscr A_1}^2,
\]
which proves \eqref{eq:heat-smoothed-Sobolev-bound}.

For the trace-class assertion, both
\(V^*\mathsf H_\varepsilon\) and
\(\mathsf H_\varepsilon V\) are Hilbert--Schmidt.  Hence
we obtain
\[
 \|Y_\varepsilon\|_{\mathfrak S^1(\gH)}
 \leq \|V\|_{\mathrm{op}}^2
       \|\mathsf H_\varepsilon\|_{\mathfrak S^2(\gH)}^2
       \|W_{q,y}\|_{\mathrm{op}}<\infty,
\]
and
\[
 \|L_\varepsilon\|_{\mathfrak S^1(\gH)}
 \leq2\|X\|_{\mathrm{op}}\|Y_\varepsilon\|_{\mathfrak S^1(\gH)}<\infty.
\]
The same conclusions hold for finite linear combinations.

Define
\[
 \mathcal Q_{u,M}(T)
 :=\langle \Pi_Mu,T\Pi_Mu\rangle
   -\Tr(h^{-1}\Pi_MT\Pi_M).
\]
Since  $T\in\mathcal B(H^{-\kappa},H^{\kappa})
 \cap\mathfrak S^1(\gH)$ we obtain
\[
\lim_{M\to\infty} \mathcal Q_{u,M}(T)= \langle u,Tu\rangle_{H^{-\kappa},H^{\kappa}}
 -\Tr(h^{-1}T).
\]
It remains to identify this limit with \(\mathcal Q_u(T)\).

Assume first that the coefficient families of \(X\) and \(V\) have
finite Fourier-mode support.  Thus either choice of \(T\)
made above has
finite mode support.  At the dyadic cutoff \(M=2^j\), linearity and the
definition of \(Z_{\ell,j}\) imply
\[
 \begin{aligned}
 \mathcal Q_{u,2^j}(T)
 &=\sum_\ell\int_{\mathbb T^d}
   \Bigl[
    \langle\Pi_{2^j}u,W_{\ell,z}\Pi_{2^j}u\rangle
    -\Tr(h^{-1}\Pi_{2^j}W_{\ell,z}\Pi_{2^j})
   \Bigr]\,\dd\mathfrak m_\ell^T(z)\\
 &=\sum_\ell\int_{\mathbb T^d}
   Z_{\ell,j}(z;u)\,\dd\mathfrak m_\ell^T(z).
 \end{aligned}
\]
The sum over \(\ell\) is finite, so
\eqref{eq:universal-Z-limit} may be applied term by term.  It follows
that the right-hand side converges to \(\mathcal Q_u(T)\).  Since the
full integer limit of \(\mathcal Q_{u,M}(T)\) already exists, it has the
same value as this dyadic subsequential limit.

For general \(X,V\), truncate their Fourier modes:
\[
 \mathfrak m_\ell^{X_n}
 :=\1_{\{|\ell|\leq n\}}\mathfrak m_\ell^X,
 \qquad
 \mathfrak m_\ell^{V_n}
 :=\1_{\{|\ell|\leq n\}}\mathfrak m_\ell^V.
\]
Let \(X_n,V_n\in\mathscr A_1\) be the corresponding operators.
Then
\[
 X_n\to X,
 \qquad V_n\to V
 \quad\text{in }\mathscr A_1\quad(n\to\infty).
\]
Define
\[
 Y_{\varepsilon,n}
 :=V_n^*\mathsf H_\varepsilon W_{q,y}\mathsf H_\varepsilon V_n,
 \qquad
 L_{\varepsilon,n}:=[X_n,Y_{\varepsilon,n}].
\]
If the operator denoted by \(T\) is \(Y_\varepsilon\), set
\(T_n=Y_{\varepsilon,n}\); if it is \(L_\varepsilon\), set
\(T_n=L_{\varepsilon,n}\).  The relevant differences are
\begin{align*}
 Y_{\varepsilon,n}-Y_\varepsilon
 &=(V_n^*-V^*)\mathsf H_\varepsilon W_{q,y}
   \mathsf H_\varepsilon V_n
   +V^*\mathsf H_\varepsilon W_{q,y}\mathsf H_\varepsilon(V_n-V),\\
 L_{\varepsilon,n}-L_\varepsilon
 &=[X_n-X,Y_{\varepsilon,n}]
   +[X,Y_{\varepsilon,n}-Y_\varepsilon].
\end{align*}
Then we have
\begin{align*}
 &\|T_n-T\|_{\mathscr A_1}
 +\|T_n-T\|_{H^{-\kappa}\to H^\kappa}
 +\|T_n-T\|_{\mathfrak S^1(\gH)}\\
 &\qquad\leq
 C_{\varepsilon,\kappa,q,X,V}
 \bigl(\|X_n-X\|_{\mathscr A_1}
       +\|V_n-V\|_{\mathscr A_1}\bigr)
 \longrightarrow0,
\end{align*}
which implies that
\[
 |\mathcal Q_{u,M}(T)-\mathcal Q_{u,M}(T_n)|
 \leq\|u\|_{H^{-\kappa}}^2
 \|T-T_n\|_{H^{-\kappa}\to H^{\kappa}}
 +\|h^{-1}\|_{\mathrm{op}}\|T-T_n\|_{\mathfrak S^1(\gH)}\to0.
\]
Moreover, \eqref{eq:quadratic-functional-bound} gives
\[
 |\mathcal Q_u(T)-\mathcal Q_u(T_n)|
 \leq\mathcal K(u)
 \|T-T_n\|_{\mathscr A_1}.
\]
For each fixed \(n\), the finite-mode argument gives
\(\mathcal Q_{u,M}(T_n)\to\mathcal Q_u(T_n)\).  Therefore
taking the upper limit in \(M\) implies
\begin{align*}
 \limsup_{M\to\infty}
 |\mathcal Q_{u,M}(T)-\mathcal Q_u(T)|
 &\leq
 \sup_M|\mathcal Q_{u,M}(T)-\mathcal Q_{u,M}(T_n)|
 +|\mathcal Q_u(T_n)-\mathcal Q_u(T)|\\
 &\longrightarrow0\qquad(n\to\infty).
\end{align*}
We have therefore proved the truncation limit
\[
 \lim_{M\to\infty}\mathcal Q_{u,M}(T)=\mathcal Q_u(T).
\]
Hence \eqref{eq:heat-smoothed-duality} follows.
\end{proof}

\begin{lemma}
\label{lem:heat-regularization-removal}
Let
\[
 Y=V^*W_{q,y_0}V,
 \qquad
 Y_\varepsilon
 =V^*\mathsf H_\varepsilon W_{q,y_0}
  \mathsf H_\varepsilon V,
 \qquad q\in\mathbb Z^d\setminus\{0\},
\]
where \(V\in\mathscr A_1\) is unitary on \(\gH\), and let
\(X\in\mathscr A_1\).  For \(u\in\Omega_*\),
\begin{equation}\label{eq:heat-regularization-limit}
 \lim_{\varepsilon\downarrow0}
 \left\{
 \mathcal Q_u([X,Y_\varepsilon])
 +\Tr(h^{-1}[X,Y_\varepsilon])
 \right\}
 =\mathcal C_u([X,Y]).
\end{equation}
Moreover, uniformly in \(\varepsilon>0\),
\begin{equation}\label{eq:heat-regularization-uniform}
 \left|
 \mathcal Q_u([X,Y_\varepsilon])
 +\Tr(h^{-1}[X,Y_\varepsilon])
 \right|
 \leq C(1+\mathcal K(u))\langle q\rangle
 \|X\|_{\mathscr A_1}
 \|V\|_{\mathscr A_1}^2.
\end{equation}
\end{lemma}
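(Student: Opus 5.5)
The plan is to combine Lemma~\ref{lem:heat-smoothed-raw-form}, which identifies the regularized expression $\mathcal Q_u([X,Y_\varepsilon])+\Tr(h^{-1}[X,Y_\varepsilon])$ as a functional of the coefficient measures, with the coefficient-measure estimates of Lemma~\ref{lem:crossed-product-renormalization} and Proposition~\ref{prop:renormalized-crossed-commutator}. Two facts are needed. First, $Y_\varepsilon\to Y$ in a sense strong enough to pass $\mathcal Q_u$ and the renormalized trace to the limit. Second, the trace correction must be bounded uniformly in $\varepsilon$.

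\emph{Step 1: the regularized expression lies in the algebra.} Write $\mathsf H_\varepsilon=\ee^{-\varepsilon h/2}$ as a Fourier multiplier. It is not obviously in $\mathscr A_1$, since its coefficient measure at $\ell=0$ would have to be the measure on $\mathbb T^d$ whose Fourier--Stieltjes coefficients are $\ee^{-\varepsilon h_p/2}$. This is the heat kernel $\ee^{-\varepsilon m/2}\,p_{\varepsilon/2}(y)\,\dd y$, a positive measure of total variation $\ee^{-\varepsilon m/2}\le1$. Hence $\mathsf H_\varepsilon\in\mathscr A_1$ with $\|\mathsf H_\varepsilon\|_{\mathscr A_1}\le1$, and by \eqref{eq:crossed-algebra-product}
\[
\|Y_\varepsilon\|_{\mathscr A_1}\le C\langle q\rangle\|V\|_{\mathscr A_1}^2
\]
uniformly in $\varepsilon$. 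Then $\mathcal Q_u([X,Y_\varepsilon])+\Tr(h^{-1}[X,Y_\varepsilon])=\mathcal C_u([X,Y_\varepsilon])$, using the renormalized trace of Proposition~\ref{prop:renormalized-crossed-commutator}. Here $[X,Y_\varepsilon]$ is trace class, so the $\delta$-regularized trace converges to the true trace by dominated convergence. The uniform bound \eqref{eq:heat-regularization-uniform} then follows directly from \eqref{eq:raw-commutator-bound} applied with $Y_\varepsilon$ in place of $Y$.

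\emph{Step 2: the limit.} The measures $\mathsf H_\varepsilon W_{q,y_0}\mathsf H_\varepsilon$ do not converge to $W_{q,y_0}$ in $\mathscr A_1$-norm: the heat-kernel mollification of $\delta_{y_0}$ converges only weakly, not in total variation. So $\mathcal Q_u$ and the trace functional have to be passed to the limit weakly. By the product formula in Lemma~\ref{lem:crossed-product-renormalization}, each coefficient measure of $[X,Y_\varepsilon]$ is an integral over translations convolved with heat kernels. Let $\tilde V$ range over finite-mode truncations of $V$ as in Lemma~\ref{lem:heat-smoothed-raw-form}, with uniform control of the tails in $\mathscr A_1$. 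For each $\ell$, $\mathfrak m_\ell^{[X,Y_\varepsilon]}$ then converges weakly to $\mathfrak m_\ell^{[X,Y]}$ with uniformly bounded total variation.

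\emph{Step 3: passing the two functionals to the limit.} For $\mathcal Q_u$, the integrands $Z_\ell(\cdot;u)$ are continuous by Lemma~\ref{lem:quadratic-translation-field}. Weak convergence handles each $\ell$, and the weighted bound \eqref{eq:universal-Z-weighted} together with the $\langle\ell\rangle$-weighted TV bounds gives dominated summation over $\ell$. The trace correction is the main obstacle, because its integrand $\ee^{-\ii\ell\cdot y}F_{\ell,0}(y+z)$ is bounded but discontinuous at $y+z=0$. I would handle it in one of two ways:
\begin{itemize}
\item Use the double-limit representation $\lim_\delta$ with $F_{\ell,\delta}$ continuous. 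An $\varepsilon$-uniform control of $F_{\ell,\delta}-F_{\ell,0}$ is available because the heat smoothing spreads the mass of the measures, so they put no atom at the singular point.
\item Alternatively, observe that the heat kernels make the coefficient measures of $[X,Y_\varepsilon]$ absolutely continuous, and that $F_{\ell,0}$ is continuous away from a null set.
\end{itemize}
For the limit measure itself, which may carry atoms, the cancellation $|1-\ee^{\ii\ell\cdot\xi}|\le C|\ell||\xi|$ makes $F_{\ell,0}$ continuous at $0$ when extended by $0$. Indeed $|\xi|\,|G_0(\xi)|\lesssim|\xi|\log|\xi|^{-1}$ or $\lesssim1$; in $d=3$ I expect a bounded but possibly discontinuous limit, which will need extra care. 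This yields \eqref{eq:heat-regularization-limit}.
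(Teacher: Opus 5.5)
Your Step~1 is correct and gives \eqref{eq:heat-regularization-uniform} cleanly. Indeed $\|\mathsf H_\varepsilon\|_{\mathscr A_1}\le1$ yields $\|Y_\varepsilon\|_{\mathscr A_1}\le C\langle q\rangle\|V\|_{\mathscr A_1}^2$ uniformly in $\varepsilon$. The trace of the trace-class operator $h^{-1}[X,Y_\varepsilon]$ agrees with its $\delta$-regularized version, so \eqref{eq:raw-commutator-bound} applies to the pair $(X,Y_\varepsilon)$. Your limit of $\mathcal Q_u([X,Y_\varepsilon])$ is also fine: it uses weak convergence of the coefficient measures, continuity of $Z_\ell$, and $\varepsilon$-independent total-variation bounds. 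The trace term for $d=2$ is fine as well, since $F_{\ell,0}$ is continuous there. The genuine gap is the $\varepsilon\downarrow0$ limit of $\Tr(h^{-1}[X,Y_\varepsilon])$ in $d=3$, which you defer to ``extra care''. This is the real content of the lemma, and neither of your two suggestions handles it. In $d=3$ the function $F_{\ell,0}(\eta)$ behaves near $\eta=0$ like a constant multiple of $\ii\,\ell\cdot\eta/|\eta|$. It is bounded, but its limit at the origin depends on the direction of approach.

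Route (b) misapplies the portmanteau principle. What is required is that the \emph{limit} pair measure $\mathfrak m^X_\ell\otimes\mathfrak m^Y_{-\ell}$ does not charge $\{y+z=0\}$; absolute continuity of the approximants is irrelevant. The limit can charge this set: for $V=\1$, $X=W_{-q,-y_0}$ and $\ell=-q$ it equals $\delta_{-y_0}\otimes\delta_{y_0}$. Route (a) fails for the same reason. As $\varepsilon\downarrow0$ the smoothed mass concentrates on exactly such atoms, so no ``spreading'' makes the $\delta$-limit uniform in $\varepsilon$.

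In fact, weak convergence with bounded total variation cannot identify the limit at all. It also holds if $\mathsf H_\varepsilon$ is replaced by a real, non-even multiplier $\phi(\sqrt\varepsilon\,p)$ with $\phi\in C_c^\infty(\mathbb R^3)$ and $\phi(0)=1$. In the example above one then finds $\Tr(h^{-1}[X,Y_\varepsilon])\to\ee^{-\ii q\cdot y_0}\int_{\mathbb R^3}|\xi|^{-2}\,q\cdot\nabla(\phi^2)(\xi)\,\dd\xi$. This is nonzero for suitable $\phi$, although $[X,Y]=0$.

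The missing ingredient is the evenness of the heat kernel, used pointwise in the coefficient variables.
\begin{enumerate}
\item Expand $X,V,V^*$ in their coefficient measures and keep the heat variables $(\xi,\zeta)$ explicit. The elementary trace is then $\Phi^q_{b,c}(\xi,\zeta)\Psi_j(w+\xi+\zeta)$ with $w=x-y+y_0+z$.
\item Let $\varepsilon\downarrow0$ for fixed $(x,y,z)$. For $w\ne0$, continuity suffices.
\item For $w=0$, replace $\Phi^q_{b,c}(\xi,\zeta)$ by $\Phi^q_{b,c}(0,0)$ at cost $O((|q|+|c|)\langle j\rangle\sqrt\varepsilon)$. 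This reduces the term to $\int p_{2\varepsilon}(\eta)\Psi_j(\eta)\,\dd\eta$.
\item The odd part $\ii\sin(j\cdot\eta)G_0(\eta)$ integrates to zero because $p_{2\varepsilon}$ and $G_0$ are even. The even part satisfies $|(1-\cos(j\cdot\eta))G_0(\eta)|\le C\langle j\rangle^2|\eta|$, so its integral vanishes in the limit.
\end{enumerate}
The limit is therefore $0=\Psi_j(0)$, which is precisely the convention built into $\Tr(h^{-1}[X,Y])$. Dominated convergence in $(x,y,z)$, with the bound $C\langle\ell\rangle$, then yields \eqref{eq:heat-regularization-limit}.
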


\begin{proof}
We first establish the heat-kernel estimate used below.  With respect to
Lebesgue measure on \(\mathbb T^d\), let
\[
 p_\varepsilon(\xi)
 :=(2\pi)^{-d}\sum_{p\in\mathbb Z^d}
 \ee^{-\varepsilon|p|^2/2}\ee^{\ii p\cdot\xi}.
\]
Then \(p_\varepsilon\) is the normalized heat kernel,
\(\int_{\mathbb T^d}p_\varepsilon(\xi)\,\dd\xi=1\), and its Fourier
multiplier is \(\ee^{-\varepsilon|p|^2/2}\).  Thus, for every
\(f\in C(\mathbb T^d\times\mathbb T^d)\),
\begin{equation}\label{eq:double-heat-approximate-identity}
 \lim_{\varepsilon\downarrow0}\iint
 p_\varepsilon(\xi)p_\varepsilon(\zeta)f(\xi,\zeta)\,\dd \xi\,\dd \zeta=f(0,0).
\end{equation}
The smoothing operator has the representation
\[
 \mathsf H_\varepsilon
 =\ee^{-\varepsilon m/2}
 \int_{\mathbb T^d}W_{0,\xi}p_\varepsilon(\xi)\,\dd \xi,
 \qquad
 \|\mathsf H_\varepsilon\|_{\mathscr A_1}\leq1.
\]
Since \(\mathsf H_\varepsilon\in\mathfrak S^2(\gH)\), one has
\(Y_\varepsilon,[X,Y_\varepsilon]\in\mathfrak S^1(\gH)\).  We obtain
\[
 \Tr(h^{-1}[X,Y_\varepsilon])
 =\lim_{\delta\downarrow0}
  \Tr(\ee^{-\delta h}h^{-1}[X,Y_\varepsilon]).
\]
We have the coefficient representations
\[
 X=\sum_{\ell}\int W_{\ell,x}\,\dd\mathfrak m_\ell^X(x),\qquad
 V=\sum_c\int W_{c,z}\,\dd\mathfrak m_c^V(z),\qquad
 V^*=\sum_b\int W_{b,y}^*\,\dd\overline{\mathfrak m_b^V}(y).
\]
For heat variables
\(\xi,\zeta\), repeated use of
\eqref{eq:modulation-translation-identities} gives
\begin{equation}\label{eq:heat-mode-product}
 \begin{aligned}
 &W_{b,y}^*W_{0,\xi}W_{q,y_0}W_{0,\zeta}W_{c,z}
   =\Phi_{b,c}^{q}(\xi,\zeta)W_{j,r},\\
 &\Phi_{b,c}^{q}(\xi,\zeta)
   :=\ee^{\ii b\cdot y}
      \ee^{\ii q\cdot(\xi-y)}
      \ee^{\ii c\cdot(\xi-y+y_0+\zeta)},\\
 &j:=q+c-b,
 \qquad
 r:=-y+y_0+z+\xi+\zeta.
 \end{aligned}
\end{equation}
In particular,
\[
 |\Phi_{b,c}^{q}(\xi,\zeta)|=1,
 \qquad
 |\Phi_{b,c}^{q}(\xi,\zeta)-\Phi_{b,c}^{q}(0,0)|
 \leq C(|q|+|c|)(|\xi|+|\zeta|).
\]

Put
\[
 w:=x-y+y_0+z,
 \qquad x+r=w+\xi+\zeta.
\]
Using \eqref{eq:modulation-translation-identities}, we obtain
\[
 [W_{\ell,x},W_{j,r}]
 =\bigl(\ee^{\ii j\cdot x}-\ee^{\ii \ell\cdot r}\bigr)
  W_{\ell+j,x+r}.
\]
Together with \eqref{eq:zero-mode-trace-identity}, this implies
\[
 \Tr\!\left(
  \ee^{-\delta h}h^{-1}
  [W_{\ell,x},\Phi_{b,c}^{q}(\xi,\zeta)W_{j,r}]
  \right)
 =
 \1_{\{\ell+j=0\}}\ee^{\ii j\cdot x}
 \Phi_{b,c}^{q}(\xi,\zeta)
 [1-\ee^{-\ii j\cdot(w+\xi+\zeta)}]G_\delta(w+\xi+\zeta).
\]
We now justify the heat-kernel limit in the preceding expression,
including the case in which the argument of the Green function
converges to its singular point.  For \(j\in\mathbb Z^d\), define
\[
\Psi_j(\eta)
:=
\begin{cases}
	\bigl(1-\ee^{-\ii j\cdot\eta}\bigr)G_0(\eta),
	& \eta\neq0,\\
	0,&\eta=0.
\end{cases}
\]
The estimate used in the proof of
Proposition~\ref{prop:renormalized-crossed-commutator} gives
\begin{equation*}
	\sup_{\eta\in\mathbb T^d}\abs{\Psi_j(\eta)}
	\leq C\langle j\rangle,
\end{equation*}
and, for every \(\eta\in\mathbb T^d\),
\[
\bigl(1-\ee^{-\ii j\cdot\eta}\bigr)G_\delta(\eta)
\longrightarrow \Psi_j(\eta)
\qquad(\delta\downarrow0).
\]
We claim that, for every fixed \(w\in\mathbb T^d\),
\begin{equation}\label{eq:special-heat-kernel-limit}
	\begin{aligned}
		&\lim_{\varepsilon\downarrow0}
		\ee^{-\varepsilon m}
		\iint p_\varepsilon(\xi)p_\varepsilon(\zeta)
		\Phi_{b,c}^{q}(\xi,\zeta)
		\Psi_j(w+\xi+\zeta)\,\dd\xi\,\dd\zeta
		=\Phi_{b,c}^{q}(0,0)\Psi_j(w).
	\end{aligned}
\end{equation}
When \(w\neq0\), the integrand is continuous in a neighborhood of
\((\xi,\zeta)=(0,0)\).  Its contribution outside this neighborhood
tends to zero because the heat kernels concentrate at the origin and
\(\Psi_j\) is bounded.  Hence
\eqref{eq:double-heat-approximate-identity} gives
\eqref{eq:special-heat-kernel-limit} in this case.

It remains to consider \(w=0\).  Using
\eqref{eq:heat-mode-product}, we obtain
\begin{align*}
	&\left|
	\iint p_\varepsilon(\xi)p_\varepsilon(\zeta)
	\bigl(
	\Phi_{b,c}^{q}(\xi,\zeta)
	-\Phi_{b,c}^{q}(0,0)
	\bigr)
	\Psi_j(\xi+\zeta)\,\dd\xi\,\dd\zeta
	\right|\\
	&\qquad\leq
	C(|q|+|c|)\langle j\rangle
	\iint p_\varepsilon(\xi)p_\varepsilon(\zeta)
	(|\xi|+|\zeta|)\,\dd\xi\,\dd\zeta
	\longrightarrow0.
\end{align*}
The functions \(p_{2\varepsilon}\) and \(G_0\) are even.  For
\(\eta\neq0\),
\[
\Psi_j(\eta)
=
\bigl(1-\cos(j\cdot\eta)\bigr)G_0(\eta)
+\ii\sin(j\cdot\eta)G_0(\eta).
\]
The second term is odd, and therefore
\[
\int_{\mathbb T^d}p_{2\varepsilon}(\eta)
\sin(j\cdot\eta)G_0(\eta)\,\dd\eta=0.
\]
On the other hand, \eqref{eq:truncated-green-bound} and
\(\abs{1-\cos(j\cdot\eta)}
\leq C\min\{1,|j|^2|\eta|^2\}\)
give
\[
\abs{1-\cos(j\cdot\eta)}\,\abs{G_0(\eta)}
\leq C\langle j\rangle^2|\eta|.
\]
Consequently,
\[
\left|
\int_{\mathbb T^d}p_{2\varepsilon}(\eta)
\bigl(1-\cos(j\cdot\eta)\bigr)G_0(\eta)\,\dd\eta
\right|
\leq C\langle j\rangle^2\sqrt{\varepsilon}
\longrightarrow0.
\]
Thus we obtain
\eqref{eq:special-heat-kernel-limit} also when \(w=0\).

It follows from \eqref{eq:heat-mode-product} that
\[
\begin{aligned}
	&\lim_{\varepsilon\downarrow0}\lim_{\delta\downarrow0}
	\ee^{-\varepsilon m}
	\iint p_\varepsilon(\xi)p_\varepsilon(\zeta)
	\Tr\!\left(
	\ee^{-\delta h}h^{-1}
	[W_{\ell,x},
	\Phi_{b,c}^{q}(\xi,\zeta)W_{j,r}]
	\right)\,\dd\xi\,\dd\zeta\\
	&\qquad=
	\Tr\!\left(
	h^{-1}
	[W_{\ell,x},W_{b,y}^*W_{q,y_0}W_{c,z}]
	\right).
\end{aligned}
\]
Moreover, the elementary integrands are bounded by
\(
C\1_{\{\ell+j=0\}}\langle j\rangle
=
C\1_{\{\ell+j=0\}}\langle\ell\rangle.
\)
This is summable with respect to the coefficient measures, since
\[
\sum_{\ell,b,c}
\langle\ell\rangle
\|\mathfrak m_\ell^X\|_{\mathrm{TV}}
\|\mathfrak m_b^V\|_{\mathrm{TV}}
\|\mathfrak m_c^V\|_{\mathrm{TV}}
\leq
\|X\|_{\mathscr A_1}\|V\|_{\mathscr A_1}^2.
\]
Hence Fubini's theorem, dominated convergence, and
Proposition~\ref{prop:renormalized-crossed-commutator} imply
\[
 \lim_{\varepsilon\downarrow0}
 \Tr(h^{-1}[X,Y_\varepsilon])
 =\Tr(h^{-1}[X,Y]),
\]
and, uniformly in \(\varepsilon>0\),
\[
 |\Tr(h^{-1}[X,Y_\varepsilon])|
 \leq C
 \|X\|_{\mathscr A_1}
 \|V\|_{\mathscr A_1}^2.
\]
For the centered quadratic term, \eqref{eq:heat-mode-product} and
\eqref{eq:quadratic-functional-definition} give
\begin{equation}\label{eq:centered-mode-commutator}
 \mathcal Q_u([W_{\ell,x},\Phi_{b,c}^{q}(\xi,\zeta)W_{j,r}])
 =\Phi_{b,c}^{q}(\xi,\zeta)
 \bigl(\ee^{\ii j\cdot x}-\ee^{\ii \ell\cdot r}\bigr)
 Z_{\ell+j}(x+r;u).
\end{equation}
Integrating \eqref{eq:centered-mode-commutator} with respect to
\(\mathfrak m_\ell^X(x)\),
\(\overline{\mathfrak m_b^V}(y)\), and
\(\mathfrak m_c^V(z)\), and then summing over \(\ell,b,c\), gives
\(\mathcal Q_u([X,Y_\varepsilon])\).  Using dominated
convergence we obtain
\[
 \mathcal Q_u([X,Y_\varepsilon])
 \longrightarrow \mathcal Q_u([X,Y])
 \qquad(\varepsilon\downarrow0).
\]
\eqref{eq:crossed-algebra-product} and
\eqref{eq:quadratic-functional-bound} imply, uniformly in
\(\varepsilon\),
\[
 |\mathcal Q_u([X,Y_\varepsilon])|
 \leq C\mathcal K(u)\langle q\rangle
 \|X\|_{\mathscr A_1}
 \|V\|_{\mathscr A_1}^2.
\]
Combining the two previous limits and the definition of \(\mathcal C_u\) in
Proposition~\ref{prop:renormalized-crossed-commutator}, we obtain
\[
 \lim_{\varepsilon\downarrow0}
 \left\{\mathcal Q_u([X,Y_\varepsilon])
 +\Tr(h^{-1}[X,Y_\varepsilon])\right\}
 =\mathcal C_u([X,Y]).
\]
This proves \eqref{eq:heat-regularization-limit}.  The two uniform
bounds above prove \eqref{eq:heat-regularization-uniform}.
\end{proof}

\subsection{Volterra series and propagator bounds}
\label{appsubsec:prescribed-propagators}
This subsection completes the proof of Proposition~\ref{prop:transported-density-comparison}.  We first do the Volterra expansion.  Since a freely transported Fourier mode is not continuous in the coefficient norm, we construct the coefficient measures separately for each Volterra term by pushing forward measures on the corresponding time simplex.  The resulting total-variation estimates give an exponential bound for the propagator and justify differentiation with respect to an interpolation between two prescribed potentials.  We then apply this derivative formula to the heat-regularized transported density.  Together with the commutator estimates and the removal of the heat regularization proved above, this gives the existence of the transported density modes and the comparison estimate in Proposition~\ref{prop:transported-density-comparison}.

\begin{proposition}
\label{prop:measure-valued-Dyson}
Let \(A,B\in L^1([-T,T];\mathcal W^1)\) be real-valued, set
\(D:=A-B\) and \(A_\theta:=B+\theta D\) for \(\theta\in[0,1]\), and
let \(s,t\in[-T,T]\).  Put
\(I_{s,t}:=[\min\{s,t\},\max\{s,t\}]\) and define
 \[
 V_\theta(t,s)
 :=\ee^{\ii th}U_{A_\theta}(t,s)\ee^{-\ii sh},
 \]
where \(U_{A_\theta}\) is obtained from
Lemma~\ref{lem:prescribed-potential-propagators}.
Then
\(V_\theta(t,s)\in\mathscr A_1\) and
\begin{equation}\label{eq:adapted-V-A1}
 \|V_\theta(t,s)\|_{\mathscr A_1}
 \leq C\ee^{C\int_{I_{s,t}}\|A_\theta(r)\|_{\mathcal W^1}\,\dd r}.
\end{equation}
Moreover, \(\theta\mapsto V_\theta(t,s)\) is \(C^1\) in
\(\mathscr A_1\), and
\[
 \partial_\theta V_\theta(t,s)
 =-\ii\int_s^t
 V_\theta(t,r)\alpha_r(D(r))V_\theta(r,s)\,\dd r.
\]
In particular,
\begin{equation}\label{eq:adapted-theta-TV}
 \|\partial_\theta V_\theta(t,s)\|_{\mathscr A_1}
 \leq C\ee^{C\mathfrak a_T(A,B)}
 \int_{I_{s,t}}\|D(r)\|_{\mathcal W^1}\,\dd r,
\end{equation}
for $\mathfrak a_T(A,B)
=\int_{-T}^T
\bigl(\|A(r)\|_{\mathcal W^1}
+\|B(r)\|_{\mathcal W^1}\bigr)\,\dd r$. 
\end{proposition}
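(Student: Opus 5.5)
We work in the interaction picture. Since $U_{A_\theta}$ solves $\ii\partial_tU=(h+A_\theta(t))U$, the operator $V_\theta(t,s)=\ee^{\ii th}U_{A_\theta}(t,s)\ee^{-\ii sh}$ solves
\[
V_\theta(t,s)=\1-\ii\int_s^t\alpha_r(A_\theta(r))V_\theta(r,s)\,\dd r ,
\]
where $A_\theta(r)$ acts as a multiplication operator. We iterate this relation to obtain the Volterra (Dyson) series
\[
V_\theta(t,s)=\sum_{n\ge0}(-\ii)^n\int_{s<r_n<\dots<r_1<t}\alpha_{r_1}(A_\theta(r_1))\cdots\alpha_{r_n}(A_\theta(r_n))\,\dd r,
\]
with the ordering adapted when $t<s$. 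Each term is bounded in operator norm by $\bigl(\int_{I_{s,t}}\|A_\theta\|_{\mathcal W^1}\bigr)^n/n!$, because $\|A(r)\|_{\mathrm{op}}\le\|A(r)\|_{\mathcal W^0}$. The series therefore converges in $\mathcal B(\gH)$ to the propagator, and by uniqueness in Lemma~\ref{lem:prescribed-potential-propagators} the limit coincides with $V_\theta(t,s)$.

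Next we control the $\mathscr A_1$ norm term by term. A real potential $A(r)=\sum_k\widehat{A(r)}(k)M_k$ has coefficient measures $\widehat{A(r)}(k)\delta_0$, and by \eqref{eq:free-conjugation-coefficient-measures} the transported operator $\alpha_r(A(r))$ has coefficient measures $\ee^{\ii r|k|^2}\widehat{A(r)}(k)\delta_{2rk}$. These are moving Dirac masses, so the time integral cannot be taken as a Bochner integral in $\mathfrak M_1$. Instead, for the $n$-th term we use \eqref{eq:modulation-translation-identities} to write the integrand, for fixed modes $k_1,\dots,k_n$ and times $r_1,\dots,r_n$, as a unimodular phase times $W_{k_1+\dots+k_n,\,y(r,k)}$ with $y=2\sum_i r_ik_i$. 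We then define the coefficient measure of the $\ell$-th mode as the push-forward, under $r\mapsto y(r,k)$, of the complex measure $\text{phase}\cdot\prod_i\widehat{A_\theta(r_i)}(k_i)\,\dd r$ on the simplex, summed over all $k$ with $\sum k_i=\ell$. We check in the weak operator sense, pairing with $\mathrm e_p,\mathrm e_{p+\ell}$, that this family represents the $n$-th term, and then invoke the injectivity in Lemma~\ref{lem:crossed-product-renormalization}(i). The total variation estimate together with $\langle\sum k_i\rangle\le\prod_i\sqrt2\langle k_i\rangle$ gives an $\mathscr A_1$ bound of
\[
\frac{\bigl(\sqrt2\int_{I_{s,t}}\|A_\theta\|_{\mathcal W^1}\bigr)^n}{n!}.
\]
Summing over $n$ yields \eqref{eq:adapted-V-A1}. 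We then check that $\mathscr A_1$-convergence of the series is compatible with the operator-norm limit, which holds because $\|\cdot\|_{\mathrm{op}}\le\|\cdot\|_{\mathscr A_1}$ and $\mathfrak M_1$ is complete.

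For the $\theta$-dependence, the $n$-th term is a polynomial of degree $n$ in $\theta$, because $A_\theta=B+\theta D$ is affine. Its $\theta$-derivative is a sum of $n$ terms, each with one factor $D$ replacing a factor $A_\theta$. The same push-forward bound gives an $\mathscr A_1$ estimate of
\[
n\,\frac{(C\mathfrak a)^{n-1}}{n!}\int_{I_{s,t}}\|D\|_{\mathcal W^1}.
\]
This series converges uniformly in $\theta\in[0,1]$, and together with the $\mathscr A_1$-continuity of each term in $\theta$ it gives $C^1$ regularity in $\mathscr A_1$. To identify the derivative, we regroup the differentiated series by the position of the $D$ factor: the factors to its left reassemble into $V_\theta(t,r)$ and those to its right into $V_\theta(r,s)$. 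This is the simplex splitting identity, checked first in operator norm and then in $\mathscr A_1$ by injectivity. It produces the stated formula, and \eqref{eq:adapted-theta-TV} follows, using $\|A_\theta\|\le\|A\|+\|B\|$.

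The main obstacle is the rigorous construction of the push-forward coefficient measures and their identification with the operator integrals. This needs measurability of $r\mapsto\widehat{A(r)}(k)$, Fubini arguments to exchange the sum over modes with the time integrals, and a careful treatment of the phases, so that the total variation bound holds without any continuity of $r\mapsto\alpha_r(A(r))$ in $\mathscr A_1$.
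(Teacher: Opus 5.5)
Your proposal is correct. Its first half follows the paper's argument: the Volterra expansion, the coefficient measures \(\ee^{\ii r|k|^2}\widehat{A(r)}(k)\delta_{2rk}\) of \(\alpha_r(A(r))\), push-forward of the simplex measures under \(\mathbf r\mapsto2\sum_jr_jk_j\), identification through matrix elements and injectivity of \(\mathscr J\), and the \(1/n!\) total-variation bound. There are two minor differences there. You identify the sum of the series with \(V_\theta(t,s)\) by uniqueness of the integral equation, whereas the paper uses a remainder estimate. You treat \(t<s\) by reversing the time ordering, whereas the paper uses \(V_\theta(t,s)=V_\theta(s,t)^*\) and \(\|T^*\|_{\mathscr A_1}=\|T\|_{\mathscr A_1}\).

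The \(\theta\)-derivative is where you genuinely diverge. The paper integrates \(\frac{\dd}{\dd r}\bigl(V_{\theta'}(t,r)V_\theta(r,s)f\bigr)\) to obtain the exact difference formula
\(V_{\theta'}(t,s)-V_\theta(t,s)=-\ii(\theta'-\theta)\int_s^tV_{\theta'}(t,r)\alpha_r(D(r))V_\theta(r,s)\,\dd r\).
It then bounds the \(\mathscr A_1\)-norm of this time integral by the time integral of the \(\mathscr A_1\)-norms, via \eqref{eq:crossed-algebra-product} and \eqref{eq:adapted-V-A1}. This yields Lipschitz continuity in \(\theta\), after which the paper passes to the difference quotient. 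That route is short and uses \eqref{eq:adapted-V-A1} as a black box. However, its key inequality \(\|\int_s^t\cdots\,\dd r\|_{\mathscr A_1}\le\int_s^t\|\cdots\|_{\mathscr A_1}\,\dd r\) concerns an integrand that is not Bochner measurable in \(\mathscr A_1\); this is the moving-Dirac-mass issue the paper itself points out. The inequality therefore tacitly requires constructing the coefficient measures of the integral setwise.

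Differentiating the series term by term avoids this. Each \(V_{\theta,n}\) is an \(\mathscr A_1\)-valued polynomial in \(\theta\) whose coefficients are again simplex push-forwards. Symmetrizing over the position of the \(D\)-factor gives a bound of order \(n\,(C\mathfrak a_T(A,B))^{n-1}/n!\) times \(\int_{I_{s,t}}\|D\|_{\mathcal W^1}\). The \(C^1\) regularity then follows from uniform convergence of the differentiated series. The formula for the derivative only has to be checked in \(\mathcal B(\gH)\), by splitting the simplex at the \(D\)-factor and regrouping the Cauchy product, and is transferred to \(\mathscr A_1\) by injectivity of \(\mathscr J\). The price is this combinatorial regrouping and the interchange of sums with strong time integrals, both of which are harmless in operator norm.
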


\begin{proof}
We first assume \(s\leq t\).  In \(\mathcal B(\gH)\), the interaction
Duhamel formula is
\[
 V_\theta(t,s)
 =\1-\ii\int_s^t
 \alpha_r(A_\theta(r))V_\theta(r,s)\,\dd r.
\]
For \(n\geq1\), let
\[
 \Delta_n(s,t)
 :=\{(r_1,\ldots,r_n)\in(s,t)^n:
       s<r_n<\cdots<r_1<t\},
\]
and define the order-\(n\) Volterra term by
\[
 V_{\theta,n}(t,s)
 :=(-\ii)^n\int_{\Delta_n(s,t)}
 \alpha_{r_1}(A_\theta(r_1))\cdots\alpha_{r_n}(A_\theta(r_n))
 \,\dd r_1\cdots\dd r_n,
 \qquad V_{\theta,0}(t,s):=\1.
\]
Iterating the integral equation \(N\) times implies
\[
 V_\theta(t,s)
 =\sum_{n=0}^{N-1}V_{\theta,n}(t,s)
 +\mathcal R_{\theta,N}(t,s),
\]
where
\[
 \mathcal R_{\theta,N}(t,s)
 =(-\ii)^N\int_{\Delta_N(s,t)}
 \alpha_{r_1}(A_\theta(r_1))\cdots\alpha_{r_N}(A_\theta(r_N))
 V_\theta(r_N,s)\,\dd\mathbf r.
\]
Since
\(
 \|\alpha_r(A_\theta(r))\|_{\mathrm{op}}
 \leq \|A_\theta(r)\|_{\mathcal W^1},
\)
the exponential estimate in
Lemma~\ref{lem:prescribed-potential-propagators} yields
\[
 \|\mathcal R_{\theta,N}(t,s)\|_{\mathrm{op}}
 \leq C\ee^{C\int_s^t
       \|A_\theta(r)\|_{\mathcal W^1}\,\dd r}
 \frac{\left(C\int_s^t
       \|A_\theta(r)\|_{\mathcal W^1}\,\dd r\right)^N}{N!}
 \longrightarrow0.
\]
Here we used that the integrand is symmetric in
\((r_1,\ldots,r_n)\), while \(\Delta_n(s,t)\) occupies \(1/n!\) of
\((s,t)^n\).
Consequently,
\[
 V_\theta(t,s)=\sum_{n=0}^\infty V_{\theta,n}(t,s)
 \quad\text{in }\mathcal B(\gH).
\]

We next represent each Volterra term by its coefficient measures.
Since
$ A_\theta(r)
 =\sum_{k\in\mathbb Z^d}
   \widehat{A_\theta(r)}(k)M_k$,
and
\(\alpha_r(M_k)=\ee^{\ii r|k|^2}W_{k,2rk}\), we obtain
\[
 \alpha_r(A_\theta(r))
 =\sum_{k\in\mathbb Z^d}
 \widehat{A_\theta(r)}(k)\ee^{\ii r|k|^2}W_{k,2rk},
 \qquad
 \|\alpha_r(A_\theta(r))\|_{\mathscr A_1}
 =\|A_\theta(r)\|_{\mathcal W^1}.
\]
Fix \(n\geq1\) and
\(\mathbf k=(k_1,\ldots,k_n)\in(\mathbb Z^d)^n\).  Repeated use of
\eqref{eq:modulation-translation-identities} implies that
\[
 \begin{aligned}
 \prod_{j=1}^nW_{k_j,2r_jk_j}
 &=\ee^{2\ii\sum_{1\leq i<j\leq n}r_i k_i\cdot k_j}
 W_{\sum_{j=1}^nk_j,y_{\mathbf k}(\mathbf r)},\qquad
 y_{\mathbf k}(\mathbf r):=2\sum_{j=1}^nr_jk_j.
 \end{aligned}
\]
On \(\Delta_n(s,t)\), define the finite complex measure
\[
 \dd\mu_{\theta,\mathbf k}^{(n)}(\mathbf r):=(-\ii)^n
 \prod_{j=1}^n\!\left[
   \widehat{A_\theta(r_j)}(k_j)\ee^{\ii r_j|k_j|^2}
 \right]
 \ee^{2\ii\sum_{1\leq i<j\leq n}r_i k_i\cdot k_j}
 \,\dd r_1\cdots\dd r_n.
\]
For \(\ell\in\mathbb Z^d\) and a Borel set
\(\mathscr E\subset\mathbb T^d\), set
\[
 \mathfrak m_{\theta,n,\ell}^{t,s}(\mathscr E)
 :=\sum_{\substack{\mathbf k\in(\mathbb Z^d)^n\\
                    \sum_{j=1}^nk_j=\ell}}
 \int_{\Delta_n(s,t)}
 \1_{\mathscr E}\!\left(y_{\mathbf k}(\mathbf r)\right)
 \,\dd\mu_{\theta,\mathbf k}^{(n)}(\mathbf r).
\]
Taking total variations we obtain
\begin{align*}
 \sum_{\ell\in\mathbb Z^d}\langle\ell\rangle
   \|\mathfrak m_{\theta,n,\ell}^{t,s}\|_{\mathrm{TV}}&\leq
 \sum_{\mathbf k\in(\mathbb Z^d)^n}
 \left\langle\sum_{j=1}^nk_j\right\rangle
 \int_{\Delta_n(s,t)}
 \prod_{j=1}^n
   \bigl|\widehat{A_\theta(r_j)}(k_j)\bigr|\,\dd\mathbf r\\
 &\leq C^n\int_{\Delta_n(s,t)}
 \prod_{j=1}^n\left(
   \sum_{k_j\in\mathbb Z^d}
   \langle k_j\rangle
   \bigl|\widehat{A_\theta(r_j)}(k_j)\bigr|
 \right)\dd\mathbf r
 =\frac{C^n}{n!}
 \left(\int_s^t
 \|A_\theta(r)\|_{\mathcal W^1}\,\dd r\right)^n.
\end{align*}
 Substituting the coefficient
expansion into the Volterra term, applying Fubini's theorem, and using
\eqref{eq:modulation-translation-identities}, we obtain
\[
 V_{\theta,n}(t,s)
 =\mathscr J\!\left(
   (\mathfrak m_{\theta,n,\ell}^{t,s})_{\ell\in\mathbb Z^d}
  \right).
\]
Therefore,
\begin{equation}\label{eq:dyson-pushforward-TV}
 \|V_{\theta,n}(t,s)\|_{\mathscr A_1}
 \leq\frac{C^n}{n!}
 \left(
  \int_s^t\|A_\theta(r)\|_{\mathcal W^1}\,\dd r
 \right)^n.
\end{equation}
Thus the Volterra series converges also in
\(\mathscr A_1\).  Summing
\eqref{eq:dyson-pushforward-TV} we obtain
\eqref{eq:adapted-V-A1} for \(s\leq t\).

It remains to differentiate in \(\theta\).  
We have
\[
 \begin{aligned}
 \frac{\dd}{\dd r}
 \bigl(V_{\theta'}(t,r)V_\theta(r,s)f\bigr)
 &=
 \ii(\theta'-\theta)
 V_{\theta'}(t,r)\alpha_r(D(r))V_\theta(r,s)f.
 \end{aligned}
\]
Since
\[
 V_{\theta'}(t,t)V_\theta(t,s)=V_\theta(t,s),
 \qquad
 V_{\theta'}(t,s)V_\theta(s,s)=V_{\theta'}(t,s),
\]
integration in \(r\) yields 
\begin{equation}\label{eq:adapted-propagator-difference}
 V_{\theta'}(t,s)-V_\theta(t,s)
 =-\ii(\theta'-\theta)\int_s^t
 V_{\theta'}(t,r)\alpha_r(D(r))V_\theta(r,s)\,\dd r.
\end{equation}
  Moreover,
\begin{align*}
 &\left\|\int_s^t
 V_{\theta'}(t,r)\alpha_r(D(r))V_\theta(r,s)\,\dd r
 \right\|_{\mathscr A_1}\leq C\int_s^t
 \|V_{\theta'}(t,r)\|_{\mathscr A_1}
 \|D(r)\|_{\mathcal W^1}
 \|V_\theta(r,s)\|_{\mathscr A_1}\,\dd r.
\end{align*}
Since
\(
 \|\alpha_r(D(r))\|_{\mathscr A_1}
 =\|D(r)\|_{\mathcal W^1}
\) by \eqref{eq:free-conjugation-coefficient-measures}, 
using \eqref{eq:crossed-algebra-product} and
\eqref{eq:adapted-V-A1} we obtain
\begin{equation}\label{eq:adapted-propagator-Lipschitz}
 \|V_{\theta'}(t,s)-V_\theta(t,s)\|_{\mathscr A_1}
 \leq C|\theta'-\theta|\ee^{C\mathfrak a_T(A,B)}
 \int_s^t\|D(r)\|_{\mathcal W^1}\,\dd r.
\end{equation}
Dividing \eqref{eq:adapted-propagator-difference} by
\(\theta'-\theta\) and using
\eqref{eq:adapted-propagator-Lipschitz} on each interval \([r,t]\), we
get
\begin{align*}
 &\left\|
 \frac{V_{\theta'}(t,s)-V_\theta(t,s)}{\theta'-\theta}
 +\ii\int_s^t
 V_\theta(t,r)\alpha_r(D(r))V_\theta(r,s)\,\dd r
 \right\|_{\mathscr A_1}\\
 &\quad\leq
 C\ee^{C\mathfrak a_T(A,B)}
 \int_s^t
 \|V_{\theta'}(t,r)-V_\theta(t,r)\|_{\mathscr A_1}
 \|D(r)\|_{\mathcal W^1}\,\dd r\\
 &\quad\leq
 C|\theta'-\theta|\ee^{C\mathfrak a_T(A,B)}
 \left(\int_s^t\|D(r)\|_{\mathcal W^1}\,\dd r\right)^2
 \longrightarrow0\qquad(\theta'\to\theta).
\end{align*}
Consequently,
\[
 \partial_\theta V_\theta(t,s)
 =-\ii\int_s^t
 V_\theta(t,r)\alpha_r(D(r))V_\theta(r,s)\,\dd r.
\]
The same estimate, applied to the derivative formula at two parameters,
proves continuity in \(\theta\), and
\eqref{eq:adapted-theta-TV} follows from
\eqref{eq:crossed-algebra-product} and
\eqref{eq:adapted-V-A1}.

For \(t<s\), we have
\(V_\theta(t,s)=V_\theta(s,t)^*\) and
\(\alpha_r(D(r))^*=\alpha_r(D(r))\).  
The estimates follow in the same way with \([s,t]\) replaced by
\(I_{s,t}\).

\end{proof}

We now prove Proposition~\ref{prop:transported-density-comparison} by
combining the coefficient estimates with the argument at the beginning of this section.

\begin{proof}[Proof of Proposition~\ref{prop:transported-density-comparison}]
Fix \(u\in\Omega_*\), \(q\in\mathbb Z^d\setminus\{0\}\), and
\(t\in[-T,T]\).    
By \eqref{eq:free-conjugation-coefficient-measures}, Proposition~\ref{prop:measure-valued-Dyson},
and
\eqref{eq:crossed-algebra-product}, we obtain
\begin{equation}\label{eq:adapted-X-A1}
 \|\mathcal D_\theta(r)\|_{\mathscr A_1}
 \leq C\|V_\theta(r,0)\|_{\mathscr A_1}^2
 \|\alpha_r(D(r))\|_{\mathscr A_1}
 \leq C\ee^{C\mathfrak a_T(A,B)}\|D(r)\|_{\mathcal W^1}.
\end{equation}
 Proposition~\ref{prop:measure-valued-Dyson}
and the cocycle identity
\(V_\theta(t,r)=V_\theta(t,0)V_\theta(r,0)^*\) imply
\[
 \partial_\theta V_\theta(t,0)
 =-\ii V_\theta(t,0)\int_0^t\mathcal D_\theta(r)\,\dd r.
\]
Differentiating \(Y_{\theta,q,\varepsilon}(t)\) therefore yields
\begin{equation*}
 \partial_\theta Y_{\theta,q,\varepsilon}(t)
 =\ii\int_0^t
 [\mathcal D_\theta(r),Y_{\theta,q,\varepsilon}(t)]\,\dd r.
\end{equation*}
Since $\rho_{\varepsilon,q}(U_{A_\theta}(t,0)u)
 =\langle u,Y_{\theta,q,\varepsilon}(t)u\rangle_{H^{-\kappa},H^\kappa}$, by
\eqref{eq:heat-smoothed-duality} we obtain
\begin{align}
 &\rho_{\varepsilon,q}(U_A(t,0)u)
  -\rho_{\varepsilon,q}(U_B(t,0)u)
 \notag\\
 &\quad=\ii\int_0^1\!\int_0^t
 \Bigl\{
 \mathcal Q_u([\mathcal D_\theta(r),
                  Y_{\theta,q,\varepsilon}(t)])
 +\Tr\!\left(
 h^{-1}[\mathcal D_\theta(r),Y_{\theta,q,\varepsilon}(t)]
 \right)
 \Bigr\}\,\dd r\,\dd\theta.
 \label{eq:fixed-heat-interpolation}
\end{align}
We now let \(\varepsilon\downarrow0\).  Since
\(\alpha_t(M_q)=\ee^{\ii t|q|^2}W_{q,2tq}\),
using Lemma~\ref{lem:heat-regularization-removal},
Proposition~\ref{prop:measure-valued-Dyson}, and \eqref{eq:adapted-X-A1}, we obtain
\begin{align*}
 &\left|
 \mathcal Q_u([\mathcal D_\theta(r),
                 Y_{\theta,q,\varepsilon}(t)])
 +\Tr\!\left(
 h^{-1}[\mathcal D_\theta(r),Y_{\theta,q,\varepsilon}(t)]
 \right)
 \right|\leq
 C(1+\mathcal K(u))\langle q\rangle
 \ee^{C\mathfrak a_T(A,B)}\|D(r)\|_{\mathcal W^1}.
\end{align*}
 Thus
Lemma~\ref{lem:heat-regularization-removal} and dominated convergence
imply
\begin{equation}\label{eq:unregularized-density-interpolation}
 \lim_{\varepsilon\downarrow0}\!\left[
 \rho_{\varepsilon,q}(U_A(t,0)u)-
 \rho_{\varepsilon,q}(U_B(t,0)u)\right]
 =\ii\int_0^1\!\int_0^t
 \mathcal C_u([\mathcal D_\theta(r),Y_{\theta,q}(t)])
 \,\dd r\,\dd\theta,
\end{equation}
where
\(Y_{\theta,q}(t)=V_\theta(t,0)^*\alpha_t(M_q)V_\theta(t,0)\).
Consequently,
\begin{equation}\label{eq:transported-density-difference-bound}
 \left|
 \lim_{\varepsilon\downarrow0}
 \bigl[
 \rho_{\varepsilon,q}(U_A(t,0)u)
 -\rho_{\varepsilon,q}(U_B(t,0)u)
 \bigr]
 \right|
 \leq C(1+\mathcal K(u))\langle q\rangle
 \ee^{C\mathfrak a_T(A,B)}
 \int_{I_t}\|D(r)\|_{\mathcal W^1}\,\dd r.
\end{equation}

It remains to prove that the two limits exist separately.  We apply the previous
 argument once to the pair \((A,0)\) and once to
the pair \((B,0)\).  For the zero potential,
\[
 U_0(t,0)=\ee^{-\ii th},
 \qquad
 V^{(0)}(t,s)=\1.
\]
  Since \(q\neq0\),
\[
 \Tr(h^{-1}\mathsf H_\varepsilon M_q\mathsf H_\varepsilon)=0,
 \qquad
 \Tr(h^{-1}\mathsf H_\varepsilon
                    \alpha_t(M_q)\mathsf H_\varepsilon)=0.
\]
Lemma~\ref{lem:heat-smoothed-raw-form} therefore implies
\begin{align*}
 \rho_{\varepsilon,q}(\ee^{-\ii th}u)
 =&\mathcal Q_u\!\left(
   \mathsf H_\varepsilon\alpha_t(M_q)\mathsf H_\varepsilon
  \right)
 \\=&\ee^{-\varepsilon m}\ee^{\ii t|q|^2}
 \iint p_\varepsilon(\xi)p_\varepsilon(\zeta)\ee^{\ii q\cdot\xi}
 Z_q(2tq+\xi+\zeta;u)\,\dd\xi\,\dd\zeta.
\end{align*}
Since \(Z_q(\,\cdot\,;u)\) is continuous,
using \eqref{eq:double-heat-approximate-identity} we obtain
\[
 \rho_{\varepsilon,q}(\ee^{-\ii th}u)
 \longrightarrow
 \mathcal Q_u(\alpha_t(M_q))
 =\ee^{\ii t|q|^2}Z_q(2tq;u)
 \qquad(\varepsilon\downarrow0).
\]
Applying \eqref{eq:unregularized-density-interpolation} to \((A,0)\)
shows that
\(
 \lim_{\varepsilon\downarrow0}\bigl[
  \rho_{\varepsilon,q}(U_A(t,0)u)
  -\rho_{\varepsilon,q}(\ee^{-\ii th}u)
 \bigr]
\)
exists.  Together with the free limit, this proves the existence of
\(\widetilde\rho_q^A(t;u)\).  Applying the same argument to \((B,0)\)
proves the existence of \(\widetilde\rho_q^B(t;u)\). Thus
\eqref{eq:transported-density-difference-bound} is exactly
\eqref{eq:transported-density-comparison}.
\end{proof}

\end{document}